\newcommand{\keywords}[1]{\par\addvspace\baselineskip
	\noindent\keywordname\enspace\ignorespaces#1}
\algrenewcommand\alglinenumber[1]{#1:}
\pgfplotsset{compat=1.7}
\newcommand{\blank}[1]{\hspace*{#1}}
\algnewcommand\algorithmicswitch{\textbf{switch}}
\algnewcommand\algorithmiccase{\textbf{case}}
\algnewcommand\algorithmicassert{\texttt{assert}}
\algnewcommand\Assert[1]{\State \algorithmicassert(#1)}%
\newcommand{\punt}[1]{}
\newcommand{\cmnt}[1]{}
\newcounter{history}
\newtheorem{observation}{Observation}
\newcommand{\secref}[1]{Section~\ref{sec:#1}}
\newcommand{\figref}[1]{Fig~\ref{fig:#1}}
\newcommand{\tabref}[1]{Table~\ref{tab:#1}}
\newcommand{\thmref}[1]{Theorem~\ref{thm:#1}}
\newcommand{\lemref}[1]{Lemma~\ref{lem:#1}}
\newcommand{\corref}[1]{Corollary~\ref{cor:#1}}
\newcommand{\eqnref}[1]{Eqn(\ref{eq:#1})}
\newcommand{\obsref}[1]{Observation~\ref{obs:#1}}
\newcommand{\asmref}[1]{Assumption~\ref{asm:#1}}
\newcommand{\linref}[1]{Line~\ref{lin:#1}}
\newcommand{\algoref}[1]{{Algo~\ref{algo:#1}}}
\newcommand{\subsecref}[1]{SubSection{\ref{subsec:#1}}}
\newcommand{\apnref}[1]{Appendix~\ref{apn:#1}}
\newcommand{\Lineref}[1]{Line~\ref{lin:#1}}
\newcommand{\ignore}[1]{}
\newcommand{\tobj} {t-object\xspace}
\newcommand{\txns}[1] {txns(#1)}
\newcommand {\comm}[1] {committed(#1)}
\newcommand {\aborted}[1] {aborted(#1)}
\newcommand {\live}[1] {live(#1)}
\newcommand {\term}[1] {term(#1)}
\newcommand {\confc}[1] {Conf}
\newcommand{\tseq} {t-sequential\xspace}
\newcommand{\lupdt}[2] {#2.lastUpdt(#1)}
\newcommand{\mr} {MR}
\newcommand{\tr} {TR}
\newcommand{\legal} {legal\xspace}
\newcommand{\legality} {legality\xspace}
\newcommand{\op} {operation\xspace}
\newcommand{\mth} {method\xspace}
\newcommand{\termop} {terminal operation\xspace}
\newcommand{\gen}[1] {gen(#1)}
\newcommand{\evts}[1] {evts(#1)}
\newcommand{\met}[1] {methods(#1)}
\newcommand{\tbeg} {\emph{STM\_begin}\xspace}
\newcommand{\tread} {\emph{STM\_read}\xspace}
\newcommand{\twrite} {\emph{STM\_write}\xspace}
\newcommand{\tins} {\emph{STM\_insert}\xspace}
\newcommand{\tdel} {\emph{STM\_delete}\xspace}
\newcommand{\tlook} {\emph{STM\_lookup}\xspace}
\newcommand{\tryc} {\emph{STM\_tryC}\xspace}
\newcommand{\trya} {\emph{STM\_tryA}\xspace}
\newcommand{\up} {\emph{up}}
\newcommand{\opq} {opaque\xspace}
\newcommand{\opty} {opacity\xspace}
\newcommand{\lopty} {LO\xspace}
\newcommand{\lopq} {locally-opaque\xspace}
\newcommand{\tab} {hash-table\xspace}
\newcommand{\lsl} {rblazy\text{-}list\xspace}
\newcommand{\rvm} {\emph{rvm}\xspace}
\newcommand{\fkmth}[3] {#3.firstKeyMth(#1, #2)}
\newcommand{\pkmth}[3] {#3.prevKeyMth(#1, #2)}
\newcommand{\udset}[1] {updtSet(#1)}
\newcommand{\txsetst}[1] {L\_txlog.setStatus($L\_txstatus \downarrow$, $ OK \downarrow$)}
\newcommand{\rn} {\textcolor{red}{RL}\xspace}
\newcommand{\bn} {\textcolor{blue}{BL}\xspace}
\newcommand{\rc} {\textcolor{red}{currs[0]}}
\newcommand{\bc} {\textcolor{blue}{currs[1]}}
\newcommand{\bp} {\textcolor{blue}{preds[0]}}
\newcommand{\rp} {\textcolor{red}{preds[1]}}
\newcommand{\shist}[2] {#2.subhist(#1)}
\newcommand{\subhist} {subhist\xspace}
\newcommand{\hmvotm} {\textit{HT-MVOSTM}\xspace}
\newcommand{\llog} {txLog\xspace}
\newcommand{\llgopn}[1] {$L\_rec.getOpn(L\_obj\_id \downarrow$, $ L\_key \downarrow$)}
\newcommand{\llgval}[1] {$L\_rec.getVal(L\_obj\_id \downarrow$, $ L\_key \downarrow$)}
\newcommand{\opg}[2] {OPG(#1, #2)}
\newcommand{\copg}[2] {CG(#1, #2)}
\newcommand{\lopg}[2] {OG(#1, #2)}
\newcommand{\mv} {mv}
\newcommand{\rvf} {rvf}
\newcommand{\rt} {rt}
\newcommand{\rvmt} {rv\_method\xspace}
\newcommand{\upmt} {upd\_method\xspace}
\newcommand{\valid} {valid}
\newcommand{\lsls}[1] {list\_lookup($L\_obj\_id \downarrow,  L\_key \downarrow, pred \uparrow, curr \uparrow$)}
\newcommand{\lslins}[1] {list\_Ins($pred \downarrow$, $curr \downarrow$, $node \uparrow$)}
\algrenewcommand{\algorithmiccomment}[1]{/* #1 */}
\newcommand{\stfdm} {starvation\text{-}freedom\xspace}
\newcommand{\stf} {starvation\text{-}free\xspace}
\newtheorem{assumption}{Assumption}
\newcommand{\commit}{\mathcal{C}}
\newcommand{\abort}{\mathcal{A}}
\newcommand {\inc} {incarnation\xspace}
\newcommand {\incs}[2] {#2.incarSet(T_#1)\xspace}
\newcommand {\itsen} {itsEnabled\xspace}
\newcommand {\itsenb}[2] {#2.itsEnabled(T_#1)\xspace}
\newcommand {\cdsen} {cdsEnabled\xspace}
\newcommand {\cdsenb}[2] {#2.cdsEnabled(T_#1)\xspace}
\newcommand {\finen} {finEnabled\xspace}
\newcommand {\finenb}[2] {#2.finEnabled(T_#1)\xspace}
\newcommand {\enbd} {finEnabled\xspace}
\newcommand{\tcts}[1] {cts_#1\xspace}
\newcommand{\tits}[1] {its_#1\xspace}
\newcommand{\twts}[1] {wts_#1\xspace}
\newcommand{\htlock}[2] {#2.lock_#1\xspace}
\newcommand{\htval}[2] {#2.vrt_#1\xspace}
\newcommand{\htstat}[2] {#2.state_#1\xspace}
\newcommand{\htits}[2] {#2.its_#1\xspace}
\newcommand{\htcts}[2] {#2.cts_#1\xspace}
\newcommand{\htwts}[2] {#2.wts_#1\xspace}
\newcommand{\htltl}[2] {#2.tltl_#1\xspace}
\newcommand{\htutl}[2] {#2.tutl_#1\xspace}
\newcommand{\val} {valid\xspace}
\newcommand{\gtcnt} {gcounter\xspace}
\newcommand{\tcntr} {tCntr\xspace}
\newcommand{\incv} {incrVal\xspace}
\newcommand{\glock} {lock\xspace}
\newcommand{\gval} {valid\xspace}
\newcommand{\tltl} {tltl\xspace}
\newcommand{\tutl} {tutl\xspace}
\newcommand{\ttltl}[1] {tltl_#1\xspace}
\newcommand{\ttutl}[1] {tutl_#1\xspace}
\newcommand{\tlock}[1] {lock_#1\xspace}
\newcommand{\tval}[1] {vrt_#1\xspace}
\newcommand{\tstat}[1] {state_#1\xspace}
\newcommand{\syst} {sys\text{-}time\xspace}
\newcommand {\incset} {incarSet\xspace}
\newcommand {\incn} {incNum\xspace}
\newcommand {\inum}[1] {T_#1.incNum\xspace}
\newcommand {\ninc} {nextInc\xspace}
\newcommand {\aptr} {application-transaction\xspace}
\newcommand {\nexti}[1] {T_#1.nextInc\xspace}
\newcommand {\inct}[2] {#2.incarCt(T_#1)\xspace}
\newcommand {\incct} {incarCt\xspace}
\newcommand {\affset} {affectSet\xspace}
\newcommand {\haffset}[2] {#2.affectSet(T_#1)\xspace}
\newcommand {\hmaxwts}[2] {#2.maxWTS(T_#1)\xspace}
\newcommand {\maxwts} {maxWTS\xspace}
\newcommand {\haffwts}[2] {#2.affwts(T_#1)\xspace}
\newcommand {\affwts} {affWTS\xspace}
\newcommand {\cdset} {cds\xspace}
\newcommand {\hcds}[2] {#2.cds(T_#1)\xspace}
\newcommand{\shset}[1] {#1.subhistSet\xspace}
\newcommand{\stsble} {strict\text{-}serializable\xspace}
\newcommand{\termed}[1] {#1.terminated\xspace}
\newcommand{\ksftm} {\textit{SF-KOSTM}\xspace}
\newcommand{\cts} {cts\xspace}
\newcommand{\its} {its\xspace}
\newcommand{\wts} {wts\xspace}
\newcommand{\hsyst}[1] {#1.sys\text{-}time\xspace}
\newcommand{\begt} {stm\text{-}begin\xspace}
\newcommand{\ct} {comTime\xspace}
\newcommand{\vt} {\texttt{vrt}\xspace}
\newcommand{\csref}[1]{Case~\ref{case:#1}}
\newcommand{\lrl} {largeRL\xspace}
\newcommand{\abl} {abortRL\xspace}
\newcommand{\srl} {smallRL\xspace}
\newcommand {\cis} {cis\xspace}
\newcommand {\hcis}[2] {#2.cis(T_#1)\xspace}
\newcommand {\depits} {depits\xspace}
\newcommand {\hdep}[2] {#2.depits(T_#1)\xspace}
\newcommand {\pawts} {partAffwts\xspace}
\newcommand {\hpawts}[2] {#2.partAffwts(T_#1)\xspace}
\begin{document}
	
	\mainmatter  
	\vspace{-3mm}
\title{Achieving Starvation-Freedom with Greater
	Concurrency in Multi-Version Object-based Transactional Memory Systems}

	\titlerunning{Starvation-Freedom in Multi-Version Object-based Transactional Memory Systems}
	
	%
	%
	
\author{Chirag Juyal\inst{1}\and Sandeep Kulkarni\inst{2}\and Sweta Kumari\inst{1}\and Sathya Peri\inst{1}\and Archit Somani\inst{1}\footnote{Author sequence follows the lexical order of last names. All the authors can be contacted at the addresses given above. Archit Somani's phone number: +91 - 7095044601.}\vspace{-3mm}}
\authorrunning{C.Juyal \and S.Kulkarni \and S.Kumari \and S.Peri \and A.Somani}

\institute{Department of Computer Science \& Engineering, IIT Hyderabad, Kandi, Telangana, India \\
	\texttt{(cs17mtech11014, cs15resch01004, sathya\_p, cs15resch01001)@iith.ac.in} \and Department of Computer Science, Michigan State University, MI, USA \\
	\texttt{sandeep@cse.msu.edu}}

	%
	%
	
	\maketitle
	
	\vspace{-6mm}
	\begin{abstract}
 To utilize the multi-core processors properly concurrent programming is needed. The main challenge is to design a correct and efficient concurrent program. Software Transactional Memory Systems (STMs) provide ease of multithreading to the programmer without worrying about concurrency issues as deadlock, livelock, priority inversion, etc. Most of the STMs work on read-write operations known as RWSTMs. Some STMs work at higher-level operations and ensure greater concurrency than RWSTMs. Such STMs are known as Single-Version Object-based STMs (SVOSTMs). The transactions of SVOSTMs can return \emph{commit} or \emph{abort}. Aborted SVOSTMs transactions retry. But in the current setting of SVOSTMs, transactions may \emph{starve}. So, we propose a \emph{Starvation-Freedom} in \emph{SVOSTM} as \emph{SF-SVOSTM} that satisfies the correctness criteria \emph{conflict-opacity}.

Databases and STMs say that maintaining multiple versions corresponding to each shared data-item (or key) reduces the number of aborts and improves the throughput. So, to achieve greater concurrency further, we propose \emph{Starvation-Freedom} in \emph{Multi-Version OSTM} as \emph{SF-MVOSTM} algorithm. The number of versions maintains by SF-MVOSTM either be unbounded with garbage collection as SF-MVOSTM-GC or bounded with latest $K$-versions as SF-KOSTM. SF-MVOSTM satisfies the correctness criteria as \emph{local opacity} and shows the performance benefits as compared with state-of-the-art STMs.

\vspace{-1mm}
\keywords{Software Transactional Memory Systems, Concurrency Control, Starvation-Freedom, Multi-version, Opacity, Local Opacity}
	\end{abstract}
\vspace{-10mm}
\section{Introduction}
\label{sec:intro}

\noindent
In the era of multi-core processors, we can exploit the cores by concurrent programming. But developing an efficient concurrent program while ensuring the correctness is difficult. Software Transactional Memory Systems (STMs) are a convenient programming interface to access the shared memory concurrently while removing the concurrency responsibilities from the programmer. STMs ensure that consistency issues such as deadlock, livelock, priority inversion, etc will not occur. It provides a high-level abstraction to the programmer with the popular correctness criteria opacity \cite{GuerKap:Opacity:PPoPP:2008}, local opacity \cite{KuznetsovPeri:Non-interference:TCS:2017} which consider all the transactions (a piece of code) including aborted one as well in the equivalent serial history. This property makes it different from correctness criteria of database serializability,  strict-serializability \cite{Papad:1979:JACM} and ensures even aborted transactions read correct value in STMs which prevent from divide-by-zero, infinite loop, crashes, etc. Another advantage of STMs is composability which ensures the effect of multiple operations of the transaction will be atomic. This paper considers the optimistic execution of STMs in which transactions are writing into its local log until the successful validation.


A traditional STM system invokes following methods:(1) \tbeg{()}: begins a transaction $T_i$ with unique timestamp $i$. (2) $\tread_i${($k$)} (or $r_i(k)$): $T_i$ reads the value of key $k$ from shared memory. (3) $\twrite_i{(k,v)}$ (or $w_i(k,v)$): $T_i$ writes the value of $k$ as $v$ locally. (4) \emph{STM\_tryC$_i$()}: on successful validation, the effect of $T_i$ will be visible to the shared memory and $T_i$ returns commit otherwise (5) \emph{STM\_tryA$_i$()}: $T_i$ returns abort. These STMs are known as \emph{read-write STMs (RWSTMs)} because it is working at lower-level operations such as read and write. 

Herlihy et al.\cite{HerlihyKosk:Boosting:PPoPP:2008}, Hassan et al. \cite{ Hassan+:OptBoost:PPoPP:2014}, and Peri et al. \cite{Peri+:OSTM:Netys:2018} have shown that working at higher-level operations such as insert, delete and lookup on the linked-list and hash table gives better concurrency than RWSTMs. STMs which work on higher-level operations are known as \emph{Single-Version Object-based STMs (SVOSTMs)} \cite{Peri+:OSTM:Netys:2018}. It exports the following methods: (1) \emph{STM\_begin()}: begins a transaction $T_i$ with unique timestamp $i$ same as RWSTMs. (2) \emph{STM\_lookup$_i(k)$} (or $l_i(k)$): $T_i$ lookups key $k$ from shared memory and returns the value. (3) \emph{STM\_insert$_i(k,v)$} (or $i_i(k,v)$): $T_i$ inserts a key $k$ with value $v$ into its local memory. (4) \emph{STM\_delete$_i(k)$}(or $d_i(k)$): $T_i$ deletes key $k$. (5) \emph{STM\_tryC$_i$()}: the actual effect of \emph{STM\_insert()} and \emph{STM\_delete()} will be visible to the shared memory after successful validation and $T_i$ returns commit otherwise (6) \emph{STM\_tryA$_i$()}: $T_i$ returns abort. 

\setlength{\intextsep}{0pt}

\begin{figure*}
	\centerline{\scalebox{0.41}{\input{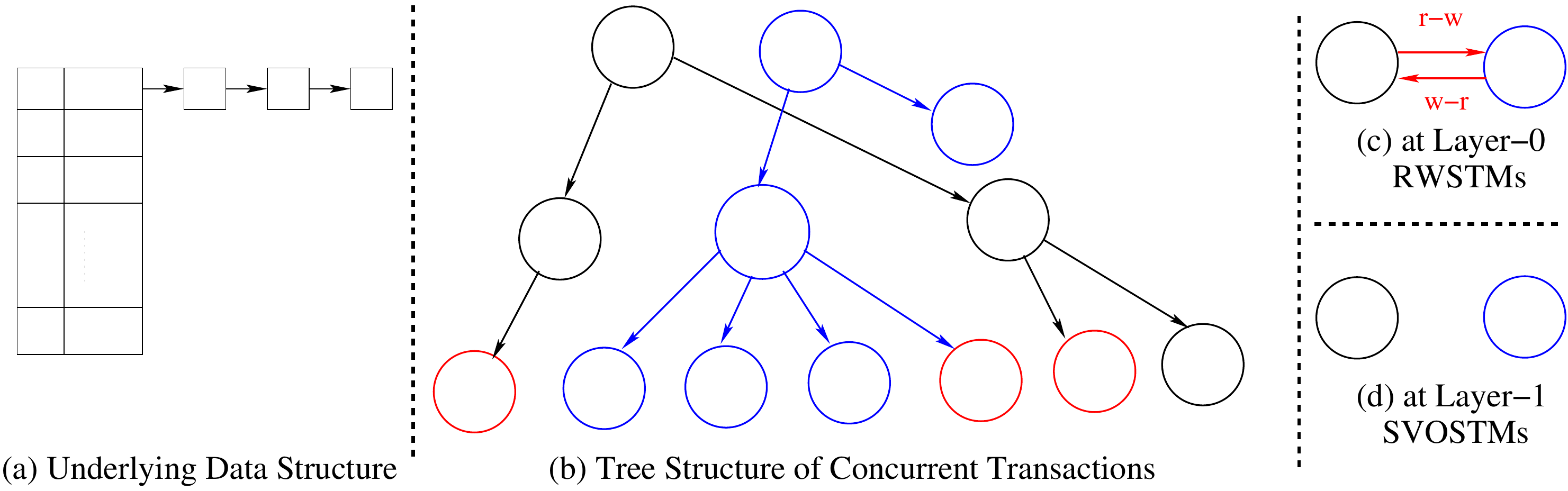_t}}}
\vspace{-.2cm}	\caption{Advantage of SVOSTMs over RWTSMs}
	\label{fig:adv}
\end{figure*} 

\noindent
\textbf{Motivation to work on SVOSTMs}: \figref{adv} represents the advantage of SVOSTMs over RWSTMs while achieving greater concurrency and reducing the number of aborts. \figref{adv}.(a) depicts the underlying data structure as a hash table (or $ht$) with $M$ buckets and bucket $1$ stores three keys $k_1, k_4$ and $k_9$ in the form of the list. Thus, to access $k_4$, a thread has to access $k_1$ before it. \figref{adv}.(b) shows the tree structure of concurrent execution of two transactions $T_1$ and $T_2$ with RWSTMs at layer-0 and SVOSTMs at layer-1 respectively. Consider the execution at layer-0, $T_1$ and $T_2$ are in conflict because write operation of $T_2$ on key $k_1$ as $w_2(k_1)$ is occurring between two read operations of $T_1$ on $k_1$ as $r_1(k_1)$. Two transactions are in conflict if both are accessing the same key $k$ and at least one transaction performs write operation on $k$. So, this concurrent execution cannot be atomic as shown in \figref{adv}.(c). To make it atomic either $T_1$ or $T_2$ has to return abort. Whereas execution at layer-1 shows the higher-level operations $l_1(k_1)$, $d_2(k_4)$ and $l_1(k_9)$ on different keys $k_1, k_4$ and $k_9$ respectively. All the higher-level operations are isolated to each other so tree can be pruned \cite[Chap 6]{WeiVoss:TIS:2002:Morg} from layer-0 to layer-1 and both the transactions return commit with equivalent serial schedule $T_1T_2$ or $T_2T_1$ as shown in \figref{adv}.(d). Hence, some conflicts of RWSTMs does not matter at SVOSTMs which reduce the number of aborts and improve the concurrency using SVOSTMs.

\noindent
\textbf{Starvation-Freedom}: For long-running transactions along with high conflicts, starvation can occur in SVOSTMs. So, SVOSTMs should ensure the progress guarantee as \emph{starvation-freedom} \cite[chap 2]{HerlihyShavit:AMP:Book:2012}. SVOSTMs is said to be \emph{\stf}, if a thread invoking a transaction $T_i$ gets the opportunity to retry $T_i$ on every abort (due to the presence of a fair underlying scheduler with bounded termination) and $T_i$ is not \emph{parasitic}, i.e., if scheduler will give a fair chance to $T_i$ to commit then $T_i$ will eventually return commit. If a transaction gets a chance to commit, still it is not committing because of the infinite loop or some other error such transactions are known as Parasitic transactions \cite{Bushkov+:Live-TM:PODC:2012}. 

We explored another well known non-blocking progress guarantee \emph{wait-freedom} for STM  which ensures every transaction commits regardless of the nature of concurrent transactions and the underlying scheduler \cite{HerlihyShavit:Progress:Opodis:2011}. However, Guerraoui and  Kapalka  \cite{Bushkov+:Live-TM:PODC:2012, tm-book} showed that achieving wait-freedom is impossible in dynamic STMs in which data-items (or keys) of transactions are not known in advance. So in this paper, we explore the weaker progress condition of \emph{\stfdm} for SVOSTM while assuming that the keys of the transactions are not known in advance.  

\noindent
\textbf{Related work on Starvation-free STMs:} Some researchers Gramoli et al. \cite{Gramoli+:TM2C:Eurosys:2012}, Waliullah and Stenstrom \cite{WaliSten:Starve:CC:2009}, Spear et al. \cite{Spear+:CCM:PPoPP:2009}, Chaudhary et al. \cite{Chaudhary+:KSFTM:Corr:2017} have explored starvation-freedom in RWSTMs. Most of them assigned priority to the transactions. On conflict, higher priority transaction returns commit whereas lower priority transaction returns abort. On every abort, a transaction retries a sufficient number of times, will eventually get the highest priority and returns commit.
We inspired with these researchers and propose a novel \emph{Starvation-Free SVOSTM (SF-SVOSTM)} which assigns the priority to the transaction on conflict. In SF-SVOSTM whenever a conflicting transaction $T_i$ aborts, it retries with $T_j$ which has higher priority than $T_i$. To ensure the \stfdm, this procedure will repeat until $T_i$ gets the highest priority and eventually returns commit.%


\vspace{.2mm}
\begin{figure}
	\centerline{\scalebox{0.36}{\input{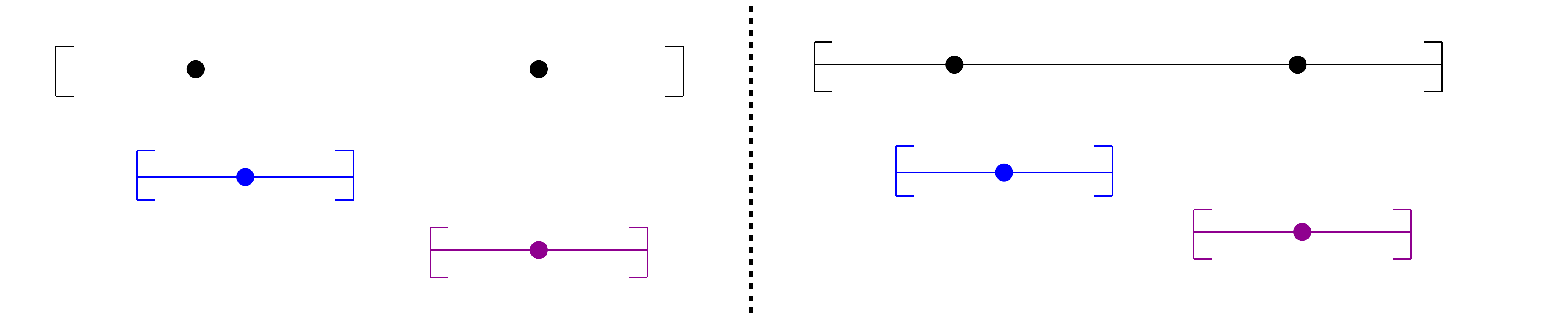_t}}}
	\vspace{-.2cm}	\caption{Benefits of Starvation-Free Multi-Version OSTM over SF-SVOSTM}
	\label{fig:advmvostm}
\end{figure}

\noindent
\textbf{Motivation to Propose Starvation-Freedom in Multi-Version OSTM}: In SF-SVOSTM, if the highest priority transaction becomes slow (for some reason) then it may cause several other transactions to abort and bring down the progress of the system. \figref{advmvostm}.(a). demonstrates this in which the highest priority transaction $T_1$ became slow so, it is forcing the conflicting transactions $T_2$ and $T_3$ to abort again and again until $T_1$ commits. Database, RWSTMs \cite{Kumar+:MVTO:ICDCN:2014,LuScott:GMV:DISC:2013,Fern+:LSMVSTM:PPoPP:2011,Perel+:2011:SMV:DISC} and SVOSTMs \cite{Juyal+:SSS:2018} say that maintaining multiple versions corresponding to each key reduces the number of aborts and improves throughput.

So, this paper proposes the first \emph{Starvation-Free Multi-Version OSTM (SF-MVOSTM)} which maintains multiple versions corresponding to each key. \figref{advmvostm}.(b). shows the benefits of using SF-MVOSTM in which $T_1$ lookups from the older version with value $v_0$ created by transaction $T_0$ (assuming as initial transaction) for key $k_1$ and $k_4$. Concurrently, $T_2$ and $T_3$ create the new versions for key $k_4$. So, all the three transactions commit with equivalent serial schedule $T_1T_2T_3$. So, SF-MVOSTM improves the concurrency than SF-SVOSTM while reducing the number of aborts and ensures the \stfdm.



\noindent
\textbf{Contributions of the paper:}  We propose two Starvation-Free OSTMs as follows:
\vspace{-.1cm}
\begin{itemize}
\item Initially, we propose Starvation-Freedom for Single-Version OSTM as SF-SVOSTM which satisfies correctness criteria as \emph{conflict-opacity (or co-opacity)}\cite{Peri+:OSTM:Netys:2018}.
\item To achieve the greater concurrency further, we propose Starvation-Freedom for Multi-Version OSTM as SF-MVOSTM in \secref{pm} which maintains multiple versions corresponding to each key and satisfies the correctness as \emph{local opacity} \cite{KuznetsovPeri:Non-interference:TCS:2017}.
\item We propose SF-SVOSTM and SF-MVOSTM for \emph{hash table} and \emph{linked-list} data structure describe in \subsecref{dsde} but its generic for other data structures as well.
\item SF-MVOSTM works for unbounded versions with \emph{Garbage Collection (GC)} as SF-MVOSTM-GC which deletes the unwanted versions from version list of keys and for bounded/finite versions as SF-KOSTM which stores finite say latest $K$ number of versions corresponding to each key $k$. So, whenever any thread creates $(K+1)^{th}$ version of key, it replaces the oldest version of it. The most challenging task is achieving \stfdm in bounded version OSTM because say, a highest priority transaction rely on the oldest version that has been replaced. So, in this case highest priority transaction has to return abort and hence make it harder to achieve starvation-freedom unlike the approach follow in SF-SVOSTM. Thus, in this paper we propose a novel approach SF-KOSTM which bridges the gap by developing \stf OSTM while maintaining bounded number of versions. 
\item \secref{exp} shows that SF-KOSTM is best among all propose Starvation-Free OSTMs (SF-SVOSTM, SF-MVOSTM, and SF-MVOSTM-GC) for both hash table and linked-list data structure. Proposed hash table based SF-KOSTM (HT-SF-KOSTM) performs 3.9x, 32.18x, 22.67x, 10.8x and 17.1x average speedup on \emph{max-time} for a transaction to commit than state-of-the-art STMs HT-KOSTM \cite{Juyal+:SSS:2018}, HT-SVOSTM \cite{Peri+:OSTM:Netys:2018}, 
ESTM \cite{Felber+:ElasticTrans:2017:jpdc}, RWSTM \cite [Chap. 4]{ WeiVoss:TIS:2002:Morg}, and HT-MVTO \cite{Kumar+:MVTO:ICDCN:2014} respectively. Proposed list based SF-KOSTM (list-SF-KOSTM)  performs 2.4x, 10.6x, 7.37x, 36.7x, 9.05x, 14.47x, and 1.43x average speedup on \emph{max-time} for a transaction to commit than state-of-the-art STMs list-KOSTM \cite{Juyal+:SSS:2018}, list-SVOSTM \cite{Peri+:OSTM:Netys:2018}, Trans-list \cite{ZhangDech:LockFreeTW:SPAA:2016}, Boosting-list \cite{HerlihyKosk:Boosting:PPoPP:2008}, NOrec-list \cite{Dalessandro+:NoRec:PPoPP:2010}, list-MVTO \cite{Kumar+:MVTO:ICDCN:2014}, and list-KSFTM \cite{Chaudhary+:KSFTM:Corr:2017} respectively.


\end{itemize}

\vspace{-.6cm}
\section{System Model and Preliminaries}
\label{sec:sm}
This section follows the notion and definition described in \cite{Juyal+:SSS:2018,tm-book}, 
we assume a system of $n$ processes/threads, $th_1,\ldots,th_n$  that run in a completely asynchronous manner and communicate through a set of \emph{keys} $\mathcal{K}$ (or \emph{{transaction-object}s}). We also assume that none of the threads crash or fail abruptly. In this paper, a thread executes higher-level methods on $\mathcal{K}$ via atomic \emph{transactions} $T_1,\ldots,T_n$ and receives the corresponding response. 

\noindent
\textbf{Events and Methods:}
Threads execute the transactions with higher-level methods (or operations) which internally invoke multiple read-write (or lower-level) operations known as \emph{events} (or $evts$). 
Transaction $T_i$ of the system at read-write level invokes \emph{STM\_begin()}, \emph{STM\_read$_i$(k)},  \emph{STM\_write$_i$(k,v)}, \emph{STM\_tryC$_i$()} and \emph{STM\_tryA$_i$()} as defined in \secref{intro}. We denote a method $m_{ij}$ as the $j^{th}$ method of $T_i$. Method \emph{invocation} (or $inv$) and \emph{response} (or $rsp$) on higher-level methods are also considered as an event.

A thread executes higher-level operations on $\mathcal{K}$ via transaction $T_i$ are known as \emph{methods} (or $mths$). $T_i$ at object level (or higher-level) invokes \emph{STM\_begin()}, \emph{STM\_lookup$_i$($k$) (or $l_i(k)$)},  \emph{STM\_insert$_i$($k,v$) (or $i_i(k,v)$)}, \emph{STM\_delete$_i$($k$) (or $d_i(k)$)}, \emph{STM\_tryC$_i$()}, and  \emph{STM\_tryA$_i$()} methods described in \secref{intro}. 
Here, \emph{STM\_lookup()}, and \emph{STM\_delete()} return the value from underlying data structure so, we called these methods as \emph{return value methods (or $\rvmt{s}$)}. Whereas, \emph{STM\_insert()}, and \emph{STM\_delete()} are updating the underlying data structure after successful \emph{STM\_tryC()} so, we called these methods as \emph{update} \mth{s} (or $\upmt{s}$).


\noindent
\textbf{Transactions:} We follow multi-level transactions\cite{WeiVoss:TIS:2002:Morg} model which consists of two layers. Layer 0 (or lower-level) composed of read-write operations whereas layer 1 (or higher-level) comprises of object-level methods which internally calls multiple read-write events. Formally, we define a transaction $T_i$ at higher-level as the tuple $\langle \evts{T_i}, <_{T_i}\rangle$, here $<_{T_i}$ represents the total order among all the events of $T_i$. 
Transaction $T_i$ cannot invoke any more operation after returning commit ($\commit$) or abort ($\abort$). Any operation that returns $\commit$ or $\abort$ are known as \emph{\termop{s}} represented as $Term(T_i)$. The transaction which neither committed nor aborted is known as \emph{live transactions} (or $trans.live$).

\noindent
\textbf{Histories:} 
A history $H$ consists of multiple transactions, a transaction calls multiple methods and each method internally invokes multiple read-write events. So, a history is a collection of events belonging to the different transactions is represented as $\evts{H}$. Formally, we define a history $H$ as the tuple $\langle \evts{H}, <_H\rangle$, here $<_H$ represents the total order among all the events of $H$. If all the method invocation of $H$  match with the corresponding response then such history is known as \emph{complete history} denoted as $\overline{H}$. Suppose total transactions in $H$ is $H.trans$, in which number of committed and aborted transactions are $H.committed$ and $H.aborted$ then the \emph{incomplete history} or \emph{live history} is defined as: $H.incomp$ = $H.live$ = \emph{($H.trans$ - $H.committed$ - $H.aborted$)}. This paper considers only \emph{well-form history} which ensures (1) the response of the previous method has received then only the transaction $T_i$ can invoke another method. (2) transaction can not invoke any other method after receiving the response as $\commit$ or $\abort$. 
  
 \noindent
 \textbf{Sequential Histories:} Following \cite{KuznetsovPeri:Non-interference:TCS:2017, KuznetsovRavi:ConcurrencyTM:OPODIS:2011},  if all the methods of the history are complete and isolated from each other, i.e. all the methods of history are atomic and following the total order in $H$. So, with this assumption, the only relevant methods of a transaction $T_i$ are $l_i(k,v)$, 
 $l_i(k,\abort)$, $i_i(k, v)$, $d_i(k, v)$, $\tryc_i(\commit)$ (or $c_i$ for short), 
 $\tryc_i(\abort)$, and $\trya_i(\abort)$ (or $a_k$ for short). We form the complete history $\overline{H}$, by immediately inserting the abort to all the live transactions of $H$. But, if the first operation of \emph{STM\_tryC$_i$()} has returned successfully then $T_i$ returns commit.
 
 \noindent
 \textbf{Real-Time Order and Serial History:} Two complete method $m_{ij}$ and $m_{xy}$ are said to be in method real-time order (or MR), if the response of $m_{ij}$ happens before the invocation of $m_{xy}$. Formally, if ($rsp(m_{ij}) <_H inv(m_{xy})$)$\implies$($m_{ij} \prec_H^{\mr} m_{xy}$). Following \cite{Papad:1979:JACM}, if transaction $T_i$ terminates (either commits or aborts) before beginning of $T_j$ then $T_i$ and $T_j$  follows transaction real-time order (or TR). Formally, if ($Term(T_i) <_H \tbeg_j()$)$\implies$($T_i \prec_H^{\tr} T_j$).  If all the transactions of a history $H$ follow the real-time order, i.e. transactions are atomic then such history is known as \emph{serial} \cite{Papad:1979:JACM} or \emph{t-sequential} \cite{KuznetsovRavi:ConcurrencyTM:OPODIS:2011} history. 
 Formally, $\langle (H \text{ is serial}) \implies (\forall T_{i} \in \txns{H}: (T_i \in Term{(H)}) \land (\forall T_{i}, T_{j} \in \txns{H}: (T_{i} \prec_H^{\tr} T_{j}) \lor (T_{j} \prec_H^{\tr} T_{i}))\rangle$. In the serial history all the methods within a transaction are also ordered, it is also sequential.

 
 \noindent
 \textbf{Valid and Legal History:}  If \emph{$rv\_method()$}  on key $k$ returns the value from any of the previously committed transactions then such \emph{$rv\_method()$} is known as \emph{valid method}. If all the \emph{$rv\_methods()$} of history $H$ return the value from any of the previously committed transactions then such $H$ is known as \emph{valid history}.
 
 Whereas, If \emph{$rv\_method()$}  on key $k$ returns the value from previous closest committed transaction then such \emph{$rv\_method()$} is known as \emph{legal method}. If all the \emph{$rv\_methods()$} of history $H$ returns the value from the previous closest committed transactions then such $H$ is known as \emph{legal history}.
 
 \cmnt{
 	To simplify our analysis, we assume that there exists an initial transaction $T_0$ that invokes $\tdel$ \mth on all the keys of the \tab{} used by any transaction. 
 	
 	A \rvmt{} (\tdel{} and \tlook{}) $m_{ij}$ on key $k$ is valid if it returns the value updated by any of the previous committed transaction that updated key $k$. A history $H$ is said to valid if all the \rvmt{s} of H are valid. 
 	
 	We define \emph{\legality{}} of \rvmt{s} on sequential histories which we use to define correctness criterion as opacity \cite{GuerKap:Opacity:PPoPP:2008}. Consider a sequential history $H$ having a \rvmt{} $\rvm_{ij}(ht, k, v)$ (with $v \neq null$) as $j^{th}$ method belonging to transaction $T_i$. We define this \rvm \mth{} to be \emph{\legal} if: 
 	\vspace{-1mm}
 	\begin{enumerate}
 		\item[LR1] \label{step:leg-same} If the $\rvm_{ij}$ is not first \mth of $T_i$ to operate on $\langle ht, k \rangle$ and $m_{ix}$ is the previous \mth of $T_i$ on $\langle ht, k \rangle$. Formally, $\rvm_{ij} \neq \fkmth{\langle ht, k \rangle}{T_i}{H}$ $\land (m_{ix}\\(ht, k, v') = \pkmth{\langle ht, k \rangle}{T_i}{H})$ (where $v'$ could be null). Then,
 		\begin{enumerate}
 			\setlength\itemsep{0em}
 			\item If $m_{ix}(ht, k, v')$ is a \tins{} \mth then $v = v'$. 
 			\item If $m_{ix}(ht, k, v')$ is a \tlook{} \mth then $v = v'$. 
 			\item If $m_{ix}(ht, k, v')$ is a \tdel{} \mth then $v = null$.
 		\end{enumerate}
 		
 		In this case, we denote $m_{ix}$ as the last update \mth{} of $\rvm_{ij}$, i.e.,  $m_{ix}(ht, k, v') = \\\lupdt{\rvm_{ij}(ht, k, v)}{H}$. 
 		
 		\item[LR2] \label{step:leg-ins} If $\rvm_{ij}$ is the first \mth{} of $T_i$ to operate on $\langle ht, k \rangle$ and $v$ is not null. Formally, $\rvm_{ij}(ht, k, v) = \fkmth{\langle ht, k \rangle}{T_i}{H} \land (v \neq null)$. Then,
 		\begin{enumerate}
 			\setlength\itemsep{0em}
 			\item There is a \tins{} \mth{} $\tins_{pq}(ht, k, v)$ in $\met{H}$ such that $T_p$ committed before $\rvm_{ij}$. Formally, $\langle \exists \tins_{pq}(ht, k, v) \in \met{H} : \tryc_p \prec_{H}^{\mr} \rvm_{ij} \rangle$. 
 			\item There is no other update \mth{} $up_{xy}$ of a transaction $T_x$ operating on $\langle ht, k \rangle$ in $\met{H}$ such that $T_x$ committed after $T_p$ but before $\rvm_{ij}$. Formally, $\langle \nexists up_{xy}(ht, k, v'') \in \met{H} : \tryc_p \prec_{H}^{\mr} \tryc_x \prec_{H}^{\mr} \rvm_{ij} \rangle$. 		
 		\end{enumerate}
 		
 		In this case, we denote $\tryc_{p}$ as the last update \mth{} of $\rvm_{ij}$, i.e.,  $\tryc_{p}(ht, k, v)$= $\lupdt{\rvm_{ij}(ht, k, v)}{H}$.
 		
 		\item[LR3] \label{step:leg-del} If $\rvm_{ij}$ is the first \mth of $T_i$ to operate on $\langle ht, k \rangle$ and $v$ is null. Formally, $\rvm_{ij}(ht, k, v) = \fkmth{\langle ht, k \rangle}{T_i}{H} \land (v = null)$. Then,
 		\begin{enumerate}
 			\setlength\itemsep{0em}
 			\item There is \tdel{} \mth{} $\tdel_{pq}(ht, k, v')$ in $\met{H}$ such that $T_p$ (which could be $T_0$ as well) committed before $\rvm_{ij}$. Formally, $\langle \exists \tdel_{pq}\\(ht, k,$ $ v') \in \met{H} : \tryc_p \prec_{H}^{\mr} \rvm_{ij} \rangle$. Here $v'$ could be null. 
 			\item There is no other update \mth{} $up_{xy}$ of a transaction $T_x$ operating on $\langle ht, k \rangle$ in $\met{H}$ such that $T_x$ committed after $T_p$ but before $\rvm_{ij}$. Formally, $\langle \nexists up_{xy}(ht, k, v'') \in \met{H} : \tryc_p \prec_{H}^{\mr} \tryc_x \prec_{H}^{\mr} \rvm_{ij} \rangle$. 		
 		\end{enumerate}
 		In this case, we denote $\tryc_{p}$ as the last update \mth{} of $\rvm_{ij}$, i.e., $\tryc_{p}(ht, k, v)$ $= \lupdt{\rvm_{ij}(ht, k, v)}{H}$. 
 	\end{enumerate}
 	We assume that when a transaction $T_i$ operates on key $k$ of a \tab{} $ht$, the result of this \mth is stored in \emph{local logs} of $T_i$, $\llog_i$ for later \mth{s} to reuse. Thus, only the first \rvmt{} operating on $\langle ht, k \rangle$ of $T_i$ accesses the shared-memory. The other \rvmt{s} of $T_i$ operating on $\langle ht, k \rangle$ do not access the shared-memory and they see the effect of the previous \mth{} from the \emph{local logs}, $\llog_i$. This idea is utilized in LR1. With reference to LR2 and LR3, it is possible that $T_x$ could have aborted before $\rvm_{ij}$. For LR3, since we are assuming that transaction $T_0$ has invoked a \tdel{} \mth{} on all the keys used of the \tab{} objects, there exists at least one \tdel{} \mth{} for every \rvmt on $k$ of $ht$. We formally prove legality in \apnref{cmvostm} and then we finally show that history generated by \hmvotm{} is \opq{} \cite{GuerKap:Opacity:PPoPP:2008}.
 	
 	
 	Coming to \tins{} \mth{s}, since a \tins{} \mth{} always returns $ok$ as they overwrite the node if already present therefore they always take effect on the $ht$. Thus, we denote all \tins{} \mth{s} as \legal{} and only give legality definition for \rvmt{}. We denote a sequential history $H$ as \emph{\legal} or \emph{linearized} if all its \rvm \mth{s} are \legal.
 	
 }

 \noindent
 \textbf{Sub-history:} Following \cite{Chaudhary+:KSFTM:Corr:2017}, A \textit{sub-history} ($SH$) of a history ($H$) is a collection of events represented as $\evts{SH}$. We denote a $SH$ as the tuple $\langle \evts{SH},$ $<_{SH}\rangle$, here $<_{SH}$ represents the
 total order of among all the events of $SH$. Formally, $SH$ is defined as: (1) $<_{SH} \subseteq <_{H}$; (2) $\evts{SH} \subseteq \evts{H}$; (3) If an event of a transaction $T_k\in\txns{H}$ is in $SH$ then all the events of $T_k$ in $H$ should also be in $SH$. 
 
 Let $P$ is a subset of $H$ which consists of a set transactions $\txns{H}$. Then the sub-history of $H$ is represented as $\shist{P}{H}$ which comprises of $\evts{}$ of $P$. 
 
 \noindent
 \textbf{Conflict-opacity (or Co-opacity):} It is a well known correctness criteria of STMs that is polynomial time verifiable. Co-opacity \cite{Peri+:OSTM:Netys:2018} is a subclass of opacity. A history $H$ is co-opaque, if there exist a serial history $S$ with following properties: (1) $S$ and $\overline{H}$ are equivalent, i.e. $\evts{S} = \evts{\overline{H}}$. (2) $S$ should be legal. (3) $S$ should respect the transactional real-time as $H$ $\prec_S^{\tr} \subseteq \prec_H^{\tr}$. (4) $S$ preserves conflict-order of $H$, i.e. $\prec_S^{Conf} \subseteq \prec_H^{Conf}$.
 
 \noindent
 \textbf{Opacity and Strict Serializability:} A popular correctness criteria for STMs is opacity \cite{GuerKap:Opacity:PPoPP:2008}. A valid history $H$ is opaque, if there exist a serial history $S$ with following properties: (1) $S$ and $\overline{H}$ are equivalent, i.e. $\evts{S} = \evts{\overline{H}}$. (2) $S$ should be legal. (3) $S$ should respect the transactional real-time order as $H$, i.e. $\prec_S^{\tr} \subseteq \prec_H^{\tr}$.
 
 A valid history $H$ is strict serializable \cite{Papad:1979:JACM}, if sub-history  of $H$ as $\subhist(H)$ (consists of all the committed transactions of $H$) is opaque, i.e., $H.subhist(H.committed)$ is \emph{opaque}. So, if a history is \emph{strict serializable} then it will be \opq as well but  the vice-versa is not true. Unlike opacity, \emph{strict serializability} does not consider aborted transactions.
 
 A commonly used correctness criteria in databases is \emph{serializability} \cite{Papad:1979:JACM} which considers committed transactions only. Guerraoui \& Kapalka \cite{GuerKap:Opacity:PPoPP:2008} show that STMs considers the correctness of \emph{aborted} transactions along with \emph{committed} transactions as well. So,  \emph{serializability} is not suitable for STMs. On the other hand, \emph{opacity} \cite{GuerKap:Opacity:PPoPP:2008} is a popular correctness criteria for STMs which considers \emph{aborted} as well as \emph{committed} transactions. Similar to \emph{opacity} but less restrictive, an another popular correctness criteria is \emph{local opacity} defined below.

 \noindent
 \textbf{Local Opacity (or LO):} It is an another popular correctness criteria for STMs which is less restrictive than opacity. Local-opacity \cite{KuznetsovPeri:Non-interference:TCS:2017} is a superclass of opacity. It considers a set of sub-histories for history $H$, denoted as $\shset{H}$ as follows: (1) We construct a $\subhist$ corresponding to each aborted transaction $T_i$ while including all the events $\evts{}$ of previously \emph{committed} transactions along with all successful $\evts{}$ of $T_i$ (i.e., $\evts{}$ which has not returned $\mathcal{A}$ yet) and put \emph{committed} $\mathcal{C}$ immediately after last successful operation of $T_i$; (2) We construct one more sub-history corresponding to last \emph{committed} transaction $T_l$ which considers all the previously \emph{committed} transactions only including  $T_l$.
 
 If all the above defined sub-histories $\shset{H}$ for a history $H$ are opaque then $H$ is said to be \emph{\lopq} \cite{KuznetsovPeri:Non-interference:TCS:2017}. We can observe that a sub-history corresponding to an aborted transaction $T_i$ contains $\evts{}$ from only one aborted transaction which is $T_i$ itself and does not consider any other live/aborted transactions. Similarly, sub-history corresponding to last \emph{committed} transaction $T_l$ does not consider any $\evts{}$ of aborted and live transactions. Thus in \lopty, neither aborted nor live transaction cause any other transaction to abort. Researchers have been proved that \lopty 
 \cite{KuznetsovPeri:Non-interference:TCS:2017} ensures greater concurrency than \opty \cite{GuerKap:Opacity:PPoPP:2008}. If a history is \opq then it will be \lopq as well but  the vice-versa is not true. On the other hand, if a history is \lopq then it will be \stsble as well, but the vice-versa need not be true. Please find  illustrations of local opacity in \secref{gcoflo}. 

\vspace{-.1cm}
\section{The Proposed SF-KOSTM Algorithm}
\label{sec:pm}
\vspace{-.1cm}

In this section, we propose \emph{Starvation-Free K-version OSTM (SF-KOSTM)} algorithm which maintains $K$ number of versions corresponding to each key. The value of $K$ can vary from 1 to $\infty$. When $K$ is equal to 1 then SF-KOSTM boils down to \emph{Starvation-Free Single-Version OSTM (SF-SVOSTM)}. When $K$ is $\infty$ then SF-KOSTM maintains unbounded versions corresponding to each key known as \emph{Starvation-Free Multi-Version OSTM (SF-MVOSTM)} algorithm. To delete the unused version from the version list of SF-MVOSTM, it calls a separate Garbage Collection (GC) method \cite{Kumar+:MVTO:ICDCN:2014} and proposes SF-MVOSTM-GC.  In this paper, we propose SF-SVOSTM and all the variants of SF-KOSTM \emph{(SF-MVOSTM, SF-MVOSTM-GC, SF-KOSTM)} for two data structures \emph{hash table} and \emph{linked-list} but it is generic for other data structures as well.

\subsecref{prelim} describes the definition of \emph{starvation-freedom} followed by our assumption about the scheduler that helps us to achieve \stfdm in SF-KOSTM. \subsecref{dsde} explains the design and data structure of SF-KOSTM. \subsecref{working} shows the working of SF-KOSTM algorithm. 


\vspace{-.2cm}
\subsection{Description of Starvation-Freedom}
\label{subsec:prelim}
\vspace{-.1cm}
\vspace{-.1cm}
\begin{definition}
\label{defn:stf}
\textbf{Starvation-Freedom:} An STM system is said to be \stf if a thread invoking a non-parasitic transaction $T_i$ gets the opportunity to retry $T_i$ on every abort, due to the presence of a fair scheduler, then $T_i$ will eventually commit.
\end{definition}
\vspace{-.1cm}
Herlihy \& Shavit \cite{HerlihyShavit:Progress:Opodis:2011} defined the fair scheduler which ensures that none of the thread will crash or delayed forever. Hence, any thread $Th_i$ acquires the lock on the shared data-items while executing transaction $T_i$ will eventually release the locks. So, a thread will never block other threads to progress. To satisfy the \stfdm for SF-KOSTM, we assumed bounded termination for the fair scheduler.
\vspace{-.1cm}
\begin{assumption}
	\label{asm:bdtm}
	\textbf{Bounded-Termination:} For any transaction $T_i$, invoked by a thread $Th_i$, the fair system scheduler ensures, in the absence of deadlocks, $Th_i$ is given sufficient time on a CPU (and memory, etc) such that $T_i$ terminates ($\commit$ or $\abort$) in bounded time. 
\end{assumption}
\vspace{-.1cm}
In the proposed algorithms, we have considered \emph{TB} as the maximum time-bound of a transaction $T_i$ within this either $T_i$ will return commit or abort in the absence of deadlock. Approach for achieving the \emph{deadlock-freedom} is motivated from the literature in which threads executing transactions acquire the locks in increasing order of the keys and releases the locks in bounded time either by committing or aborting the transaction. We consider an assumption about the transactions of the system as follows.
\vspace{-.1cm}
\begin{assumption}
	\label{asm:self}
	We assume, if other concurrent conflicting transactions do not exist in the system then every transaction will commit. i.e. (a) If a transaction $T_i$ is executing in the system with the absence of other conflicting transactions then $T_i$ will not self-abort. (b) Transactions of the system are non-parasitic as explained in \secref{intro}.
\end{assumption}
\vspace{-.1cm}
\noindent If transactions self-abort or parasitic then ensuring \stfdm is impossible.

\vspace{-.1cm}
\subsection{Design and Data Structure of SF-KOSTM Algorithm}
\label{subsec:dsde}
\vspace{-.1cm}
In this subsection, we show the design and underlying data structure of SF-KOSTM algorithm to maintain the shared data-items (or keys).

To achieve the \emph{Starvation-Freedom} in \emph{K-version Object-based STM (SF-KOSTM)}, we use chaining hash table (or $ht$) as an underlying data structure where the size of the hash table is \emph{M} buckets as shown in \figref{dsdesign}.(a) and we propose HT-SF-KOSTM. Hash table with bucket size $one$ becomes the linked-list data structure for SF-KOSTM represented as list-SF-KOSTM.  The representation of SF-KOSTM is similar to MVOSTM \cite{Juyal+:SSS:2018}. Each bucket stores multiple nodes in the form of linked-list between the two sentinel nodes \emph{Head}(-$\infty$) and \emph{Tail}(+$\infty$). \figref{dsdesign}.(b) illustrates the structure of each node as $\langle$\emph{key, lock, mark, vl, nNext}$\rangle$. Where, \emph{key} is the unique value from the range of [1 to $\mathcal{K}$] stored in the increasing order between the two sentinel nodes similar to linked-list based concurrent set implementation \cite{Heller+:LazyList:PPL:2007, Harris:NBList:DISC:2001}. The \emph{lock} field is acquired by the transaction before updating (inserting or deleting) on the node. \emph{mark} is the boolean field which says anode is deleted or not. If \emph{mark} sets to true then node is logically deleted else present in the hash table. Here, the deletion is in a lazy manner similar to concurrent linked-list structure \cite{Heller+:LazyList:PPL:2007}. The field \emph{vl} stands for version list. SF-KOSTM maintains the finite say latest $K$-versions corresponding to each key to achieving the greater concurrency as explained in \secref{intro}. Whenever $(K+1)^{th}$ version created for the key then it overwrites the oldest version corresponding to that key. 
 If $K$ is equal to 1, i.e., version list contains only one version corresponding to each key which boils down to \emph{Starvation-Free Single-Version OSTM (SF-SVOSTM)}. So, the data structure of SF-SVOSTM is same as SF-KOSTM with one version. The field \emph{nNext} points to next available node in the linked-list. From now onwards, we will use the term key and node interchangeably.
 \begin{figure}
 	\centerline{\scalebox{0.35}{\input{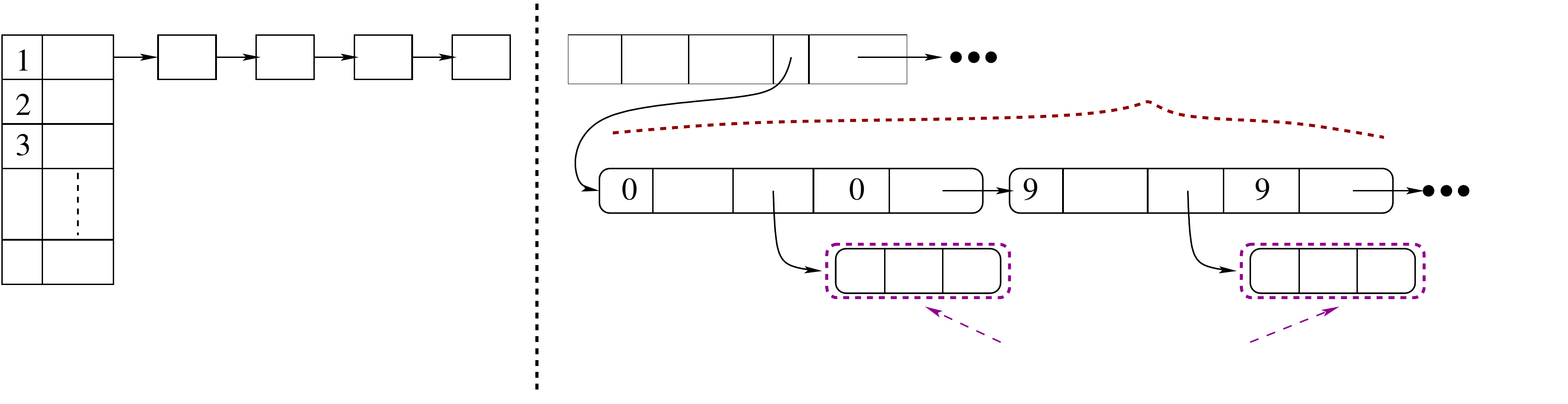_t}}}
\vspace{-.3cm} 	\caption{Design and Data Structure of SF-KOSTM}
 	\label{fig:dsdesign}
 \end{figure} 
\begin{figure}
	\centering
	\begin{minipage}[b]{0.48\textwidth}
		\centering
		\captionsetup{justification=centering}
		\scalebox{.36}{\input{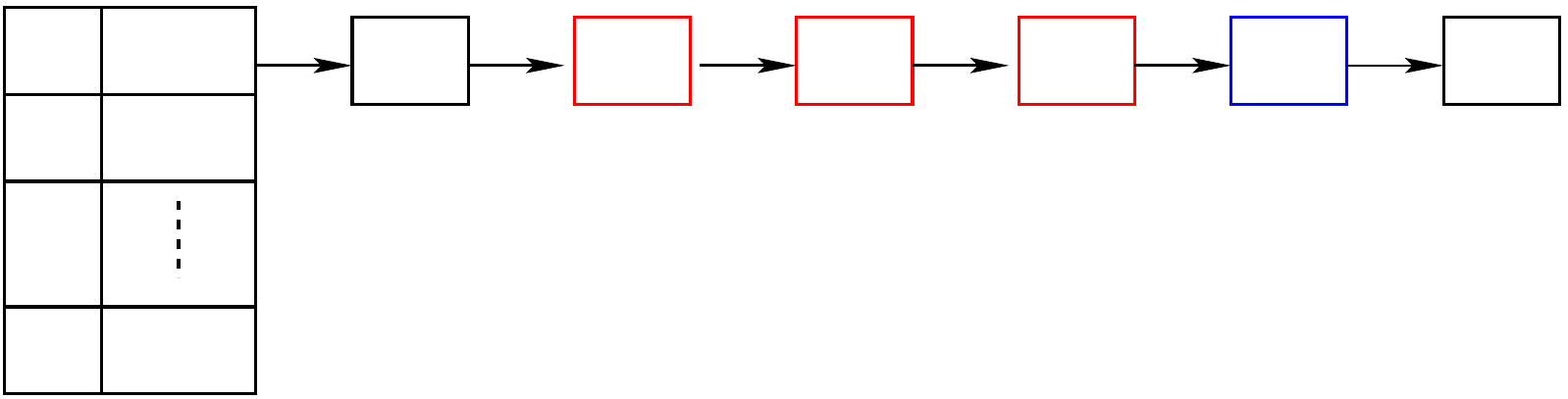_t}}
		 \vspace{-.5cm}
		\caption{Searching $k_9$ over \emph{lazy-list}}
		\label{fig:stmore}
	\end{minipage}   
	\hfill
	\begin{minipage}[b]{0.48\textwidth}
		\scalebox{.36}{\input{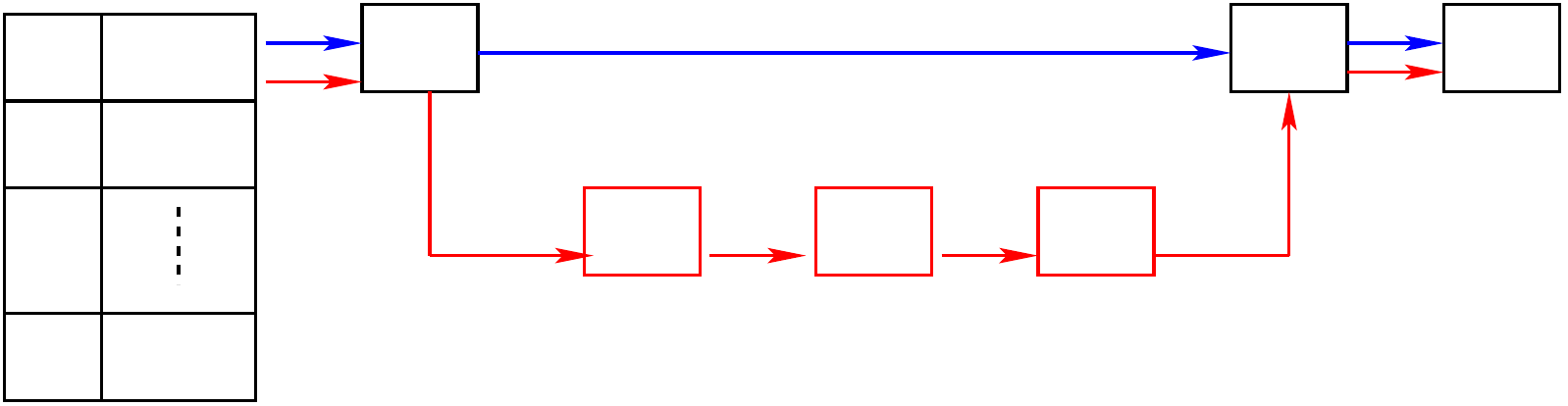_t}}
		\centering
		\captionsetup{justification=centering}
		 \vspace{-.5cm}
		\caption{Searching $k_9$ over $\lsl{}$}
		\label{fig:stless}
	\end{minipage}
\end{figure}

\noindent
The structure of the \emph{vl} is $\langle$\emph{ts, val, rvl, vrt, vNext}$\rangle$ as shown in \figref{dsdesign}.(b). \emph{ts} is the unique timestamp assigned by the \tbeg{()}. If value (\emph{val}) is \emph{nil} then version is created by the \tdel{()} otherwise \tins{()} creates a version with not \emph{nil} value. To satisfy the correctness criteria as \emph{local opacity}, \tdel{()} also maintains the version corresponding to each key with \emph{mark} field as $true$. It allows the concurrent transactions to lookup from the older version of the marked node and returns the value as not \emph{nil}. \emph{rvl} stands for \emph{return value list} which maintains the information about lookup transaction that has lookups from a particular version. It maintains the timestamp (\emph{ts}) of \rvmt{s} (\tlook{()} or \tdel{()}) transaction in it. \emph{vrt} stands for \emph{version real time} which helps to maintain the \emph{real-time order} among the transactions. \emph{vNext} points to next available version in the version list. 

Maintaining the deleted node along with the live (not deleted) node will increase the traversal time to search a particular node in thelist. Consider \figref{stmore}, where red color depicts the deleted node $\langle k_1, k_2, k_4 \rangle$ and blue color depicts the live node $\langle k_9\rangle$. When any method of SF-KOSTM searches the key $k_9$ then it has to traverse the deleted nodes $\langle k_1, k_2, k_4 \rangle$ as well before reach to $k_9$ that increases the traversal time.  

This motivated us to modify the lazy-list structure of a node to form a skip list based on red and blue links. We called it as a \emph{red-blue lazy-list} or \emph{\lsl{}}. This idea has been explored by Peri et al. in SVOSTMs\cite{Peri+:OSTM:Netys:2018}. \emph{\lsl{}} maintains two-pointer corresponding to each node such as red link (\rn{}) and blue link (\bn{}). Where \bn{} points to the live node and \rn{} points to live node as well as deleted node. Let us consider the same example as discussed above with this modification, key $k_9$ is directly searched from the head of the list with the help of \bn{} as shown in \figref{stless}. In this case, traversal time is efficient because any method of SF-KOSTM need not traverse the deleted nodes. To maintain the \rn{} and \bn{} in each node we modify the structure of \emph{lazy-list} as $\langle \emph{key, lock, mark, vl, \rn, \bn, nNext} \rangle$ and called it as \emph{\lsl}.

\vspace{-.4cm}
\subsection{Working of SF-KOSTM Algorithm}
\label{subsec:working}
In this subsection, we describe the working of SF-KOSTM algorithm which includes the detail description of SF-KOSTM methods and challenges to make it starvation-free. This description can easily be extended to SF-MVOSTM and SF-MVOSTM-GC as well. 

SF-KOSTM invokes \tbeg{()}, \tlook{()}, \tdel{()}, \tins{()}, and \tryc{()} methods. \tlook{()} and \tdel{()} work as \rvmt{s()} which lookup the value of key $k$ from shared memory and return it. Whereas \tins{()} and \tdel{()} work as \upmt{s()} that modify the value of $k$ in shared memory. We propose optimistic SF-KOSTM, so, \upmt{s()} first update the value of $k$ in transaction local log $txLog$ and the actual effect of \upmt{s()} will be visible after successful \tryc{()}. Now, we explain the functionality of each method as follows:

\noindent
\textbf{\tbeg{():}} 
When a thread $Th_i$ invokes transaction $T_i$ for the first time (or first incarnation), \tbeg{()} assigns a unique timestamp known as \emph{current timestamp} $(cts)$ using atomic global counter ($gcounter$). If $T_i$ gets aborted then thread $Th_i$ executes it again with new incarnation of $T_i$, say $T_j$ with the new $cts$ until $T_i$ commits but retains its initial $cts$ as \emph{initial timestamp} $(its)$. $Th_i$ uses $its$ to inform the SF-KOSTM system that whether $T_i$ is a new invocation or an incarnation. 
 If $T_i$ is the first incarnation then $its$ and $cts$ are same as $cts_i$ so, $Th_i$ maintains $\langle \emph{$its_i$, $cts_i$}\rangle$. If $T_i$ gets aborted and retries with $T_j$ then $Th_i$ maintains $\langle \emph{$its_i$, $cts_j$}\rangle$.



By assigning priority to the lowest $its$ transaction (i.e. transaction have been in the system for longer time) in \emph{Single-Version OSTM}, \emph{Starvation-Freedom} can easily achieved as explained in \secref{intro}. The detailed working of \emph{Starvation-Free Single-Version OSTM (SF-SVOSTM)} is in \apnref{ap-sfostm}. But achieving \emph{Starvation-Freedom} in finite \emph{K-versions OSTM (SF-KOSTM)} is challenging. Though the transaction $T_i$ has lowest $its$ but $T_i$ may return abort because of finite versions $T_i$ did not find a correct version to lookup from or overwrite a version. \tabref{sfillus1} shows the key insight to achieve the \stfdm in finite K-versions OSTM. Here, we considered two transaction $T_{10}$ and $T_{20}$ with $cts$ 10 and 20 that performs \emph{STM\_lookup()} $($or $l)$ and \emph{STM\_insert()} $($or $i)$ on same key $k$. We assume that a version of $k$ exists with $cts$ 5, so, \emph{STM\_lookup()} of $T_{10}$ and $T_{20}$ find a previous version to lookup and never return abort. Due to the optimistic execution in SF-KOSTM, effect of \emph{STM\_insert()} comes after successful \emph{STM\_tryC()}, so \emph{STM\_lookup()} of a transaction comes before effect of its \emph{STM\_insert()}. Hence, total six permutations are possible as defined in \tabref{sfillus1}. We can observe from the \tabref{sfillus1} that in some cases $T_{10}$ returns abort. But if $T_{20}$ gets the lowest $its$ then $T_{20}$ never returns abort. This ensures that a transaction with lowest $its$ and highest $cts$ will never return abort. But achieving highest $cts$ along with lowest $its$ is bit difficult because new transactions are keep on coming with higher $cts$ using $gcounter$. So, to achieve the highest $cts$, we introduce a new timestamp as \emph{working timestamp (wts)} which is significantly larger than $cts$. 

\begin{table}[H]
	\small
	\begin{tabular}{ | m{1cm} | m{4cm}| m{7cm} | } 
		\hline
		S. No.& Execution Sequence & Possible actions by Transactions  \\ 
		\hline
		1. & $l_{10}(k), i_{10}(k), l_{20}(k), i_{20}(k)$ & $T_{20}(k)$ lookups the version inserted by $T_{10}$. No conflict. \\
		\hline
		2. & $l_{10}(k), l_{20}(k), i_{10}(k), i_{20}(k)$ & Conflict detected at $i_{10}(k)$. Either abort $T_{10}$ or $T_{20}$. \\
		\hline
		3. & $l_{10}(k), l_{20}(k), i_{20}(k), i_{10}(k)$ & Conflict detected at $i_{10}(k)$. Hence, abort $T_{10}$. \\
		\hline
		4. & $l_{20}(k), l_{10}(k), i_{20}(k), i_{10}(k)$ & Conflict detected at $i_{10}(k)$. Hence, abort $T_{10}$. \\
		\hline
		5. & $l_{20}(k), l_{10}(k), i_{10}(k), i_{20}(k)$ & Conflict detected at $i_{10}(k)$. Either abort $T_{10}$ or $T_{20}$. \\
		\hline
		6. & $l_{20}(k), i_{20}(k), l_{10}(k), i_{10}(k)$ & Conflict detected at $i_{10}(k)$. Hence, abort $T_{10}$.\\
		\hline
	\end{tabular}
	\captionsetup{justification=centering}
	\caption{Possible Permutations of Methods}
	\label{tab:sfillus1}
\end{table}

\vspace{-.4cm}
\noindent
\tbeg{()} maintains the $wts$ for transaction $T_i$ as $wts_i$, which is potentially higher timestamp as compare to $cts_i$. So, we derived,
\setlength\abovedisplayskip{0pt}
\begin{equation}
\label{eq:wtsf}
\twts{i} = \tcts{i} + C * (\tcts{i} - \tits{i});
\vspace{-.2cm}
\end{equation}
where C is any constant value greater than 0. When $T_i$ is issued for the first time then $wts_i$, $cts_i$, and $its_i$ are same. If $T_i$ gets aborted again and again then drift between the $cts_i$ and $wts_i$ will increases. The advantage for maintaining $wts_i$ is if any transaction keeps getting aborted then its $wts_i$ will be high and $its_i$ will be low. Eventually, $T_i$ will get chance to commit in finite number of steps to achieve starvation-freedom. For simplicity, we use timestamp (\emph{ts}) $i$ of $T_i$ as $wts_i$, i.e., $\langle wts_i$ = $i\rangle$ for SF-KOSTM. 
\vspace{-.2cm}
\begin{observation}
Any transaction $T_i$ with lowest $its_i$ and highest $wts_i$ will never abort.
\end{observation}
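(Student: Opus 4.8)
The plan is to walk through the SF-KOSTM algorithm and list every line at which a transaction can return $\abort$, and then to show that the two hypotheses on $T_i$ --- namely that $\tits{i}$ is the smallest and $\twts{i}$ is the largest among all transactions present in the system --- jointly make each such line unreachable for $T_i$. It is convenient to split the abort lines into two families and to discharge each family with one of the two hypotheses. (Self-aborts and parasitic behaviour are excluded outright by \asmref{self}, so they need no argument.)

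\textbf{Conflict-induced aborts.} These arise inside \tlook{()}, \tdel{()} and \tryc{()} when $T_i$ discovers that a concurrent conflicting transaction $T_j$ has interfered, i.e.\ in the timestamp-order and interference checks (\nptov, \npctov, \npintv). The conflict-resolution policy of SF-KOSTM resolves every such conflict in favour of the transaction with the smaller $its$: it is the transaction with the larger $its$ that is forced to abort (or to wait). Since $\tits{i}$ is the global minimum, $T_i$ wins every conflict, so none of these lines can fire for $T_i$; the worst that can happen is that $T_i$ waits for a lock currently held by some $T_j$, and by \asmref{bdtm} together with the deadlock-freedom of the increasing-key locking discipline this wait is bounded and $T_i$ eventually proceeds.

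\textbf{Version-availability aborts.} These are the aborts peculiar to keeping only $K$ versions per key: an \rvmt{} of $T_i$ could fail for lack of a version with a timestamp appropriate for $T_i$, or \tryc{()} of an \upmt{} of $T_i$ could fail because the new version $T_i$ wants to create would have to be ordered between some version $v$ of key $k$ and a transaction already recorded in $v$'s \emph{rvl}, or would displace a still-needed version. Here I would use $\twts{i}$ being the maximum, recalling that $T_i$ performs its version reads and version creations using $\twts{i}$ as its timestamp (\eqnref{wtsf}). Then (a) when $T_i$ reads key $k$ there is always a version with the largest timestamp strictly below $\twts{i}$, because the initial transaction $T_0$ carries the smallest timestamp and has created a version of every key; and (b) when $T_i$ commits an insert or delete of $k$, the version it creates carries the globally largest timestamp, hence is placed at the newest end of $k$'s version list --- it never needs to precede a recorded read, and the only version it can evict is the oldest one, whose disappearance may abort \emph{some other} transaction of smaller $wts$ but never $T_i$. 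Combining this with the conflict case shows that every $\abort$ line of SF-KOSTM is dead code for $T_i$, which is the claim.

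The step I expect to be the main obstacle is (b): a careful, line-by-line verification that committing $T_i$'s newest version cannot trip any timestamp-order or interference check of \tryc{()}, and in particular that the return-value-list bookkeeping together with the bounded-version replacement rule never forces $T_i$ --- as opposed to a lower-$wts$ peer --- to roll back. This needs the precise code of the validation routines and is best organised as a case analysis on where a concurrent reader's observed version of $k$ sits relative to $\twts{i}$. A secondary point to pin down is the exact reading of ``highest $\twts{i}$'': since every fresh incarnation of $T_i$ draws a larger $cts$ from $gcounter$, one argues from \eqnref{wtsf} and \asmref{bdtm} that after finitely many retries $\twts{i}$ dominates every timestamp any concurrently live or interfering transaction can hold during $T_i$'s run --- this is the bridge from the observation to the eventual \stfdm proof, though for the observation itself it suffices to take that domination as a hypothesis.
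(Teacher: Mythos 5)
Your overall plan---enumerate every line at which SF-KOSTM can return $\abort$ and discharge each one from one of the two hypotheses---is exactly the strategy the paper uses when it makes this observation rigorous (the main text attaches no proof to the observation itself; the real argument is \lemref{its-wts} together with \lemref{enbd-ct} in the liveness section, where ``lowest $\tits{i}$'' becomes \itsen and ``highest $\twts{i}$'' becomes $\twts{i} > \haffwts{i}{H}$). Your closing remark that ``highest $wts$'' must be read as domination over every timestamp a concurrent transaction can ever hold, taken as a hypothesis here and established separately for the starvation-freedom theorem, is also the right reading.

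The gap is in how you split the abort sites between the two hypotheses. You claim that the ``conflict-induced'' family, including the timestamp-order checks, is killed by the lowest-$its$ hypothesis because ``the conflict-resolution policy resolves every such conflict in favour of the transaction with the smaller $its$.'' That is true only for the status-flag mechanism (\Lineref{rvm5}, \Lineref{tc3}, \Lineref{tcv10}): no live $T_j$ with $\tits{j}<\tits{i}$ exists to set $T_i$'s status to false. It is false for two other abort sites in your first family. At \Lineref{tcv15}, $T_i$ aborts \emph{itself} whenever some $T_p$ in largeRVL has already committed, even though $\tits{i}<\tits{p}$; lowest $its$ does not help, and the only way to rule this out is that largeRVL is empty, i.e.\ no committed transaction has $wts$ exceeding $\twts{i}$---the highest-$wts$ hypothesis. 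Likewise the limit-crossing checks at \Lineref{rvm27}, \Lineref{tcv24} and \Lineref{tcv30} compare no $its$ at all and abort $T_i$ unconditionally once $tltl_i>tutl_i$; the paper excludes them via \lemref{ti|tutl-comt}, which shows $tutl_i$ can only drop below $\ct_i$ because a committed transaction with $wts$ greater than $\twts{i}$ created a later version---again impossible only under the highest-$wts$ hypothesis. So as written, your argument leaves these sites undischarged: they sit in the family you assign to the $its$ hypothesis, which does not reach them. Redrawing the line---lowest $its$ kills exactly the status-false aborts, highest $wts$ kills everything else (missing version, non-live largeRVL entries, and all $tltl/tutl$ crossings)---repairs the proof and brings it into line with the paper's case analysis.
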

\vspace{-.1cm}
Sometimes, the value of $wts$ is significantly larger than $cts$. So, $wts$ is unable to maintain \emph{real-time order} between the transactions which violates the correctness of SF-KOSTM (shown in \figref{sfmv-correct} in \apnref{wtsdis}). To address this issue SF-KOSTM uses the idea of timestamp ranges \cite{Riegel+:LSA:DISC:2006,Guer+:disc1:2008, Crain+:RI_VWC:ICA3PP:2011} along with $\langle its_i, cts_i, wts_i \rangle$ for transaction $T_i$ in \tbeg{()}. It maintains the \emph{transaction lower timestamp limit ($tltl_i$)} and \emph{transaction upper timestamp limit ($tutl_i$)} for $T_i$. Initially, $\langle its_i, cts_i, wts_i, tltl_i \rangle$ are the same for $T_i$. $tutl_i$ would be set as a largest possible value denoted as $+\infty$ for $T_i$. After successful execution of \emph{\rvmt{s()}} or \tryc{()} of $T_i$, $tltl_i$ gets incremented and $tutl_i$ gets decremented\footnote{Practically $\infty$ can't be decremented for $tutl_i$ so we assign the highest possible value to $tutl_i$ which gets decremented.} to respect the real-time order among the transactions. \tbeg{()} initializes the \emph{transaction local log $(txLog_i)$} for each transaction $T_i$ to store the information in it. Whenever a transaction starts it atomically sets its \emph{status} to be \emph{live} as a global variable. Transaction \emph{status} can be $\langle \emph{live, commit, false}\rangle$. After successful execution of \tryc{()}, $T_i$ sets its \emph{status} to be \emph{commit}. If \emph{status} of the transaction is \emph{false} then it returns \emph{abort}. For more details please refer the high level view of \tbeg{()} in \algoref{begin1}.

\begin{algorithm}[H] 
	\caption{\emph{STM\_begin($its$)}: This method is invoke by a thread $Th_i$ to start a new transaction $T_i$. It pass a parameter $its$ which is the initial timestamp of the first incarnation of $T_i$. If $T_i$ is the first incarnation then $its$ is $nil$.}
	\label{algo:begin1} 
	\setlength{\multicolsep}{0pt}
	\begin{algorithmic}[1]
		\makeatletter\setcounter{ALG@line}{0}\makeatother
		\Procedure{\emph{STM\_begin($its$)}}{}  \label{lin:begin1}
		\State Create a local log $txLog_i$ for each transaction. \label{lin:begin2}
		\If {($its == nil$)} \label{lin:begin3}
		\State /* Atomically get the value from the global counter and set it to its, cts, and \blank{1cm} wts.*/ \label{lin:begin4}
		\State $its_i$ = $cts_i$ = $wts_i$ = \emph{gcounter.get\&Inc()}; \label{lin:begin5}
		\Else  \label{lin:begin6}
		\State /*Set the $its_i$ to first incarnation of $T_i$ $its$*/ \label{lin:begin7}
		\State $its_i$ = $its$; \label{lin:begin8}
		\State /*Atomically get the value from the global counter for $cts_i$*/ \label{lin:begin9}
		\State $cts_i$ = \emph{gcounter.get\&Inc()}. \label{lin:begin10}
		\State /*Set the $wts$ value with the help of $cts_i$ and $its_i$*/ \label{lin:begin11}
		\State $wts_i$ = $cts_i$+C*($cts_i$-$its_i$). \label{lin:begin12}
		\EndIf \label{lin:begin13}
		\State /*Set the $tltl_i$ as $cts_i$*/ \label{lin:begin14}
		\State $tltl$ = $cts_i$.   \label{lin:begin15}
		\State /*Set the $tutl_i$ as possible large value*/ \label{lin:begin16}
		\State $tutl_i$ = $\infty$. \label{lin:begin17}
		\State /*Initially, set the $status_i$ of $T_i$ as $live$*/ \label{lin:begin18}
		\State $status_i$ = $live$;  \label{lin:begin19}
		\State return $\langle cts_i, wts_i\rangle$ \label{lin:begin20}
		
		\EndProcedure
	\end{algorithmic}
\end{algorithm}



\noindent
\textbf{\tlook{()} and \tdel{()} as \rvmt{s()}:} \emph{\rvmt{s(ht, k, val)}} return the value (\emph{val}) corresponding to the key \emph{k} from the shared memory as hash table (\emph{ht}). We show the high level overview of the \rvmt{s()} in \algoref{rvmt}. First, it identifies the key $k$ in the transaction local log as \emph{$txLog_i$} for transaction $T_i$. If $k$ exists then it updates the $txLog_i$ and returns the $val$ at \Lineref{rvm2}.

If key $k$ does not exist in the $txLog_i$ then before identify the location in share memory \rvmt{s()} check the \emph{status} of $T_i$ at \Lineref{rvm5}. If \emph{status} of $T_i$ (or $i$) is \emph{false} then $T_i$ has to \emph{abort} which says that $T_i$ is not having the lowest $its$ and highest $wts$ among other concurrent conflicting transactions. So, to propose starvation-freedom in SF-KOSTM other conflicting transactions set the \emph{status} of $T_i$ as \emph{false} and force it to \emph{abort}.

If \emph{status} of $T_i$ is not \emph{false} and key $k$ does not exist in the $txLog_i$ then it identifies the location of key $k$ optimistically (without acquiring the locks similar to the \emph{lazy-list}\cite{Heller+:LazyList:PPL:2007}) in the shared memory at \Lineref{rvm7}. SF-KOSTM maintains the shared memory in the form of hash table with $M$ buckets as shown in \subsecref{dsde}, where each bucket stores the keys in \emph{\lsl}. Each node contains two pointer $\langle \rn, \bn \rangle$. So, it identifies the two \emph{predecessors (pred)} and two \emph{current (curr)} with respect to each node. First, it identifies the pred and curr for key $k$ in \bn{} as $\langle \bp, \bc \rangle$. After that it identifies the pred and curr for key $k$ in \rn{} as $\langle \rp, \rc \rangle$. If $\langle \rp, \rc \rangle$ are not marked then $\langle \bp=\rp, \bc=\rc\rangle$. SF-KOSTM maintains the keys are in increasing order. So, the order among the nodes are $\langle \bp.key \leq \rp.key < k \leq \rc.key \leq \bc.key\rangle$.

\rvmt{s()} acquire the lock in predefined order on all the identified preds and currs for key $k$ to avoid the deadlock at \Lineref{rvm8} and do the \emph{rv\_Validation()} as shown in \algoref{rvm}. If $\langle \bp \vee \bc\rangle$ is marked or preds are not pointing to identified currs as $\langle (\bp.\bn \neq \bc) \vee (\rp.\rn \neq \rc)\rangle$ then it releases the locks from all the preds and currs and identify the new preds and currs for $k$ in shared memory. 

\begin{algorithm}[H]
	
	\caption{\emph{rv\_Validation(preds[], currs[])}: It is mainly used for \emph{rv\_method()} validation.}
	\label{algo:rvm} 
	\begin{algorithmic}[1]
		\makeatletter\setcounter{ALG@line}{21}\makeatother
		\Procedure{$rv\_Validation{(preds[], currs[])}$}{} \label{lin:rvv1}
		\If{$((\bp.mark) || (\bc.mark) || ((\bp.\bn)\neq \blank{1.7cm} \bc) || ((\rp.\rn) \neq {\rc})$)}\label{lin:rvv2}
		return $\langle false \rangle$. 
		\Else{} return $\langle true \rangle$. \label{lin:rvv3}
		\EndIf \label{lin:rvv4}
		\EndProcedure \label{lin:rvv5}
	\end{algorithmic}
\end{algorithm}

\vspace{.4mm}
\setlength{\intextsep}{0pt}
\begin{algorithm}
	\caption{\emph{\rvmt{s(ht, k, val)}:} It can either be $\tdel_i(ht, k, val)$ or $\tlook_i(ht, k, val)$ on key $k$ by transaction $T_i$.} \label{algo:rvmt} 	
	\setlength{\multicolsep}{0pt}
		\begin{algorithmic}[1]
			\makeatletter\setcounter{ALG@line}{26}\makeatother
			\Procedure{$\rvmt{s}_i$}{$ht, k, val$}	\label{lin:rvm1}	
			\If{($k \in \llog_i$)} 
			\State Update the local log of $T_i$ and return $val$. \label{lin:rvm2}
			\Else \label{lin:rvm3}
			\State /*Atomically check the \emph{status} of its own transaction $T_i$ (or $i$).*/ \label{lin:rvm4} 
			\If{(\emph{i.status == false})} return $\langle abort_i \rangle$. \label{lin:rvm5}
			\EndIf\label{lin:rvm6}
			\State Identify the \emph{preds[]} and \emph{currs[]} for key $k$ in bucket $M_k$ of \emph{\lsl} using \blank{1cm} \bn and \rn . \label{lin:rvm7}
			\State Acquire locks on \emph{preds[]} \& \emph{currs[]} in increasing order of keys to avoid the \blank{1cm} deadlock. \label{lin:rvm8}
			\If{(\emph{!rv\_Validation(preds[], currs[])})} \label{lin:rvm9}
			\State Release the locks and goto \linref{rvm7}. \label{lin:rvm10}
			\EndIf \label{lin:rvm11}
			\If{($k  ~ \notin ~ M_k.\lsl$)} \label{lin:rvm12} 
			\State Create a new node $n$ with key $k$ as: $\langle$\emph{key=k, lock=false, mark=true, \blank{1.5cm} vl=ver, nNext=}$\phi\rangle$./*$n$ is marked*/  \label{lin:rvm13}
			\State Create version $ver$ as:$\langle$\emph{ts=0, val=nil, rvl=i, vrt=0, vNext=$\phi$}$\rangle$. \label{lin:rvm14}
			\State Insert $n$ into $M_k.$\emph{\lsl} s.t. it is accessible only via \rn{s}. /*\emph{lock} sets \blank{1.5cm} \emph{true}*/  \label{lin:rvm15}  	
			\State Release locks; update the $\llog_i$ with $k$.\label{lin:rvm16}
			\State return $\langle$\emph{val}$\rangle$. /*\emph{val} as $nil$*/\label{lin:rvm17}
			\EndIf \label{lin:rvm18}
			\State Identify the version $ver_j$ with $ts=j$ such that $j$ is the \emph{largest timestamp \blank{1cm} smaller (lts)} than $i$.\label{lin:rvm19}
			\If{($ver_j$ == $nil$)} /*Finite Versions*/\label{lin:rvm111}
			\State return $\langle abort_i\rangle$ 
			\ElsIf{($ver_j.vNext$ != $nil$)}\label{lin:rvm20}
			\State /*$tutl_i$ should be less then \emph{vrt} of next version $ver_j$*/\label{lin:rvm21}
			\State 	Calculate $tutl_i$ = min($tutl_i$, $ver_j.vNext$ $.vrt-1$).\label{lin:rvm22}
			\EndIf\label{lin:rvm23}
			\State /*$tltl_i$ should be greater then $vrt$ of $ver_j$*/\label{lin:rvm24}
			\State 	Calculate $tltl_i$ = max($tltl_i$, $ver_j.vrt+1$).\label{lin:rvm25}
			\State /*If limit has crossed each other then abort $T_i$*/\label{lin:rvm26}
			\If{($tltl_i$ $>$ $tutl_i$)} return $\langle abort_i \rangle$. \label{lin:rvm27}
			\EndIf\label{lin:rvm28}
			\State Add $i$ into the $rvl$ of $ver_j$. \label{lin:rvm29} 
			\State Release the locks; update the $\llog_i$ with $k$ and value. \label{lin:rvm30}
			\EndIf \label{lin:rvm31}
			\State return $\langle$$ver_j.val$$\rangle$. \label{lin:rvm32}
			\EndProcedure \label{lin:rvm33}
		\end{algorithmic}
\end{algorithm}
\vspace{.2mm}

\cmnt{
\vspace{.4mm}
\setlength{\intextsep}{0pt}
\begin{algorithm}
	\scriptsize
	\caption{\emph{\rvmt{s(ht, k, val)}:} It can either be $\tdel_i(ht, k, val)$ or $\tlook_i(ht, k, val)$ on key $k$ by transaction $T_i$.} \label{algo:rvmt} 	
	\setlength{\multicolsep}{0pt}
	\begin{multicols}{2}
		\begin{algorithmic}[1]
			\Procedure{$\rvmt{s}_i$}{$ht, k, val$}	\label{lin:rvm1}	
			\If{($k \in \llog_i$)} 
			\State Update the local log of $T_i$ and return $val$. \label{lin:rvm2}
			\Else \label{lin:rvm3}
			\State /*Atomically check the \emph{status} of its own transac- \blank{.7cm}tion $T_i$ (or $i$).*/ \label{lin:rvm4} 
			\If{(\emph{i.status == false})} return $\langle abort_i \rangle$. \label{lin:rvm5}
			\EndIf\label{lin:rvm6}
			\State Identify the \emph{preds[]} and \emph{currs[]} for key $k$ in \blank{.7cm} bucket $M_k$ of \emph{\lsl} using \bn and \rn . \label{lin:rvm7}
			\State Acquire locks on \emph{preds[]} \& \emph{currs[]} in increasing \blank{.7cm} order of keys to avoid the deadlock. \label{lin:rvm8}
			\If{(\emph{!rv\_Validation(preds[], currs[])})} \label{lin:rvm9}
			\State Release the locks and goto \linref{rvm7}. \label{lin:rvm10}
			\EndIf \label{lin:rvm11}
			\If{($k  ~ \notin ~ M_k.\lsl$)} \label{lin:rvm12} 
			\State Create a new node $n$ with key $k$ as: \blank{.8cm} $\langle$\emph{key=k, lock=false, mark=true, vl=ver, \blank{1.1cm} nNext=}$\phi\rangle$./*$n$ is marked*/  \label{lin:rvm13}
			\State Create version $ver$ as:$\langle$\emph{ts=0, val=nil, rvl=i, \blank{1.1cm} vrt=0, vNext=$\phi$}$\rangle$. \label{lin:rvm14}
			\State Insert $n$ into $M_k.$\emph{\lsl} s.t. it is accessi- \blank{1.1cm}ble only via \rn{s}. /*\emph{lock} sets \emph{true}*/  \label{lin:rvm15}  	
			\State Release locks; update the $\llog_i$ with $k$.\label{lin:rvm16}
			\State return $\langle$\emph{val}$\rangle$. /*\emph{val} as $nil$*/\label{lin:rvm17}
			\EndIf \label{lin:rvm18}
			\State Identify the version $ver_j$ with $ts=j$ such that \blank{.7cm} $j$ is the \emph{largest timestamp smaller (lts)} than $i$.\label{lin:rvm19}
			\If{($ver_j$ == $nil$)} /*Finite Versions*/\label{lin:rvm111}
			\State return $\langle abort_i\rangle$ 
			\ElsIf{($ver_j.vNext$ != $nil$)}\label{lin:rvm20}
			\State /*$tutl_i$ should be less then \emph{vrt} of next ver- \blank{1.1cm}sion $ver_j$*/\label{lin:rvm21}
			\State 	Calculate $tutl_i$ = min($tutl_i$, $ver_j.vNext$ $\blank{1.1cm}.vrt-1$).\label{lin:rvm22}
			\EndIf\label{lin:rvm23}
			\State /*$tltl_i$ should be greater then $vrt$ of $ver_j$*/\label{lin:rvm24}
			\State 	Calculate $tltl_i$ = max($tltl_i$, $ver_j.vrt+1$).\label{lin:rvm25}
			\State /*If limit has crossed each other then abort $T_i$*/\label{lin:rvm26}
			\If{($tltl_i$ $>$ $tutl_i$)} return $\langle abort_i \rangle$. \label{lin:rvm27}
			\EndIf\label{lin:rvm28}
			\State Add $i$ into the $rvl$ of $ver_j$. \label{lin:rvm29} 
			\State Release the locks; update the $\llog_i$ with $k$ \blank{.7cm} and value. \label{lin:rvm30}
			\EndIf \label{lin:rvm31}
			\State return $\langle$$ver_j.val$$\rangle$. \label{lin:rvm32}
			\EndProcedure \label{lin:rvm33}
		\end{algorithmic}
	\end{multicols}
\end{algorithm}
\vspace{.2mm}
}

If key $k$ does not exist in the \emph{\lsl{}} of corresponding bucket $M_k$ at \Lineref{rvm12} then it creates a new node $n$ with key $k$ as $\langle \emph{key=k, lock=false, mark=true, vl=ver, nNext=}\phi \rangle$ at \Lineref{rvm13} and creates a version (\emph{ver}) for transaction $T_0$ as $\langle$\emph{ts=0, val=nil, rvl=i, vrt=0, vNext=}$\phi\rangle$ at \Lineref{rvm14}. Transaction $T_i$ creates the version of $T_0$, so, other concurrent conflicting transaction (say $T_p$) with lower timestamp than $T_i$, i.e., $\langle p < i\rangle$ can lookup from $T_0$ version. Thus, $T_i$ save $T_p$ to abort while creating a $T_0$ version and ensures greater concurrency. After that $T_i$ adds its \emph{$wts_i$} in the \emph{rvl} of $T_0$ and sets the \emph{vrt} 0 as timestamp of $T_0$ version. Finally, it insert the node $n$ into $M_k.\lsl$ such that it is accessible via \rn{} only at \Lineref{rvm15}. \rvmt{()} releases the locks and update the $txLog_i$ with key $k$ and value as $nil$ (\Lineref{rvm16}). Eventually, it returns the \emph{val} as $nil$ at \Lineref{rvm17}.      

\vspace{-.5mm}

If key $k$ exists in the $M_k.\lsl$ then it identifies the current version $ver_j$ with $ts$ = $j$ such that $j$ is the \emph{largest timestamp smaller (lts)} than $i$ at \Lineref{rvm19} and there exists no other version with timestamp $p$ by $T_p$ on same key $k$ such that $\langle \emph{j $<$ p $<$ i}\rangle$. If $ver_j$ is $nil$ at \Lineref{rvm111} then SF-KOSTM returns $abort$ for transaction $T_i$ because it does not found version to lookup otherwise it identifies the next version with the help of $ver_j.vNext$. If next version ($ver_j.vNext$ as $ver_k$) exist then $T_i$ maintains the $tutl_i$ with minimum of $\langle tutl_i$ $\vee$ $ver_k.vrt-1\rangle$ at \Lineref{rvm22} and $tltl_i$ with maximum of $\langle tltl_i \vee $ $ver_j.vrt+1$$\rangle$ at \Lineref{rvm25} to respect the \emph{real-time order} among the transactions. If $tltl_i$ is greater than $tutl_i$ at \Lineref{rvm27} then transaction $T_i$ returns \emph{abort} (fail to maintains real-time order) otherwise it adds the $ts$ of $T_i$ ($wts_i$) in the $rvl$ of $ver_j$ at \Lineref{rvm29}. Finally, it releases the lock and update the $txLog_i$ with key $k$ and value as current version value ($ver_j.val$) at \Lineref{rvm30}. Eventually, it returns the value as $ver_j.val$ at \Lineref{rvm32}.

\noindent
\textbf{\tins{()} and \tdel{()} as \upmt{s()}:} Actual effect of \tins{()} and \tdel{()} come after successful \tryc{()}. They create the version corresponding to the key in shared memory. We show the high level view of \tryc{()} in \algoref{tryc}. First, \tryc{()} checks the \emph{status} of the transaction $T_i$ at \Lineref{tc3}. If \emph{status} of $T_i$ is \emph{false} then $T_i$ returns \emph{abort} with similar reasoning explained above in \rvmt{()}.

If \emph{status} is not false then \tryc{()} sort the keys (exist in $txLog_i$ of $T_i$) of \upmt{s()} in increasing order. 
It takes the method ($m_{ij}$) from $txLog_i$ one by one and identifies the location of the key $k$ in \emph{$M_k$.\lsl} as explained above in \rvmt{()}. After identifying the preds and currs for $k$ it acquire the locks in predefined order to avoid the deadlock at \Lineref{tc10} and calls \emph{tryC\_Validation()} to validate the methods of $T_i$.


\noindent
\textbf{\emph{tryC\_Validation()}:} It identifies whether the methods of invoking transaction $T_i$ are able to create or delete a version corresponding to the keys while ensuring the \emph{starvation-freedom} and maintaining the \emph{real-time order} among the transactions.


First, it do the \emph{rv\_Validation()} at \Lineref{tcv2} as explained in \rvmt{()}. If \emph{rv\_Validation()} is successful and key $k$ exists in the $M_k.\lsl$ then it identifies the current version $ver_j$ with $ts=j$ such that $j$ is the \emph{largest timestamp smaller (lts)} than $i$ at \Lineref{tcv5}. If $ver_j$ is $null$ at \Lineref{tcv111} then SF-KOSTM returns $abort$ for transaction $T_i$ because it does not find the version to replace otherwise after identifying the current version $ver_j$ it maintains the Current Version List (currVL), Next Version List (nextVL), All Return Value List (allRVL), Large Return Value List (largeRVL), Small Return Value List (smallRVL) from $ver_j$ of key $k$ at \Lineref{tcv6}. currVL and nextVL maintain the previous closest version and next immediate version of all the keys accessed in \tryc{()}. allRVL keeps the currVL.rvl whereas largeRVL and smallRVL stores all the $wts$ of currVL.rvl such that ($wts_{currVL.rvl}$ $>$ $wts_i$) and  ($wts_{currVL.rvl}$ $<$ $wts_i$) respectively. Acquire the locks on \emph{status} of all the transactions present in allRVL list including $T_i$ it self in predefined order to avoid the deadlock at \Lineref{tcv8}. First, it checks the \emph{status} of its own transaction $T_i$ at \Lineref{tcv10}. If \emph{status} of $T_i$ is \emph{false} then $T_i$ has to \emph{abort} the same reason as explained in \rvmt{()}. 

If the \emph{status} of $T_i$ is not \emph{false} then it compares the $its_i$ of its own transaction $T_i$ with the $its_p$ of other transactions $T_p$ present in the largeRVL at \Lineref{tcv13}. Along with this it checks the \emph{status} of $p$. If above conditions $\langle (its_i < its_p) \&\& (p.status==live))\rangle$ succeed then it includes $T_p$ in the Abort Return Value List (abortRVL) at \Lineref{tcv14} to abort it later otherwise abort $T_i$ itself at \Lineref{tcv15}.  

\vspace{.2mm}
\begin{algorithm}
	\caption{\emph{tryC\_Validation():} It is use for \tryc{()} validation.}
	\label{algo:trycVal} 
	\begin{algorithmic}[1]
		\makeatletter\setcounter{ALG@line}{62}\makeatother
		\Procedure{\emph{tryC\_Validation{()}}}{} \label{lin:tcv1}
		\If{(\emph{!rv\_Validation()})}
		Release the locks and \emph{retry}.\label{lin:tcv2}
		\EndIf\label{lin:tcv3}
		\If{(k $\in$ $M_k.\lsl$)}\label{lin:tcv4}
		\State Identify the version $ver_j$ with $ts=j$ such that $j$ is the \emph{largest timestamp \blank{1cm} smaller (lts)} than $i$ and there exists no other version with timestamp $p$ by $T_p$ \blank{1cm} on key $k$ such that $\langle \emph{j $<$ p $<$ i}\rangle$.\label{lin:tcv5}
		\If{($ver_j$ == $null$)} /*Finite Versions*/
		\State return $\langle abort_i\rangle$ \label{lin:tcv111}
		\EndIf
		\State Maintain the list of $ver_j$, $ver_j.vNext$, $ver_j.rvl$, $(ver_j.rvl>i)$, and \blank{1cm} $(ver_j.rvl<i)$ as prevVL, nextVL, allRVL, largeRVL and smallRVL \blank{1cm} respectively for all key \emph{k} of $T_i$.\label{lin:tcv6}
		\State /*$p$ is the timestamp of transaction $T_p$*/
		\If{($p$ $\in$ allRVL)} /*Includes $i$ as well in allRVL*/\label{lin:tcv7}
		\State Lock \emph{status} of each $p$ in pre-defined order. \label{lin:tcv8}
		\EndIf\label{lin:tcv9}
		\If{(\emph{i.status == false})} return $\langle false \rangle$. \label{lin:tcv10}
		\EndIf\label{lin:tcv11}
		\ForAll{($p$ $\in$ largeRVL)}\label{lin:tcv12}
		\If{(($its_i$$<$$its_p$)$\&\&$(\emph{p.status==live}))} \label{lin:tcv13}
		\State Maintain \emph{abort list} as abortRVL \& includes \emph{p} in it.\label{lin:tcv14}
		\Else{} return $\langle false \rangle$. /*abort $i$ itself*/\label{lin:tcv15}
		\EndIf\label{lin:tcv16}
		
		\EndFor\label{lin:tcv17}
		\ForAll{($ver$ $\in$ nextVL)}\label{lin:tcv21}
		\State Calculate $tutl_i$ = min($tutl_i$, $ver.vNext$$.vrt-1$).\label{lin:tcv22}
		\EndFor\label{lin:tcv23}
		\ForAll{($ver$ $\in$ currVL)}\label{lin:tcv18}
		\State Calculate $tltl_i$ = max($tltl_i$, $ver.vrt+1$).\label{lin:tcv19}
		\EndFor\label{lin:tcv20}
		\State /*Store current value of global counter as commit time and increment it.*/
		\State $comTime$ = \emph{gcounter.add\&get(incrVal)};  \label{lin:tcv201}
		\State Calculate $tutl_i$ = min($tutl_i$, $comTime$); \label{lin:tcv202}
		\If{($tltl_i$ $>$ $tutl_i$)} /*abort $i$ itself*/
		\State return $\langle false \rangle$. \label{lin:tcv24}
		\EndIf\label{lin:tcv25}
		\ForAll{($p$ $\in$ smallRVL)}\label{lin:tcv26}
		\If{($tltl_p$ $>$ $tutl_i$)}\label{lin:tcv27}
		\If{(($its_i$$<$$its_p$)$\&\&$(\emph{p.status==live}))} \label{lin:tcv28}
		\State Includes $p$ in abortRVL list.\label{lin:tcv29}
		\Else{} return $\langle false \rangle$. /*abort $i$ itself*/\label{lin:tcv30}
		\EndIf\label{lin:tcv31}
		\EndIf	\label{lin:tcv32}	
		\EndFor \label{lin:tcv33}
		\State $tltl_i$ = $tutl_i$. /*After this point $i$ can't abort*/\label{lin:tcv34}
		\ForAll{($p$ $\in$ smallRVL)}
		\State /*Only for \emph{live} transactions*/ \label{lin:tcv35}
		\State Calculate the $tutl_p$ = min($tutl_p$, $tltl_i-1$).\label{lin:tcv36}
		\EndFor \label{lin:tcv37}
		\ForAll{($p$ $\in$ abortRVL)} \label{lin:tcv38}
		\State Set the \emph{status} of $p$ to be $false$. \label{lin:tcv39}
		\EndFor \label{lin:tcv40}
		\EndIf\label{lin:tcv41}
		\State return $\langle true \rangle$.\label{lin:tcv42}
		\EndProcedure \label{lin:tcv43}
	\end{algorithmic}
\end{algorithm}
\vspace{.2mm}

After that \tryc{()} maintains the $tltl_i$ and $tutl_i$ of transaction $T_i$ at \Lineref{tcv19} and \Lineref{tcv22}. The requirement of $tltl_i$ and $tutl_i$ is explained above in the \rvmt{()}. If limit of $tltl_i$ crossed with $tutl_i$ then $T_i$ have to abort at \Lineref{tcv24}. If $tltl_p$ greater than $tutl_i$ at \Lineref{tcv27} then it checks the $its_i$ and $its_p$. If $\langle (its_i < its_p) \&\& (p.status==live))\rangle$ then add the transaction $T_p$ in the abortRVL for all the smallRVL transactions at \Lineref{tcv29} otherwise, \emph{abort} $T_i$ itself at \Lineref{tcv30}.

At \Lineref{tcv34}, $tltl_i$ would be equal to $tutl_i$ and after this step transaction $T_i$ will never \emph{abort}. $T_i$ helps the other transaction $T_p$ to update the $tutl_p$ which exists in the smallRVL and still $live$ then it sets the $tutl_p$ to minimum of $\langle tutl_p \vee tltl_i-1\rangle$ to maintain the real-time order among the transaction at \Lineref{tcv36}. At \Lineref{tcv39}, \tryc{()} aborts all other conflicting transactions which are present in the abortRVL while modifying the \emph{status} field to be \emph{false} to achieve \emph{starvation-freedom}.

\cmnt{
\emph{tryC\_Validation()} identifies whether the methods of invoking transaction $T_i$ are able to insert or delete a version corresponding to the keys while ensuring the \emph{starvation-freedom} and maintaining the \emph{real-time order} among the transactions. It follow the following steps for validation. Step 1: First, it do the \emph{rv\_Validation()} as explained in \rvmt{()} above. Step 2: If \emph{rv\_Validation()} is successful and key $k$ is exist in the $M_k.\lsl$ then it identifies the current version $ver_j$ with $ts=j$ such that $j$ is the \emph{largest timestamp smaller (lts)} than $i$. If $ver_j$ is \emph{not exist} then SF-KOSTM returns $abort$ for transaction $T_i$ because of bounded K-versions otherwise maintains the information of $ver_j$ and its next version ($ver_j.vNext$) which helps transaction $T_i$ to sets its $tltl_i$ and $tutl_i$. Step 3: If $wts_i$ of $T_i$ is less then other $live$ transactions $wts$ exist in $ver_j.rvl$ then $T_i$ sets the $status$ to be $false$ to all conflicting \emph{live} transactions otherwise $T_i$ returns $abort$. The detailed descriptions are in \apnref{ap-rcode}.
}

If all the steps of the \emph{tryC\_Validation()} is successful then the actual effect of the \tins{()} and \tdel{()} will be visible to the shared memory. At \Lineref{tc16}, \tryc{()} checks for \emph{poValidation()}. When two subsequent methods $\langle m_{ij}, m_{ik}\rangle$ of the same transaction $T_i$ identify the overlapping location of preds and currs in \emph{\lsl}. Then \emph{poValidation()} updates the current method $m_{ik}$ preds and currs with the help of previous method $m_{ij}$ preds and currs. 

If $m_{ij}$ is \tins{()} and key $k$ is not exist in the $M_k.\lsl{}$ then it creates the new node $n$ with key $k$ as $\langle \emph{key=k, lock=false, mark=false, vl=ver, nNext=}\phi \rangle$ at \Lineref{tc18}. Later, it creates a version ($ver$) for transaction $T_0$ and $T_i$ as $\langle$ \emph{ts=0, val=nil, rvl=i, vrt=0, vNext=i} $\rangle$ and $\langle \emph{ts=i, val=v, rvl=}\phi, \emph{vrt=i, vNext=}\phi \rangle$ at \Lineref{tc19}. The $T_0$ version created by transaction $T_i$ to helps other concurrent conflicting transactions (with lower timestamp than $T_i$) to lookup from $T_0$ version. Finally, it insert the node $n$ into $M_k.\lsl$ such that it is accessible via \rn{} as well as \bn{} at \Lineref{tc20}. If $m_{ij}$ is \tins{()} and key $k$ is exist in the $M_k.\lsl{}$ then it creates the new version $ver_i$ as $\langle \emph{ts=i, val=v, rvl=}\phi, \emph{vrt=i, vNext=}\phi \rangle$ corresponding to key $k$. If the limit of the version reach to $K$ then SF-KOSTM replaces the oldest version with $(K+1)^{th}$ version which is accessible via \rn{} as well as \bn{} at \Lineref{tc22}. 
\cmnt{
\vspace{.6mm}
\begin{algorithm}
	
	\scriptsize
	\caption{\emph{\tryc($T_i$)}: Validate the \upmt{s()} of $T_i$ and returns \emph{commit}.}
	\setlength{\multicolsep}{-1pt}
	\label{algo:tryc}
	\begin{multicols}{2}
		\begin{algorithmic}[1]
			\makeatletter\setcounter{ALG@line}{36}\makeatother
			\Procedure{$\tryc{(T_i)}$}{} \label{lin:tc1}
			\State /*Atomically check the \emph{status} of its own transaction \blank{.3cm} $T_i$ (or $i$)*/ \label{lin:tc2}
			\If{(\emph{i.status == false})} return $\langle abort_i \rangle$. \label{lin:tc3}
			\EndIf \label{lin:tc4}
			\State /*Sort the $keys$ of $\llog_i$ in increasing order.*/ \label{lin:tc5}
			\State /*Method ($m$) will be either \tins or \emph{STM\_ \blank{.3cm} delete}*/\label{lin:tc6}
			\ForAll{($m_{ij}$ $\in$ $\llog_i$)} \label{lin:tc7}
			\If{\hspace{-.07cm}($m_{ij}$==\tins$||$$m_{ij}$==\tdel)\hspace{-.07cm}}\label{lin:tc8}
			\State Identify the \emph{preds[]} \& \emph{currs[]} for key \emph{k} in \blank{1cm} bucket $M_k$ of \emph{\lsl} using \bn\& \rn. \label{lin:tc9}
			\State Acquire the locks on \emph{preds[]} \& \emph{currs[]} in \blank{1cm} increasing order of keys to avoid deadlock. \label{lin:tc10}
			\If{($!tryC\_Validation()$)}
			\State return $\langle abort_i \rangle$.\label{lin:tc11}
			\EndIf\label{lin:tc12}
			\EndIf\label{lin:tc13}
			\EndFor\label{lin:tc14}
			\ForAll{($m_{ij}$ $\in$ $\llog_i$)}\label{lin:tc15}
			\State \emph{poValidation()}  modifies the \emph{preds[]} \& \emph{currs[]} of \blank{.6cm} current method which would have been updated \blank{.6cm} by previous method of the same transaction.\label{lin:tc16}
			\If{(($m_{ij}$==\tins)\&\&(k$\notin$$M_k$.\lsl))}\label{lin:tc17}
			\State Create new node $n$ with $k$ as: $\langle$\emph{key=k, \blank{1.1cm} lock=false, mark= false, vl=ver, nNext=$\phi$}$\rangle$. \label{lin:tc18}
			\State Create first version $ver$ for $T_0$ and next for \blank{1.1cm} $i$: $\langle$\emph{ts=i, val=v, rvl=$\phi$, vrt=i, vNext=$\phi$}$\rangle$.\label{lin:tc19}
			\State Insert node $n$ into $M_k$.\emph{\lsl} such that \blank{1.1cm} it is accessible via \rn{} as well as \bn{}.\label{lin:tc20}
			\State /*\emph{lock} sets \emph{true}*/  
			\ElsIf{($m_{ij}$ == \tins{})}\label{lin:tc21}
			\State Add $ver$: $\langle$\emph{ts=i, val=v, rvl=$\phi$, vrt=i, \blank{1.1cm} vNext=$\phi$}$\rangle$ into $M_k$.\emph{\lsl} \& accessible \blank{1.1cm} via \rn{}, \bn. /*\emph{mark=false}*/\label{lin:tc22}
			\EndIf\label{lin:tc23}
			\If{($m_{ij}$ == \tdel{})}\label{lin:tc24}
			\State Add \emph{ver}:$\langle$\emph{ts=i, val=nil, rvl=$\phi$, vrt=i,   \blank{1cm} vNext=$\phi$}$\rangle$ into  $M_k$.\emph{\lsl} \& accessible \blank{1cm} via \rn only. /*\emph{mark=true}*/\label{lin:tc25}
			\EndIf\label{lin:tc26}
			\State Update \emph{preds[]} \& \emph{currs[]} of $m_{ij}$ in $\llog_i$.\label{lin:tc27}
			
			\EndFor \label{lin:tc28}
			\State Release the locks; return $\langle commit_i \rangle$.\label{lin:tc29}
			\EndProcedure \label{lin:tc30}
		\end{algorithmic}
	\end{multicols}
\end{algorithm}

\vspace{.2mm}
}

\vspace{.6mm}
\begin{algorithm}
	
	\caption{\emph{\tryc($T_i$)}: Validate the \upmt{s()} of $T_i$ and returns \emph{commit}.}
	\setlength{\multicolsep}{-1pt}
	\label{algo:tryc}
		\begin{algorithmic}[1]
			\makeatletter\setcounter{ALG@line}{111}\makeatother
			\Procedure{$\tryc{(T_i)}$}{} \label{lin:tc1}
			\State /*Atomically check the \emph{status} of its own transaction  $T_i$ (or $i$)*/ \label{lin:tc2}
			\If{(\emph{i.status == false})} return $\langle abort_i \rangle$. \label{lin:tc3}
			\EndIf \label{lin:tc4}
			\State /*Sort the $keys$ of $\llog_i$ in increasing order.*/ \label{lin:tc5}
			\State /*Method ($m$) will be either \tins or \emph{STM\_delete}*/\label{lin:tc6}
			\ForAll{($m_{ij}$ $\in$ $\llog_i$)} \label{lin:tc7}
			\If{\hspace{-.07cm}($m_{ij}$==\tins$||$$m_{ij}$==\tdel)\hspace{-.07cm}}\label{lin:tc8}
			\State Identify the \emph{preds[]} \& \emph{currs[]} for key \emph{k} in bucket $M_k$ of \emph{\lsl} \blank{1.5cm} using
			\bn\& \rn. \label{lin:tc9}
			\State Acquire the locks on \emph{preds[]} \& \emph{currs[]} in increasing order of keys to \blank{1.5cm} avoid deadlock. \label{lin:tc10}
			\If{($!tryC\_Validation()$)}
			\State return $\langle abort_i \rangle$.\label{lin:tc11}
			\EndIf\label{lin:tc12}
			\EndIf\label{lin:tc13}
			\EndFor\label{lin:tc14}
			\ForAll{($m_{ij}$ $\in$ $\llog_i$)}\label{lin:tc15}
			\State \emph{poValidation()}  modifies the \emph{preds[]} \& \emph{currs[]} of current method which \blank{1cm} would have been updated by previous method of the same transaction.\label{lin:tc16}
			\If{(($m_{ij}$==\tins)\&\&(k$\notin$$M_k$.\lsl))}\label{lin:tc17}
			\State Create new node $n$ with $k$ as: $\langle$\emph{key=k, lock=false, mark= false, vl=ver, \blank{1.5cm} nNext=$\phi$}$\rangle$. \label{lin:tc18}
			\State Create first version $ver$ for $T_0$ and next for $i$: $\langle$\emph{ts=i, val=v, rvl=$\phi$, vrt=i, \blank{1.5cm} vNext=$\phi$}$\rangle$.\label{lin:tc19}
			\State Insert node $n$ into $M_k$.\emph{\lsl} such that it is accessible via \rn{} as \blank{1.5cm} well as \bn{}.\label{lin:tc20}
			\State /*\emph{lock} sets \emph{true}*/  
			\ElsIf{($m_{ij}$ == \tins{})}\label{lin:tc21}
			\State Add $ver$: $\langle$\emph{ts=i, val=v, rvl=$\phi$, vrt=i, vNext=$\phi$}$\rangle$ into $M_k$.\emph{\lsl} \& \blank{1.5cm} accessible via \rn{}, \bn. /*\emph{mark=false}*/\label{lin:tc22}
			\EndIf\label{lin:tc23}
			\If{($m_{ij}$ == \tdel{})}\label{lin:tc24}
			\State Add \emph{ver}:$\langle$\emph{ts=i, val=nil, rvl=$\phi$, vrt=i, vNext=$\phi$}$\rangle$ into  $M_k$.\emph{\lsl} \& \blank{1.5cm} accessible via \rn only. /*\emph{mark=true}*/\label{lin:tc25}
			\EndIf\label{lin:tc26}
			\State Update \emph{preds[]} \& \emph{currs[]} of $m_{ij}$ in $\llog_i$.\label{lin:tc27}
			
			\EndFor \label{lin:tc28}
			\State Release the locks; return $\langle commit_i \rangle$.\label{lin:tc29}
			\EndProcedure \label{lin:tc30}
		\end{algorithmic}
\end{algorithm}

\vspace{.2mm}

\noindent
If $m_{ij}$ is \tdel{()} and key $k$ is exist in the $M_k.\lsl{}$ then it creates the new version $ver_i$ as $\langle \emph{ts=i, val=nil, rvl=}\phi, \emph{vrt=i, vNext=}\phi \rangle$ which is accessible via \rn{} only at \Lineref{tc25}. At last it updates the preds and currs of each $m_{ij}$ into its $txLog_i$ to help the upcoming methods of the same transactions in \emph{poValidation()} at \Lineref{tc27}. Finally, it releases the locks on all the keys in predefined order and returns \emph{commit} at \Lineref{tc29}.

\cmnt{
\vspace{-.15cm}

\begin{theorem}
Any legal history $H$ generated by SF-SVOSTM satisfies co-opacity.
\end{theorem}
\vspace{-.3cm}
\begin{theorem}
Any valid history $H$ generated by SF-KOSTM satisfies  local-opacity.
\end{theorem}
\vspace{-.3cm}
\begin{theorem}
SF-SVOSTM and SF-KOSTM ensure starvation-freedom in presence of a fair scheduler that satisfies \asmref{bdtm}(bounded-termination) and in the absence of parasitic transactions that satisfies \asmref{self}.
\end{theorem}
\vspace{-.2cm}
Please find the proof of theorems in \apnref{gcofo}, \apnref{gcoflo}, and \apnref{ap-cc}.
}
\section{Graph Characterization of Co-opacity and SF-SVOSTM Correctness}
\label{sec:gcofo}

This section describes the graph characterization of the history $H$ which helps to prove the correctness of STMs. We follow the  graph characterization by Guerraoui and Kapalka \cite{tm-book} and modified it for sequential histories with high-level methods to prove the correctness of SF-SVOSTM. 

As discussed in \secref{sm}, SF-SVOSTM executes high-level methods through transactions on history $H$ which internally invoke multiple read-write (or lower-level) operations including invocation and response known as \emph{events} (or $evts$). So, high-level methods are interval instead of dots (atomic). Methods of same transaction $T_i$ are always real-time ordered, i.e., none of the methods of $T_i$ overlaps each other. But due to the concurrent execution of history $H$ with methods are interval, two methods from different transactions may overlap. Thus, we order the overlapping methods of transactions based on their \emph{linearization point (LP)}. We consider \emph{first unlocking point} of each successful method as the \emph{LP} of the respective method.

In the concurrent execution of a history $H$, we make high-level methods of a transaction as atomic based on their \emph{linearization points (LPs)} as defined above. But, as we know from \secref{sm}, a transaction internally invokes multiple high-level methods and transactions are overlapping to each other in concurrent history $H$. So, with the help of \emph{graph characterization of co-opacity}, SF-SVOSTM ensures the atomicity of the transaction. 

A graph for the conflict-opacity (co-opacity) is represented as $\copg{H}{\ll} = (V, E)$ which consists of $V$ vertices and $E$ edges. Here, each committed transaction is consider as a vertex and edges are as follows:
\begin{itemize}
	\item \textit{Conflict (or \confc{}) edge}: The conflict edges between two transactions depends on the conflicts between them. Two transactions $T_i$ and $T_j$ of the sequential history are said to be in conflict, if one of the following holds:
	\begin{itemize}
		\item \textbf{tryC-tryC} conflict: Two transactions $T_i$ \& $T_j$ are in tryC-tryC conflict (1) If $T_i$ and $T_j$ are committed; (2) Both $T_i$ \& $T_j$ update the same key $k$ of the \tab{}, $ht$, i.e., $(\langle ht,k \rangle \in \udset{T_i}) \land (\langle ht,k \rangle \in \udset{T_j})$, here $\udset{T_i}$ is set of keys in which $T_i$ performs update methods (or $\upmt{s}$); (3) and \tryc{()} of $T_i$ has completed before \tryc{()} of $T_j$, i.e., $\tryc_i() \prec_{H}^{\mr} \tryc_j()$. 
		
		\item \textbf{tryC-rv} conflict: Two transactions $T_i$ \& $T_j$ are in tryC-rv conflict (1) If $T_i$ has updated the key $k$ of \tab{}, $ht$ and committed; (2) After that $T_j$ invokes a \rvmt{} $rvm_{j}$ on the key same $k$ of \tab{} $ht$ and returns the value updated by $T_i$, i.e., $\tryc_i \prec_{H}^{\mr} \rvm_{j}$.
		
		\item \textbf{rv-tryC} conflict: Two transactions $T_i$ \& $T_j$ are in rv-tryC conflict (1) $T_i$ invokes a \rvmt{} $rvm_{i}$ on the key $k$ of \tab{} $ht$ and returns the value updated by $T_k$, i.e., $\tryc_k \prec_{H}^{\mr} \rvm_{i}$; (2) After that $T_j$ update the same key $k$ of the \tab{}, $ht$, i.e., $(\langle ht,k \rangle \in \udset{T_j})$ and $T_j$ returns commit, i.e., $\rvm_{i} \prec_{H}^{\mr} \tryc_j$.
	\end{itemize}
	If any of the above defined conflicts occur then conflict edge goes from $T_i$ to $T_j$. As described in \secref{sm}, \emph{STM\_lookup()}, and \emph{STM\_delete()} return the value from underlying data structure so, we called these methods as \emph{return value methods (or $\rvmt{s}$)}. Whereas, \emph{STM\_insert()}, and \emph{STM\_delete()} are updating the underlying data structure after successful \emph{STM\_tryC()} so, we called these as \emph{update} \mth{s} (or $\upmt{s}$). So, the conflicts are defined between the \mth{s} that accesses the shared memory.  \emph{(STM\_tryC$_i$(), STM\_tryC$_j$()), (STM\_tryC$_i$(), STM\_lookup$_j$()), (STM\_lookup$_i$(), STM\_tryC$_j$()), (STM\_tryC$_i$(), STM\_delete$_j$()) and (STM\_delete$_i$(), STM\_tryC$_j$())} are the possible conflicting \mth{s}.  
	
	\item \textit{real-time (or \rt) edge}: If transaction $T_i$ returns commit before the beginning of other transaction $T_j$ then real-time edge goes from $T_i$ to $T_j$. Formally, ($\tryc_i() \prec_{H}$ \emph{STM\_begin$_j$()}) $\implies$  $T_i$ $\rightarrow$ $T_j$.
	
\end{itemize}

For better understanding, we consider a history $H$: $l_1(ht, k_5, nil), l_2(ht, k_7, nil),\\ d_1(ht, k_6, nil), C_1, i_2(ht, k_5, v_2), C_2, l_3(ht, k_5, v_2), i_3(ht, k_7, v_3), C_3$ and show the time line view of it in \figref{gco}.(a). We construct $\copg{H}{\ll} = (V, E)$ shown in \figref{gco}.(b). There exist a (rv-tryC) edge between $T_1$ to $T_2$ because $T_2$ updates the key $k_5$ with value $v_2$ after $T_1$ lookups it. $T_3$ begins after the commit of $T_1$ and $T_2$ so, \rt{} edges are going from $T_1$ to $T_3$ and $T_2$ to $T_3$. Here, $T_3$ lookups key $k_5$ after updated by $T_2$ and returns the value $v_2$. So, (tryC-rv) edge  is going from $T_2$ to $T_3$. Hence, $H$ constructs an acyclic graph (CG) with equivalent serial schedule $T_1 T_2 T_3$.

\begin{figure}
	\centerline{
		\scalebox{0.4}{\input{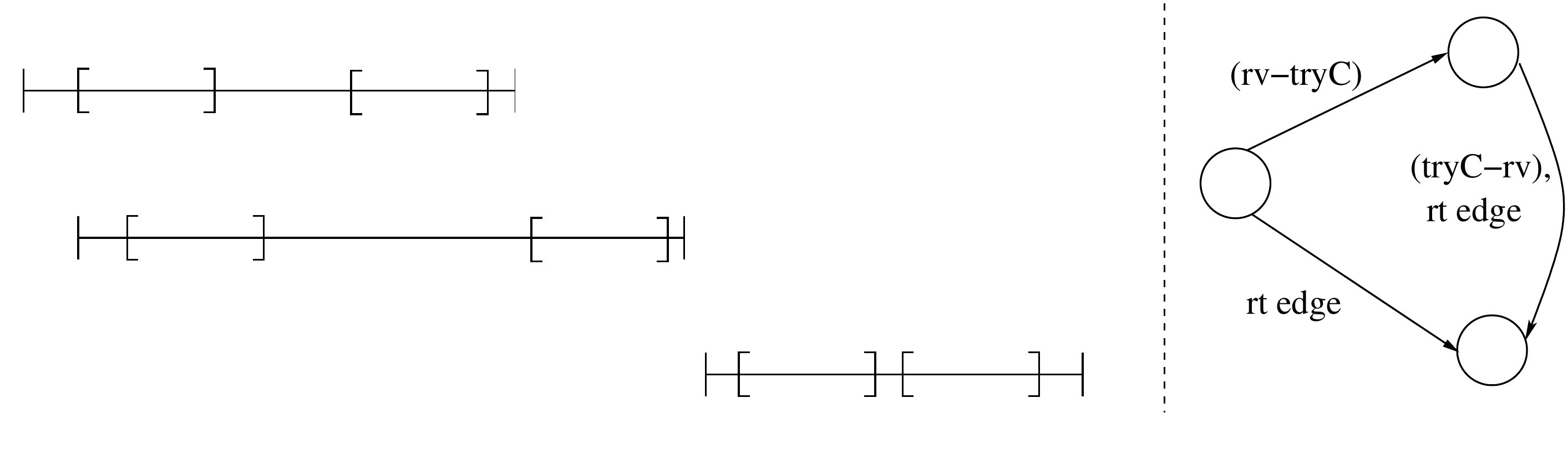_t}}}
	\captionsetup{justification=centering}
	\caption{Illustration of Graph Characterization of Co-opacity}
	\label{fig:gco}
\end{figure}

\begin{lemma}
	\label{lem:concycle}
For any \legal{} \tseq{} history $S$ the conflict graph $\copg{S, \ll_S}$ is acyclic.
\end{lemma}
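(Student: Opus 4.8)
Since $S$ is \tseq{}, all transactions of $S$ are atomic and totally ordered; write $T_i <_S T_j$ when $T_i$ precedes $T_j$ in this serial order. The plan is to show that \emph{every} edge of $\copg{S, \ll_S}$ goes ``forward'' in $<_S$, i.e., an edge $T_i \to T_j$ implies $T_i <_S T_j$. Since $<_S$ is a strict total order, a graph all of whose edges respect $<_S$ cannot contain a cycle, which gives the claim. So the whole argument reduces to a case analysis over the edge types defined just before the lemma.

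First I would handle the real-time edges: if $T_i \to T_j$ is an \rt{} edge then by definition $\tryc_i() \prec_S \tbeg_j()$, so $T_i$ terminates before $T_j$ begins in $S$, whence $T_i <_S T_j$ by \tseq{}ness. Next, the three conflict cases, each reduced to the fact that in a \tseq{} history two methods of distinct transactions ordered by $\prec_S^{\mr}$ force the corresponding transactions to be ordered the same way (because transactions do not interleave): for a \textbf{tryC-tryC} edge we have $\tryc_i() \prec_S^{\mr} \tryc_j()$, hence $T_i <_S T_j$; for a \textbf{tryC-rv} edge we have $\tryc_i \prec_S^{\mr} \rvm_j$, so a method of $T_j$ follows the commit of $T_i$, giving $T_i <_S T_j$; for an \textbf{rv-tryC} edge, $T_i$ performs $\rvm_i$ returning a value written by some committed $T_k$, and $\rvm_i \prec_S^{\mr} \tryc_j$, so again a method of $T_i$ precedes $\tryc_j$ and $T_i <_S T_j$. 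In every case the edge respects $<_S$.

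Having established that $E \subseteq {<_S}$ (viewing $<_S$ as a set of ordered pairs of vertices), acyclicity is immediate: any cycle $T_{i_1} \to T_{i_2} \to \cdots \to T_{i_m} \to T_{i_1}$ would yield $T_{i_1} <_S T_{i_1}$ by transitivity, contradicting irreflexivity of $<_S$.

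The only genuinely delicate point, and the step I expect to need the most care, is the \textbf{rv-tryC} case: one must invoke \legality{} (or at least \validity{}) of $S$ to guarantee that the transaction $T_k$ from which $\rvm_i$ reads is well defined and committed, and that the edge is indeed placed from the \emph{reader} $T_i$ to the \emph{later updater} $T_j$ rather than being confused with the tryC-rv edge $T_k \to T_i$; pinning down exactly which methods access the shared memory (only the first \rvmt{} on a given key, per the local-log discussion in \secref{sm}) is what makes the ordering argument go through cleanly. Everything else is routine unwinding of the definitions of the edge relation and of \tseq{}ness.

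\hfill $\qed$
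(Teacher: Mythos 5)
Your proof is correct and follows essentially the same route as the paper: the paper also argues that every \rt{} and \confc{} edge of $\copg{S, \ll_S}$ goes forward in the total order on transactions (the paper phrases this via timestamps, but identifies $TS(T_i) < TS(T_j)$ with $T_i <_S T_j$, so this is the same order you use), and then concludes acyclicity from the total order. Your version is somewhat more explicit about the three conflict subcases and the rv-tryC subtlety, but the underlying argument is identical.
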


\begin{proof}
	\tseq{} history $S$ consists of multiple transactions, we order all the them into real-time order on the basis of their increasing order of timestamp (TS). For example, consider two transaction $T_i$ and $T_j$ with $TS(T_i)$ is less than $TS(T_j)$ then $T_i$ will occur before $T_j$ in $S$. Formally, $TS(T_i) < TS(T_j)$ $\Leftrightarrow T_i <_S T_j$.	 To proof the order between transactions, we analyze all the edges of $\copg{S, \ll_S}$ one by one: 
	\begin{itemize}
		\item \rt{} edges: It follow that any transaction begin after commit of previous transaction only. Hence, all the \rt{} edges go from a lower TS transaction $T_i$ to higher TS transaction $T_j$ and follow timestamp order. 
		
		\item \confc{} edges: If any transaction $T_j$ lookups key $k$ from $T_i$ in $S$ then $T_i$ has to be committed before invoking of lookup of $T_j$. Similarly, other conflicting edges are following TS order as $TS(T_i) < TS(T_j)$ $\Leftrightarrow T_i <_S T_j$. Thus, all the \confc{} edges go from a lower TS transaction to higher TS transaction.
		
		
	\end{itemize}
	Hence, all the edges of $\copg{S, \ll_S}$ are following increasing of the TS of the transactions, i.e. all the edges goes from lower TS transaction to higher TS transaction in $S$. Conflict graph $\copg{S, \ll_S}$ is acyclic.
\end{proof}

\begin{theorem}
	\label{thm:cocg}
	A history H is co-opaque iff $\copg{H}{\ll_H}$ is acyclic.
\end{theorem}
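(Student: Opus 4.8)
The plan is to prove the two implications of the ``iff'' separately, treating $\copg{H}{\ll_H}$ just like the conflict graph of the classical conflict-serializability theorem while additionally tracking the real-time (\rt) edges. Throughout, $H$ is a \legal{} history whose high-level methods, being intervals, are compared -- for the purpose of the edge definitions -- through the linearization points fixed above; the vertices are the transactions of $\overline{H}$, committed transactions carrying their update-related and read-related conflict edges and aborted transactions carrying their read-related edges, so $\copg{H}{\ll_H}$ already accounts for every transaction that a witnessing serial history must contain. \lemref{concycle} (the conflict graph of a \legal{} \tseq{} history is acyclic) will dispatch the forward implication in one step, and a topological sort of $\copg{H}{\ll_H}$ will produce the witnessing serial history for the converse.

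$(\Rightarrow)$ Suppose $H$ is co-opaque, witnessed by a \legal{} \tseq{} history $S$ with $\evts{S}=\evts{\overline{H}}$ that preserves the conflict order of $H$ and respects its real-time order. By \lemref{concycle}, $\copg{S}{\ll_S}$ is acyclic, so it suffices to note that $\copg{H}{\ll_H}$ is a subgraph of $\copg{S}{\ll_S}$. Its vertices are among the transactions of $S$ since $\evts{S}=\evts{\overline{H}}$; a conflict edge $T_i\to T_j$ of $\copg{H}{\ll_H}$ records a conflict between $T_i$ and $T_j$ ordered $T_i$-before-$T_j$ in $H$, which $S$ preserves, so the same conflict edge lies in $\copg{S}{\ll_S}$; and a real-time edge $T_i\to T_j$ of $\copg{H}{\ll_H}$ records $T_i\prec_H^{\tr}T_j$, which the serial $S$ obeys, so that edge too lies in $\copg{S}{\ll_S}$. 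A subgraph of an acyclic graph is acyclic, hence $\copg{H}{\ll_H}$ is acyclic.

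$(\Leftarrow)$ Suppose $\copg{H}{\ll_H}$ is acyclic. Being finite, it admits a topological order of its vertices; let $S$ be the serial history that lists the transactions of $\overline{H}$ in that order, each with its methods in their $<_H$ order. By construction $S$ is \tseq{} and $\evts{S}=\evts{\overline{H}}$; it respects $H$'s real-time order because every \rt{} edge of $\copg{H}{\ll_H}$ is respected by a topological order, and it preserves $H$'s conflict order because every conflict edge of $\copg{H}{\ll_H}$ is respected. It remains to show $S$ is \legal{}. Take an \rvmt{} $rvm_j$ on key $k$ returning the value written by $T_i$ (with $T_i=T_0$ if no committed update of $k$ precedes $rvm_j$). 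The \emph{tryC-rv} edge $T_i\to T_j$ puts $T_i$ before $T_j$ in $S$; and because $H$ is \legal{}, $T_i$ is already the closest committed updater of $k$ preceding $rvm_j$ in $H$, so every other committed updater $T_p$ of $k$ satisfies either $\tryc_p\prec_H^{\mr}\tryc_i$ -- whence a \emph{tryC-tryC} edge $T_p\to T_i$ and so $T_p$ before $T_i$ (hence before $rvm_j$) in $S$ -- or $rvm_j\prec_H^{\mr}\tryc_p$ -- whence an \emph{rv-tryC} edge $T_j\to T_p$ and so $rvm_j$ before $T_p$ in $S$. Thus no committed updater of $k$ lies strictly between $T_i$ and $rvm_j$ in $S$, i.e., $rvm_j$ is \legal{} in $S$. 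Hence $S$ is a \legal{}, \tseq{}, real-time- and conflict-order-preserving equivalent of $\overline{H}$, so $H$ is co-opaque.

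I expect the main obstacle to be the legality step of $(\Leftarrow)$ and the bookkeeping it rests on: arguing that the three interval-based conflict-edge types together with the \rt{} edges really do force, in every topological order, each \rvmt{} to read from the closest committed updater. This is exactly where the legality of $H$ enters, to exclude a committed updater wedged between a transaction and the \rvmt{} that reads from it, and it needs extra care for \rvmt{s} of aborted transactions and for keys read before any committed transaction has inserted them (so that a $T_0$-version is returned). If the direct argument proves cumbersome, I would prove the contrapositive of this sub-claim: a legality violation in $S$ would exhibit such a wedged committed updater $T_p$, and chasing the corresponding \emph{tryC-tryC}/\emph{tryC-rv}/\emph{rv-tryC} edges back into $\copg{H}{\ll_H}$ would close a cycle, contradicting acyclicity. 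The remaining items -- the subgraph embedding, the topological sort, and the $\evts{S}=\evts{\overline{H}}$ check -- are routine.
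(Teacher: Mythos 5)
Your proposal is correct and follows essentially the same route as the paper: the forward direction embeds $\copg{H}{\ll_H}$ as a subgraph of the acyclic $\copg{S}{\ll_S}$ guaranteed by \lemref{concycle}, and the converse topologically sorts $\copg{H}{\ll_H}$ to build the witnessing serial history. Your treatment of the legality of $S$ — the case analysis showing that the tryC-tryC, tryC-rv and rv-tryC edges, together with the legality of $H$ itself, prevent any committed updater from being wedged between a transaction and the \rvmt{} reading from it — is in fact more explicit than the paper's, which asserts this step with little justification.
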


\begin{proof}
	\textbf{(if part):} First, we consider $\copg{H}{\ll_H}$ is acyclic and we need to prove that history $H$ is co-opaque. Since $\copg{H}{\ll_H}$ is acyclic, we apply topological sort on $\copg{H}{\ll_H}$ and generate a \tseq{} history $S$ such that $S$ is equivalent to $\overline{H}$.
	$\copg{H}{\ll_H}$ maintains real-time edges as well and $S$ has been generated from it. So, $S$ also respect real-time order \rt{} as $H$. Formally, $\prec_H^{RT} \subseteq \prec_S^{RT}$.
	
	Since $\copg{H}{\ll_H}$ maintains all the conflicting (or \confc{}) edges as well defined above. $S$ has been generated by applying topological sort on $\copg{H}{\ll_H}$. So, $S$ respects all the conflicting edges present in $H$. Formally, $\prec_H^{Conf} \subseteq \prec_S^{Conf}$. 
	
	It can be seen in $\copg{H}{\ll_H}$ that \rvmt{s()} on any key $k$ by transaction $T_i$ returns the value written on $k$ by previous closest committed transaction $T_j$. $H$ maintains all the \rvmt{s()} in conflicting (or \confc{}) edges of $\copg{H}{\ll_H}$. Since $S$ has been generated by applying topological sort on $\copg{H}{\ll_H}$. So, $S$ returns all the value of the \rvmt{s()} from previous closest committed transactions. Hence, $S$ is $legal$.
	
	$S$ satisfies all the properties of \emph{co-opacity} and equivalent to $H$ because $S$ has been generated from the topological sort on $\copg{H}{\ll_H}$. Hence, history H is co-opaque.

	\textbf{(Only if part):} Now, we consider $H$ is co-opaque and we have to prove that $\copg{H}{\ll_H}$ is acyclic.
	Since $H$ is co-opaque there exists an equivalent  \legal{} \tseq{} history $S$ to $\overline{H}$ which maintains real-time (\rt{}) and conflict ($Conf$) order of $H$. From the \lemref{concycle}, we can say that conflict graph $\copg{S, \ll_S}$ is acyclic. As we know, $\copg{H}{\ll_H}$ is the subgraph of $\copg{S, \ll_S}$.
	Hence, $\copg{H}{\ll_H}$ is acyclic.

	
	\cmnt{
	\begin{itemize}
		\item \rt{} edges: We have that $S$ respects real-time order of $H$, i.e $\prec_{H}^{RT} \subseteq \prec_{S}^{RT}$. Hence, all the \rt{} edges of $H$ are a subset of $S$. 
		
		\item \rvf{} edges: Since $\overline{H}$ and $S$ are equivalent, the return value-from relation of $\overline{H}$ and $S$ are the same. Hence, the \rvf{} edges are the same in $G_H$ and $G_S$. 
		
		\item \mv{} edges: Since the version-order and the \op{s} of the $H$ and $S$ are the same, from \lemref{eqv_hist_mvorder} it can be seen that $\overline{H}$ and $S$ have the same \mv{} edges as well.
	\end{itemize}
	
	Thus, the graph $G_H$ is a subgraph of $G_S$. Since we already know that $G_S$ is acyclic from \lemref{seracycle}, we get that $G_H$ is also acyclic. 

}
\end{proof}
\cmnt{
To prove that a STM system satisfies opacity, it is useful to consider graph characterization of histories. In this section, we describe the graph characterization of Guerraoui and Kapalka \cite{tm-book} modified for sequential histories.

Consider a history $H$ which consists of multiple version for each \tobj. The graph characterization uses the notion of \textit{version order}. Given $H$ and a \tobj{} $k$, we define a version order for $k$ as any (non-reflexive) total order on all the versions of $k$ ever created by committed transactions in $H$. It must be noted that the version order may or may not be the same as the actual order in which the version of $k$ are generated in $H$. A version order of $H$, denoted as $\ll_H$ is the union of the version orders of all the \tobj{s} in $H$. 

Consider the history $H3$ as shown in \figref{mvostm3} $: lu_1(ht, k_{x, 0}, null), lu_2(ht, k_{x, 0}, null),\\ lu_1(ht, k_{y, 0}, null), lu_3(ht, k_{z, 0}, null), ins_1(ht, k_{x, 1}, v_{11}), ins_3(ht,
k_{y, 3}, v_{31}), ins_2(ht, \\k_{y, 2}, v_{21}), ins_1(ht, k_{z, 1}, v_{12}), c_1, c_2, lu_4(ht, k_{x, 1}, v_{11}), lu_4(ht, k_{y, 2}, v_{21}), ins_3(ht, k_{z, 3},\\ v_{32}), c_3, lu_4(ht, k_{z, 1}, v_{12}), lu_5(ht, k_{x, 1}, v_{11}), lu_6(ht, k_{y, 2}, v_{21}), c_4, c_5, c_6$. Using the notation that a committed transaction $T_i$ writing to $k_x$ creates a version $k_{x, i}$, a possible version order for $H3$ $\ll_{H3}$ is: $\langle k_{x, 0} \ll k_{x, 1} \rangle, \langle k_{y, 0} \ll k_{y, 2} \ll k_{y, 3} \rangle, \langle k_{z, 0} \ll k_{z, 1} \ll k_{z, 3} \rangle $.}
\cmnt{
Consider the history $H4: l_1(ht, k_{x, 0}, NULL) l_2(ht, k_{x, 0}, NULL) l_1(ht, k_{y, 0}, NULL) l_3(ht, k_{z, 0},\\ NULL) i_1(ht, k_{x, 1}, v_{11}) i_3(ht, k_{y, 3}, v_{31}) i_2(ht, k_{y, 2}, v_{21}) i_1(ht, k_{z, 1}, v_{12}) c_1 c_2 l_4(ht, k_{x, 1}, v_{11}) l_4(ht,\\ k_{y, 2}, v_{21}) i_3(ht, k_{z, 3}, v_{32}) c_3 l_4(ht, k_{z, 1}, v_{12}) l_5(ht, k_{x, 1}, v_{11}), l_6(ht, k_{y, 2}, v_{21}) c_4, c_5, c_6$. In our representation, we abbreviate \tins{} as $i$, \tdel{} as $d$ and \tlook{} as $l$. Using the notation that a committed transaction $T_i$ writing to $k_x$ creates a version $k_{x, i}$, a possible version order for $H4$ $\ll_{H4}$ is: $\langle k_{x, 0} \ll k_{x, 1} \rangle, \langle k_{y, 0} \ll k_{y, 2} \ll k_{y, 3} \rangle, \langle k_{z, 0} \ll k_{z, 1} \ll k_{z, 3} \rangle $. 
}
\cmnt{
\begin{figure}
	\centering
	\captionsetup{justification=centering}
	\centerline{\scalebox{0.45}{\input{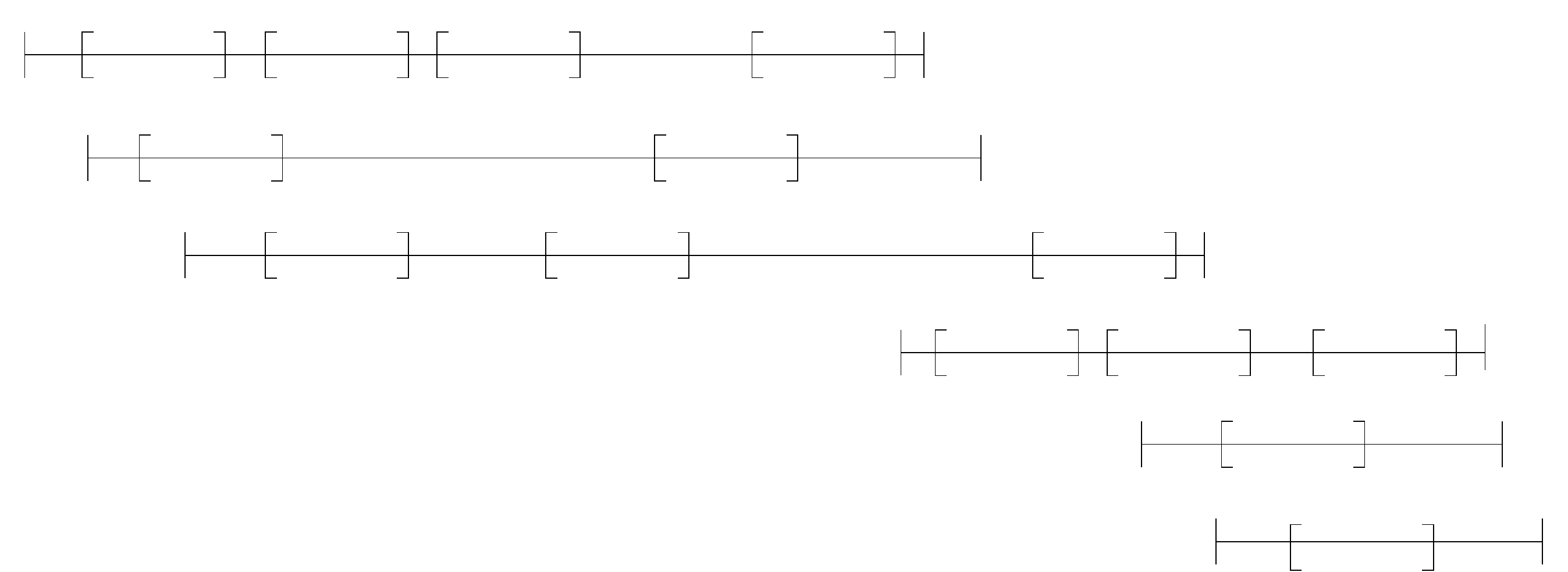_t}}}
	\caption{History $H3$ in time line view}
	\label{fig:mvostm3}
\end{figure}
}
\cmnt{
\begin{figure}[tbph]
	\centerline{\scalebox{0.45}{\input{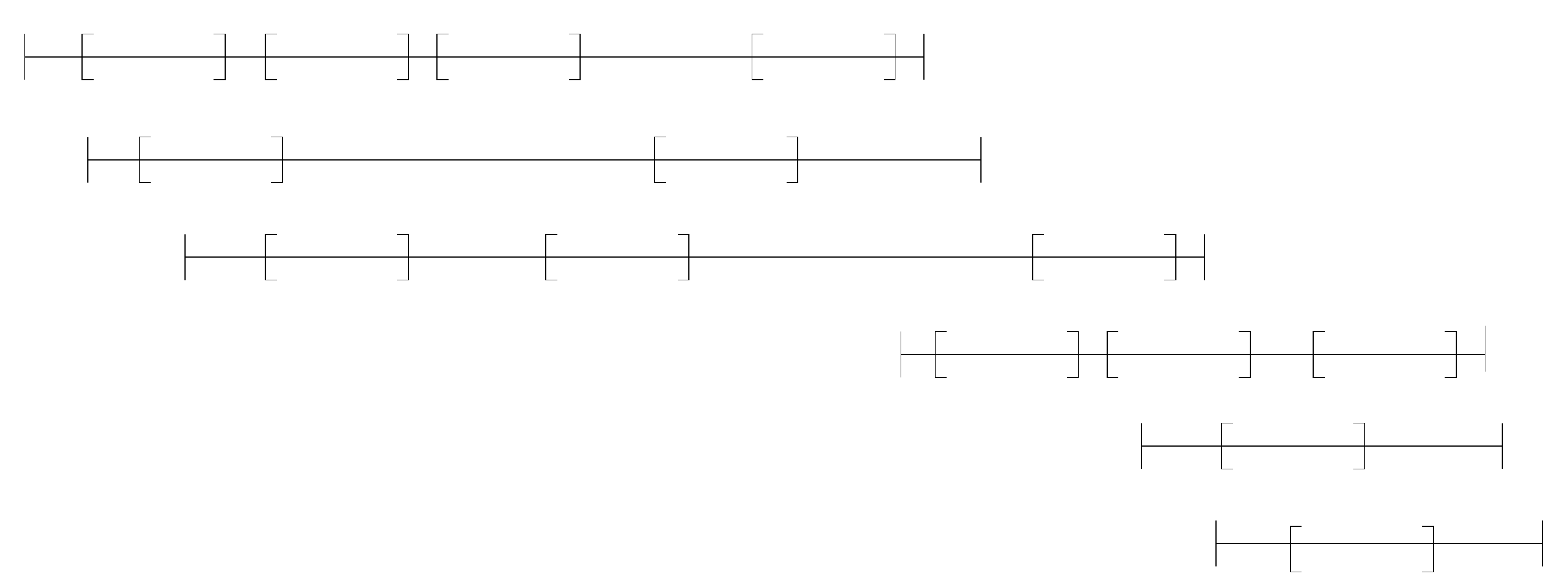_t}}}
	\caption{History $H4$ in time line view}
	\label{fig:mvostm3}
\end{figure}
}

\begin{theorem}
	\label{thm:coopqhprev}
	A legal SF-SVOSTM history $H$ is co-opaque iff $\copg{H}{\ll_H}$ is acyclic.
\end{theorem}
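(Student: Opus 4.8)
The plan is to obtain \thmref{coopqhprev} as the concrete instantiation of the generic graph characterization in \thmref{cocg}: that theorem already equates co-opacity of a (legal) history with acyclicity of its conflict graph, so the only work left for an SF-SVOSTM history is to verify that such histories satisfy the hypotheses under which the characterization was stated, and in particular that $\copg{H}{\ll_H}$ is even well-defined for a \emph{concurrent} execution in which high-level methods are intervals rather than atomic points.

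First I would observe that every history $H$ produced by SF-SVOSTM is well-formed (immediate from the shapes of \algoref{begin1}, \algoref{rvm}, \algoref{rvmt}, \algoref{trycVal}, \algoref{tryc} and the well-formedness discussion in \secref{sm}) and that every successful method of $H$ has a well-defined linearization point, namely its \emph{first unlocking point}: each return-value method and each \tryc{()} acquires the lock(s) on the relevant \lsl{} node(s) and later releases them, so the instant of the first release is a concrete event lying strictly between the method's invocation and its response. Since methods of one transaction never overlap, totally ordering all methods by their LPs yields a \tseq{} order that extends the method real-time order $\prec_H^{\mr}$ and respects per-transaction order; this is exactly the order underlying the construction of $\copg{H}{\ll_H}$.

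Next I would complete $H$ to $\overline{H}$ (append $\commit$ to every live transaction whose \tryc{()} has already executed its first unlock and $\abort$ to the remaining live transactions) and read off the LP-order linearization as a legal \tseq{} history $S$ with $\evts{S}=\evts{\overline{H}}$. The key verification is that $S$ and $H$ induce the same conflict graph: two conflicting methods on a key $k$ (the pairs tryC--tryC, tryC--rv, rv--tryC from the conflict definition) hold the same \lsl{} node lock, so they are serialized by that lock and their LP order agrees with the conflict order; and, using that $H$ is legal, each return-value method returns in $H$ precisely the value written by the closest preceding committed transaction in LP order, i.e. what it would return in $S$. Hence $\copg{H}{\ll_H}$ and $\copg{S}{\ll_S}$ have the same vertices and edges, and \thmref{cocg} applied to $\overline{H}$ (whose canonical linearization is $S$) yields both implications at once; for the ``co-opaque $\Rightarrow$ acyclic'' direction one may alternatively take the legal \tseq{} witness guaranteed by co-opacity, invoke \lemref{concycle} to get an acyclic $\copg{S}{\ll_S}$, and note that $\copg{H}{\ll_H}$ is a subgraph of it.

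The step I expect to be the real obstacle is the conflict-graph identification above: proving that the value delivered by each return-value method in the concurrent execution coincides with the one the LP linearization assigns it, and dually that each conflict edge of $\copg{H}{\ll_H}$ is forced to point in the LP direction. Both reduce to the single claim that in SF-SVOSTM two methods accessing the same key, at least one of which updates it, can never have their critical sections interleaved --- which follows from the node lock being held across the entire validation-and-update window, together with the re-checks in \emph{rv\_Validation()} and \emph{tryC\_Validation()} that restart or abort a method whenever a \emph{pred}/\emph{curr} pointer has changed beneath it. Once this serialization claim is pinned down, the remaining arguments are routine and \thmref{coopqhprev} drops out of \thmref{cocg}.
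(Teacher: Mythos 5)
Your proposal is correct and follows essentially the same route as the paper: the paper's own proof of this theorem is just the argument for \thmref{cocg} restated with the legality hypothesis made explicit --- a topological sort of the acyclic graph produces the equivalent legal \tseq{} witness for the forward direction, and the converse is the subgraph-of-an-acyclic-$\copg{S}{\ll_S}$ argument via \lemref{concycle}, exactly as in your alternative. The extra work you invest in linearization points and lock-based serialization of conflicting methods is material the paper disposes of by fiat in the preamble of \secref{gcofo} (LP $=$ first unlocking point), so it is not required for the paper's version of the proof, though it makes the reduction to \thmref{cocg} more self-contained.
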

\begin{proof}
	\textbf{(if part):} First, we consider $H$ is \emph{legal} and $\copg{H}{\ll_H}$ is acyclic then we need to prove that history $H$ is co-opaque. Since $\copg{H}{\ll_H}$ is acyclic, we apply topological sort on $\copg{H}{\ll_H}$ and obtained a \tseq{} history $S$ which is equivalent to $\overline{H}$.
	$S$ also respect real-time edges (or \rt{}) and \confc{} edges as $\overline{H}$. Formally, $S$ respects $\prec$$^{RT}_H$ = $\prec$$^{RT}_{\overline{H}}$ and $\prec$$^{Conf}_H$ = $\prec$$^{Conf}_{\overline{H}}$. 
	
	Since Conflict relation between two methods of SF-SVOSTM in $S$ are also present in $\overline{H}$. Formally, $\prec_{\overline{H}}^{Conf} \subseteq \prec_S^{Conf}$. Given that $H$ is \emph{legal} which implies that  $\overline{H}$ is also \emph{legal}. So, we can say that $S$ is \emph{legal}. Collectively, $H$ satisfies all the necessary conditions of \emph{co-opacity}. Hence, history $H$ is co-opaque.
	
	\textbf{(Only if part):} Now, we consider $H$ is co-opaque and legal then we have to prove that $\copg{H}{\ll_H}$ is acyclic.
	Since $H$ is co-opaque there exists an equivalent  \legal{} \tseq{} history $S$ to $\overline{H}$ which maintains real-time (\rt{}) and conflict ($Conf$) order of $H$, i.e, S respects $\prec$$^{RT}_H$ and $\prec$$^{Conf}_H$ ((from the definition of co-opacity\cite{KuznetsovPeri:Non-interference:TCS:2017})). So, we can observe from the conflict graph construction that $\copg{H}{\ll_H}$ = $\copg{H}{\ll_{\overline{H}}}$ and both are the subgraph of $\copg{S}{\ll_S}$. Since $S$ is a  \tseq{} history, so $\copg{S}{\ll_S}$ is acyclic. As we know, any subgraph of any acyclic graph is also acyclic and $\copg{H}{\ll_H}$ is the subgraph of $\copg{S}{\ll_S}$. 
	Hence, $\copg{H}{\ll_H}$ is acyclic.

	
\end{proof}

\begin{theorem}
	\label{thm:coopq}
	Any legal history $H$ generated by SF-SVOSTM satisfies co-opacity.
\end{theorem}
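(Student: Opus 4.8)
The plan is to reduce the claim to an acyclicity statement by invoking the graph characterization already in hand. By Theorem~\ref{thm:coopqhprev}, a \emph{legal} SF-SVOSTM history $H$ is co-opaque if and only if $\copg{H}{\ll_H}$ is acyclic, and the statement of Theorem~\ref{thm:coopq} already supplies the hypothesis that $H$ is legal. Hence it suffices to prove that, for every history $H$ produced by SF-SVOSTM, the conflict graph $\copg{H}{\ll_H}$ has no cycle. Before doing that I would first record that every successful method of $H$ has a well-defined linearization point (its first unlocking point, as fixed in Section~\ref{sec:gcofo}): this is because SF-SVOSTM acquires node locks in the predefined increasing key order and, by Assumption~\ref{asm:bdtm}, every method terminates and releases those locks, so there is no deadlock and the first unlocking point exists.

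The core of the argument is a serialization-timestamp assignment together with an edge-by-edge check. For each committed transaction $T_i$ I would set $\sigma(T_i)$ to be the linearization point of $\tryc_i$ when $T_i$ executes at least one update method, and, when $T_i$ is read-only, a real-time instant chosen strictly after the $\tryc$-linearization points of all committed transactions whose versions $T_i$ reads and strictly before the $\tryc$- and lookup-linearization points that witness $T_i$'s outgoing conflicts. Then I would show every edge of $\copg{H}{\ll_H}$ is $\sigma$-increasing by cases on edge type: for a tryC--tryC edge $T_i \to T_j$ we have $\tryc_i \prec_H^{\mr} \tryc_j$, so the two commit linearization points, and hence $\sigma(T_i)<\sigma(T_j)$, are ordered; for a tryC--rv edge $T_i \to T_j$ the lookup of $T_j$ that induces the edge observes $T_i$'s version in shared memory and satisfies $\tryc_i \prec_H^{\mr} \rvm_j$, so the linearization point of $\tryc_i$ precedes that lookup's linearization point, which with the choice of $\sigma(T_j)$ gives $\sigma(T_i)<\sigma(T_j)$ --- here one uses that the \emph{\lsl} structure together with \emph{rv\_Validation()} guarantees the lookup reads from the $\ll_H$-latest committed writer of that key as of its linearization point; the rv--tryC case is symmetric; and for a real-time edge $T_i \to T_j$ every event of $T_i$ precedes every event of $T_j$, so $\sigma(T_i)<\sigma(T_j)$ trivially. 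A cycle in $\copg{H}{\ll_H}$ would then force $\sigma$ of some transaction to strictly precede itself, a contradiction; so $\copg{H}{\ll_H}$ is acyclic and, by Theorem~\ref{thm:coopqhprev}, $H$ is co-opaque. (Equivalently one can mirror Lemma~\ref{lem:concycle}: order the committed transactions by $\sigma$, verify this yields a \legal{} \tseq{} history equivalent to $\overline{H}$ and respecting $\prec_H^{\mr}$, then apply the ``only if'' direction of Theorem~\ref{thm:coopqhprev}.)

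The step I expect to be the main obstacle is showing that the serialization timestamps for the \emph{read-only} transactions can in fact be chosen consistently --- that is, that the prescribed real-time intervals are non-empty and pairwise compatible, which is equivalent to ruling out cycles that route through lookup methods of read-only transactions. This is exactly where the single-version \emph{\lsl} invariants must be used with care: one must argue that between the linearization point of the lookup by which a transaction reads key $k$ from $T_p$ and the linearization point of a later $\tryc$ overwriting $k$ no committed writer of $k$ is ``skipped'', so that the abstract version order $\ll_H$ on $k$ coincides with the linearization-point order of the $\tryc$'s writing $k$, and that a lookup always returns the current latest committed writer as of its linearization point. Once this correspondence between $\ll_H$ and the linearization-point order is established, the update-only part of the graph is easy (commit linearization points are totally ordered in real time and every update conflict is witnessed by overlapping locked critical sections on a common node), and the remaining edge cases are routine.
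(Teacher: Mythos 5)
Your overall skeleton matches the paper's: both reduce the claim to acyclicity of $\copg{H}{\ll_H}$ and then invoke \thmref{coopqhprev} to conclude co-opacity. The mechanism by which you establish acyclicity, however, is genuinely different. The paper orders the committed transactions by their algorithm-assigned timestamps $cts$ (drawn from the atomic global counter) and argues that \emph{every} edge --- real-time and conflict alike --- goes from a lower-$TS$ to a higher-$TS$ transaction, the conflict case resting on the starvation-freedom mechanism itself: when a lower-$TS$ transaction conflicts with an already-committed higher-$TS$ one, the validation forces it to abort and reincarnate with a fresh, larger timestamp, so a surviving conflict edge can only point ``upward'' in timestamp order. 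You instead build a classical conflict-serializability argument from real-time commit linearization points, with a separate placement step for read-only transactions. The paper's route buys brevity and sidesteps the read-only placement problem entirely (read-only transactions also carry a $cts$), at the cost of leaning on an informally stated property of the retry mechanism; your route is more self-contained and closer to \lemref{concycle}, but it pushes the real work into exactly the two places you flag. One of those you under-sell: the rv--tryC case is \emph{not} symmetric to tryC--rv. The conflict only gives $\rvm_i \prec_H^{\mr} \tryc_j$, not $\tryc_i \prec_H^{\mr} \tryc_j$, so for an updating reader $T_i$ you must additionally argue that $T_i$ cannot commit \emph{after} $T_j$; this holds because a committing overwriter in SF-SVOSTM inspects the $rvl$ of the node and sets the \emph{status} of every still-live reader to \emph{false}, so a reader that nevertheless commits must have done so before $\tryc_j$'s linearization point. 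With that invariant made explicit (and the analogous one for your read-only interval argument), your proof goes through and yields the same conclusion via \thmref{coopqhprev}.
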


\begin{proof}
In order to prove this, we construct the co-opacity graph $\copg{H}{\ll}$ generated by SF-SVOSTM algorithm and prove that $\copg{H}{\ll}$ graph is acyclic. After that with the help of \thmref{coopqhprev}, we can say that generated $\copg{H}{\ll}$ graph is acyclic so any legal history $H$ generated by SF-SVOSTM is co-opaque.

To prove the $\copg{H}{\ll}$ generated by SF-SVOSTM algorithm is acyclic. We construct $\copg{H}{\ll} = (V, E)$ which consists of $V$ vertices and $E$ edges. Here, each committed transaction is consider as a vertex and edges are as follows:
\begin{itemize}
	\item \textit{real-time (or \rt) edge}: If transaction $T_i$ returns commit before the beginning of other transaction $T_j$ then real-time edge goes from $T_i$ to $T_j$ in SF-SVOSTM. Formally, ($\tryc_i() \prec_{H}$ \emph{STM\_begin$_j$}()) $\implies$ $TS(T_i) < TS(T_j)$ $\implies$ $T_i$ $\rightarrow$ $T_j$.
	
	\item \textit{conflict (or \confc{}) edge}: The conflict edges between two transactions depends on the conflicts between them. Two transactions $T_i$ and $T_j$ of the sequential history are said to be in conflict, if both of them  access same key $k$ and at least one transaction performs \emph{update} method. As described in \secref{sm}, \emph{STM\_lookup()}, and \emph{STM\_delete()} return the value from underlying data structure so, we called these methods as \emph{return value methods (or $\rvmt{s}$)}. Whereas, \emph{STM\_insert()}, and \emph{STM\_delete()} are updating the underlying data structure after successful \emph{STM\_tryC()} so, we called these as \emph{update} \mth{s} (or $\upmt{s}$). So, the conflicts are defined between the \mth{s} that accesses the shared memory.  \emph{(STM\_tryC$_i$(), STM\_tryC$_j$()), (STM\_tryC$_i$(), STM\_lookup$_j$()), (STM\_lookup$_i$(), STM\_tryC$_j$()), (STM\_tryC$_i$(), STM\_delete$_j$()) and (STM\_delete$_i$(), STM\_tryC$_j$())} are the possible conflicting \mth{s} in SF-SVOSTM. On conflict between two transaction $T_i$ and $T_j$ where $TS(T_i) < TS(T_j)$, the conflict edge going from $T_i$ to $T_j$. In SF-SVOSTM, if higher timestamp (TS) transaction $T_j$ has already been committed then lower TS transaction $T_i$ returns abort and retry with higher TS in the incarnation of $T_i$ and returns commit. So, conflicts edges in SF-SVOSTM follows increasing of their TS. Formally, $TS(T_i) < TS(T_j)$ $\implies$ $T_i$ $\rightarrow$ $T_j$.
	
\end{itemize}
So, all the edges of $\copg{H}{\ll}$ generated by SF-SVOSTM algorithm follows the increasing order of  TS of the transactions. Thus, $\copg{H}{\ll}$ graph is acyclic. Hence, with the help of \thmref{coopqhprev}, any legal history $H$ generated by SF-SVOSTM satisfies co-opacity.  

\end{proof}

\section{Graph Characterization of Local Opacity and SF-KOSTM Correctness}
\label{sec:gcoflo}

This section describes the graph characterization of local opacity for the history $H$ which maintains multiple versions corresponding to each key. Graph Characterization helps to prove the correctness of STMs for a given version order. Lets assume a history $H$ with given version order $\ll$. Following the  graph characterization by Chaudhary et al. \cite{Chaudhary+:KSFTM:Corr:2017} and modified it for sequential histories with high-level methods while maintaining multiple versions corresponding to each key and extend it for local opacity which helps to prove the correctness of SF-KOSTM.

Similar to \secref{gcofo}, SF-KOSTM executes a concurrent history $H$ which consists of multiple transactions. Each transaction calls high-level methods which internally invokes multiple read-write (or lower-level) operations including method invocation and response known as \emph{events} (or $evts$) as discussed in \secref{sm}. So, we need to ensure the atomicity on both the levels. First, we ensure method level atomicity using the \emph{linearization point (LP)} of respective method. After that we ensure the atomicity of transaction on the basis of graph characterization of local-opacity. 

High-level methods are interval instead of dots (atomic). In order to make it atomic, we order the high-level method on the basis of their \emph{linearization point (LP)}. We consider \emph{first unlocking point} of each successful method as the \emph{LP} of the respective method.
Now, we need to ensure the atomicity at transactional level, a transaction internally invokes multiple high-level methods and transactions are overlapping to each other in concurrent history $H$. So, with the help of \emph{graph characterization} of \emph{local-opacity}, SF-KOSTM ensures the atomicity of the transaction defined below.

We construct a opacity graph represented as $\lopg{H}{\ll} = (V, E)$ which consists of $V$ vertices and $E$ edges. Here, each committed transaction $T_i$ is consider as a vertex and edges are as follows:
\begin{itemize}
	\item \textit{real-time (or \rt) edge}: This is same as \rt{} edge defined in $\copg{H}{\ll_H}$. If transaction $T_i$ returns commit before the beginning of other transaction $T_j$ then real-time edge goes from $T_i$ to $T_j$. Formally, ($\tryc_i() \prec_{H}$ \emph{STM\_begin$_j$}()) $\implies$  $T_i$ $\rightarrow$ $T_j$.
	\item \textit{return value from (or \rvf) edge}: There exist a \emph{\rvf} edge between two transaction $T_i$ and $T_j$ such that (1) If $T_i$ is the latest transaction that has updated ($\upmt{()}$) the key $k$ of \tab{}, $ht$ and committed; (2) After that $T_j$ invokes a \rvmt{} $rvm_{j}$ on the same key $k$ of \tab{} $ht$ and returns the value updated by $T_i$, i.e., $\upmt{_i()} \prec_{H}^{\mr} \rvm_{j}$. As defined in \secref{sm}, \upmt{()} can either be \emph{STM\_insert()} or \emph{STM\_delete()}. If the $\upmt{_i()}$ is \emph{STM\_insert()} method on key $k$ then \rvmt{()} returns the value updated by $T_i$, i.e.,  $i_i(k, v) <_H c_i <_H \rvm_j(k, v)$. If the $\upmt{_i()}$ is \emph{STM\_delete()} method on key $k$ then \rvmt returns $null$, i.e.,  $d_i(k, null) <_H c_i <_H \rvm_j(k, null)$.
	\item \textit{multi-version (or \mv) edge}: It depends on the version order between two transactions $T_i$ and $T_j$. For the sake of understanding, consider a triplet of three transactions $T_i$, $T_j$ and $T_k$ with successful methods on key $k$ as  $\up_i(k,u)$, $\rvm_j(k,u)$, $\up_k(k,v)$ , where $u \neq v$ and $up_i$ stands for \upmt{$_i$()} of $T_i$. It can observe that a \emph{return value from edge} is going from $T_i$ to $T_j$ because of $\rvm_j(k,u)$. If the version order is $k_i \ll k_k$ then the multi-version edge is going from $T_j$ to $T_k$. Otherwise, multi-version edge is from $T_k$ to $T_i$ because of version order ($k_k \ll k_i$). 
\end{itemize}
\cmnt{
\begin{enumerate}
	\setlength\itemsep{0em}
	\item \textit{\rt}(real-time) edges: If commit of $T_i$ happens before beginning of  $T_j$ in $H$, then there exist a real-time edge from $v_i$ to $v_j$. We denote set of such edges as $\rt(H)$.
	\item \textit{\rvf}(return value-from) edges: If $T_j$ invokes \rvmt on key $k_1$ from $T_i$ which has already been committed in $H$, then there exist a return value-from edge from $v_i$ to $v_j$. If $T_i$ is having \upmt{} as insert on the same key $k_1$ then $ins_i(k_{1, i}, v_{i1}) <_H c_i <_H \rvm_j(k_{1, i}, v_{i1})$. If $T_i$ is having \upmt{} as delete on the same key $k_1$ then $del_i(k_{1, i}, null) <_H c_i <_H \rvm_j(k_{1, i}, null)$. We denote set of such edges as $\rvf(H)$.
	\item \textit{\mv}(multi-version) edges: This is based on version order. Consider a triplet with successful methods as  $\up_i(k_{1, i},u)$, $\rvm_j(k_{1, i},u)$, $\up_k(k_{1, k},v)$ , where $u \neq v$. As we can observe it from $\rvm_j(k_{1,i},u)$, $c_i <_H\rvm_j(k_{1,i},u)$. if $k_{1,i} \ll k_{1,k}$ then there exist a multi-version edge from $v_j$ to $v_k$. Otherwise ($k_{1,k} \ll k_{1,i}$), there exist a multi-version edge from $v_k$ to $v_i$. We denote set of such edges as $\mv(H, \ll)$.
\end{enumerate}
}
\cmnt{
	\begin{enumerate}
		
		\item \textit{\rt}(real-time) edges: If commit of $T_i$ happens before beginning of  $T_j$ in $H$, then there exist a real-time edge from $v_i$ to $v_j$. We denote set of such edges as $\rt(H)$.
		
		\item \textit{\rvf}(return value-from) edges: If $T_j$ invokes \rvmt on key $k_1$ from $T_i$ which has already been committed in $H$, then there exist a return value-from edge from $v_i$ to $v_j$. If $T_i$ is having \upmt{} as insert on the same key $k_1$ then $i_i(k_{1, i}, v_{i1}) <_H c_i <_H \rvm_j(k_{1, i}, v_{i1})$. If $T_i$ is having \upmt{} as delete on the same key $k_1$ then $d_i(k_{1, i}, nil_{i1}) <_H c_i <_H \rvm_j(k_{1, i}, nil_{i1})$. We denote set of such edges as $\rvf(H)$.
		
		\item \textit{\mv}(multi-version) edges: This is based on version order. Consider a triplet with successful methods as  $\up_i(k_{1,i},u)$ $\rvm_j(k_1,u)$ $\up_k(k_{1,k},v)$ , where $u \neq v$. As we can observe it from $\rvm_j(k_1,u)$, $c_i <_H\rvm_j(k_1,u)$. if $k_{1,i} \ll k_{1,k}$ then there exist a multi-version edge from $v_j$ to $v_k$. Otherwise ($k_{1,k} \ll k_{1,i}$), there exist a multi-version edge from $v_k$ to $v_i$. We denote set of such edges as $\mv(H, \ll)$.
		\vspace{-.3cm}
	\end{enumerate}
}

\noindent
For better understanding, we consider a history $H$: $l_1(ht, k_5, nil), l_2(ht, k_7, nil),\\ d_1(ht, k_6, nil), C_1, i_2(ht, k_5, v_2), C_2, l_3(ht, k_5, v_2), i_3(ht, k_7, v_3), C_3$ and show the time line view of it in \figref{glo}.(a). We construct $\lopg{H}{\ll} = (V, E)$ shown in \figref{glo}.(b). There exist a \mv{} edge between $T_1$ and $T_2$ because $T_1$ lookups the value of key $k_5$ from $T_0$ and after that $T_2$ creates a version on $k_5$ with value $v_2$. $T_3$ begins after the commit of $T_1$ and $T_2$ so, \rt{} edges are going from $T_1$ to $T_3$ and $T_2$ to $T_3$. Here, $T_3$ lookups key $k_5$ from the version created by $T_2$ and returns the value $v_2$. So, \rvf{} edge  is going from $T_2$ to $T_3$. Hence, $H$ constructs an acyclic graph (OG) with equivalent serial schedule $T_1 T_2 T_3$.



\cmnt{\begin{figure}[H]
	\centerline{\scalebox{0.7}{\input{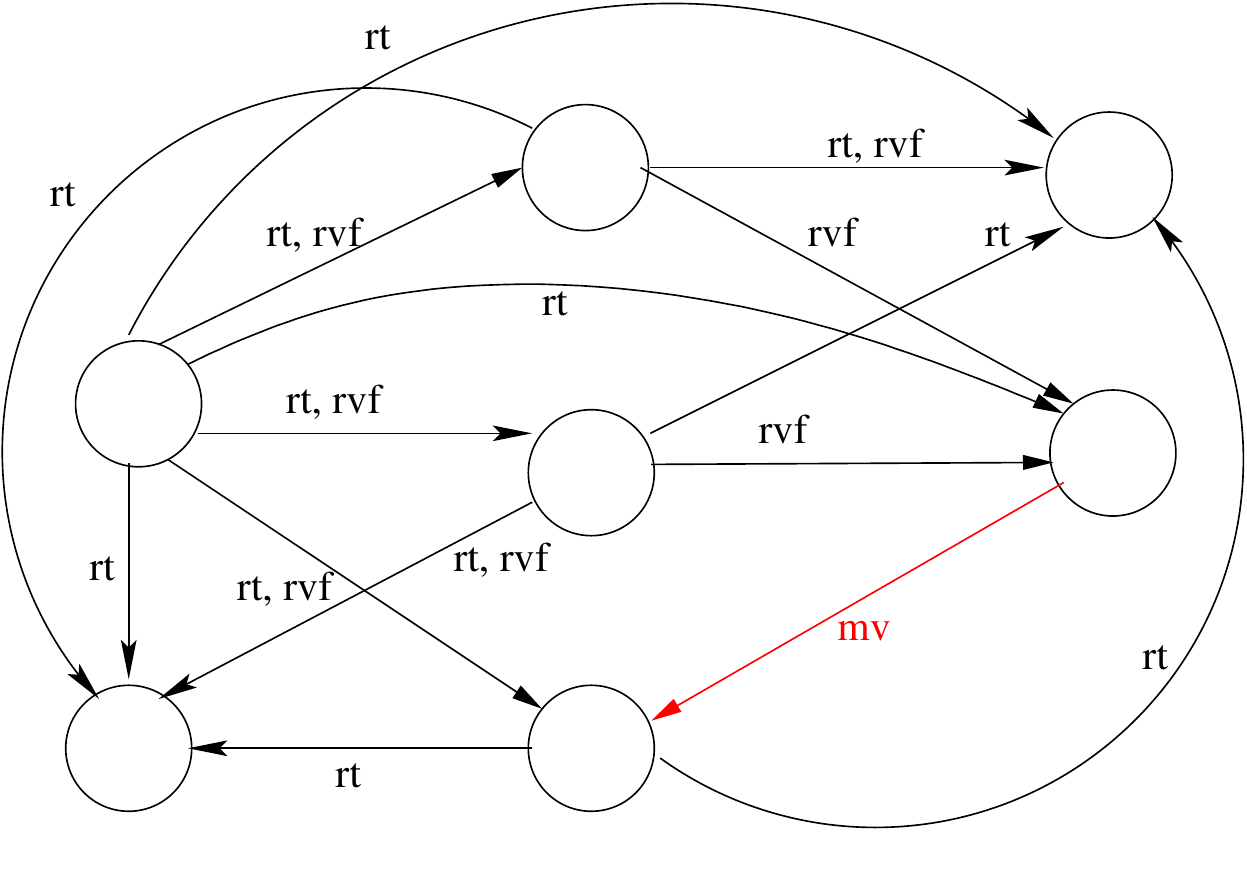_t}}}
	\caption{$\opg{H3}{\ll_{H3}}$}
	\label{fig:mvostm1}
\end{figure}}

\begin{figure}
	\centerline{
		\scalebox{0.4}{\input{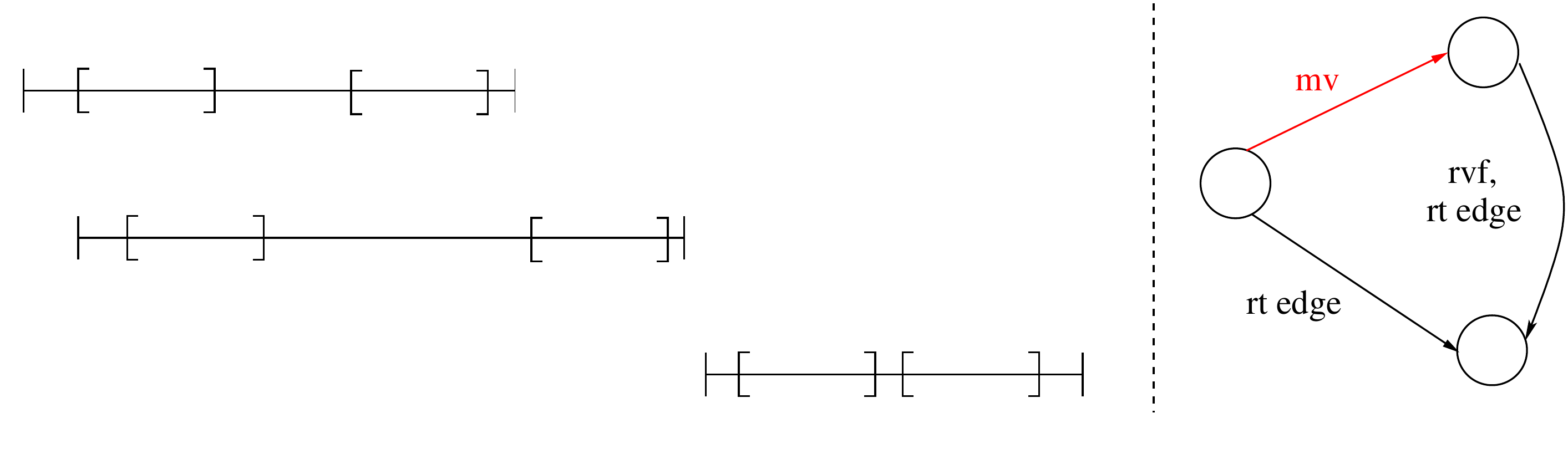_t}}}
	\captionsetup{justification=centering}
	\caption{Illustration of Graph Characterization of Opacity}
	\label{fig:glo}
\end{figure}

For a given history $H$ and version order $\ll$, we consider a complete graph $\overline{H}$ instead of $H$ and construct the graph $\lopg{\overline{H}}{\ll}$. It can be seen that $\overline{H}$ has more $\rt{}$ edges than $H$, i.e., $\prec_H^{RT} \subseteq \prec_{\overline{H}}^{RT}$. But, for the graph construction, we consider only $\rt{}$ edges of $H$ with the assumption $\rt(H) = \rt(\overline{H})$ that satisfies the following property:

\begin{property}
	\label{prop:hoverh}
	For a given history $H$ and version order $\ll$, the opacity graphs $\lopg{H}{\ll}$ and $\lopg{\overline{H}}{\ll}$ are same.
\end{property}
\begin{definition}
	\label{def:seqver}
	We define a version order $\ll_S$ for \tseq{} history $S$ such that if two committed transactions $T_i$ and $T_j$ has created versions on key $k$ as $k_i$ and $k_j$ respectively with version order $k_i \ll_S k_j$ then $T_i$ committed before $T_j$ in $S$. Formally, $\langle k_i \ll_S k_j \Leftrightarrow T_i <_S T_j \rangle $.

\end{definition}
\noindent
This definition along with below defined lemmas and theorems will help us to prove the correctness of our graph characterization.  

\begin{lemma}
	\label{lem:seracycle1}
	The opacity graph for \legal{} \tseq{} history $S$ as $\lopg{S, \ll_S}$ is acyclic.
\end{lemma}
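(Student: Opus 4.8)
The plan is to follow the template of \lemref{concycle}: since $S$ is \tseq{}, the transactions of $S$ are totally ordered by their position in $S$, which I will write $<_S$, and it suffices to show that every edge of $\lopg{S}{\ll_S}$ points from a transaction occurring earlier in $S$ to one occurring later. Once every edge is consistent with the strict total order $<_S$, a directed cycle would force $T <_S T$ for some committed transaction $T$, which is impossible, so the graph is acyclic. (A \tseq{} history has all its transactions terminated, hence $S$ is complete and I can reason about $S$ itself rather than passing to $\overline{S}$.) I would therefore examine the three edge types of $\lopg{S}{\ll_S}$ in turn.

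The \rt{} and \rvf{} edges are the easy cases. An \rt{} edge $T_i \to T_j$ exists only when $T_i$ commits before $T_j$ begins; since in a serial history transactions do not overlap, this forces $T_i <_S T_j$. An \rvf{} edge $T_i \to T_j$ exists when $T_i$ is the last committed transaction to update key $k$ before a \rvmt{} of $T_j$ on $k$ and $T_j$ returns the value written by $T_i$; in the serial order the committing \upmt{} of $T_i$ precedes that \rvmt{}, so again $T_i <_S T_j$ (and if $T_i = T_0$ this is immediate since $T_0 <_S T_x$ for every $T_x$).

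The \mv{} edges are the heart of the argument and the step I expect to be the main obstacle. Consider the triplet $T_i$, $T_j$, $T_k$ with successful methods $\up_i(k,u)$, $\rvm_j(k,u)$, $\up_k(k,v)$, $u \neq v$, which already carries an \rvf{} edge $T_i \to T_j$. If the version order is $k_k \ll_S k_i$, then by \defref{seqver} this is equivalent to $T_k <_S T_i$, and the corresponding \mv{} edge $T_k \to T_i$ respects $<_S$. The harder subcase is $k_i \ll_S k_k$, which gives the \mv{} edge $T_j \to T_k$; here \defref{seqver} yields only $T_i <_S T_k$, so I still need $T_j <_S T_k$. I would argue by contradiction: suppose $T_k <_S T_j$. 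Then $T_i <_S T_k <_S T_j$, with $T_k$ a committed transaction updating $k$ lying strictly between $T_i$ and $T_j$ in $S$. But since $S$ is \legal{}, $\rvm_j$ on $k$ must return the value written by the closest committed transaction preceding $T_j$ that updated $k$, so it cannot return $u$, the value of $T_i$'s version; this contradicts $\rvm_j(k,u)$. Hence $T_k <_S T_j$ is impossible, so $T_j <_S T_k$, and the \mv{} edge $T_j \to T_k$ also respects $<_S$.

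Having shown that every \rt{}, \rvf{}, and \mv{} edge of $\lopg{S}{\ll_S}$ is consistent with the strict total order $<_S$ on the committed transactions of $S$, I conclude that $\lopg{S}{\ll_S}$ admits no directed cycle, i.e., it is acyclic.
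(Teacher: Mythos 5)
Your proof is correct and follows essentially the same route as the paper: impose the serial total order on the transactions of $S$ and check that every \rt{}, \rvf{}, and \mv{} edge is consistent with it. The only difference is that where the paper simply asserts that legality gives $T_j <_S T_k$ in the second \mv{} subcase, you spell out the contradiction (a committed update to $k$ strictly between $T_i$ and $T_j$ would violate legality of $\rvm_j$), which is a welcome elaboration rather than a departure.
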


\begin{proof}
	The proof of this lemma is similar as \lemref{concycle} of \secref{gcofo}. We order all the transactions of $S$ into real-time order on the basis of their increasing order of timestamp (TS). For example, consider two transaction $T_i$ and $T_j$ with $TS(T_i)$ is less than $TS(T_j)$ then $T_i$ will occur before $T_j$ in $S$. Formally, $TS(T_i) < TS(T_j)$ $\Leftrightarrow T_i <_S T_j$. We consider all the types edges of $\lopg{S, \ll_S}$ and analyze it one by one as follows to show the acyclicity of it:
	\begin{itemize}
		\item \rt{} edge: It follow that any transaction begin after commit of previous transaction only. Hence, all the \rt{} edges go from a lower TS transaction $T_i$ to higher TS transaction $T_j$ and follow timestamp order. 
		
		\item \rvf{} edge: Any transaction $T_j$ lookups key $k$ from $T_i$ in $S$ then $T_i$ has to be committed before invoking of lookup of $T_j$. So, $TS{(T_i)} < TS{(T_j)}$. Hence, all the \rvf{} edges goes from a lower TS transaction to a higher TS transaction.
		
		\item \mv{} edge: Consider a triplet of three transactions $T_i$, $T_j$ and $T_k$ with successful methods on key $k$ as  $\up_i(k,u)$, $\rvm_j(k,u)$, $\up_k(k,v)$ , where $u \neq v$. Here, $\rvm_j(k,u)$ method is returning the latest value written by $T_i$ on key $k$ with value $u$ using $\up_i(k,u)$. So, there exist a \rvf{} edge with $TS(T_i) < TS(T_j)$. There are two cases for the version order of $k$ as follows: (1) If the version order is $T_k \ll_S T_i$ which implies that $TS(T_k) < TS(T_i)$ then multi-version edge goes from $T_k$ to $T_i$ which also follows the increasing order of $TS$. (2) If the version order is $T_i \ll_S T_k$ which implies that $TS(T_i) < TS(T_k)$. Since $S$ is a \legal{} \tseq{} history, so, $TS(T_j) < TS(T_k)$ and then multi-version edge goes from $T_j$ to $T_k$ which again follows the increasing order of $TS$. So, \mv{} edges also follow the increasing order of $TS$ order.

		
	\end{itemize}
	
	Therefore, all the types of edges follow the increasing order of transaction's $TS$ as defined above. All the edges of $S$ goes from lower $TS$ transaction to higher $TS$ transactions. This implies that the opacity graph generated by \legal{} \tseq{} history $S$ as $\lopg{S, \ll_S}$ is acyclic. 
\end{proof}

\begin{lemma}
	\label{lem:eqv_hist_mvorder}
	Consider a history $H$ with given version order $\ll_H$. Another history $H'$ is equivalent to $H$ then the mv edges $\mv(H, \ll_H)$ induced by $\ll_H$ in $H$ and $H'$ will be same.	
\end{lemma}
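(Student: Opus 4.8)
The plan is to establish the set equality $\mv(H,\ll_H) = \mv(H',\ll_H)$ by proving the two inclusions, each of which follows by unfolding the definition of an mv edge and checking that every ingredient of that definition survives the replacement of $H$ by an equivalent history $H'$ carrying the same version order. The guiding observation is that an mv edge is defined purely in terms of (i) the occurrence of two update methods $\up_i(k,u)$, $\up_k(k,v)$ with $u\neq v$ and of an rv\_method $\rvm_j(k,u)$, (ii) the fact that the transactions $T_i,T_k$ performing those updates are committed, and (iii) the order of the versions $k_i$ and $k_k$ under $\ll_H$. Items (i)--(iii) refer only to the event set and to $\ll_H$; none of them mentions the total order $<_H$ of the history.

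First I would fix notation: by definition $H'\equiv H$ means $\evts{H'}=\evts{H}$, so the two histories contain exactly the same transactions, the same methods on the same keys with the same arguments and the same return values, and the same terminal events. In particular a transaction is committed in $H$ iff it is committed in $H'$ (its commit event is itself an event), and every rv\_method returns the same value in both histories. I would also recall the convention used throughout the paper that distinct update methods write distinct values, so that the return value $u$ of $\rvm_j(k,u)$ uniquely identifies the committed transaction $T_i$ with update $\up_i(k,u)$; this makes the ``return value from'' pairing that appears inside an mv triplet a function of the event set alone, independent of the ordering.

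The core argument is then a direct unfolding. Take any edge $e\in\mv(H,\ll_H)$; by definition it arises from a triplet $T_i,T_j,T_k$ with successful methods $\up_i(k,u)$, $\rvm_j(k,u)$, $\up_k(k,v)$ on key $k$ ($u\neq v$), with $T_i$ and $T_k$ committed, and $e=(T_j\to T_k)$ if $k_i\ll_H k_k$ while $e=(T_k\to T_i)$ if $k_k\ll_H k_i$. Since $\evts{H}=\evts{H'}$, the three methods as well as the commit events of $T_i$ and $T_k$ all occur in $H'$, and the comparison of $k_i$ and $k_k$ is performed with the very same order $\ll_H$; hence the same triplet witnesses $e\in\mv(H',\ll_H)$. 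The reverse inclusion is symmetric (swap the roles of $H$ and $H'$). Therefore $\mv(H,\ll_H)=\mv(H',\ll_H)$, which is the claim.

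The only real obstacle is definitional hygiene rather than mathematical difficulty: one must make explicit that ``$H'$ equivalent to $H$'' (equality of event sets) is strong enough to preserve both the committed/aborted status of each transaction and the reads-from pairing implicitly used in the mv triplet. The first point is immediate since terminal events are events; the second rests on the distinct-value convention mentioned above (equivalently, on the fact that an rv\_method records which version it read, so that this information is part of the event data). Once these two points are spelled out, the proof uses nothing about the real-time order or the method order of either history and reduces to the one-line unfolding above.
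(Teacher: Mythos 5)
Your argument is correct and is essentially the paper's own proof: both rest on the observation that an mv edge is determined solely by the version order $\ll_H$ and the methods (i.e.\ the event set), not by the order of events, so equivalence of $H$ and $H'$ immediately gives the same mv edges. Your version merely spells out the definitional details (preservation of commit status and of the reads-from pairing) that the paper leaves implicit.
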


\begin{proof}
	Since history $H$ and $H'$ are equivalent, so, version order of  $\ll_H$ will be same as version order of  $\ll_{H'}$. We can observe that \mv{} edges depend on version order $\ll$ and the methods of the history. It is independent from the order of the methods in $H$. So, being equivalence $H'$ also contains the same version order $\ll_H$ and the methods as in $H$. Thus, multi-version \mv{} edges are same in $H$ and $H'$. 
\end{proof}


\begin{theorem}
	\label{thm:opg}
	A \valid{} history $H$ is opaque with a version order $\ll_H$ iff $\lopg{H}{\ll_H}$ is acyclic.
\end{theorem}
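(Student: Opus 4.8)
The plan is to follow the same two-directional template as the co-opacity characterization \thmref{cocg}, adapted to multi-version histories where \emph{legality} is taken with respect to the version order $\ll_H$ and where the graph carries \rvf{} and \mv{} edges instead of plain conflict edges. Throughout I would use \propref{hoverh} to move freely between $H$ and its completion $\overline{H}$, and \defref{seqver} together with \lemref{seracycle1} and \lemref{eqv_hist_mvorder} as the main building blocks.

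\textbf{(If part.)} Suppose $\lopg{H}{\ll_H}$ is acyclic. Since $\lopg{H}{\ll_H} = \lopg{\overline{H}}{\ll_H}$ by \propref{hoverh}, I apply a topological sort to the vertices of this acyclic graph and read off a \tseq{} history $S$ that is equivalent to $\overline{H}$ (same events, same committed transactions) and whose transaction order extends every edge of the graph. Because every \rt{} edge of $H$ appears in $\lopg{H}{\ll_H}$, we get $\prec_H^{\tr} \subseteq \prec_S^{\tr}$, so $S$ respects the real-time order. It remains to show $S$ is \legal{}, i.e. every \rvmt{()} in $S$ returns the value of the closest preceding committed updater of its key in $<_S$. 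Fix such a \rvmt{()} of $T_j$ on key $k$ returning the value produced by $T_i$: the \rvf{} edge $T_i \to T_j$ places $T_i$ before $T_j$ in $S$, and for every other committed transaction $T_k$ updating $k$ the \mv{} edge (directed $T_j \to T_k$ when $k_i \ll_H k_k$, and $T_k \to T_i$ otherwise) forces $T_k$ either before $T_i$ or after $T_j$, so no update of $k$ is interleaved strictly between $T_i$ and $T_j$. Hence $S$ is a \legal{} \tseq{} history equivalent to $\overline{H}$ respecting $\prec_H^{\tr}$, which is exactly the definition of $H$ being opaque with version order $\ll_H$.

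\textbf{(Only if part.)} Suppose $H$ is opaque with version order $\ll_H$. Then there is a \legal{} \tseq{} history $S$ equivalent to $\overline{H}$ with $\prec_H^{\tr} \subseteq \prec_S^{\tr}$. Equip $S$ with the induced version order $\ll_S$ of \defref{seqver}. I would first argue that $\ll_H$ and $\ll_S$ agree on the versions actually read, using validity of $H$, legality of $S$, and \lemref{eqv_hist_mvorder} (equivalent histories carry the same \mv{} edges for a fixed version order). Given this, every edge of $\lopg{H}{\ll_H}$ — real-time, \rvf{}, and \mv{} — is also an edge of $\lopg{S}{\ll_S}$, so $\lopg{H}{\ll_H}$ is a subgraph of $\lopg{S}{\ll_S}$. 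By \lemref{seracycle1} the graph of the \legal{} \tseq{} history $S$ is acyclic, and a subgraph of an acyclic graph is acyclic; hence $\lopg{H}{\ll_H}$ is acyclic.

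\textbf{(Main obstacle.)} The delicate step is the legality argument in the if-direction: checking that the \rvf{} together with the \mv{} edges really pin down the \emph{immediate} preceding updater and not merely \emph{some} preceding updater — this needs the case analysis on the position of each competing updater $T_k$ relative to $T_i$ and $T_j$ sketched above — and a symmetric care is required in the only-if direction when verifying that $\ll_H$ coincides with $\ll_S$ on the read versions, so that the subgraph inclusion genuinely holds.
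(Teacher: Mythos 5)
Your proposal is correct and follows essentially the same route as the paper's proof: topological sort of the acyclic graph to obtain an equivalent \tseq{} history $S$ in the if-direction (with the same case analysis on \mv{} edges to establish legality, which you state directly where the paper argues by contradiction), and the subgraph-of-an-acyclic-graph argument via \lemref{seracycle1} in the only-if direction. Your explicit attention to reconciling $\ll_H$ with $\ll_S$ via \lemref{eqv_hist_mvorder} is a point the paper's proof passes over more quickly, but it is the same argument.
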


\begin{proof}
	\textbf{(if part):} First, we consider $\lopg{H}{\ll_H}$ is acyclic and we need to prove that history $H$ is opaque. Since $\lopg{H}{\ll_H}$ is acyclic, we apply topological sort on $\lopg{H}{\ll_H}$ and generate a \tseq{} history $S$ such that $S$ is equivalent to $\overline{H}$.
	$\lopg{H}{\ll_H}$ maintains real-time edges as well and $S$ has been generated from it. So, $S$ also respect real-time order (or \rt{}) as $H$. Formally, $\prec_H^{RT} \subseteq \prec_S^{RT}$.
	
	$\lopg{H}{\ll_H}$ maintains the return value from (or \rvf{}) edges for \rvmt{()} on any key $k$ by transaction $T_i$ returns the value written on $k$ by previously committed transaction $T_j$. So, $S$ is \emph{valid}. Now, we need to prove that $S$ is \emph{legal}. We prove it by contradiction, so we assume that $S$ is not \emph{legal}. That means, a \rvmt{()} $\rvm_j(k, u)$ lookups on key $k$ from a committed transaction $T_i$ which wrote the value of $k$ as $u$. But between these two transaction $T_i$ and $T_j$, an another committed transaction $T_k$ exist in $S$ which wrote to $k$ with value $v$ and ($u \neq v$). Formally, $T_i \prec_{S}^{RT} T_k \prec_{S}^{RT} T_j$. Consider a given version order in $H$ as $\ll_H$, if the version order is $T_k \ll_S T_i$ then the multi-version edge goes from $T_k$ to $T_i$. Consider the other version order is $T_i \ll_S T_k$ then the multi-version edge goes from $T_j$ to $T_k$. So, in both the cases $T_k$ is not coming between $T_i$ and $T_j$. So, our assumption is wrong. Hence,  $S$ is \emph{legal}.



	
	$S$ satisfies all the properties of \emph{opacity} and equivalent to $H$ because $S$ has been generated from the topological sort on $\lopg{H}{\ll_H}$. Hence, history H is opaque.

	\textbf{(Only if part):} Now, we consider $H$ is opaque and have to prove that $\lopg{H}{\ll_H}$ is acyclic.
	Since $H$ is opaque there exists an equivalent  \legal{} \tseq{} history $S$ to $\overline{H}$ which maintains real-time (\rt{}) and conflict ($Conf$) order of $H$. From the \lemref{seracycle1}, we can say that opacity graph $\lopg{S, \ll_S}$ is acyclic. As we know, $\lopg{H}{\ll_H}$ is the subgraph of $\lopg{S, \ll_S}$.
	Hence, $\lopg{H}{\ll_H}$ is acyclic.
	
\end{proof}

\noindent The above defined lemmas and theorems of \emph{opacity} can be extended the proof of \emph{local-opacity}.
\begin{theorem}
	\label{thm:log}
	A \valid{} history $H$ is \lopq iff all the sub-histories $\shset{H}$ (defined in \secref{sm}) for a history $H$ are opaque, i.e., A \valid{} history $H$ is \lopq iff the opacity graph $\lopg{sh}{\ll_{sh}}$ generated for each sub-history $sh$ of $\shset{H}$ with given version order $\ll_{sh}$ is acyclic. Formally,$\langle (H \text{ is \lopq}) \Leftrightarrow (\forall sh \in \shset{H}, \exists \ll_{sh}: \lopg{sh}{\ll_{sh}} \text{ is acyclic}) \rangle$.

\end{theorem}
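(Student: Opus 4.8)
The plan is to derive the statement from two results already in hand: the \emph{definition} of local opacity stated in \secref{sm}, and the graph characterization of opacity in \thmref{opg}. The definition already supplies the first equivalence in the claimed chain: a history $H$ is \lopq{} precisely when every sub-history in $\shset{H}$ is opaque. Hence the real content is the second equivalence, namely that a sub-history $sh\in\shset{H}$ is opaque iff there is a version order $\ll_{sh}$ for which $\lopg{sh}{\ll_{sh}}$ is acyclic; and this is exactly \thmref{opg} applied to $sh$, \emph{provided $sh$ is a \valid{} history to which that theorem applies}.

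So the first thing I would check is that validity (and completeness) are inherited by every $sh\in\shset{H}$. Recall from \secref{sm} that $sh$ is obtained either (a) from an aborted transaction $T_i$ by keeping all $\evts{}$ of the transactions that committed in $H$ before $T_i$'s last successful operation, together with the successful $\evts{}$ of $T_i$, and inserting $\commit$ for $T_i$ immediately after that last successful operation; or (b) from the last committed transaction $T_l$ by keeping exactly the previously committed transactions and $T_l$. In both cases $sh$ contains committed transactions only, so it is complete (no live or aborted transaction remains) and \propref{hoverh} is vacuous. For validity: any \rvmt{} appearing in $sh$ already appears in $H$, and since $H$ is \valid{} it returns, in $H$, a value written by some committed transaction $T_j$; as $H$ is \valid{} the commit of $T_j$ precedes that \rvmt{} in $<_H$, hence it precedes $T_i$'s last successful operation, so $T_j$ is among the committed transactions retained in $sh$, and since $<_{sh}\subseteq<_H$ that \rvmt{} still reads from the committed transaction $T_j$ of $sh$. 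Thus $sh$ is \valid{}.

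With this in place, I would simply invoke \thmref{opg} on each $sh$: $sh$ is opaque iff there exists a version order $\ll_{sh}$ making $\lopg{sh}{\ll_{sh}}$ acyclic. Chaining this with the definition of \lopty{} yields
\[
(H \text{ is \lopq}) \Leftrightarrow (\forall sh\in\shset{H}: sh \text{ is opaque}) \Leftrightarrow (\forall sh\in\shset{H},\ \exists \ll_{sh}: \lopg{sh}{\ll_{sh}} \text{ is acyclic}),
\]
which is the claim. The existential quantifier on $\ll_{sh}$ must be kept exactly as written, since opacity of $sh$ asserts the existence of an equivalent legal serial history: from such a history one extracts the version order via \defref{seqver}, and conversely a topological sort of the acyclic $\lopg{sh}{\ll_{sh}}$ produces the required serial history — these are precisely the two directions of \thmref{opg}.

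The step I expect to be the main obstacle is the validity-inheritance argument for case (a): one must be careful that the set of ``transactions committed before $T_i$'s last successful operation'' that \secref{sm} places into $sh$ genuinely contains the writer $T_j$ of every value read by a surviving \rvmt{} of $T_i$ (or of a surviving committed transaction), and that discarding all the remaining live and aborted transactions of $H$ cannot strip any of those reads of its source. Once that is nailed down using the precise construction of $\shset{H}$, the remainder is bookkeeping plus a direct appeal to \thmref{opg}.
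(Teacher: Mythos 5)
Your proposal is correct and follows essentially the same route as the paper's proof: establish that every sub-history in $\shset{H}$ is \valid{} (because each retained \rvmt{} reads from a committed transaction that is itself retained in the sub-history), then apply \thmref{opg} to each $sh$ and chain with the definition of \lopty{}. Your treatment of the validity-inheritance step is somewhat more careful than the paper's, but it is the same argument.
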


\begin{proof}
	
	In order to prove it, we need to show that each sub-history $sh$ from the sub-histories of $H$, $\shset{H}$ is \emph{valid}. After that the remaining proof follows \thmref{opg}.
	
	To prove sub-history $sh$ is \emph{valid}, consider a $sh$ with any \rvmt{()} $\rvm_j(k,u)$ of a transaction $T_j$. We can easily observe that \rvm method returns the value of key $k$ that has been written by a committed transaction $T_i$ on $k$ with value $u$. So, it can be seen that $sh$ has all the transactions that has committed before $\rvm_j(k,u)$. Similarly, all the \rvm methods of $sh$ return the value from previously committed transactions. Thus, sub-history $sh$ is \emph{valid}.
	
	Now, we need to show each sub-history $sh$ is opaque with a version order $\ll_{sh}$ iff $\lopg{sh}{\ll_{sh}}$ is acyclic. The proof of this is directly coming \thmref{opg} while replacing $H$ with $sh$.
	Similarly, all the sub-histories of $\shset{H}$ are \emph{valid} and satisfying  \thmref{opg}. So, all the sub-histories of history $H$ are \emph{opaque}. Hence, a \valid{} history $H$ is \lopq.
	
	
\end{proof}

\noindent
Now, we consider the algorithms defined for each method of \emph{SF-KOSTM} in \secref{pm} and prove the correctness of SF-KOSTM. Consider a history $H$ generated by SF-KOSTM with two transaction $T_i$ and $T_j$ with status either \emph{live or committed} using $status$ flags as true. Then the edges between $T_i$ and $T_j$ follow the $\tltl$ order. SF-KOSTM algorithm ensures that $\tltl$ are keep on increasing order of the transaction timestamp (TS) order using atomic counter $\gtcnt$ in \emph{STM\_begin()}. Though the value of $\tltl$ is increasing in the other methods of SF-KOSTM still its maintaining the increasing order of transactions TS. We assume all the histories generate (or gen) by SF-KOSTM algorithm as $\gen{\emph{SF-KOSTM}}.$

\begin{lemma}
	\label{lem:tltl-edge}
	Any history $H$ generated by SF-KOSTM as $\gen{\emph{SF-KOSTM}}$ with two transactions $T_i$ and 
	$T_j$ such that status flags of both the transactions are true.  If there is an edge from $T_i$ to $T_j$ then $\tltl_i$ is less than $\tltl_j$. Formally, $T_i$ $\rightarrow$ $T_j$ $\implies$ $\tltl_i$ $<$ $\tltl_j$.		
\end{lemma}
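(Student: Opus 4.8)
The plan is to prove the lemma by a case analysis on the three edge types that can appear in $\lopg{H}{\ll_H}$ (the real-time edge, the return-value-from edge, and the multi-version edge), showing in each case that an edge $T_i \rightarrow T_j$ forces $\tltl_i < \tltl_j$. Throughout I would rely on the fact that both $T_i$ and $T_j$ have their $status$ flags true (i.e.\ neither has been forced to abort), so that the timestamp bookkeeping each of them performed in \tbeg{()}, the \rvmt{s()}, and \tryc{()} is actually ``committed'' state and has not been invalidated; in particular, by \Lineref{tcv34} of \emph{tryC\_Validation()}, once a transaction reaches the point of creating its versions we have $\tltl = \tutl$, and the value it stamps into each version's $vrt$ field equals this common value.

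First I would handle the \rt{} edge. If $T_i \rightarrow T_j$ is a real-time edge then $\tryc_i()$ completed before \tbeg_j() was invoked. By \Lineref{tcv201}--\Lineref{tcv202}, $T_i$ read the global counter $\gtcnt$ at commit and set $\tutl_i \le comTime_i$, hence $\tltl_i = \tutl_i \le comTime_i$; and \tbeg_j() subsequently read $\gtcnt$ via $gcounter.get\&Inc()$ to set $\tltl_j = cts_j$, which is strictly larger than $comTime_i$ because the counter was incremented in between. So $\tltl_i < \tltl_j$. Second, the \rvf{} edge: here $T_j$ performed an \rvmt{} that returned the value written by $T_i$ (the latest committer on key $k$). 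The key point is that in the \rvmt{s()} code (\Lineref{rvm25}) $T_j$ sets $\tltl_j = \max(\tltl_j, ver_j.vrt + 1)$ where $ver_j$ is exactly the version $T_i$ created; since $T_i$ committed, $ver_j.vrt = \tltl_i = \tutl_i$, so $\tltl_j \ge \tltl_i + 1 > \tltl_i$. (If $T_i = T_0$ the claim is trivial since $T_0$'s version has $vrt = 0$ and any real transaction has positive $\tltl$.)

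Third, the \mv{} edge, which is the subtle case and where I expect the main obstacle. An \mv{} edge $T_i \rightarrow T_j$ arises from a triple $\up_a(k,u), \rvm_b(k,u), \up_j(k,v)$ (or the symmetric configuration) together with a version-order decision. Concretely: either $T_j$ is the reader $T_b$ whose version-predecessor is $T_i = T_a$ with $k_a \ll_H k_j$-type ordering placing $T_b$ before $T_j$, or $T_i$ is the reader and $T_j$ the later writer. I would argue that in \emph{tryC\_Validation()} a committing writer $T_j$ that creates a new version on $k$ examines, through the $currVL$/$nextVL$/$smallRVL$/$largeRVL$ machinery (\Lineref{tcv6}--\Lineref{tcv36}), exactly those transactions whose versions or read-points straddle $T_j$'s version, and it either (a) raises its own $\tltl_j$ above their $vrt$/$\tltl$, or (b) lowers their $\tutl$ below $\tltl_j$ (\Lineref{tcv36}), or (c) aborts one of the parties — and since $status$ of both $T_i$ and $T_j$ is true, option (c) did not happen to either. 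A careful bookkeeping of which of (a) or (b) applies in each sub-configuration of the \mv{} edge then yields $\tltl_i < \tltl_j$: when $T_i$ is the version-predecessor, $T_j$'s $\tltl$ was pushed above $ver_i.vrt = \tltl_i$ via the $currVL$ loop (\Lineref{tcv19}); when $T_i$ is the reader that got ordered before the later writer $T_j$, $T_i$'s $\tutl_i$ was pulled below $\tltl_j$ via the $smallRVL$ loop (\Lineref{tcv36}), so $\tltl_i \le \tutl_i < \tltl_j$.

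The hard part will be case three: one must check that the $smallRVL$/$largeRVL$ classification in \emph{tryC\_Validation()} is exhaustive for all the ways an \mv{} edge can be generated by the algorithm, that the helping step at \Lineref{tcv36} actually fires for every \emph{live} transaction that ends up on the ``before'' side of the version order (and that transactions that are already committed had their $\tutl$ set correctly at their own commit time, \Lineref{tcv202}), and that there is no race in which $T_i$ reads a version and $T_j$ installs its version ``concurrently'' in a way that escapes both adjustments — this is ruled out by the locking discipline on $status$ (\Lineref{tcv8}) and on the $preds[]/currs[]$ of the key, which linearizes the two \tryc{()} operations and their validations. Once all three edge types are dispatched, the lemma follows.
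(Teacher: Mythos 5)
Your proposal is correct and takes essentially the same route as the paper's proof: a case analysis over the three edge types of the opacity graph (real-time, return-value-from, and multi-version), showing that each one respects the increasing \tltl{} order. Your write-up is in fact more detailed than the paper's, which merely asserts that SF-KOSTM ``ensures'' the timestamp order for each edge type where you ground the claim in the specific counter reads and \emph{tryC\_Validation()} adjustments, but the decomposition and the per-case conclusions are identical.
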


\begin{proof}
	We consider all types of edges in $\lopg{H}{\ll_H}$ and analyze it as follows:
	
	\begin{itemize}
		\item \textit{real-time (or \rt) edge}: Here, the transaction $T_i$ returns commit before the begin of other transaction $T_j$ then real-time edge goes from $T_i$ to $T_j$. Hence, $\tltl_i$ gets the value from $\gtcnt$ earlier than begin of $T_j$. So, $T_i$ $\rightarrow$ $T_j$ $\implies$ $\tltl_i$ $<$ $\tltl_j$.
		
		\item \textit{return value from (or \rvf) edge}: The transaction $T_i$ has updated ($\upmt{()}$) the key $k$ of \tab{}, $ht$ and committed. After that transaction $T_j$ invokes a $rvm_{j}$ on the same key $k$ and returns the value updated by $T_i$. SF-KOSTM ensures that $T_j$ returns the value of $k$ from the transaction which has lesser TS than $T_j$ i.e., $T_i$ $\rightarrow$ $T_j$ $\implies$ $\tltl_i$ $<$ $\tltl_j$. 
		\item \textit{multi-version (or \mv) edge}: SF-KOSTM ensures that version order between two transactions $T_i$ and $T_j$ are also following TS order using their $\tltl$. Consider a triplet generated by SF-KOSTM with three transactions $T_i$, $T_j$ and $T_k$ with successful methods on key $k$ as  $\up_i(k,u)$, $\rvm_j(k,u)$, $\up_k(k,v)$ , where $u \neq v$. It can observe that a return value from edge is going from $T_i$ to $T_j$ because of $\rvm_j(k,u)$, so,  $T_i$ $\rightarrow$ $T_j$ $\implies$ $\tltl_i$ $<$ $\tltl_j$. If the version order is $k_i \ll k_k$ then the multi-version edge is going from $T_j$ to $T_k$. Hence, the order among the transactions are  ($T_i$ $\rightarrow$ $T_j$ $\rightarrow$ $T_k$) $\implies$ ($\tltl_i$ $<$ $\tltl_j$ $<$ $\tltl_k$). Otherwise, multi-version edge is from $T_k$ to $T_i$ because of version order ($k_k \ll k_i$). Then the relation is ($T_k$ $\rightarrow$ $T_i$ $\rightarrow$ $T_j$) $\implies$ ($\tltl_k$ $<$ $\tltl_i$ $<$ $\tltl_j$). 
	\end{itemize}
	So, all the edges of $\lopg{H}{\ll_H}$  are following increasing order of transactions $\tltl$. Hence, $T_i$ $\rightarrow$ $T_j$ $\implies$ $\tltl_i$ $<$ $\tltl_j$.
		
\end{proof}

\begin{theorem}
	\label{thm:coopqh}
	A valid SF-KOSTM history $H$ is locally-opaque iff $\lopg{H}{\ll_H}$ is acyclic.
\end{theorem}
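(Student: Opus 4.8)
The plan is to deduce this from the graph characterization of local opacity already proved in \thmref{log}: a valid history is \lopq iff, for every sub-history $sh\in\shset{H}$, some version order $\ll_{sh}$ makes $\lopg{sh}{\ll_{sh}}$ acyclic. Here $\ll_H$ denotes the version order induced by SF-KOSTM, namely versions ordered by their $ts$ field, and the whole argument rests on \lemref{tltl-edge}: in a history generated by SF-KOSTM every \rt{}, \rvf{}, and \mv{} edge of $\lopg{H}{\ll_H}$ runs from the endpoint with the smaller $\tltl$ to the endpoint with the larger $\tltl$. For each $sh$ I would take $\ll_{sh}$ to be $\ll_H$ restricted to the transactions common to $H$ and $sh$ (extended, for the sub-history of an aborted transaction, as described below).

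\noindent\textbf{($H$ \lopq $\Rightarrow$ $\lopg{H}{\ll_H}$ acyclic).} For SF-KOSTM histories I would dispatch this direction without even using the hypothesis: by \lemref{tltl-edge} every edge of $\lopg{H}{\ll_H}$ strictly increases $\tltl$, so a directed cycle would force $\tltl_{i}<\dots<\tltl_{i}$ for some $T_i$, a contradiction; hence $\lopg{H}{\ll_H}$ is acyclic.

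\noindent\textbf{($\lopg{H}{\ll_H}$ acyclic $\Rightarrow$ $H$ \lopq).} By \thmref{log} it suffices to show that $\lopg{sh}{\ll_{sh}}$ is acyclic for every $sh\in\shset{H}$, after noting that $H$ is valid (each \rvmt{()} returns the value of a version created by a committed transaction, the initial transaction $T_0$ included). Let $sh_l$ be the sub-history of the last committed transaction; it contains exactly the committed transactions of $H$ with all their events, so its vertices and its \rt{}, \rvf{}, \mv{} edges are those of $\lopg{H}{\ll_H}$ (using \propref{hoverh} to pass between $H$ and $\overline{H}$), and it is therefore acyclic. Now let $sh$ be the sub-history of an aborted transaction $T_a$. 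Restricted to the committed transactions of $H$, $sh$ reproduces $\lopg{H}{\ll_H}$ verbatim — in particular no committed transaction of $H$ returned a value produced by $T_a$, so $T_a$ has no outgoing \rvf{} edge to a committed transaction — and the only new vertex is $T_a$. Since \algoref{rvmt} and \algoref{trycVal} compute $\tltl_a$ and $\tutl_a$ for $T_a$ exactly as for a committed transaction, and every event of $T_a$ retained in $sh$ (all but the one that returned abort) maintained $\tltl_a\le\tutl_a$ while reading the version of largest $ts$ below $\twts{a}$, I would pick $\theta\in[\tltl_a,\tutl_a]$, set $\tltl_a:=\theta$, and insert the phantom versions that $T_a$ creates on its appended commit at position $\theta$ of $\ll_H$; the case analysis of \lemref{tltl-edge} (together with that of \lemref{seracycle1}) then applies unchanged to the edges incident to $T_a$, so every edge of $\lopg{sh}{\ll_{sh}}$ still increases $\tltl$ and the graph is acyclic. \thmref{log} then gives that $H$ is \lopq.

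\noindent\textbf{Main obstacle.} I expect the real work to be in the second direction, specifically in pinning down the ``resurrected'' transaction $T_a$: proving that (i) $T_a$'s \rvmt{()}s read the version with the largest $ts$ strictly below $\twts{a}$ and that the validated $\tltl_a$ dominates the $vrt$ of every version $T_a$ read, so the phantom versions can be inserted consistently at $\theta\ge\tltl_a$; (ii) no committed transaction of $H$ ever returns a value written by $T_a$, so restricting $sh$ to committed transactions really recovers $\lopg{H}{\ll_H}$ and not a larger graph; and (iii) inserting $T_a$'s phantom versions into $\ll_H$ is compatible with the $\tltl$ order of every transaction that reads near them. All three should follow from the $\tltl/\tutl$ bookkeeping of \algoref{rvmt} and the validation of \algoref{trycVal}, but converting that bookkeeping into a clean invariant on sub-histories, rather than the high-level reduction to \thmref{log}, is the part that needs care.
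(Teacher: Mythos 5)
Your proposal is correct and follows essentially the same route as the paper: one direction is dispatched purely by \lemref{tltl-edge} (every edge of $\lopg{H}{\ll_H}$ strictly increases \tltl, so no cycle is possible), and the converse reduces to \thmref{log} by establishing acyclicity of the per-sub-history graphs. The only difference is that you spell out the treatment of the aborted-transaction sub-histories (the choice of $\theta\in[\tltl_a,\tutl_a]$ and the phantom versions), a step the paper compresses into ``similarly, we can prove that all the sub-histories $\shset{H}$ are opaque''; your version is the more complete of the two.
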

\begin{proof}
	\textbf{(if part):} First, we consider $H$ is \emph{valid} and $\lopg{H}{\ll_H}$ generated by SF-KOSTM is acyclic then we need to prove that history $H$ is locally-opaque. Since $\lopg{H}{\ll_H}$ is acyclic, we apply topological sort on $\lopg{H}{\ll_H}$ and obtained a \tseq{} history $S$ which is equivalent to $\overline{H}$.
	$S$ also respect real-time (or $RT$), return value from (or $RVF$) and multi-version (or $MV$)  edges as $\overline{H}$. Formally, $S$ respects $\prec$$^{RT}_S$ = $\prec$$^{RT}_{\overline{H}}$, $\prec$$^{RVF}_S$ = $\prec$$^{RVF}_{\overline{H}}$ and $\prec$$^{MV}_S$ = $\prec$$^{MV}_{\overline{H}}$. 
	
	Since \rvf{} and \rt{} relation between the methods of SF-KOSTM for a given version in $S$ are also present in $\overline{H}$. Formally, $\prec_{\overline{H}}^{RT} \subseteq \prec_S^{RT}$ and $\prec_{\overline{H}}^{RVF} \subseteq \prec_S^{RVF}$. Given that $H$ is \emph{valid} which implies that  $\overline{H}$ is also \emph{valid}. So, we can say that $S$ is \emph{legal} for a given version order. Similarly, we can prove that all the sub-histories $\shset{H}$ for a history $H$ are opaque. Hence with the help of \thmref{log}, collectively, $H$ satisfies all the necessary conditions of \emph{local-opacity}. Hence, history $H$ is locally-opaque.
	
	\textbf{(Only if part):} Now, we consider $H$ is locally-opaque and valid then we have to prove that $\lopg{H}{\ll_H}$ is acyclic. We prove it through contradiction, so we assume there exist a cycle in $\lopg{H}{\ll_H}$. From \lemref{tltl-edge}, any two transactions $T_i$ and 
	$T_j$ generated by SF-KOSTM such that both their status flags are true and $T_i$ $\rightarrow$ $T_j$ $\implies$ $\tltl_i$ $<$ $\tltl_j$. Consider the transitive case with $k$ transactions
 $T_1, T_2, T_3...T_k$ such	that status flags of all the $k$ transactions are true. If edges exist like ($T_1$ $\rightarrow$ $T_2$ $\rightarrow$ $T_3$ 
	$\rightarrow$....$\rightarrow$ $T_k$) $\implies$ ($\tltl_1 $ $<$ $\tltl_2$ $<$ 
	$\tltl_3$ $<$ ....$<$ $\tltl_k$).\\
	Now, we consider our assumption, there exist a cycle in $\lopg{H}{\ll_H}$. So, $T_1$ 
	$\rightarrow$ $T_2$ $\rightarrow$ $T_3$ $\rightarrow$....$\rightarrow$ 
	$T_k$ $\rightarrow$ $T_1$ that implies $\tltl_1 $ $<$ $\tltl_2$ $<$ 
	$\tltl_3$ $<$ ....$<$ $\tltl_k$ $<$ $\tltl_1$.\\
	Hence, above assumption says that, $\tltl_1$ $<$ $\tltl_1$ but this is not possible. So, our assumption there exist a cycle in $\lopg{H}{\ll_H}$ is wrong.\\
	Therefore, $\lopg{H}{\ll_H}$ produced by SF-KOSTM is acyclic.
\end{proof}

\begin{theorem}
	Any valid history $H$ generated by SF-KOSTM satisfies local-opacity.
\end{theorem}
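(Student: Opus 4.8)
The plan is to obtain the statement directly from the graph-characterisation machinery already set up. By the ``if'' direction of \thmref{coopqh}, a valid history $H$ produced by SF-KOSTM is locally-opaque as soon as its opacity graph $\lopg{H}{\ll_H}$ is acyclic for a suitable version order $\ll_H$; and since \thmref{coopqh} already absorbs the reduction of local opacity to the sub-histories $\shset{H}$ (via \thmref{log}, \thmref{opg} and \propref{hoverh}), the only remaining task is: show that every $H \in \gen{\emph{SF-KOSTM}}$ admits a version order making $\lopg{H}{\ll_H}$ acyclic. (Validity of $H$ is in fact automatic, since each \rvmt{()} of SF-KOSTM returns a value installed by a previously committed transaction, but we may also just take validity as hypothesised.)

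First I would fix the version order: for each key $k$, order the versions $k_i, k_j$ created by committed transactions $T_i, T_j$ by $\tltl_i < \tltl_j$ --- equivalently, by the order in which their \emph{vrt} stamps were assigned --- since \emph{STM\_begin()}, \algoref{rvm} and \algoref{trycVal} seed $\tltl$ from the atomic counter $\gtcnt$ and only ever increase it, so $\tltl$ is monotone with the timestamp/commit order. On the \tseq{} history that lists the committed transactions in increasing $\tltl$ order this is exactly the version order of \defref{seqver}, hence a legal total order on versions. Then I would invoke \lemref{tltl-edge}: every vertex of $\lopg{H}{\ll_H}$ is a committed transaction, so its \emph{status} flag is true, and therefore each edge $T_i \rightarrow T_j$ forces $\tltl_i < \tltl_j$. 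A directed cycle $T_1 \rightarrow T_2 \rightarrow \cdots \rightarrow T_k \rightarrow T_1$ would then yield $\tltl_1 < \tltl_1$, which is impossible, so $\lopg{H}{\ll_H}$ is acyclic; by \thmref{coopqh} this makes $H$ locally-opaque, which is the claim. The same cycle argument applies verbatim to the graph of each sub-history $sh \in \shset{H}$, whose vertices are committed transactions of $H$ together with at most one aborted transaction $T_a$ restricted to its successful prefix --- exactly the part over which $\tltl_a$ was maintained monotonically --- so no extra work beyond \lemref{tltl-edge} is needed there.

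The real obstacle is buried inside \lemref{tltl-edge}, and a fully self-contained proof would have to argue carefully that SF-KOSTM never violates the invariant ``$\tltl$ is monotone with the timestamp/commit order'' in the presence of its helping-and-aborting mechanism: a committing transaction $T_i$ may tighten a still-live $tutl_p$ at \Lineref{tcv36} and may set the \emph{status} of transactions in the abort list to false at \Lineref{tcv39}. One must check (i) that these interferences only shrink timestamp windows consistently with the commit order, so that no surviving transaction is left with a $\tltl$ out of order, and (ii) that no \rvf{} or \mv{} edge into a transaction that ultimately commits is created against the $\tltl$ order --- this is where the $its$-versus-$cts$ priority comparison and the ``after \Lineref{tcv34}, $T_i$ can no longer abort'' property carry the weight. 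Everything else --- legality of the chosen version order, the passage through \thmref{log} and \thmref{opg}, and ruling out cycles --- is routine once that invariant is in hand.
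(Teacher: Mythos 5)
Your proposal is correct and follows essentially the same route as the paper: it invokes \lemref{tltl-edge} to show every edge of $\lopg{H}{\ll_H}$ respects the $\tltl$ order, concludes acyclicity by the cycle-implies-$\tltl_1 < \tltl_1$ contradiction, and then applies \thmref{coopqh} to obtain local opacity. Your added remarks on making the version order explicit and on what a fully self-contained proof of \lemref{tltl-edge} would require are sensible but do not change the argument.
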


\begin{proof}
	With the help of \lemref{tltl-edge}, we can say that any history $H$  $\gen{\emph{SF-KOSTM}}$ with two transactions $T_i$ and 
	$T_j$ such that status flags of both the transactions are true.  If there is an edge from $T_i$ to $T_j$ then $\tltl_i$ is less than $\tltl_j$. Formally, $T_i$ $\rightarrow$ $T_j$ $\implies$ $\tltl_i$ $<$ $\tltl_j$. So, we can infer that any valid history $H$ generated by SF-KOSTM following the edges $T_i$ $\rightarrow$ $T_j$ in increasing order of $\tltl_i$ $<$ $\tltl_j$ with the help of atomic \gtcnt. Hence, we can conclude that SF-KOSTM always produce an acyclic $\lopg{H}{\ll_H}$ graph.
	
	Now, using \thmref{coopqh}, we can infer that if a valid history $H$ generated by SF-KOSTM always produces an acyclic $\lopg{H}{\ll_H}$  graph then $H$ is locally-opaque. Hence, any valid history $H$ generated by SF-KOSTM satisfies local-opacity.
	
\end{proof}

\section{Liveness Proof of SF-SVOSTM and SF-KOSTM}
\label{apn:ap-cc}
This section describes the liveness proof of SF-SVOSTM and SF-KOSTM. Both the algorithms ensures starvation-freedom as the progress guarantee.
\subsection{Liveness Proof of SF-SVOSTM}
\label{subsec:LP}

\paragraph{Proof Notations:} Following the notion derived for SF-SVOSTM algorithm, we assume that all the histories accepted by SF-SVOSTM algorithm as \emph{\gen{SF-SVOSTM}}. This subsection considers only histories that are generated by SF-SVOSTM unless explicitly stated otherwise. For simplicity, we consider the sequential histories for our discussion and we can get the sequential history using the linearization points (or LPs) as \emph{first unlocking point of each successful method}.

Let us consider a transaction $T_i$ from the history $H$ \emph{\gen{SF-SVOSTM}}. Each transaction $T_i$ maintains $\langle its_i, cts_i\rangle$. 
 The value of $cts$ is assigned atomically with the help of atomic global counter $gcounter$. So, we use $gcounter$ to approximate the system time.

Apart from these $cts_i$ and $its_i$ transaction $T_i$ maintains $lock$ and $status$. $T_i$ acquires the $lock$ on the keys before accessing it. $status$ can be $\langle live, false, commit \rangle$.  The value of $lock$ and $status$ field change as the execution proceeds. For the sake of understanding, we represent
 the timestamps of a transaction $T_i$ corresponding to history $H$ as $\htits{i}{H}$ and $\htcts{i}{H}$.
 
 To satisfy the \stfdm for SF-SVOSTM, We assumed bounded termination for the fair scheduler as described \asmref{bdtm} in \subsecref{prelim}. In the proposed algorithms, we have considered \emph{TB} as the maximum time-bound of a transaction $T_i$ within this either $T_i$ will return commit or abort in the absence of deadlock. We consider an assumption about the transactions of the system as described \asmref{self} in \subsecref{prelim} which will help to achieve and prove about the starvation-freedom of SF-SVOSTM.

\begin{theorem}
	SF-SVOSTM ensures starvation-freedom in presence of a fair scheduler that satisfies \asmref{bdtm}(bounded-termination) and in the absence of parasitic transactions that satisfies \asmref{self}.
\end{theorem}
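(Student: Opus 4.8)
The plan is to prove starvation-freedom of SF-SVOSTM by a ``priority-monotonicity'' argument: show that among all concurrent conflicting transactions, the one with the smallest $its$ can never be forced to abort, and that any transaction $T_i$ that keeps getting aborted will, after finitely many incarnations, acquire the globally smallest $its$ among live conflicting transactions and hence commit. I would first fix a transaction $T_i$ invoked by thread $Th_i$ with initial timestamp $its_i$, and argue by contradiction: suppose $Th_i$ never commits. By Assumption~\ref{asm:bdtm} (bounded-termination) each incarnation terminates (commit or abort) within time $TB$ in the absence of deadlock, and by the deadlock-freedom argument (locks acquired in increasing key order, released in bounded time) there are no deadlocks; so $Th_i$ generates an infinite sequence of incarnations $T_i^{(1)}, T_i^{(2)}, \ldots$ all sharing $its_i$, each getting a fresh, strictly increasing $cts$ from $gcounter$. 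By Assumption~\ref{asm:self}, $T_i^{(k)}$ never self-aborts and is non-parasitic, so each abort is caused by some conflicting transaction explicitly setting $status$ to $false$ in \texttt{tryC\_Validation()} (\linref{tcv14}/\linref{tcv39}) or by the analogous step in an $\rvmt$; that conflicting transaction must have a strictly smaller $its$.

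The key lemma I would isolate is: at most finitely many distinct transactions in the whole execution have $its$ smaller than $its_i$. This holds because $its$ values are drawn from $gcounter$ at the moment of first invocation, so only transactions that were first invoked before $T_i$ can have a smaller $its$, and only finitely many such transactions exist (finitely many threads, and each such competing thread either commits its current transaction in bounded time or is itself eventually superseded). I would then show each such smaller-$its$ transaction can abort $T_i$ only finitely often: once a competing transaction $T_j$ with $its_j < its_i$ commits, it never aborts $T_i$ again; and by the inductive hypothesis applied to the finitely many transactions with $its$ below $its_i$, every one of them eventually commits in bounded time (this is where a well-founded induction on $its$-rank is needed — base case is the transaction with the globally smallest $its$, which by the conflict rules in \texttt{tryC\_Validation()} is never put in anyone's \texttt{abortRVL} and hence commits). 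Therefore after some finite time no live conflicting transaction has $its$ smaller than $its_i$, so the next incarnation of $T_i$ has the smallest $its$ among its conflicting set and its \texttt{tryC\_Validation()} finds no reason to abort — giving the contradiction.

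The main obstacle is making the ``finitely many smaller-$its$ transactions, each aborting $T_i$ finitely often'' argument rigorous without circularity: the induction on $its$-rank must be set up carefully so that the inductive hypothesis (``every transaction with strictly smaller $its$ eventually commits in bounded time'') is available when bounding the number of times such transactions can interfere with $T_i$, and one must check that a committing $T_j$ with $its_j < its_i$ cannot resurrect a new lower-$its$ obstacle for $T_i$ (it cannot, since $T_j$'s commit is final and its incarnations share one $its$). A secondary technical point is the bookkeeping with $wts$ and the $\langle tltl, tutl \rangle$ ranges: I would argue that the $wts$-monotonicity (Eqn~\ref{eq:wtsf}, where the drift $C\cdot(cts - its)$ grows on each retry) guarantees that a persistently-aborting transaction eventually also has the largest $wts$ among its conflicting set, matching the Observation that a transaction with lowest $its$ and highest $wts$ never aborts, and that the timestamp-range validation (\linref{tcv24}, \linref{tcv27}) cannot abort it once it attains that status. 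Wrapping up, I would combine these into a bound of the form: $T_i$ commits within at most $N$ incarnations where $N$ depends only on the (finite) number of transactions with $its$ below $its_i$ and the per-transaction time bound $TB$, which establishes starvation-freedom.
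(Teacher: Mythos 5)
Your proposal is correct and follows essentially the same route as the paper's own proof: each incarnation terminates in bounded time by \asmref{bdtm}, every retry preserves the original $its$, and a transaction holding the lowest $its$ among its conflicting transactions is never placed in any \emph{abortRVL}, so after finitely many incarnations $T_i$ attains that status and commits. You in fact supply the well-founded induction on $its$-rank that the paper's proof merely asserts (``it can be seen that $T_i$ will get the lowest $its$ in the system''), which is a strict improvement in rigor. One correction: the $wts$ drift of \eqnref{wtsf} and the $\langle tltl, tutl\rangle$ range checks at \linref{tcv24} and \linref{tcv27} belong to SF-KOSTM, not SF-SVOSTM; SF-SVOSTM's \emph{STM\_begin()} and \emph{tryC\_Validation()} maintain only $\langle its, cts\rangle$ and abort solely on the $its$ comparison, so that portion of your argument is unnecessary for this theorem and should be reserved for the SF-KOSTM liveness proof.
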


\begin{proof}
	Consider any history $H$ generated by SF-SVOSTM algorithm with transaction $T_i$.  Initially, thread $Th_i$ calls \tbeg{()} for $T_i$ which maintains $\langle \emph{its, cts}\rangle$ and set the $status$ as $live$. If $T_i$ is the first incarnation then its \emph{Initial Timestamp (its)} and \emph{Current Timestamp (cts)} are same as $i$. We represent the $\langle its, cts\rangle$ as $\langle its_i, cts_i\rangle$ for transaction $T_i$. If $T_i$ is aborted then thread $Th_i$ executes it again with new incarnation of $T_i$ until $T_i$ commits. Let the new incarnation of $T_i$ say $T_j$ then thread $Th_i$ maintain its $\langle its, cts \rangle$ as $\langle its_i, cts_j \rangle$. $Th_i$ stores the first incarnation $its_i$ of $T_i$ to set the reincarnation $its_j$ of $T_j$ is same as $its_i$. The value of $cts$ is incremented atomically with the help of atomic global counter $gcounter$. So, we use $gcounter$ to approximate the system time. 
	
	Through  \asmref{bdtm}, we can say that $T_i$ will terminate ($\commit$ or $\abort$) in bounded time. If $T_i$ returns abort then $T_i$ will retry again with new incarnation of $T_i$, say $T_j$ while satisfying the \asmref{self}. The incarnation of $T_i$ transaction as $\langle its_i, cts_j \rangle$. So, it can be seen that, $T_i$ will get the lowest $its$ in the system and achieve the highest priority. Eventually, $T_i$ returns commit. Similarly, all the transactions of the $H$ generated by SF-SVOSTM will eventually commit. Hence, SF-SVOSTM ensures starvation-freedom.
\end{proof}


\subsection{Liveness Proof of SF-KOSTM}
\label{subsec:LPK}

\paragraph{Proof Notations:}  It follows the notion derived for SF-KOSTM algorithm, we assume that all the histories accepted by SF-KOSTM algorithm as \emph{\gen{SF-KOSTM}}. This sub-section considers only histories that are generated by SF-KOSTM unless explicitly stated otherwise. For simplicity, we consider the sequential histories for our discussion and we can get the sequential history using the linearization points (or LPs) as \emph{first unlocking point of each successful method}.

The notions of SF-KOSTM is the extension of proof notion defined above for SF-SVOSTM. A transaction $T_i$ of a history $H$ generated by SF-KOSTM execute \tbeg{()} and obtained $wts$ along with $cts$, $its$ used in SF-SVOSTM. So, a transaction $T_i$ of SF-KOSTM maintains three timestamps as $its_i$, $cts_i$, and $wts_i$. The value of $cts_i$ is assigned atomically with the help of atomic global counter $gcounter$. So, we use $gcounter$ to approximate the system time. To achieve the highest $cts_i$, we use $wts_i$ which is significantly larger than $cts_i$ as defined in \subsecref{working}. Along with this SF-KOSTM maintains $tltl$ and $tutl$ corresponding to each transaction $T_i$ as $tltl_i$ and $tutl_i$ to maintain the real-time order.

Apart from these timestamps, transaction $T_i$ maintains $lock$, $vrt$ and $status$. $T_i$ acquires the $lock$ on the keys before accessing it. $vrt$ stands for \emph{version real time} order which helps to maintain the real-time order among the transactions. $status$ can be $\langle live, false, commit \rangle$ same as SF-SVOSTM.  The value of $lock$, $vrt$ and $status$ field change as the execution proceeds. So, we denote them as $\htlock{i}{H}, \htval{i}{H}, \htstat{i}{H}$. Sometimes, we also denote them based on the context of transaction $T_i$ while ignoring $H$ as $\tlock{i}, \tval{i}, \tstat{i}$. For the sake of understanding, we represent
the timestamps of a transaction $T_i$ corresponding to history $H$ as $\htits{i}{H}$, $\htcts{i}{H}$, and $\htwts{i}{H}$ along with  $\htltl{i}{H}, \htutl{i}{H}$.

We use $gcounter$ to approximate the system time. We denote the \syst of history $H$ as the value of $gcounter$ immediately after the last event of $H$. Here, we assume the value of $C$, used to derive the $wts$ as 0.1 in the arguments because on results $C$ as 0.1 performs best. But its not compulsory to keep it 0.1, it can be any value greater than 0 to prove the correctness of SF-KOSTM.



The counter application invokes transactions in such a way that if the current $T_i$ transaction aborts, it invokes a new transaction $T_j$ with the same $its$. We say that $T_i$ is an \emph{\inc} of $T_j$ in a history $H$ if $\htits{i}{H} = \htits{j}{H}$. Thus the multiple \inc{s} of a transaction $T_i$ get invoked by the application until an \inc finally commits. 

To capture this notion of multiple transactions with the same $its$, we define \emph{\incset} (incarnation set) of $T_i$ in $H$ as the set of all the transactions in $H$ which have the same $its$ as $T_i$ and includes $T_i$ as well. Formally, 
\begin{equation*}
\incs{i}{H} = \{T_j|(T_i = T_j) \lor (\htits{i}{H} = \htits{j}{H})\}  
\end{equation*}

Note that from this definition of  \incset, we implicitly get that $T_i$ and all the transactions in its \incset of $H$ also belong to $H$. Formally, $\incs{i}{H} \in \txns{H}$. 

The application invokes different incarnations of a transaction $T_i$ in such a way that as long as an \inc is live, it does not invoke the next \inc. It invokes the next \inc after the current \inc has got aborted. Once an \inc of $T_i$ has committed, it can not have any future \inc{s}. Thus, the application views all the \inc{s} of a transaction as a single \emph{\aptr}. 

We assign \emph{\incn{s}} to all the transactions that have the same $its$. We say that a transaction $T_i$ starts \emph{afresh}, if $\inum{i}$ is 1. We say that $T_i$ is the \ninc of $T_i$ if $T_j$ and $T_i$ have the same $its$ and $T_i$'s \incn is $T_j$'s \incn + 1. Formally, $\langle (\nexti{i} = T_j) \equiv (\tits{i} = \tits{j}) \land (\inum{i} = \inum{j} + 1)\rangle$

As mentioned the objective of the application is to ensure that every \aptr eventually commits. Thus, the applications views the entire \incset as a single \aptr (with all the transactions in the \incset having the same $its$). We can say that an \aptr has committed if in the corresponding \incset a transaction in eventually commits. For $T_i$ in a history $H$, we denote this by a boolean value \incct (incarnation set committed) which implies that either $T_i$ or an \inc of $T_i$ has committed. Formally, we define it as $\inct{i}{H}$

\begin{equation*}
\inct{i}{H} = \begin{cases}
True  & (\exists T_j: (T_j \in \incs{i}{H}) \\ & \land (T_j \in \comm{H}))\\
False & \text{otherwise}
\end{cases}
\end{equation*}

\cmnt{
It can be seen that SF-KOSTM algorithm gives preference to transactions with lower $its$ to commit. To understand this notion of preference, we define a few notions of enablement of a transaction $T_i$ in a history $H$. We start with the definition of \emph{\itsen} as:

\begin{definition}
	\label{defn:itsen}
	We say $T_i$ is \emph{\itsen} in $H$ if for all transactions $T_j$ with $its$ lower than $its$ of $T_i$ in $H$ have \incct to be true. Formally, 
	\begin{equation*}
	\itsenb{i}{H} = \begin{cases}
	True    & (T_i \in \live{H}) \land (\forall T_j \in \txns{H} :\\ & (\htits{j}{H} < \htits{i}{H}) \implies (\inct{j}{H})) \\
	False	& \text{otherwise}
	\end{cases}
	\end{equation*}
\end{definition}

The interesting key insight of any transaction $T_i$ is such that a transaction $T_i$ with lowest $its_i$ and highest $wts_i$ will returns commit. If an \itsen transaction $T_i$ aborts then it is because of another transaction $T_j$ with $wts$ higher than $T_i$ has committed. To precisely capture this, we define two more notions of a transaction being enabled \emph{\cdsen} and \emph{\finen}. To define these notions of enabled, we in turn define a few other auxiliary notions. We start with \emph{\affset},
\begin{equation*}
\haffset{i}{H} = \{T_j|(T_j \in \txns{H}) \land (\htits{j}{H} < \htits{i}{H} + 2*TB)\}
\end{equation*}

From the description of SF-KOSTM algorithm, it can be seen that a transaction $T_i$'s commit can depend on committing of transactions (or their \inc{s}) which have their $its$ less than $its$ of $T_i$ + $2*TB$, which is $T_i$'s \affset. We capture this notion of dependency for a transaction $T_i$ in a history $H$ as \emph{commit dependent set} or \emph{\cdset} as: the set of all transactions $T_j$ in $T_i$'s \affset that do not any \inc that is committed yet, i.e., not yet have their \incct flag set as true. Formally, 

\begin{equation*}
\hcds{i}{H} = \{T_j| (T_j \in \haffset{i}{H}) \land (\neg\inct{j}{H}) \}
\end{equation*}

\noindent Based on this definition of \cdset, we next define the notion of \cdsen. Once the transaction $T_i$ is \cdsen, it will eventually commit.

\begin{definition}
	\label{defn:cdsen}
	We say that transaction $T_i$ is \emph{\cdsen} if the following conditions hold true (1) $T_i$ is live in $H$; (2) $cts$ of $T_i$ is greater than or equal to $its$ of $T_i$ + $2*TB$; (3) \cdset of $T_i$ is empty, i.e., for all transactions $T_j$ in $H$ with $its$ lower than $its$ of  $T_i$ + $2*TB$ in $H$ have their \incct to be true. Formally, 
	
	\begin{equation*}
	\hspace{-.3cm} \cdsenb{i}{H} = \begin{cases}
	True    & (T_i \in \live{H}) \land (\htcts{i}{H} \geq  \htits{i}{H} + 2*TB) \\ & \land (\hcds{i}{H} = \phi) \\
	False	& \text{otherwise}
	\end{cases}
	\end{equation*}
\end{definition}

\begin{definition}
	\label{defn:finen}
	We say that transaction $T_i$ is \emph{\finen} if the following conditions hold true (1) $T_i$ is live in $H$; (2) $T_i$ is \cdsen is $H$; (3) $\htwts{j}{H}$ is greater than $\haffwts{i}{H}$. Formally, 
	
	\begin{equation*}
	\finenb{i}{H} = \begin{cases}
	True    & (T_i \in \live{H}) \land (\cdsenb{i}{H}) \\ & \land (\htwts{j}{H} > \haffwts{i}{H}) \\
	False	& \text{otherwise}
	\end{cases}
	\end{equation*}
\end{definition}

\begin{theorem}
	\label{thm:hwtm-com}
	Consider a history $H1$ with $T_i$ be a transaction in $\live{H1}$. Then there is an extension of $H1$, $H2$ in which an \inc of $T_i$, $T_j$ is committed. Formally, $\langle H1, T_i: (T_i\in \live{H}) \implies (\exists T_j, H2: (H1 \sqsubset H2) \land (T_j \in \incs{i}{H2}) \land (T_j \in \comm{H2})) \rangle$. 
\end{theorem}

\begin{proof}
	Here we show the states that a transaction $T_i$ (or one of it its \inc{s}) undergoes before it commits. In all these transitions, it is possible that an \inc of $T_i$ can commit. But to show the worst case, we assume that no \inc of $T_i$ commits. Continuing with this argument, we show that finally an \inc of $T_i$ commits. 
	
	Consider a live transaction $T_i$ in $H1$. Then we get that there is a history $H2$, which is an extension of $H1$, in which $T_j$ an \inc of $T_i$ is either committed or \itsen. If $T_j$ is \itsen in $H2$, then we get that $T_k$, an \inc of $T_j$, will be \cdsen in a extension of $H2$, $H3$ (assuming that $T_k$ is not committed in $H3$). 
	
	We get that there is an extension of $H3$, $H4$ in which an \inc of $T_k$, $T_l$ will be \finen assuming that it is not committed in $H4$. Finally, we get that there is an extension of $H4$ in which $T_m$, an \inc of $T_l$, will be committed. This proves our theorem.
\end{proof}

\noindent From this theorem, we get the following corollary which states that any history generated by SF-KOSTM ensures $\stfdm$ while satisfying the \asmref{bdtm} and \asmref{self}.

\begin{corollary}
	SF-KOSTM ensures starvation-freedom in presence of a fair scheduler that satisfies \asmref{bdtm}(bounded-termination) and in the absence of parasitic transactions that satisfies \asmref{self}.
\end{corollary}

\noindent
Please find the detailed proof of liveness of SF-KOSTM in accompanying technical report \cite{Juyal+:CORR:SF-MVOSTM:2019}.
}

\noindent From the definition of \incct we get the following observations \& lemmas about a transaction $T_i$

\begin{observation}
	\label{obs:inct-term}
	Consider a transaction $T_i$ in a history $H$ with its \incct being true in $H$. Then $T_i$ is terminated (either committed or aborted) in $H$. Formally, $\langle H, T_i: (T_i \in \txns{H}) \land (\inct{i}{H}) \implies (T_i \in \termed{H}) \rangle$. 
\end{observation}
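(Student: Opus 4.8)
The plan is to prove \obsref{inct-term} by a short case analysis driven entirely by the definitions of \incset{} and \incct{}, together with the structural behaviour of the application recalled just above the statement: the \inc{s} of an \aptr{} are invoked strictly sequentially — the \ninc{} only after the current \inc{} has aborted — and no further \inc{} is ever invoked once some \inc{} in the set has committed. No new machinery is needed; the argument is purely combinatorial.

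First I would unfold the hypothesis $\inct{i}{H}$: by definition there is a transaction $T_j$ with $T_j \in \incs{i}{H}$ and $T_j \in \comm{H}$. Unfolding $\incs{i}{H}$ in turn gives two cases. If $T_j = T_i$, then $T_i \in \comm{H} \subseteq \termed{H}$ and the claim is immediate. Otherwise $T_j \neq T_i$ with $\htits{j}{H} = \htits{i}{H}$, so $T_i$ and $T_j$ are two distinct \inc{s} of the same \aptr{}; within one \incset{} the \incn{s} $\inum{i}, \inum{j}$ are distinct, so exactly one of $\inum{i} < \inum{j}$ or $\inum{i} > \inum{j}$ holds. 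Since $T_j$ has already committed, the ``no \inc{} invoked after a commit'' property rules out $\inum{i} > \inum{j}$, leaving $\inum{i} < \inum{j}$; then by the sequential-invocation property $T_j$ could only have been invoked after every lower-numbered \inc{} — in particular $T_i$ — had aborted, so $T_i$ is aborted and hence $T_i \in \termed{H}$. (Equivalently, one can argue by contradiction assuming $T_i \in \live{H}$ and derive a contradiction in each of the two sub-cases; this avoids appealing to the ``no \inc{} after a commit'' clause and only uses strictly sequential invocation.)

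The only point that needs care is making the ``distinct \inc{s} of one \aptr{} are linearly ordered and invoked one after another, none after a commit'' fact explicit and checking it is already part of the model as described — it follows from the paragraphs defining \incn{}, the relation $\nexti{i} = T_j$, and \incct{}, together with the stated behaviour of the counter application that invokes a retry with the same $its$ only after the previous \inc{} aborts. I do not anticipate any substantive obstacle beyond presenting this bookkeeping cleanly.
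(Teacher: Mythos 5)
Your argument is correct and is exactly the justification the paper leaves implicit: the paper states this as an observation following directly from the definition of \incct{} and the sequential-invocation behaviour of the application, with no written proof. Your case analysis (committed incarnation is $T_i$ itself, or else $T_i$ must be an earlier incarnation that aborted, since a later one is impossible after a commit) is the natural way to spell this out, so there is nothing to add.
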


\begin{observation}
	\label{obs:inct-fut}
	Consider a transaction $T_i$ in a history $H$ with its \incct being true in $H1$. Let $H2$ be a extension of $H1$ with a transaction $T_j$ in it. Suppose $T_j$ is an \inc of $T_i$. Then $T_j$'s \incct is true in $H2$. Formally, $\langle H1, H2, T_i, T_j: (H1 \sqsubseteq H2) \land (\inct{i}{H1}) \land (T_j \in \txns{H2}) \land (T_i \in \incs{j}{H2})\implies (\inct{j}{H2}) \rangle$. 
\end{observation}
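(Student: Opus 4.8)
The plan is to unwind the definition of \incct{} twice and then lean on two monotonicity facts about histories under extension. First I would expand the hypothesis $\inct{i}{H1}$: by that definition there exists a witness transaction, call it $T_k$, with $T_k \in \incs{i}{H1}$ and $T_k \in \comm{H1}$. The goal $\inct{j}{H2}$ is again an existential, and the natural candidate witness is this same $T_k$; so the whole argument reduces to checking $T_k \in \incs{j}{H2}$ and $T_k \in \comm{H2}$.

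For $T_k \in \incs{j}{H2}$: since $H1 \sqsubseteq H2$ we have $\txns{H1} \subseteq \txns{H2}$, hence $T_k \in \txns{H2}$; moreover the $its$ stamp of a transaction is fixed by \tbeg{()} and never changed afterwards, so $\htits{k}{H1}=\htits{k}{H2}$ and $\htits{i}{H1}=\htits{i}{H2}$. From $T_k \in \incs{i}{H1}$ we get $T_k = T_i$ or $\htits{k}{H1}=\htits{i}{H1}$; combining this with the hypothesis $T_i \in \incs{j}{H2}$ (i.e. $T_i = T_j$ or $\htits{i}{H2}=\htits{j}{H2}$) and transitivity of equality, in every case $T_k$ and $T_j$ carry the same $its$ in $H2$, so $T_k \in \incs{j}{H2}$. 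For $T_k \in \comm{H2}$: a transaction that has returned $\commit$ in the prefix $H1$ has its commit event among the events of $H2 \supseteq H1$ and, by well-formedness, invokes nothing further, so it is still committed in $H2$, i.e. $T_k \in \comm{H2}$. With both conditions met, the definition of \incct{} yields $\inct{j}{H2}$, which is the claim.

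The step that needs the most care is not the set-theoretic bookkeeping but making explicit the two ``frame'' properties it rests on --- invariance of the $its$ timestamp and persistence of the committed status under history extension --- both of which follow from the timestamp-assignment rule of \tbeg{()} (\algoref{begin1}) and the well-formedness convention fixed in \secref{sm}; I would cite those rather than re-derive them. Everything else is a direct substitution into the definitions of \incset{} and \incct{}, entirely analogous to (and slightly simpler than) the reasoning behind \obsref{inct-term}.
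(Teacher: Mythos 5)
Your proof is correct, and it is precisely the argument the paper leaves implicit: the paper states this as an unproved observation ("From the definition of \incct{} we get the following observations"), relying on exactly the two frame properties you make explicit --- invariance of $its$ and persistence of committed status under extension --- together with $\txns{H1}\subseteq\txns{H2}$. Your witness-reuse of $T_k$ and the case analysis on \incset{} membership fill in the omitted details faithfully; nothing further is needed.
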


\begin{lemma}
	\label{lem:inct-diff}
	Consider a history $H1$ with a strict extension $H2$. Let $T_i$ \& $T_j$ be two transactions in $H1$ \& $H2$ respectively. Let $T_j$ not be in $H1$. Suppose $T_i$'s \incct is true. Then \its of $T_i$ cannot be the same as \its of $T_j$. Formally, $\langle H1, H2, T_i, T_j: (H1 \sqsubset H2) \land (\inct{i}{H1}) \land (T_j \in \txns{H2}) \land (T_j \notin \txns{H1}) \implies (\htits{i}{H1} \neq \htits{j}{H2}) \rangle$.
\end{lemma}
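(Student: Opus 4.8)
The plan is to argue by contradiction, leaning on the counter-application model for SF-KOSTM in which a fresh incarnation of a transaction is invoked only after the previous incarnation aborts, and in which no incarnation is ever invoked once some incarnation with the same initial timestamp has committed. So I would suppose, towards a contradiction, that $\htits{i}{H1} = \htits{j}{H2}$, and aim to show that this forces the application to re-invoke an already-committed \aptr.

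First I would unwind the definitions. Since $\inct{i}{H1}$ holds, by definition of \incct there is a transaction $T_k \in \incs{i}{H1}$ (hence $T_k \in \txns{H1}$) with $T_k \in \comm{H1}$; in particular the commit event of $T_k$ lies inside $H1$, and $T_k$ has the same initial timestamp as $T_i$. Using the assumed equality of initial timestamps together with $T_i \in \txns{H2}$ (which follows from $T_i \in \txns{H1}$ and the fact that $H1 \sqsubset H2$ is a prefix relation), I get $T_i \in \incs{j}{H2}$, so \obsref{inct-fut} applies (with $H1 \sqsubseteq H2$, $\inct{i}{H1}$, $T_j \in \txns{H2}$, $T_i \in \incs{j}{H2}$) and yields $\inct{j}{H2}$ — that is, the \incset of $T_j$ has already committed by $H2$.

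Then I would extract the timing contradiction. Because $H1 \sqsubset H2$ is a strict prefix relation and $T_j \notin \txns{H1}$, every event of $T_j$ — in particular the \tbeg{()} event by which the application invokes $T_j$ — occurs strictly after the last event of $H1$, hence strictly after the commit event of $T_k$. Thus the application has invoked a new transaction $T_j$ carrying the same initial timestamp as the already-committed incarnation $T_k$ (equivalently, $T_j$ is a freshly invoked transaction even though $\inct{j}{H2}$ is true), which contradicts the application model. This contradiction establishes $\htits{i}{H1} \neq \htits{j}{H2}$.

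The step I expect to be the main obstacle is pinning down precisely which stated fact licenses that final contradiction: I need a rigorous form of ``once an incarnation of an \aptr commits, no later transaction with the same $its$ is ever invoked'', which in the surrounding text is only described informally around the definitions of \incset and \incct. I would either promote this to an explicit property of the histories generated by SF-KOSTM (in the same spirit as well-formedness) and cite it, or derive it from \obsref{inct-term} together with the application's behaviour. Everything else is a routine unwinding of the definitions of \incset and \incct plus a single appeal to \obsref{inct-fut}.
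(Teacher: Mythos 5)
Your proposal is correct and follows essentially the same route as the paper's proof: assume the \its{s} coincide, observe that $T_j$ must begin after the last event of $H1$ and hence after the commit of some incarnation of $T_i$, and conclude by appealing to the application model's rule that no further incarnation is invoked once one has committed. The paper relies on exactly the same informally stated application-model fact that you flag as the crux (its proof says verbatim that ``after this commit, the application will not invoke another transaction with the same \its''), so your detour through \obsref{inct-fut} is the only, and inessential, difference.
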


\begin{proof}
	Here, we have that $T_i$'s \incct is true in $H1$. Suppose $T_j$ is an \inc of $T_i$, i.e., their \its{s} are the same. We are given that $T_j$ is not in $H1$. This implies that $T_j$ must have started after the last event of $H1$. 
	
	We are also given that $T_i$'s \incct is true in $H1$. This implies that an \inc of $T_i$ or $T_i$ itself has committed in $H1$. After this commit, the application will not invoke another transaction with the same \its as $T_i$. Thus, there cannot be a transaction after the last event of $H1$ and in any extension of $H1$ with the same \its of $T_1$. Hence, $\htits{i}{H1}$ cannot be same as $\htits{j}{H2}$. 
\end{proof}

Now we show the liveness with the following observations, lemmas \& theorems. We start with two observations about that histories of which one is an extension of the other. The following states that for any history, there exists an extension. In other words, we assume that the STM system runs forever and does not terminate. This is required for showing that every transaction eventually commits. 

\begin{observation}
	\label{obs:hist-future}
	Consider a history $H1$ generated by \gen{\ksftm}. Then there is a history $H2$ in \gen{\ksftm} such that $H2$ is a strict extension of $H1$. Formally, $\langle \forall H1: (H1 \in \gen{\ksftm}) \implies (\exists H2: (H2 \in \gen{SF-KOSTM}) \land (H1 \sqsubset H2) \rangle$. 
\end{observation}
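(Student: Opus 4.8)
The plan is to argue directly from the system model, since \obsref{hist-future} is essentially a restatement of the assumption that the STM system never halts, introduced precisely to license the inductive liveness arguments that follow. A strict extension $H2$ of $H1$ is, by definition of $\sqsubset$, obtained by appending at least one more (well-formed) event to $H1$; so it suffices to exhibit one event that, when appended to $H1$, keeps the resulting history in \gen{\ksftm}. The argument therefore reduces to: from any reachable global configuration of \ksftm there is always at least one enabled step.

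First I would case-split on the state of $H1$. In the first case, some transaction $T_i \in \txns{H1}$ is still live, i.e.\ $T_i \notin \termed{H1}$. By well-formedness, either $T_i$ has just received the response of its last method and can invoke its next method (or its terminal operation \tryc{()}/\trya{()}), or $T_i$ is in the middle of a method whose next lower-level event (an invocation/response or a read--write on the shared \lsl) is pending by \asmref{bdtm} (bounded-termination guarantees the method makes progress on a fair scheduler). In either sub-case, appending that event to $H1$ gives a history $H2$ with $H1 \sqsubset H2$ and $H2 \in \gen{\ksftm}$. In the second case, every transaction of $H1$ is in $\termed{H1}$; then the application, which is assumed to keep issuing transactions (and in particular to re-invoke a fresh incarnation of any aborted application-transaction until it commits), invokes \tbeg{()} for a new transaction, and the corresponding invocation event appended to $H1$ yields the required $H2$.

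The only non-mechanical point is the appeal in the second case to the fact that the application does not stop generating transactions: nothing in the \ksftm code forces a new transaction to be spawned once all current ones have terminated, so this is genuinely part of the model rather than something to be derived, which is exactly why the observation is stated separately. I expect this to be the main ``obstacle'' only in the sense of making that modelling assumption explicit; the remainder is routine bookkeeping with well-formed sequential histories (recall we work with the linearized history via first-unlocking-point LPs) and the definition of strict extension, with no calculation involved. I would keep the write-up to a few lines, flagging that this observation together with \obsref{inct-term}, \obsref{inct-fut} and \lemref{inct-diff} is what drives the induction in the liveness theorem.
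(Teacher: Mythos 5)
Your proposal is correct and matches the paper's treatment: the paper gives no proof at all, simply prefacing the observation with the remark that ``we assume that the STM system runs forever and does not terminate,'' which is exactly the modelling assumption you isolate as the only non-mechanical point. Your additional case analysis (a live transaction has a pending enabled event by bounded termination, and otherwise the application re-invokes a fresh incarnation) is a reasonable elaboration but not something the paper attempts or needs.
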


\noindent The follow observation is about the transaction in a history and any of its extensions. 

\begin{observation}
	\label{obs:hist-subset}
	Given two histories $H1$ \& $H2$ such that $H2$ is an extension of $H1$. Then, the set of transactions in $H1$ are a subset equal to  the set of transaction in $H2$. Formally, $\langle \forall H1, H2: (H1 \sqsubseteq H2) \implies (\txns{H1} \subseteq \txns{H2}) \rangle$. 
\end{observation}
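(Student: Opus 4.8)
The plan is to prove Observation~\ref{obs:hist-subset} directly by unfolding the definition of history extension and the definition of $\txns{\cdot}$. Recall from \secref{sm} that a history $H$ is the tuple $\langle \evts{H}, <_H\rangle$, and that $\txns{H}$ is precisely the set of transactions that have at least one event appearing in $\evts{H}$ (each $T_k \in \txns{H}$ contributes its events, and by the \ssch/well-formedness conditions a transaction is ``in'' $H$ exactly when some event of it is in $\evts{H}$). The notion $H1 \sqsubseteq H2$ (``$H2$ is an extension of $H1$'') means that $H2$ is obtained from $H1$ by appending zero or more events; in particular $\evts{H1} \subseteq \evts{H2}$ and $<_{H1}\, \subseteq\, <_{H2}$.

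First I would fix arbitrary histories $H1, H2$ with $H1 \sqsubseteq H2$ and take an arbitrary $T_k \in \txns{H1}$. By the characterization of $\txns{\cdot}$ above, there exists an event $e \in \evts{H1}$ belonging to $T_k$. Since $H1 \sqsubseteq H2$ gives $\evts{H1} \subseteq \evts{H2}$, we get $e \in \evts{H2}$, and $e$ still belongs to $T_k$. Hence $T_k \in \txns{H2}$. As $T_k$ was arbitrary, $\txns{H1} \subseteq \txns{H2}$, which is exactly the claim $\langle \forall H1, H2: (H1 \sqsubseteq H2) \implies (\txns{H1} \subseteq \txns{H2}) \rangle$.

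This observation is essentially a bookkeeping fact, so I do not expect any real obstacle; the only point needing a sentence of care is being explicit that ``extension'' is monotone on the event set (so that no events — and therefore no transactions — are ever removed), which follows immediately from the definition of $\sqsubseteq$ used consistently in Observations~\ref{obs:inct-fut}, \ref{obs:hist-future} and Lemma~\ref{lem:inct-diff}. Consequently the proof will be three or four lines, and it can be invoked freely in the liveness arguments (e.g.\ when passing from a history $H1$ to a strict extension $H2$ and reusing transactions of $H1$).
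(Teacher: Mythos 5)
Your proof is correct: the paper states this as an unproved observation, and your three-line unfolding of the definitions ($H1 \sqsubseteq H2$ gives $\evts{H1} \subseteq \evts{H2}$, hence every transaction with an event in $H1$ has that same event in $H2$) is exactly the bookkeeping argument the paper implicitly relies on. No gap; nothing further is needed.
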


In order for a transaction $T_i$ to commit in a history $H$, it has to compete with all the live transactions and all the aborted that can become live again as a different \inc. Once a transaction $T_j$ aborts, another \inc of $T_j$ can start and become live again. Thus $T_i$ will have to compete with this \inc of $T_j$ later. Thus, we have the following observation about aborted \& committed transactions. 

\begin{observation}
	\label{obs:abort-retry}
	Consider an aborted transaction $T_i$ in a history $H1$. Then there is an extension of $H1$, $H2$ in which an \inc of $T_i$, $T_j$ is live and has $\tcts{j}$ is greater than $\tcts{i}$. Formally, $\langle H1, T_i: (T_i \in \aborted{H1}) \implies(\exists T_j, H2: (H1 \sqsubseteq H2) \land (T_j \in \live{H2}) \land (\htits{i}{H2} = \htits{j}{H2}) \land (\htcts{i}{H2} < \htcts{j}{H2})) \rangle$. 
\end{observation}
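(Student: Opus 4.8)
\textbf{Proof plan for \obsref{abort-retry}.} This statement is a bookkeeping consequence of the counter application's retry discipline together with the strict monotonicity of the atomic global counter \gtcnt, so the plan is to unwind the relevant definitions carefully rather than to prove anything combinatorial. First I would invoke the application semantics recalled at the start of this subsection: whenever a transaction $T_i$ aborts and no incarnation in $\incs{i}{H1}$ has committed yet (which holds here because $T_i \in \aborted{H1}$ and the application is therefore still in its retry loop), the invoking thread $Th_i$ launches the next incarnation by calling \tbeg{()} with the original $its$, namely $\htits{i}{H1}$, as its argument. Since the system runs forever, by \obsref{hist-future} this \tbeg{()} call is eventually executed in some run extending $H1$; I take $H2$ to be the prefix of that run up to and including the response event of this \tbeg{()}, and I name the new transaction $T_j$. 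This gives $H1 \sqsubseteq H2$ at once, and since $T_j$ begins only after the last event of $H1$ we also get $T_j \notin \txns{H1}$.

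Next I would check the timestamp relations. The equality $\htits{i}{H2} = \htits{j}{H2}$ is immediate from \algoref{begin1}: $T_j$'s \tbeg{()} runs the $its \neq nil$ branch and sets $its_j := its = \htits{i}{H1}$ at \Lineref{begin8}, and neither $T_i$'s nor $T_j$'s $its$ field is ever changed afterwards (in particular $T_i$ is terminal), so both equal $\htits{i}{H2}$. For the $cts$ comparison I would use that $cts$ is produced by a single atomic \emph{gcounter.get\&Inc()} call (\Lineref{begin5} or \Lineref{begin10}) and that \gtcnt returns a strictly larger value on each successive call; the call fixing $\htcts{i}{H1}$ occurs before the last event of $H1$, while the call fixing $\htcts{j}{H2}$ occurs strictly after it, hence $\htcts{i}{H2} = \htcts{i}{H1} < \htcts{j}{H2}$. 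Finally, $T_j \in \live{H2}$ holds because, by construction, the last event of $H2$ is $T_j$'s \tbeg{()} response, at which point \Lineref{begin19} has set $status_j = live$ and $T_j$ has invoked neither \tryc{()} nor \trya{()}, so $T_j$ lies in neither $\comm{H2}$ nor $\aborted{H2}$. Collecting the four conjuncts completes the argument.

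The only point that needs genuine care --- the ``main obstacle'', modest as it is --- is making precise that $its$ and $cts$ are frozen at transaction-begin time, so that the quantities written with respect to $H1$ and with respect to $H2$ denote the same underlying values; once that is pinned down, $\htits{i}{H2} = \htits{j}{H2}$ is just the assignment at \Lineref{begin8}, and $\htcts{i}{H2} < \htcts{j}{H2}$ is just the strict monotonicity of the atomic counter \gtcnt applied across the last event of $H1$. Everything else follows directly from the retry rule of the application and from \obsref{hist-future}.
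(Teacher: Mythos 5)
The paper gives no proof of this statement at all: it is filed as an \emph{observation} about the application/system model (the retry discipline of the counter application plus the atomicity of \gtcnt), and is simply asserted. Your write-up is a correct unwinding of exactly what the paper leaves implicit --- the $its$ equality from \Lineref{begin8}, the strict growth of $cts$ from successive \emph{gcounter.get\&Inc()} calls, liveness of $T_j$ at the point just after its \tbeg{()} response, and \obsref{hist-future} to guarantee the retry actually appears in some extension --- so there is nothing to object to in the approach.

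One small caveat on your justification of the retry step: the parenthetical claim that $T_i \in \aborted{H1}$ by itself ensures no incarnation in $\incs{i}{H1}$ has committed is not right. $T_i$ may have aborted early in $H1$ and a \emph{later} incarnation of $T_i$ may already have committed within $H1$; in that case the application invokes no further incarnation, and indeed \obsref{cmt-noinc} then says no extension contains a live incarnation, so the observation as literally stated would fail. The statement is implicitly conditioned on $\neg\inct{i}{H1}$, which is how it is actually used in the liveness argument (one always applies it to the most recent, not-yet-successful incarnation). Your proof goes through verbatim once that hypothesis is made explicit; since the paper itself omits it, this is a defect you inherited rather than introduced, but it is worth stating the side condition rather than deriving it from $T_i \in \aborted{H1}$.
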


\begin{observation}
	\label{obs:cmt-noinc}
	Consider an committed transaction $T_i$ in a history $H1$. Then there is no extension of $H1$, in which an \inc of $T_i$, $T_j$ is live. Formally, $\langle H1, T_i: (T_i \in \comm{H1}) \implies(\nexists T_j, H2: (H1 \sqsubseteq H2) \land (T_j \in \live{H2}) \land (\htits{i}{H2} = \htits{j}{H2})) \rangle$. 
\end{observation}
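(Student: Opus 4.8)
The plan is to derive this observation directly from the invocation discipline of the counter application together with \lemref{inct-diff}. First I would note that since $T_i \in \comm{H1}$, the transaction $T_i$ is itself a committed member of its own incarnation set, so $\inct{i}{H1}$ holds. The goal is then to rule out any extension $H2$ with $H1 \sqsubseteq H2$ that contains a transaction $T_j$ which is live in $H2$ and shares the initial timestamp of $T_i$. Since committed (indeed, terminated) transactions keep their timestamps and their status in every extension, I would first rewrite $\htits{i}{H2}$ as $\htits{i}{H1}$ and hence the hypothesis $\htits{j}{H2} = \htits{i}{H2}$ as $\htits{j}{H2} = \htits{i}{H1}$, and record that $T_i \in \comm{H2}$.

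I would then split on whether $T_j$ already occurs in $H1$. If $T_j \notin \txns{H1}$, then $H2$ must be a strict extension of $H1$ with $T_j \in \txns{H2}$ and $T_j \notin \txns{H1}$, so \lemref{inct-diff} (applied with $\inct{i}{H1}$ true) immediately yields $\htits{i}{H1} \neq \htits{j}{H2}$, a contradiction. If instead $T_j \in \txns{H1}$, then $T_j \in \incs{i}{H1}$ is an incarnation of $T_i$ already present in $H1$; here I would invoke the application's behaviour --- incarnations of an application-transaction are invoked one at a time, the next only after the current one has aborted, and no further incarnation is invoked once one commits --- to conclude that $T_i$ is the last incarnation occurring in $H1$ and every earlier incarnation there has already aborted, so $T_j \in \termed{H1}$. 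Since status is preserved under extension, $T_j \in \termed{H2}$, contradicting $T_j \in \live{H2}$. Both cases being impossible, no such $T_j$ exists.

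The argument is short because the statement is an observation, and I expect the only step needing care to be the second case, where one has to appeal to the (informally specified) sequential invocation semantics of the counter application to argue that an incarnation set occurring within a single history contains at most one committed transaction and contains no live transaction once that commit has happened. That is the main --- and essentially only --- obstacle, and it is more a matter of stating the assumption cleanly than of any real technical difficulty; the rest follows mechanically from \obsref{inct-term}, the extension-monotonicity observations, and \lemref{inct-diff}.
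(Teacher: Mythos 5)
Your argument is correct. The paper states this as an unproved observation, justified implicitly by the application's invocation discipline (``Once an \inc of $T_i$ has committed, it can not have any future \inc{s}''), and your two-case argument --- \lemref{inct-diff} for a $T_j$ that first appears after $H1$, and the one-live-incarnation-at-a-time discipline for a $T_j$ already in $H1$ --- formalizes exactly the reasoning the paper itself uses in the proof of \lemref{inct-diff}. No gap.
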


\begin{lemma}
	\label{lem:cts-wts}
	Consider a history $H1$ and its extension $H2$. Let $T_i, T_j$ be in $H1, H2$ respectively such that they are \inc{s} of each other. If \wts of $T_i$ is less than \wts of $T_j$ then \cts of $T_i$ is less than \cts $T_j$.
	Formally, $\langle H1, H2, T_i, T_j: (H1 \sqsubset H2) \land (T_i \in \txns{H1}) \land (T_j \in \txns{H2}) \land (T_i \in \incs{j}{H2}) \land (\htwts{i}{H1} < \htwts{j}{H2})\implies (\htcts{i}{H1} < \htcts{j}{H2}) \rangle$
\end{lemma}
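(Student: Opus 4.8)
The plan is to derive the claim purely algebraically from the definition of the working timestamp in \eqnref{wtsf}, namely $\twts{i} = \tcts{i} + C\cdot(\tcts{i} - \tits{i})$ with $C > 0$, together with the fact that incarnations of the same application-transaction share a common initial timestamp. First I would note two stability facts that make the statement well posed: (i) once \tbeg{()} assigns $\tits{}$ and $\tcts{}$ to a transaction, these values never change, so $\htits{i}{H1} = \htits{i}{H2}$ and $\htcts{i}{H1} = \htcts{i}{H2}$ for the extension $H1 \sqsubset H2$; and (ii) since $T_i \in \incs{j}{H2}$, by definition of \incset we have $\htits{i}{H2} = \htits{j}{H2}$. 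Combining (i) and (ii), both $T_i$ and $T_j$ carry the same $\its$ value, call it $a$.

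Next I would substitute into \eqnref{wtsf}. Writing $x = \htcts{i}{H1}$ and $y = \htcts{j}{H2}$, we get $\htwts{i}{H1} = x + C(x - a) = (1+C)x - Ca$ and likewise $\htwts{j}{H2} = (1+C)y - Ca$. The hypothesis $\htwts{i}{H1} < \htwts{j}{H2}$ then reads $(1+C)x - Ca < (1+C)y - Ca$, i.e. $(1+C)x < (1+C)y$. Since $C > 0$ we have $1+C > 0$, so we may divide to conclude $x < y$, that is, $\htcts{i}{H1} < \htcts{j}{H2}$, which is exactly the claim.

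There is essentially no hard step here — the argument is a one-line algebraic equivalence ($\wts$ is a strictly increasing affine function of $\cts$ for fixed $\its$, and the $\its$ is pinned down by the incarnation hypothesis). The only point requiring a sentence of care is the implicit use of timestamp immutability across histories (fact (i) above), so that the quantities named with respect to $H1$ and $H2$ can legitimately be compared; this is justified by inspection of \algoref{begin1}, where $\tits{}$, $\tcts{}$, and hence $\twts{}$ are set once per transaction and never updated thereafter. (One could also record, as a remark, that the converse direction holds by the same computation, so $\htwts{i}{H1} < \htwts{j}{H2} \iff \htcts{i}{H1} < \htcts{j}{H2}$ whenever $T_i, T_j$ are incarnations of each other.)
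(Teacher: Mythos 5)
Your proposal is correct and is essentially the paper's own proof: expand $\wts = (1+C)\cdot\cts - C\cdot\its$, use $\htits{i}{H1} = \htits{j}{H2}$ from the incarnation hypothesis to cancel the $\its$ terms, and divide by $1+C>0$. The extra remark on timestamp immutability across history extensions is a reasonable piece of bookkeeping that the paper leaves implicit.
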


\begin{proof}
	Here we are given that 
	\begin{equation}
	\label{eq:wts-ij}
	\htwts{i}{H1} < \htwts{j}{H2}
	\end{equation}
	
	The definition of \wts of $T_i$ is: $\htwts{i}{H1} = \htcts{i}{H1} + C * (\htcts{i}{H1} - \htits{i}{H1})$. Combining this \eqnref{wts-ij}, we get that 
	
	
	$(C + 1) * \htcts{i}{H1} - C * \htits{i}{H1} < (C + 1) * \htcts{j}{H2} - C * \htits{j}{H2} \xrightarrow[\htits{i}{H1} = \htits{j}{H2}]{T_i \in \incs{j}{H2}} \htcts{i}{H1} < \htcts{j}{H2}$. 
\end{proof}

\begin{lemma}
	\label{lem:wts-great}
	Consider a live transaction $T_i$ in a history $H1$ with its $\twts{i}$ less than a constant $\alpha$. Then there is a strict extension of $H1$, $H2$ in which an \inc of $T_i$, $T_j$ is live with \wts greater than $\alpha$. Formally, $\langle H1, T_i: (T_i \in \live{H1}) \land (\htwts{i}{H1} < \alpha) \implies(\exists T_j, H2: (H1 \sqsubseteq H2) \land (T_i \in \incs{j}{H2}) \land ((T_j \in \comm{H2}) \lor ((T_j \in \live{H2}) \land (\htwts{j}{H2} > \alpha)))) \rangle$. 
\end{lemma}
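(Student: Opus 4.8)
The plan is to iterate \ksftm's ``abort, then restart as a fresh \inc'' behaviour and to observe that the working timestamp jumps up by a \emph{fixed positive amount} on each restart, so that after finitely many restarts it overtakes $\alpha$; along the way, if some \inc happens to commit we land in the ``committed'' disjunct of the statement.

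First I would fix $\iota := \htits{i}{H1}$ and note that, by the definition of \incset, every \inc $T_p$ of $T_i$ occurring in any extension of $H1$ carries this same initial timestamp $\iota$. From \eqnref{wtsf}, for any such \inc we have $\twts{p} = (1+C)\,\tcts{p} - C\,\iota$ with $C > 0$, so $wts$ is a strictly increasing affine function of $cts$ on the \incset of $T_i$ (this is the forward direction of the relation underlying \lemref{cts-wts}, read straight off \eqnref{wtsf}). Because $cts$ is obtained in \tbeg{()} from the atomic, monotonically incremented counter \gtcnt, if $T_q$ is a later \inc of $T_p$ then $\tcts{q} \ge \tcts{p} + 1$, hence $\twts{q} \ge \twts{p} + \delta$ where $\delta := 1 + C > 0$; moreover $wts$ of a transaction is set once at \tbeg{()} and never changes afterwards. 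Let $n$ be the least non-negative integer with $\htwts{i}{H1} + n\,\delta > \alpha$; this is finite since $\delta > 0$, and the hypothesis $\htwts{i}{H1} < \alpha$ forces $n \ge 1$, so the base case is vacuous and I argue by induction on $n$.

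For the inductive step, by \asmref{bdtm} (bounded termination, using the deadlock-freedom provided by the lock-ordering discipline) $T_i$ terminates in some extension of $H1$. If $T_i$ commits there, in an extension $H2$, take $T_j = T_i$ and the committed disjunct holds. Otherwise $T_i \in \aborted{H_a}$ for some extension $H_a \sqsupseteq H1$, and \obsref{abort-retry} gives a further extension $H_b \sqsupseteq H_a$ containing a live \inc $T_{i'}$ of $T_i$ with $\htcts{{i'}}{H_b} > \htcts{i}{H_b}$ (hence $\ge \htcts{i}{H_b}+1$, the values being integers), whence $\htwts{{i'}}{H_b} \ge \htwts{i}{H1} + \delta$ by the computation above. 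If $\htwts{{i'}}{H_b} > \alpha$ then $H2 = H_b$, $T_j = T_{i'}$ works (live, with $wts$ over $\alpha$). If instead $\htwts{{i'}}{H_b} \le \alpha$, then the least $n'$ with $\htwts{{i'}}{H_b} + n'\delta > \alpha$ satisfies $n' \le n-1$, so the induction hypothesis applied to $T_{i'}$ in $H_b$ yields an extension $H2 \sqsupseteq H_b \sqsupseteq H1$ and an \inc $T_j$ of $T_{i'}$ --- therefore an \inc of $T_i$, since they share $\iota$ --- with the required property. \obsref{hist-future} guarantees that every extension invoked above actually exists (the \ksftm system runs forever).

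The main obstacle is making the iteration provably finite and non-stalling: one must exclude an \inc remaining live forever (ruled out by \asmref{bdtm}) and exclude the retries failing to drive $wts$ past $\alpha$ (ruled out by the uniform lower bound $\delta = 1+C$ on the per-retry increment, which itself hinges on \gtcnt being \emph{strictly} incremented at every \tbeg{()} and on the \incset sharing $\iota$). Everything else is bookkeeping with the \incset relation and the extension order $\sqsubseteq$.
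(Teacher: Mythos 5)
Your proposal is correct and follows essentially the same route as the paper's own argument: let $T_i$ terminate (by bounded termination), take the committed disjunct if it commits, and otherwise reincarnate via \obsref{abort-retry} with a strictly larger $\cts$ and hence a strictly larger $\wts$, repeating until $\wts$ overtakes $\alpha$. The only difference is that you make the paper's informal ``eventually'' rigorous by extracting the uniform per-retry increment $\delta = 1+C$ from \eqnref{wtsf} and inducting on the number of retries needed, which is a welcome tightening but not a different proof.
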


\begin{proof}
	The proof comes the behavior of an \aptr. The application keeps invoking a transaction with the same \its until it commits. Thus the transaction $T_i$ which is live in $H1$ will eventually terminate with an abort or commit. If it commits, $H2$ could be any history after the commit of $T_2$. 
	
	On the other hand if $T_i$ is aborted, as seen in \obsref{abort-retry} it will be invoked again or reincarnated with another \cts and \wts. It can be seen that \cts is always increasing. As a result, the \wts is also increasing. Thus eventually the \wts will become greater $\alpha$. Hence, we have that either an \inc of $T_i$ will get committed or will eventually have \wts greater than or equal to $\alpha$.
\end{proof}

\noindent Next we have a lemma about \cts of a transaction and the \syst of a history. 

\begin{lemma}
	\label{lem:cts-syst}
	Consider a transaction $T_i$ in a history $H$. Then, we have that \cts of $T_i$ will be less than or equal to \syst of $H$. Formally, $\langle T_i, H1: (T_i \in \txns{H}) \implies (\htcts{i}{H} \leq \hsyst{H}) \rangle$. 
\end{lemma}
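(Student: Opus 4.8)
The plan is to track exactly where and how $\htcts{i}{H}$ is assigned and to exploit the monotonicity of the global counter $\gtcnt$. First I would recall from \algoref{begin1} that a transaction $T_i$ obtains its $cts_i$ only inside \tbeg{()}: either at \Lineref{begin5}, when $T_i$ is a fresh \aptr and $its_i = cts_i = wts_i$ is set via the atomic \emph{gcounter.get\&Inc()}, or at \Lineref{begin10}, when $T_i$ is a reincarnation and $cts_i$ is again obtained through an atomic \emph{gcounter.get\&Inc()} (while $wts_i$ is derived from it at \Lineref{begin12}, which is irrelevant here). In both cases the value written into $cts_i$ is precisely the value held by $\gtcnt$ at the instant of that atomic read-and-increment step; call this event $e_i$. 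Since $T_i \in \txns{H}$, every event of $T_i$ — in particular $e_i$ — belongs to $\evts{H}$ and hence is ordered by $<_H$ at or before the last event of $H$.

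Next I would argue that $\gtcnt$ is non-decreasing along any history in $\gen{\emph{SF-KOSTM}}$: the only operations ever performed on it are \emph{get\&Inc()} (\Lineref{begin5}, \Lineref{begin10}) and \emph{gcounter.add\&get(incrVal)} (\Lineref{tcv201}), both of which increase its value and neither of which decreases it. Consequently, the value of $\gtcnt$ observed at event $e_i$ is at most the value of $\gtcnt$ after the last event of $H$. By the definition of \syst adopted just above this lemma — the value of $\gtcnt$ immediately after the last event of $H$ — this upper bound is exactly $\hsyst{H}$. Combining the two observations gives $\htcts{i}{H} = (\text{value of } \gtcnt \text{ at } e_i) \leq \hsyst{H}$, which is the claim.

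The only routine care needed is the case analysis on whether $T_i$ starts afresh or is a reincarnation, and making precise the semantics of the atomic \emph{get\&Inc()} — namely that it returns the pre-increment value and that, being atomic, it has a well-defined position $e_i$ in $<_H$ relative to all other accesses of $\gtcnt$. I do not expect a genuine obstacle here: the statement is essentially a bookkeeping fact that the per-transaction timestamp never runs ahead of the "clock" used to define system time. Its role is as an auxiliary bound, to be combined later with \lemref{cts-wts}, \lemref{wts-great}, and the $\tltl$-monotonicity arguments when establishing the real-time and liveness properties of SF-KOSTM.
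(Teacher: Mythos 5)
Your proposal is correct and follows essentially the same route as the paper: the paper's proof likewise observes that $cts_i$ is assigned from the global counter in \tbeg{} and that the counter is only ever incremented (by \tbeg{} and \tryc{}), so the value read at $T_i$'s begin cannot exceed the counter's value after the last event of $H$, which is $\hsyst{H}$ by definition. Your version merely spells out the fresh-versus-reincarnation case split and the atomicity of \emph{get\&Inc()} a bit more explicitly; no gap.
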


\begin{proof}
	We get this lemma by observing the \mth{s} of the STM System that increment the \tcntr which are \begt and \tryc. It can be seen that \cts of $T_i$ gets assigned in the \begt \mth. So if the last \mth of $H$ is the \begt of $T_i$ then we get that \cts of $T_i$ is same as \syst of $H$. On the other hand if some other \mth got executed in $H$ after \begt of $T_i$ then we have that \cts of $T_i$ is less than \syst of $H$. Thus combining both the cases, we get that \cts of $T_i$ is less than or equal to as \syst of $H$, i.e., $(\htcts{i}{H} \leq \hsyst{H})$
\end{proof}

\noindent From this lemma, we get the following corollary which is the converse of the lemma statement

\begin{corollary}
	\label{cor:cts-syst}
	Consider a transaction $T_i$ which is not in a history $H1$ but in an strict extension of $H1$, $H2$. Then, we have that \cts of $T_i$ is greater than the \syst of $H$. Formally, $\langle T_i, H1, H2: (H1 \sqsubset H2) \land (T_i \notin \txns{H1}) \land (T_i \in \txns{H2}) \implies (\htcts{i}{H2} > \hsyst{H1}) \rangle$. 
\end{corollary}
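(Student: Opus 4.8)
The plan is to mirror the proof of Lemma~\ref{lem:cts-syst}, relying on two facts the paper already uses: the global counter $gcounter$ is only ever incremented --- by \begt{} (through \emph{gcounter.get\&Inc()}) and by \tryc{} (through \emph{gcounter.add\&get(incrVal)}) --- so its value is monotone non-decreasing along the execution; and the \cts{} of a transaction is assigned exactly once, inside its \begt{}, and never changes thereafter. First I would observe that, since $T_i \in \txns{H2}$ but $T_i \notin \txns{H1}$ and $H1$ is a prefix of $H2$, no event of $T_i$ occurs in $H1$; in particular the \begt{} of $T_i$ occurs strictly after the last event of $H1$.

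Next, let $H'$ be the prefix of $H2$ that ends exactly with the \begt{} of $T_i$ (precisely, with the \emph{gcounter.get\&Inc()} event of \algoref{begin1} that assigns $\tcts{i}$). By the previous observation every event of $H1$ precedes that event, so $H1$ is a proper prefix of $H'$, i.e. $H1 \sqsubset H' \sqsubseteq H2$. Because $\tcts{i}$ is fixed at \begt{}, we have $\htcts{i}{H2} = \htcts{i}{H'}$, and because the last method of $H'$ is the \begt{} of $T_i$, the argument inside the proof of Lemma~\ref{lem:cts-syst} gives $\htcts{i}{H'} = \hsyst{H'}$. Finally, since $H1 \sqsubset H'$ and at least one counter-incrementing method --- the \begt{} of $T_i$ itself --- has executed between the last event of $H1$ and the last event of $H'$, monotonicity of $gcounter$ yields $\hsyst{H'} > \hsyst{H1}$. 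Chaining the two equalities with this strict inequality gives $\htcts{i}{H2} = \hsyst{H'} > \hsyst{H1}$, which is the claim.

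The only delicate step, and the one I would treat most carefully, is the set-up of $H'$ and the claim $H1 \sqsubset H'$: one must use that a history is a totally ordered sequence of events (\secref{sm}), that $H1$ is a prefix of $H2$, and that $T_i$ contributes no event to $H1$, to conclude that the first event of $T_i$ in $H2$ --- its \begt{} --- sits strictly after all events of $H1$, so the prefix of $H2$ up to that event contains $H1$ entirely. I would also explicitly record that the strictness in $\hsyst{H'} > \hsyst{H1}$ is exactly what handles the boundary case in which $T_i$'s \begt{} is the immediate next event after $H1$, since that \begt{} still advances $gcounter$. Everything else is routine bookkeeping with the monotone counter, and, as the remark preceding the statement already anticipates, the corollary is essentially just the converse repackaging of Lemma~\ref{lem:cts-syst}; no new machinery is needed.
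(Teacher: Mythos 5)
Your proof is correct and is essentially the argument the paper intends: the paper gives no explicit proof of this corollary, merely asserting it follows from \lemref{cts-syst}, and your filling-in (truncate $H2$ at $T_i$'s \begt{}, apply the lemma's equality $\tcts{i} = \hsyst{H'}$ there, and use monotonicity plus the increment performed by that very \begt{} to get strictness over $\hsyst{H1}$) is the natural and correct way to discharge it. The one convention you implicitly inherit --- that $\tcts{i}$ equals the counter value \emph{after} the increment in \emph{gcounter.get\&Inc()} --- is exactly the convention the paper's own proof of \lemref{cts-syst} uses, so your chain of (in)equalities stands.
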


\noindent Now, we have lemma about the \mth{s} of \ksftm completing in finite time. 

\begin{lemma}
	\label{lem:mth-fdm}
	If all the locks are fair and the underlying system scheduler is fair then all the \mth{s} of \ksftm will eventually complete.
\end{lemma}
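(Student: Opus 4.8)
The plan is to prove \lemref{mth-fdm} by a case analysis over the five methods \tbeg{()}, \tlook{()}, \tdel{()}, \tins{()} and \tryc{()}, together with their internal sub-routines: the \lsl{} traversal that identifies \preds{} and \currs{}, \emph{rv\_Validation()}, \emph{poValidation()} and \emph{tryC\_Validation()}. First I would dispose of the loop-free code. \tbeg{()} (\algoref{begin1}), \emph{rv\_Validation()} (\algoref{rvm}) and \emph{poValidation()} consist of straight-line code whose only potentially blocking steps are atomic reads/writes of the \emph{status} flags, so under a fair scheduler they return in finitely many steps. \emph{tryC\_Validation()} (\algoref{trycVal}) and \tryc{()} (\algoref{tryc}) contain only \texttt{for-all} loops, and each such loop ranges over a set that is fixed once the loop is entered --- the log $txLog_i$, or one of currVL, nextVL, allRVL, largeRVL, smallRVL, abortRVL --- all of which are finite, because a transaction accesses only finitely many keys and every key carries at most $K$ versions. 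Hence these methods terminate provided every individual step does, and the only step that can block is a lock acquisition.

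Next I would show that every lock acquisition eventually succeeds. All locks in \ksftm{} are requested in a single predefined global order: node locks in increasing order of keys (\Lineref{rvm8}, \Lineref{tc10}) and \emph{status} locks in a predefined order of transaction identifiers (\Lineref{tcv8}); moreover a method never holds a \emph{status} lock while requesting a node lock. Therefore no cycle of lock requests can arise, i.e.\ \ksftm{} is deadlock-free. Combining deadlock-freedom with the hypothesis that all locks are fair yields that every pending lock request is eventually granted, so none of the finite loops above can stall on a lock. This settles \tbeg{()}, \emph{rv\_Validation()}, \emph{poValidation()}, \emph{tryC\_Validation()} and \tryc{()}.

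The remaining --- and main --- obstacle is the two unbounded loops inside an \rvmt{}: the \lsl{} traversal of \Lineref{rvm7} and the ``\texttt{goto} \Lineref{rvm7}'' retry triggered by a failed \emph{rv\_Validation()} at \Lineref{rvm10} (and the analogous ``retry'' at \Lineref{tcv2} inside \emph{tryC\_Validation()}). For the traversal I would use that a node, once created, is never physically removed --- a \tdel{()} only \emph{marks} it, keeping it reachable through the \rn{} links --- that keys are stored in strictly increasing order, and that there is at most one node per key; hence the traversal advances monotonically and visits at most $|\mathcal{K}|+2$ nodes before stopping. For the retry loop the crux is to bound the number of retries: \emph{rv\_Validation()} fails only when some node among \preds{}/\currs{} of key $k$ was marked/unmarked or had its \rn{}/\bn{} pointer modified between the unlocked identification of \preds{}/\currs{} and the subsequent lock acquisition, and every such modification is the linearization point of an update step of a \emph{concurrent} transaction executed while that transaction holds precisely the locks our method is about to take. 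I would then charge each retry to a distinct such concurrent update step that took effect during this call, and argue that only finitely many of these can occur: by \asmref{bdtm} each concurrent transaction terminates (hence performs all its update steps) within the bounded interval \emph{TB}, and --- using the deadlock-free, fair locking established above --- once our method holds the locks surrounding $k$, the \lsl{} in the vicinity of $k$ can be destabilized only by steps that predate those acquisitions, so the local structure cannot be kept unstable indefinitely. Hence after finitely many retries \emph{rv\_Validation()} succeeds and the method completes. This charging argument is the delicate point I would spell out most carefully, since it is exactly where the finiteness of $\mathcal{K}$, \asmref{bdtm} and the fairness of the locks must be combined to exclude an adversarial interleaving that keeps the retry loop running forever.
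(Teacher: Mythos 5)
Your first two paragraphs are essentially the paper's entire proof: the paper argues deadlock-freedom from the fixed global lock-acquisition order (t-objects in key order, then \emph{status} locks in a predefined order), observes that all loops in \tryc{} iterate over the finite write-set, and then concludes from fair locks and a fair scheduler that every method terminates. Up to that point you and the paper coincide.

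Where you genuinely diverge is your third paragraph. The paper flatly asserts that ``none of the \mth{s} have any unbounded while loops,'' which ignores exactly the two constructs you single out: the \texttt{goto}-retry on a failed \emph{rv\_Validation()} at \Lineref{rvm10} and the ``retry'' at \Lineref{tcv2}. You are right that these are the only places where termination is actually in doubt, and your traversal bound (nodes are only logically deleted, keys are ordered, at most $|\mathcal{K}|+2$ nodes visited) is a clean way to dispose of the list walk. However, your charging argument for bounding the number of retries is not yet closed, and I do not think it can be closed from the stated hypotheses alone: charging each failed validation to a distinct concurrent update only yields a finite bound if the total number of interfering updates during one call is finite, but \asmref{bdtm} bounds the lifetime of each \emph{individual} transaction, not the arrival of fresh conflicting transactions (threads reincarnate aborted transactions indefinitely). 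The standard argument for this optimistic traverse-lock-validate pattern gives only lock-freedom --- some operation makes progress whenever a validation fails --- not termination of the particular method, so an adversarial but fair interleaving could in principle starve one \rvmt{} forever. In other words, you have correctly located the real gap; the paper papers over it, and your sketch, as you yourself flag, still needs either an additional fairness/contention assumption or a retreat to a system-wide progress claim to be airtight.
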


\begin{proof}
	It can be seen that in any \mth, whenever a transaction $T_i$ obtains multiple locks, it obtains locks in the same order: first lock relevant \tobj{s} in a pre-defined order and then lock relevant \glock{s} again in a predefined order. Since all the locks are obtained in the same order, it can be seen that the \mth{s} of \ksftm will not deadlock. 
	
	It can also be seen that none of the \mth{s} have any unbounded while loops. All the loops in \tryc \mth iterate through all the \tobj{s} in the write-set of $T_i$. Moreover, since we assume that the underlying scheduler is fair, we can see that no thread gets swapped out infinitely. Finally, since we assume that all the locks are fair, it can be seen all the \mth{s} terminate in finite time.
\end{proof}

\begin{theorem}
	\label{thm:trans-com|abt}
	Every transaction either commits or aborts in finite time.
\end{theorem}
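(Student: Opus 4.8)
The plan is to view an arbitrary transaction $T_i$ as the finite sequence of SF-KOSTM \mth{s} it issues and to show that each of these returns in finite time, so that $T_i$ reaches a \termop{} (a \mth returning $\langle commit_i \rangle$ or $\langle abort_i \rangle$) within finite time. First I would observe that, by \asmref{bdtm}, in the absence of deadlocks the thread running $T_i$ is eventually allotted enough CPU that $T_i$ terminates; concretely $T_i$ issues \tbeg{()} once, then finitely many \rvmt{s()} (\tlook{()}/\tdel{()}) and \upmt{s()} (\tins{()}/\tdel{()}), and finally \tryc{()} or \trya{()}, and it stops issuing further \mth{s} the moment any \mth returns $\langle abort_i \rangle$. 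Hence it suffices to establish (a) that the no-deadlock hypothesis of \asmref{bdtm} indeed holds for SF-KOSTM, and (b) that every \mth invoked by $T_i$ individually returns in finite time.

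Both (a) and (b) I would obtain directly from \lemref{mth-fdm}: under a fair scheduler and fair locks, the fixed lock-acquisition order used by all \mth{s} (keys first, then \emph{status} fields, each in increasing order) rules out deadlock, and no \mth contains an unbounded loop, so every \tbeg{()}, \rvmt{()}, \tryc{()}, and \trya{()} terminates. I would then just trace the control flow of \secref{pm}: if some \rvmt{()} or \tryc{()} returns $\langle abort_i \rangle$ — because $T_i$'s \emph{status} is \emph{false}, or no version with the largest timestamp smaller than $i$ exists, or $tltl_i > tutl_i$ — then $T_i$ has aborted and we are done; otherwise, after the last \upmt{()}, $T_i$ invokes \tryc{()} (or \trya{()}), which by \lemref{mth-fdm} returns $\langle commit_i \rangle$ or $\langle abort_i \rangle$ in finite time. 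In every case $T_i$ commits or aborts in finite time, and combined with \asmref{bdtm} this time is at most $TB$.

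\noindent The main obstacle is justifying that the retry loops internal to the \mth{s} — the retry to \linref{rvm7} after a failed \emph{rv\_Validation()} in \algoref{rvmt}, and the analogous ``release the locks and \emph{retry}'' step at \linref{tcv2} of \emph{tryC\_Validation()} — cannot spin forever, since these are the only points at which a \mth could fail to return. The argument I would give is that each such iteration is caused by a \emph{distinct} successful update, committed by a concurrent transaction on the relevant \lsl{} nodes, that occurs between $T_i$'s identification of its predecessor/current nodes and its acquisition of the corresponding locks; a fair scheduler, bounded-termination (\asmref{bdtm}) for those concurrent transactions, and fair locks together permit only finitely many such interfering commits before $T_i$'s \mth pins down a consistent set of predecessors and currents. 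This boundedness is precisely what is folded into \lemref{mth-fdm}, so once that lemma is in hand the theorem is immediate.
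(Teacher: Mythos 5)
Your proposal is correct and follows essentially the same route as the paper, which likewise derives this theorem immediately from \lemref{mth-fdm} (deadlock-freedom via the fixed lock order, no unbounded loops, fair locks and scheduler). Your extra paragraph on the validation-failure retry loops actually supplies a justification the paper's own proof of \lemref{mth-fdm} glosses over, but it does not change the structure of the argument.
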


\begin{proof}
	This theorem comes directly from the \lemref{mth-fdm}. Since every \mth of \ksftm will eventually complete, all the transactions will either commit or abort in finite time.
\end{proof}

\noindent From this theorem, we get the following corollary which states that the maximum \emph{lifetime} of any transaction is $L$. 

\begin{corollary}
	\label{cor:cts-L}
	Any transaction $T_i$ in a history $H$ will either commit or abort before the \syst of $H$ crosses $\tcts{i} + L$. 
\end{corollary}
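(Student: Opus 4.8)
The plan is to derive this corollary from \thmref{trans-com|abt} (every transaction commits or aborts in finite time) together with \lemref{mth-fdm} and \asmref{bdtm}, by bounding how far the global counter \gtcnt can advance while a single transaction is live. First I would pin down the two endpoints: at the \begt event of $T_i$ the value of \gtcnt is read into $cts_i$, so $cts_i$ equals the value of \gtcnt at that instant, which is at most the \syst of the prefix of $H$ ending with that \begt; the goal is then to show that $T_i$ is already terminated by the time \gtcnt reaches $cts_i + L$.

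By \lemref{mth-fdm} the system is deadlock-free --- all locks are acquired in a fixed order and no \mth contains an unbounded loop --- so \asmref{bdtm} applies and $T_i$ commits or aborts within a bounded real-time window $W$ measured from its \begt. The counter \gtcnt is advanced only inside \begt and \tryc, and by \lemref{mth-fdm} each such \mth runs in bounded time; since the system has only $n$ threads and each thread can complete at most a constant number of counter-incrementing \mth{s} during a real-time interval of length $W$, the counter \gtcnt advances by at most a constant $L$ (depending only on $n$, $W$, and the minimum running time of a counter-incrementing \mth) over the lifetime of $T_i$. Putting the pieces together: between \begt of $T_i$ and its termination, \gtcnt grows from at most $cts_i$ to at most $cts_i + L$, hence $T_i$ is committed or aborted before the \syst of $H$ crosses $cts_i + L$. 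One then takes this constant $L$ as the definition of the maximal transaction lifetime appearing in the statement.

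The step I expect to be the main obstacle is making the passage from the bounded \emph{real} time of \asmref{bdtm} and \lemref{mth-fdm} to the logical \syst measured by \gtcnt fully rigorous and independent of the particular history $H$: one must invoke the (largely implicit) model assumptions that the fair scheduler eventually schedules every thread, that every \mth --- in particular \begt and \tryc --- executes in bounded time, and that the thread count $n$ is fixed, and from these extract a single constant $L$ that works uniformly for all transactions $T_i$ and all histories. Given those hypotheses the remainder is a routine counting argument, so the real care is in stating the modelling assumptions cleanly rather than in any delicate combinatorics.
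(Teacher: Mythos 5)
Your proposal is correct and, at its core, follows the same route the paper intends: \corref{cts-L} is presented as an immediate consequence of \thmref{trans-com|abt}, which in turn rests on \lemref{mth-fdm} and \asmref{bdtm}. The difference is that the paper offers essentially no argument beyond the sentence introducing the corollary as stating that ``the maximum lifetime of any transaction is $L$'' --- it implicitly takes $L$ (the quantity called $TB$ in \subsecref{prelim}) to be the a priori termination bound of \asmref{bdtm} and silently identifies a bounded real-time lifetime with a bounded advance of the logical counter \gtcnt that defines \syst. You, by contrast, supply the missing bridge: you note that \gtcnt is incremented only in \begt and \tryc, and you bound the number of such increments the $n$ threads can perform during the bounded real-time window in which $T_i$ is live, then take $L$ to be that bound. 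That is the honest version of the argument, and your closing remark correctly identifies its price: one needs a lower bound on the duration of a counter-incrementing \mth (equivalently, an upper bound on the rate at which new transactions and commits can occur), an assumption the paper never states but tacitly relies on. In short, your proof is not a departure from the paper's approach so much as the elaboration the paper omits, at the cost of making one extra modelling assumption explicit.
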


\noindent The following lemma connects \wts and \its of two transactions, $T_i, T_j$. 

\begin{lemma}
	\label{lem:wts-its}
	Consider a history $H1$ with two transactions $T_i, T_j$. Let $T_i$ be in $\live{H1}$. Suppose $T_j$'s \wts is greater or equal to $T_i$' s \wts. Then \its of $T_j$ is less than $\tits{i} + 2*L$. Formally, $\langle H, T_i, T_j : (\{ T_i, T_j\} \subseteq \txns{H}) \land ( T_i \in \live{H}) \land (\htwts{j}{H} \geq \htwts{i}{H}) \Longrightarrow (\htits{i}{H} + 2L \geq \htits{j}{H}) \rangle$.
\end{lemma}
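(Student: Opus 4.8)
The plan is to convert the hypothesis $\htwts{j}{H}\geq\htwts{i}{H}$ into a quantitative constraint on the initial timestamps by inverting the definition of \wts, and then to control the difference of the two current timestamps using the fact that $T_i$ is still live. Recall from \eqnref{wtsf} that $\htwts{i}{H}=\htcts{i}{H}+C(\htcts{i}{H}-\htits{i}{H})$; solving for the initial timestamp gives $\htits{i}{H}=\bigl((1+C)\htcts{i}{H}-\htwts{i}{H}\bigr)/C$, and the analogous identity holds for $T_j$. Subtracting the two identities and using $\htwts{j}{H}\geq\htwts{i}{H}$ yields $\htits{j}{H}-\htits{i}{H}=\frac{(1+C)(\htcts{j}{H}-\htcts{i}{H})-(\htwts{j}{H}-\htwts{i}{H})}{C}\leq\frac{1+C}{C}\bigl(\htcts{j}{H}-\htcts{i}{H}\bigr)$, so the whole statement reduces to an upper bound on $\htcts{j}{H}-\htcts{i}{H}$.

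For that bound I would chain two earlier results. Since $T_j\in\txns{H}$, \lemref{cts-syst} gives $\htcts{j}{H}\leq\hsyst{H}$. Since $T_i\in\live{H}$, it has not terminated, so by the contrapositive of \corref{cts-L} we have $\hsyst{H}\leq\htcts{i}{H}+L$. Together these give $\htcts{j}{H}-\htcts{i}{H}\leq L$, hence $\htits{j}{H}-\htits{i}{H}\leq\frac{(1+C)L}{C}$, which is at most $2L$ for the constant $C$ fixed in the correctness argument (any $C\geq 1$ makes $\frac{1+C}{C}\leq 2$; in general one records the bound $\frac{(1+C)L}{C}$). The case $\htcts{j}{H}\leq\htcts{i}{H}$ needs no separate treatment: the right-hand side of the displayed inequality is then nonpositive, so $\htits{j}{H}\leq\htits{i}{H}<\htits{i}{H}+2L$. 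This gives $\htits{i}{H}+2L>\htits{j}{H}$, as required, and also covers the degenerate case $T_i=T_j$.

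The subtle point — and the only thing that makes this more than bookkeeping — is that one must resist the shortcut $\htits{j}{H}\leq\htcts{j}{H}\leq\htcts{i}{H}+L\leq\htits{i}{H}+2L$, whose last step would need $\htcts{i}{H}\leq\htits{i}{H}+L$. That fails for a transaction that has been re-incarnated many times, where $\htcts{i}{H}-\htits{i}{H}$ is unbounded, which is precisely the scenario \wts was introduced to handle. Hence the hypothesis has to be used in its arithmetic form (the subtracted term $\htwts{j}{H}-\htwts{i}{H}\geq 0$), not merely as a comparison of priorities; once that observation is made, the proof is a two-line computation sitting on top of \lemref{cts-syst} and \corref{cts-L}.
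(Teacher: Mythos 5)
Your proof is correct and takes essentially the same route as the paper's: both arguments rest on the bound $\tcts{j} \leq \tcts{i} + L$ obtained from $T_i$ being live (via \corref{cts-L} and \lemref{cts-syst}) combined with the arithmetic of the \wts definition — you merely solve for \its and bound $\tits{j}-\tits{i}$, whereas the paper bounds $\twts{j}-\twts{i}$, which is the same computation rearranged. Your explicit note that the resulting bound is $\frac{(1+C)L}{C}$ and therefore yields $2L$ only for $C \geq 1$ matches the paper, which simply sets $C=1$ at the corresponding step.
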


\begin{proof}
	Since $T_i$ is live in $H1$, from \corref{cts-L}, we get that it terminates before the system time, $\tcntr$ becomes $\tcts{i} + L$. Thus, \syst of history $H1$ did not progress beyond $\tcts{i} + L$. Hence, for any other transaction $T_j$ (which is either live or terminated) in $H1$, it must have started before \syst has crossed $\tcts{i} + L$. Formally $\langle \tcts{j} \leq \tcts{i} + L \rangle$.
	
	Note that we have defined \wts of a transaction $T_j$ as: $\twts{j} = (\tcts{j} + C * (\tcts{j} - \tits{j}))$. Now, let us consider the difference of the \wts{s} of both the transactions. 
	\noindent 
	\begin{math}
		\twts{j} - \twts{i}  = (\tcts{j} + C * (\tcts{j} - \tits{j})) - (\tcts{i} + C * (\tcts{i} - \tits{i})) \\ 
		= (C + 1)(\tcts{j} - \tcts{i}) - C(\tits{j} - \tits{i}) \\
		\leq (C + 1)L - C(\tits{j} - \tits{i}) \qquad [\because \tcts{j} \leq \tcts{i} + L] \\
		= 2*L + \tits{i} - \tits{j}  \qquad [\because C = 1] \\
	\end{math}
	
	\noindent Thus, we have that: $ \langle (\tits{i} + 2L - \tits{j}) \geq (\twts{j} - \twts{i}) \rangle$. This gives us that \\
	$((\twts{j} - \twts{i}) \geq 0) \Longrightarrow ((\tits{i} + 2L - \tits{j}) \geq 0)$. 
	
	\noindent From the above implication we get that, 
	$(\twts{j} \geq \twts{i}) \Longrightarrow (\tits{i} + 2L \geq \tits{j})$.
	
\end{proof}

It can be seen that \ksftm algorithm gives preference to transactions with lower \its to commit. To understand this notion of preference, we define a few notions of enablement of a transaction $T_i$ in a history $H$. We start with the definition of \emph{\itsen} as:

\begin{definition}
	\label{defn:itsen}
	We say $T_i$ is \emph{\itsen} in $H$ if for all transactions $T_j$ with \its lower than \its of $T_i$ in $H$ have \incct to be true. Formally, 
	\begin{equation*}
		\itsenb{i}{H} = \begin{cases}
			True    & (T_i \in \live{H}) \land (\forall T_j \in \txns{H} : (\htits{j}{H} < \htits{i}{H}) \\& \implies (\inct{j}{H})) \\
			False	& \text{otherwise}
		\end{cases}
	\end{equation*}
\end{definition}

\noindent The follow lemma states that once a transaction $T_i$ becomes \itsen it continues to remain so until it terminates. 

\begin{lemma}
	\label{lem:itsen-future}
	Consider two histories $H1$ and $H2$ with $H2$ being a extension of $H1$. Let a transaction $T_i$ being live in both of them. Suppose $T_i$ is \itsen in $H1$. Then $T_i$ is \itsen in $H2$ as well. Formally, $\langle H1, H2, T_i: (H1 \sqsubseteq H2) \land (T_i \in \live{H1}) \land (T_i \in \live{H2}) \land (\itsenb{i}{H1}) \implies (\itsenb{i}{H2}) \rangle$. 
\end{lemma}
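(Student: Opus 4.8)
The plan is to argue directly from the definition of \itsen (\defref{itsen}), using the monotonicity of \incct established in \obsref{inct-fut} and \lemref{inct-diff}. Assume $H1 \sqsubseteq H2$, that $T_i$ is live in both, and that $\itsenb{i}{H1}$ holds. Since $T_i$ is live in $H2$, to conclude $\itsenb{i}{H2}$ it suffices to take an arbitrary transaction $T_j \in \txns{H2}$ with $\htits{j}{H2} < \htits{i}{H2}$ and show $\inct{j}{H2}$. Note that because $T_i$ appears in $H1$, its \its is fixed across the extension, so $\htits{i}{H1} = \htits{i}{H2}$; I will write $its_i$ for this common value.

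First I would split on whether $T_j \in \txns{H1}$. If $T_j \in \txns{H1}$, then since $\htits{j}{H1} = \htits{j}{H2} < its_i = \htits{i}{H1}$ and $T_i$ is \itsen in $H1$, \defref{itsen} gives $\inct{j}{H1}$. Then \obsref{inct-fut} (instantiated with $T_j$ playing both roles, i.e. an \inc of itself) propagates this to $\inct{j}{H2}$. If instead $T_j \notin \txns{H1}$, I claim $\inct{j}{H2}$ follows from \lemref{inct-diff} by contradiction: suppose $\neg\inct{j}{H2}$. The strict-extension hypothesis of \lemref{inct-diff} is available because $H2$ contains $T_j$ while $H1$ does not, so $H1 \sqsubset H2$. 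Now consider any transaction $T_k$ in $H1$ with $its_k < its_i$; such $T_k$ has $\inct{k}{H1}$ (as $T_i$ is \itsen in $H1$). By \lemref{inct-diff}, $\htits{k}{H1} \neq \htits{j}{H2}$. But $T_j$ must still have \emph{some} \its, namely $\htits{j}{H2} < its_i$, and the application assigns \its-values so that a transaction not present in $H1$ either starts afresh or is an \inc of an existing chain; in either case the argument needs to show that the whole \incset of $T_j$ has committed. The cleanest route is: $T_j$ is either committed in $H2$ (giving $\inct{j}{H2}$ directly, contradiction) or live/aborted, and if live/aborted with $its_j < its_i$, then $T_j$'s own \incset is not yet committed, meaning $\neg\inct{j}{H2}$ — but then by \lemref{inct-diff} applied with any earlier committed-incset transaction we derive that $its_j$ cannot coincide with that transaction's \its, which does not immediately contradict anything unless we know $T_j$'s \its already appeared in $H1$.

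Here is the main obstacle, and how I expect to resolve it. The subtle case is a transaction $T_j$ that is \emph{freshly invoked} in the extension $H2$ with an \its strictly smaller than $its_i$ and whose \incset has not committed. Superficially this would break \itsen-preservation. But this situation is actually impossible under the \aptr invocation discipline: \its-values are drawn from the global counter in increasing order, so a transaction starting after the last event of $H1$ has \its at least as large as every \its already assigned — in particular at least $its_i$ (since $T_i \in \txns{H1}$). Hence no such $T_j$ with $\htits{j}{H2} < its_i = \htits{i}{H2}$ can be fresh; every $T_j$ with smaller \its must already belong to $\txns{H1}$, reducing us to the first case. I would state this monotonicity of \its-assignment explicitly (it follows from \Lineref{begin5} and \Lineref{begin10} of \algoref{begin1}, where $cts$ — and hence the fresh $its$ — is obtained from the monotone atomic counter, together with $\htits{j}{H2} = \htcts{j}{H2}$ for a first incarnation and \corref{cts-syst} for the fact that a fresh transaction's counter value exceeds \syst of $H1$), and then the proof closes: every relevant $T_j$ lies in $H1$, inherits $\inct{j}{H1}$ from $\itsenb{i}{H1}$, and carries it forward to $H2$ via \obsref{inct-fut}. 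Therefore $\itsenb{i}{H2}$ holds. $\qed$
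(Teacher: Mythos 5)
Your proof is correct and follows essentially the same route as the paper's one-paragraph argument: both rest on the observation that the set of \its values below $\tits{i}$ is frozen once $T_i$ has begun --- fresh invocations after $H1$ draw strictly larger timestamps from the monotone counter, and \lemref{inct-diff} blocks any new incarnation of a chain whose \incct is already true --- combined with the persistence of \incct across extensions (\obsref{inct-fut}). The paper simply asserts this invariance in one line, whereas you make the two sub-cases ($T_j \in \txns{H1}$ versus $T_j \notin \txns{H1}$, the latter split into fresh versus reincarnated) explicit; the tentative middle portion of your second case could be cut, since the closing argument already supplies everything needed.
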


\begin{proof}
	When $T_i$ begins in a history $H3$ let the set of transactions with \its less than $\tits{i}$ be $smIts$. Then in any extension of $H3$, $H4$ the set of transactions with \its less than $\tits{i}$ remains as $smIts$. 
	
	Suppose $H1, H2$ are extensions of $H3$. Thus in $H1, H2$ the set of transactions with \its less than $\tits{i}$ will be $smIts$. Hence, if $T_i$ is \itsen in $H1$ then all the transactions $T_j$ in $smIts$ are $\inct{j}{H1}$. It can be seen that this continues to remain true in $H2$. Hence in $H2$, $T_i$ is also \itsen which proves the lemma.
\end{proof}

The following lemma deals with a committed transaction $T_i$ and any transaction $T_j$ that terminates later. In the following lemma, $\incv$ is any constant greater than or equal to 1. 

\begin{lemma}
	\label{lem:tryci-j}
	Consider a history $H$ with two transactions $T_i, T_j$ in it. Suppose transaction $T_i$ commits before $T_j$ terminates (either by commit or abort) in $H$. Then $\ct_i$ is less than $\ct_j$ by at least $\incv$. Formally, $\langle H, \{T_i, T_j\} \in \txns{H}: (\tryc_i <_H \term_j) \implies (\ct_i + \incv \leq \ct_j)\rangle$. 
\end{lemma}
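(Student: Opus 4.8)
The plan is to reduce the claim to monotonicity of the global counter \gtcnt{}. First I would locate where a commit time is set: inspecting \emph{tryC\_Validation()} (\algoref{trycVal}), a transaction fixes its value $\ct$ \emph{only} at \Lineref{tcv201}, through a single atomic add-and-get on \gtcnt{} with increment \incv; this is the only place in the system that yields a commit time, and by the linearization convention of \secref{gcoflo} it is one atomic step occurring inside the transaction's \emph{STM\_tryC}. The counter \gtcnt{} is global and never decreases: the only operations on it are the $get\&Inc$ of \tbeg{()} (increment $1$) and the add-and-get of \emph{tryC\_Validation()} (increment $\incv \ge 1$). Consequently all accesses to \gtcnt{} are totally ordered by the atomic-step order, and whenever an access $a$ precedes an access $b$ with $b$ an add-and-get step, the value returned by $b$ exceeds the value returned by $a$ by at least \incv. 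Hence, as soon as we know that $T_i$'s \Lineref{tcv201} step precedes $T_j$'s in the step order, we obtain $\ct_i + \incv \le \ct_j$, which is the statement.

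The real content is the ordering of the two commit-time reads. I would argue as follows. Because $T_i$ commits, after executing \Lineref{tcv201} it carries out the write-back of \algoref{tryc} (\Lineref{tc15} onward) and reaches \Lineref{tc29}, where it releases its locks and returns commit; therefore $T_i$'s commit-time read strictly precedes $T_i$'s commit event, which by hypothesis precedes $T_j$'s termination. On the other side, $T_j$ obtains $\ct_j$ --- this value is defined exactly when $T_j$ reaches \Lineref{tcv201}, i.e.\ when $T_j$ commits or aborts past that line --- during the very execution of \emph{STM\_tryC} that ends in $T_j$'s termination. To pass from ``$T_i$'s commit-time read precedes $T_j$'s termination'' to ``$T_i$'s commit-time read precedes $T_j$'s commit-time read'', I would invoke the lock discipline: between its \Lineref{tcv201} and its commit at \Lineref{tc29}, $T_i$ continuously holds the key-locks it took at \Lineref{tc10} and the $status$-locks it took at \Lineref{tcv8}, and $T_j$ similarly holds its analogous locks between its own \Lineref{tcv201} and its termination; showing that $T_i$ and $T_j$ must contend on at least one such lock (a shared key-lock when $\udset{T_i}\cap\udset{T_j}\ne\emptyset$, and otherwise a $status$-lock reached through the $rvl$ bookkeeping of the validations) forbids $T_i$'s critical section from being squeezed strictly between $T_j$'s \Lineref{tcv201} and $T_j$'s termination, forcing $T_j$'s counter access to be linearized after $T_i$'s.

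I expect the second paragraph to be the main obstacle: converting the $<_H$ ordering of the \emph{commit/abort linearization points} into an ordering of the underlying \emph{counter accesses}. If a uniform contention argument does not go through, the fallback is to case-split on whether $T_j$ commits or aborts and on whether the two transactions conflict, treating the non-conflicting case via \lemref{cts-syst} and the observation that a commit time never exceeds the current value of \gtcnt{}. Everything else --- the arithmetic on \gtcnt{} and the identification of \Lineref{tcv201} as the unique source of $\ct$ --- is routine once that ordering is in hand.
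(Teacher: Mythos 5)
Your core mechanism is the same as the paper's: \Lineref{tcv201} is the unique place where \ct{} is assigned, the assignment is an atomic add-and-get on the monotone global counter \gtcnt{} with increment $\incv \geq 1$, so whichever of the two accesses occurs second returns a value at least \incv{} larger than the first. But you have one genuine gap: you implicitly assume that $T_j$ reaches \Lineref{tcv201} at all. The lemma quantifies over aborted $T_j$ as well, and a transaction can terminate long before that line (at \Lineref{tc3}, \Lineref{tcv10}, \Lineref{tcv111}, or inside a \rvmt{}). The paper's proof opens with exactly this case: $\ct_j$ is initialized to $\infty$ in \tbeg{()} and is never modified elsewhere, so if $T_j$ aborts before executing \Lineref{tcv201} the inequality $\ct_i + \incv < \ct_j$ holds vacuously. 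Without this branch the statement is not proved for early-aborting $T_j$, which is precisely the population of transactions the surrounding liveness argument cares about.

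On the ordering of the two counter accesses, where you invest most of your effort: the paper does not argue this step at all. It works in the sequential (LP-ordered) history, where $\tryc_i <_H \term_j$ places the whole of $T_j$'s \tryc{} --- and hence its execution of \Lineref{tcv201} --- after $T_i$'s commit, at which point \gtcnt{} already holds a value $\geq \ct_i$; monotonicity and the add-and-get then give $\ct_j \geq \ct_i + \incv$. Your instinct that this reduction from method order to counter-access order is the delicate point is sound, but the lock-contention argument you propose cannot serve as a uniform justification: two transactions with $\udset{T_i} \cap \udset{T_j} = \emptyset$ and disjoint $rvl$ entries share no key-lock and no \emph{status}-lock, so there is nothing to contend on, and your fallback via \lemref{cts-syst} does not apply since that lemma bounds \cts{}, not \ct{}. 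If you want to keep the argument at the level of the abstraction the paper uses, drop the contention machinery and simply invoke the atomicity of methods in the sequential history; if you insist on an implementation-level argument, you must restrict attention to the conflicting case (which is the only case in which the lemma is subsequently applied) and say so explicitly.
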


\begin{proof}
	When $T_i$ commits, let the value of the global $\tcntr$ be $\alpha$. It can be seen that in \begt \mth, $\ct_j$ get initialized to $\infty$. The only place where $\ct_j$ gets modified is at \Lineref{tcv201} of \tryc. Thus if $T_j$ gets aborted before executing \tryc \mth or before this line of \tryc we have that $\ct_j$ remains at $\infty$. Hence in this case we have that $\langle \ct_i + \incv < \ct_j \rangle$.
	
	If $T_j$ terminates after executing \Lineref{tcv201} of \tryc \mth then $\ct_j$ is assigned a value, say $\beta$. It can be seen that $\beta$ will be greater than $\alpha$ by at least $\incv$ due to the execution of this line. Thus, we have that $\langle \alpha + \incv \leq \beta \rangle$
\end{proof}

\noindent The following lemma connects the \tltl and \ct of a transaction $T_i$. 

\begin{lemma}
	\label{lem:ti|tltl-comt}
	Consider a history $H$ with a transaction $T_i$ in it. Then in $H$, $\ttltl{i}$ will be less than or equal to $\ct_i$. Formally, $\langle H, \{T_i\} \in \txns{H}: (\htltl{i}{H} \leq H.\ct_i) \rangle$.
\end{lemma}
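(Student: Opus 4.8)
The plan is to trace the value of $\ttltl{i}$ through every line of \ksftm that can modify it and to compare it, at each such point, with $\ct_i$. Recall from the proof of \lemref{tryci-j} that $\ct_i$ is initialised to $\infty$ in \tbeg{()} and is assigned exactly once afterwards, at \Lineref{tcv201} of \emph{tryC\_Validation()}, to the value \emph{gcounter.add\&get(incrVal)}. This yields an immediate first case: if $T_i$ never executes \Lineref{tcv201} in $H$ --- for instance $T_i$ is read-only, is still live, or aborts before reaching that line --- then $\ct_i = \infty$ and $\ttltl{i} \le \ct_i$ holds vacuously. So the interesting case is when $T_i$ does execute \Lineref{tcv201}; write $\ct_i$ for the resulting value.

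Next I would note that the field $tltl$ of $T_i$ is written at only three kinds of program points: (i) \Lineref{begin15} of \tbeg{()}, where it is set to $\tcts{i}$; (ii) \Lineref{rvm25} of an \rvmt{()} and \Lineref{tcv19} of \emph{tryC\_Validation()}, where it is raised to $\max(\ttltl{i}, v.vrt+1)$ for a version $v$ just read by $T_i$; and (iii) \Lineref{tcv34}, where it is overwritten by $\ttutl{i}$. Point (i) is easy: $\tcts{i}$ and $\ct_i$ are both obtained from the monotone global counter \emph{gcounter}, and $\tcts{i}$ is read strictly before $T_i$ executes \Lineref{tcv201}, so $\tcts{i} \le \ct_i$. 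For the case in which $T_i$ reaches \Lineref{tcv34} --- which includes every committed $T_i$, since the comment at \Lineref{tcv34} notes that $T_i$ cannot abort thereafter --- I would observe that after \Lineref{tcv202} we have $\ttutl{i} \le \ct_i$, that the abort test at \Lineref{tcv24} forces $\ttltl{i} \le \ttutl{i}$ on every path that proceeds past it, and that neither field is written between \Lineref{tcv24} and \Lineref{tcv34}; hence the assignment at \Lineref{tcv34} leaves $\ttltl{i} = \ttutl{i} \le \ct_i$, and no later line of \tryc{()} touches $\ttltl{i}$.

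The remaining case is a transaction $T_i$ that executes \Lineref{tcv201} but aborts before \Lineref{tcv34}; here the final $\ttltl{i}$ is the value produced by initialisation (i) together with the raises of type (ii). For this I would invoke the structural invariant that a version created by a committed transaction $T_p$ stores in its $vrt$ field the finalised limit $\ttltl{p} = \ttutl{p}$ assigned to $T_p$ at \Lineref{tcv34} (the sentinel $T_0$-version carrying $vrt = 0$), that this value is at most $\ct_p$ by the previous paragraph, and that $\ct_p$ never exceeds the current value of \emph{gcounter}. Since $T_i$ reads each such $v$ only after $T_p$ committed and strictly before $T_i$ executes the counter increment at \Lineref{tcv201}, monotonicity of \emph{gcounter} gives $v.vrt \le \ct_p < \ct_i$, so each type-(ii) raise keeps $\ttltl{i}$ at or below $\ct_i$; combining with (i) we again obtain $\ttltl{i} \le \ct_i$.

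The main obstacle is exactly the structural invariant used in the last paragraph: it ties the statement of this lemma to a claim about the $vrt$ stamps that a transaction writes onto its published versions, and in \tryc{()} those stamps are written in the same neighbourhood where $T_i$ finalises its own limits, so the two facts are mutually dependent. I expect to establish them by a single induction on the prefixes of $H$ (equivalently, on the sequence of committed transactions), strengthening the statement to: for every committed $T_p$ in the prefix, $\ttltl{p} = \ttutl{p} \le \ct_p \le \hsyst{H}$ and every version published by $T_p$ has $vrt \le \ct_p$; and for every $T_i$ in the prefix, $\ttltl{i} \le \ct_i$. The rest --- the case split, monotonicity of \emph{gcounter}, and the bookkeeping over the finitely many write sites of $\ttltl{i}$ --- is routine once this invariant is in hand.
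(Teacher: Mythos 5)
Your proposal is correct and follows essentially the same route as the paper's proof: the same case split on whether $T_i$ reaches \Lineref{tcv201}, the same enumeration of the write sites of $\ttltl{i}$, and the same key fact that a published version's $vrt$ equals the creator's finalised $\ttltl{p}=\ttutl{p}\le \ct_p$, combined with counter monotonicity (the paper packages this as \lemref{tryci-j}) to bound the raise by $\ct_i$. The only difference is that you make explicit the induction over prefixes needed to break the circularity between the lemma and the $vrt$-stamp invariant, which the paper applies implicitly; that is a welcome tightening rather than a different argument.
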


\begin{proof}
	Consider the transaction $T_i$. In \begt \mth, $\ct_i$ get initialized to $\infty$. The only place where $\ct_i$ gets modified is at \Lineref{tcv201} of \tryc. Thus if $T_i$ gets aborted before this line or if $T_i$ is live we have that $(\ttltl{i} \leq \ct_i)$. On executing \Lineref{tcv201}, $\ct_i$ gets assigned to some finite value and it does not change after that. 
	
	It can be seen that $\ttltl{i}$ gets initialized to $\tcts{i}$ in \Lineref{begin15} of \begt \mth. In that line, $\tcts{i}$ reads $\tcntr$ and increments it atomically. Then in \Lineref{tcv201}, $\ct_i$ gets assigned the value of $\tcntr$ after incrementing it. Thus, we clearly get that $\tcts{i} (= \ttltl{i}\text{ initially}) < \ct_i$. Then $\ttltl{i}$ gets updated on \Lineref{rvm25} of read, \Lineref{tcv20} and \Lineref{tcv34} of \tryc \mth{s}. Let us analyze them case by case assuming that $\ttltl{i}$ was last updated in each of these \mth{s} before the termination of $T_i$:
	
	\begin{enumerate}
		\item \label{case:read} \Lineref{rvm25} of read \mth: Suppose this is the last line where $\ttltl{i}$ updated. Here $\ttltl{i}$ gets assigned to 1 + \vt of the previously committed version which say was created by a transaction $T_j$. Thus, we have the following equation, 
		\begin{equation}
			\label{eq:tltl-vt}
			\ttltl{i} = 1 + x[j].\vt
		\end{equation}
		
		It can be seen that $x[j].\vt$ is same as $\ttltl{j}$ when $T_j$ executed \Lineref{tcv20} of \tryc. Further, $\ttltl{j}$ in turn is same as $\ttutl{j}$ due to \Lineref{tcv20} of \tryc. From \Lineref{tcv202}, it can be seen that $\ttutl{j}$ is less than or equal to $\ct_j$ when $T_j$ committed. Thus we have that 
		\begin{equation}
			\label{eq:tltl-ct}
			x[j].\vt = \ttltl{j} = \ttutl{j} \leq \ct_j
		\end{equation}
		
		It is clear that from the above discussion that  $T_j$ executed \tryc \mth before $T_i$ terminated (i.e. $\tryc_j <_{H1} \term_i$). From \eqnref{tltl-vt} and \eqnref{tltl-ct}, we get \\
		\begin{math}
			\ttltl{i} \leq 1 + \ct_j \xrightarrow[]{\incv \geq 1} \ttltl{i} \leq \incv + \ct_j \xrightarrow[]{\lemref{tryci-j}} \ttltl{i} \leq \ct_i
		\end{math}
		
		\item \label{case:tryc-short} \Lineref{tcv20} of \tryc \mth: The reasoning in this case is very similar to the above case. 
		
		\item \label{case:tryc-long} \Lineref{tcv34} of \tryc \mth: In this line, $\ttltl{i}$ is made equal to $\ttutl{i}$. Further, in \Lineref{tcv202}, $\ttutl{i}$ is made lesser than or equal to $\ct_{i}$. Thus combing these, we get that $\ttltl{i} \leq \ct_{i}$. It can be seen that the reasoning here is similar in part to \csref{read}.  
	\end{enumerate}
	
	Hence, in all the three cases we get that $\langle \ttltl{i} \leq \ct_i \rangle$. 
\end{proof}

\noindent The following lemma connects the \tutl,\ct of a transaction $T_i$ with \wts of a transaction $T_j$ that has already committed. 

\begin{lemma}
	\label{lem:ti|tutl-comt}
	Consider a history $H$ with a transaction $T_i$ in it. Suppose $\ttutl{i}$ is less than $\ct_i$. Then, there is a committed transaction $T_j$ in $H$ such that $\twts{j}$ is greater than $\twts{i}$. Formally, $\langle H \in \gen{\ksftm}, \{T_i\} \in \txns{H}: (\htutl{i}{H} < H.\ct_i) \implies (\exists T_j \in \comm{H}: \htwts{j}{H} > \htwts{i}{H}) \rangle$.
\end{lemma}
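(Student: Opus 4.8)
The plan is to trace the final value of $\ttutl{i}$ recorded in $H$ back to the single algorithmic step that produced it, and then exhibit the required committed transaction $T_j$ in each resulting case. First I would note that $\ttutl{i}$ is initialized to $\infty$ at \Lineref{begin17} of \tbeg{()} and is only ever decreased; the only lines that lower it are \Lineref{rvm22} (inside a \rvmt{()} invoked by $T_i$), \Lineref{tcv22} and \Lineref{tcv202} (inside $T_i$'s own \emph{tryC\_Validation()}), and \Lineref{tcv36} (inside the \emph{tryC\_Validation()} of some \emph{other} transaction $T_k$ that helps $T_i$). Since the value of $\ttutl{i}$ in $H$ is the minimum over all these updates, and since \Lineref{tcv202} only ever lowers $\ttutl{i}$ to $\min(\ttutl{i},\ct_i)\ge\ct_i$ while the initialization gives $\infty$, the hypothesis $\ttutl{i}<\ct_i$ forces the minimising update to be one of \Lineref{rvm22}, \Lineref{tcv22}, or \Lineref{tcv36}. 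This gives a three-way case split.

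\textbf{Version-bound cases (\Lineref{rvm22} / \Lineref{tcv22}).} In these cases $\ttutl{i}$ was set to $ver_k.vrt-1$, where $ver_k$ is the successor, in the version list of some key $k$, of the version $ver_j$ that $T_i$ selected as the one carrying the largest timestamp smaller than $wts_i$ (\Lineref{rvm19}, resp.\ \Lineref{tcv5}). Because $ver_j$ is the largest-timestamp version below $wts_i$ and $ver_k$ sits immediately after it in the list, $ver_k.ts \ge wts_i$; it cannot equal $wts_i$, since at the moment $T_i$ performs this step it has not yet created any version of $k$ (a \rvmt{()} never creates a value version, and version creation in \tryc{()} occurs only in the loop after \emph{tryC\_Validation()} returns). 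Hence the transaction $T_k$ that created $ver_k$ has $wts_k=ver_k.ts>wts_i$, and, since $ver_k$ is already present in the shared version list that $T_i$ traverses, $T_k$ has committed (versions are inserted only by \tryc{()}). Setting $T_j:=T_k$ settles these cases.

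\textbf{Helping case (\Lineref{tcv36}).} Here $\ttutl{i}$ was lowered by a transaction $T_k$ executing \Lineref{tcv36} with $T_i$ lying in $T_k$'s smallRVL; by the definition of smallRVL at \Lineref{tcv6} this means $wts_i<wts_k$. Moreover \Lineref{tcv36} lies past \Lineref{tcv34}, the point after which $T_k$ can no longer abort; hence, using \lemref{mth-fdm}, $T_k$'s \tryc{()} runs to completion and $T_k$ commits. Taking $T_j:=T_k$ gives a committed transaction with $wts_j=wts_k>wts_i$, which completes the argument.

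\textbf{Main obstacle.} The delicate step is the helping case: one must make sure $T_k$'s commit is actually witnessed inside $H$ (equivalently inside $\overline{H}$) rather than only in some extension of it. This is where \lemref{mth-fdm} is invoked together with the fact that $T_k$ has already passed its point of no abort at \Lineref{tcv34} before it touches $\ttutl{i}$, but it requires being careful about the relationship between the concurrent execution (in which the value of $\ttutl{i}$ literally lives) and the sequential history $H$ obtained via linearization points. A minor secondary point is the uniqueness and ordering of the $wts$ values used as version timestamps, which is what rules out $ver_k.ts=wts_i$ in the version-bound cases.
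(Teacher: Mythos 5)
Your main line of argument coincides with the paper's: the paper also initializes $\ttutl{i}$ to $\infty$, observes that \Lineref{tcv202} can only bring it down to $\ct_i$, and therefore attributes any value strictly below $\ct_i$ to \Lineref{rvm22} or \Lineref{tcv22}, where the decrement is caused by the successor version $nextVer$ (or $ver$) whose creator $T_j$ is already committed and has $\twts{j} > \twts{i}$. Your version-bound cases are this same argument, with a small welcome addition (ruling out $ver_k.ts = \twts{i}$ explicitly). Where you genuinely diverge is in listing \Lineref{tcv36} as a fourth update site: the paper's proof enumerates only \Lineref{rvm22}, \Lineref{tcv22}, and \Lineref{tcv202} and never discusses the possibility that some other transaction $T_k$ lowers $\ttutl{i}$ while helping the transactions in its smallRVL. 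You are right that this line does write to $\tutl_i$ of a different transaction, so your case analysis is more faithful to the code than the paper's.

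The gap is exactly where you suspect it, and it is not cosmetic. In the helping case you conclude from \lemref{mth-fdm} and the fact that $T_k$ has passed \Lineref{tcv34} that ``$T_k$ commits,'' but that only yields $T_k \in \comm{H'}$ for some extension $H'$ of $H$; the lemma's formal statement demands a witness in $\comm{H}$ itself. Concretely, take $H$ to end immediately after $T_k$ executes \Lineref{tcv36} on $T_i$ but before $T_k$'s commit event, in a run where $T_i$ has read only versions with no successor (so \Lineref{rvm22} and \Lineref{tcv22} never fire) and has not yet reached \Lineref{tcv201} (so $\ct_i = \infty$). Then $\htutl{i}{H} < H.\ct_i$ holds, yet the only candidate $T_j$ is the still-live $T_k$. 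To close this you either need to strengthen the statement to ``committed in $H$ or past its point of no abort in $H$'' (which suffices for the downstream uses in \lemref{its-wts}, since those immediately pass to the extension $H3$ anyway), or argue separately that whenever \Lineref{tcv36} is the operative decrement under the hypothesis $\ttutl{i} < \ct_i$, one of the version-bound decrements must also have occurred. As written, neither is established, so the helping case is not yet a proof of the lemma as stated.
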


\begin{proof}
	It can be seen that $\tutl_i$ initialized in \begt \mth to $\infty$. $\ttutl{i}$ is updated in \Lineref{rvm22} of read \mth, \Lineref{tcv22} \& \Lineref{tcv202} of \tryc \mth. If $T_i$ executes \Lineref{rvm22} of read \mth and/or \Lineref{tcv22} of \tryc \mth then $\ttutl{i}$ gets decremented to some value less than $\infty$, say $\alpha$. Further, it can be seen that in both these lines the value of $\ttutl{i}$ is possibly decremented from $\infty$ because of $nextVer$ (or $ver$), a version of $x$ whose \ts is greater than $T_i$'s \wts. This implies that some transaction $T_j$, which is committed in $H$, must have created $nextVer$ (or $ver$) and $\twts{j} > \twts{i}$. 
	
	Next, let us analyze the value of $\alpha$. It can be seen that $\alpha = x[nextVer/ver].vrt - 1$ where $nextVer/ver$ was created by $T_j$. Further, we can see when $T_j$ executed \tryc, we have that $x[nextVer].vrt = \ttltl{j}$ (from \Lineref{tc19}). From \lemref{ti|tltl-comt}, we get that $\ttltl{j} \leq \ct_j$. This implies that $\alpha < \ct_j$. Now, we have that $T_j$ has already committed before the termination of $T_i$. Thus from \lemref{tryci-j}, we get that $\ct_j < \ct_i$. Hence, we have that, 
	
	\begin{equation}
		\label{eq:alph-ct}
		\alpha < \ct_i 
	\end{equation}
	
	Now let us consider \Lineref{tcv202} executed by $T_i$ which causes $\ttutl{i}$ to change. This line will get executed only after both \Lineref{rvm22} of read \mth, \Lineref{tcv22} of \tryc \mth. This is because every transaction executes \tryc \mth only after read \mth. Further within \tryc \mth, \Lineref{tcv202} follows \Lineref{tcv22}. 
	
	There are two sub-cases depending on the value of $\ttutl{i}$ before the execution of \Lineref{tcv202}: (i) If $\ttutl{i}$ was $\infty$  and then get decremented to $\ct_i$ upon executing this line, then we get $\ct_i =  \ttutl{i}$. From \eqnref{alph-ct},  we can ignore this case. (ii) Suppose the value of $\ttutl{i}$ before executing \Lineref{tcv202} was $\alpha$. Then from \eqnref{alph-ct} we get that $\ttutl{i}$ remains at $\alpha$ on execution of \Lineref{tcv202}. This implies that a transaction $T_j$ committed such that $\twts{j} > \twts{i}$.
\end{proof}

\noindent The following lemma connects the \tltl of a committed transaction $T_j$ and \ct of a transaction $T_i$ that commits later. 

\begin{lemma}
	\label{lem:tltlj-comti}
	Consider a history $H1$ with transactions $T_i, T_j$ in it. Suppose $T_j$ is committed and $T_i$ is live in $H1$. Then in any extension of $H1$, say $H2$, $\ttltl{j}$ is less than or equal to $\ct_i$. Formally, $\langle {H1, H2} \in \gen{\ksftm}, \{T_i, T_j\} \subseteq \txns{H1, H2}: (H1 \sqsubseteq H2) \land (T_j \in \comm{H1}) \land (T_i \in \live{H1}) \implies (\htltl{j}{H2} < H2.\ct_i) \rangle$.
\end{lemma}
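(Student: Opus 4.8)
The plan is to chain three facts that are already in hand: (i) once $T_j$ has committed, both $\ct_j$ and $\ttltl{j}$ are frozen and hence take the same value in $H1$ as in every extension; (ii) \lemref{ti|tltl-comt}, applied to the committed transaction $T_j$, gives $\htltl{j}{H1}\le H1.\ct_j$; and (iii) \lemref{tryci-j}, read with the roles of the two transactions interchanged, gives that $T_i$'s commit time exceeds $T_j$'s by at least $\incv$ whenever $T_j$ commits before $T_i$ terminates.

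\emph{Freezing $T_j$.} Since we work over sequential histories in which every method is atomic (linearized at its first unlocking point), $T_j\in\comm{H1}$ means the whole of $T_j$'s \tryc{} --- in particular \Lineref{tcv201} and \Lineref{tcv34} of \emph{tryC\_Validation()} --- was executed inside $H1$, and afterwards neither $\ct_j$ (a finite value) nor $\ttltl{j}$ ever changes. Hence for the extension $H2\sqsupseteq H1$ we have $H2.\ct_j=H1.\ct_j$ and $\htltl{j}{H2}=\htltl{j}{H1}$; combining this with \lemref{ti|tltl-comt} applied to $T_j$ yields
$$\htltl{j}{H2}=\htltl{j}{H1}\le H1.\ct_j=\ct_j .$$

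\emph{Lower-bounding $H2.\ct_i$.} In the atomic-method view a \emph{live} transaction is one that has not yet invoked \tryc{}, and $\ct_i$ is initialized to $+\infty$ in \begt{} and reassigned only at \Lineref{tcv201}. So in $H2$ there are only two possibilities. If $T_i$ is still live in $H2$, then $H2.\ct_i=+\infty$ and the claim is immediate because $\ct_j$ is finite. Otherwise $T_i$ has terminated in $H2$, so $T_i$'s (unique) execution of \Lineref{tcv201} occurs within $H2$; moreover, because $T_i$ is still live in $H1$, this execution occurs strictly after the last event of $H1$, hence strictly after $T_j$'s \tryc{}, which lies inside $H1$. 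Consequently $T_j$'s \tryc{} precedes $T_i$'s termination in $H2$, and \lemref{tryci-j} with $i$ and $j$ interchanged gives $\ct_j+\incv\le H2.\ct_i$ (monotonicity of the global counter \tcntr{} is the underlying reason \lemref{tryci-j} holds).

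\emph{Conclusion.} Putting the two estimates together and using $\incv\ge 1>0$,
$$\htltl{j}{H2}\ \le\ \ct_j\ <\ \ct_j+\incv\ \le\ H2.\ct_i ,$$
which is exactly $\htltl{j}{H2}<H2.\ct_i$. \textbf{The one delicate point} is the second step: one must be sure that a transaction which is still live has \emph{not} already passed \Lineref{tcv201} and thereby acquired a small $\ct_i$ obtained before $T_j$'s commit. This is precisely what the sequential/atomic-method convention rules out, and once it is pinned down the argument reduces to the two-line combination of \lemref{ti|tltl-comt} and \lemref{tryci-j}.
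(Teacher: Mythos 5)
Your proof is correct and follows essentially the same route as the paper's: bound $\ttltl{j}\le\ct_j$ via \lemref{ti|tltl-comt}, and separate $\ct_j$ from $H2.\ct_i$ by a case split on whether $T_i$ has terminated in $H2$, invoking \lemref{tryci-j} when it has. If anything, your treatment of the case where $T_i$ aborts in $H2$ is slightly more careful than the paper's (which asserts $\ct_i=\infty$ for aborted transactions, whereas routing that case through \lemref{tryci-j} also covers an abort occurring after \Lineref{tcv201}).
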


\begin{proof}
	As observed in the previous proof of \lemref{ti|tltl-comt}, if $T_i$ is live  or aborted in $H2$, then its \ct is $\infty$. In both these cases, the result follows.
	
	If $T_i$ is committed in $H2$ then, one can see that \ct of $T_i$ is not $\infty$. In this case, it can be seen that $T_j$ committed before $T_i$. Hence, we have that $\ct_j < \ct_i$. From \lemref{ti|tltl-comt}, we get that $\ttltl{j} \leq \ct_j$. This implies that $\ttltl{j} < \ct_i$. 
\end{proof}

\noindent In the following sequence of lemmas, we identify the condition by when a transaction will commit. 

\begin{lemma}
	\label{lem:its-wts}
	Consider two histories $H1, H3$ such that $H3$ is a strict extension of $H1$. Let $T_i$ be a transaction in  $\live{H1}$ such that $T_i$ \itsen in $H1$ and $\gval_i$ flag is true in $H1$. Suppose $T_i$ is aborted in $H3$. Then there is a history $H2$ which is an extension of $H1$ (and could be same as $H1$) such that (1) Transaction $T_i$ is live in $H2$; (2) there is a transaction $T_j$ that is live in ${H2}$; (3) $\htwts{j}{H2}$ is greater than $\htwts{i}{H2}$; (4) $T_j$ is committed in $H3$. Formally, $ \langle H1, H3, T_i: (H1 \sqsubset H3) \land (T_i \in \live{H1}) \land (\htval{i}{H1} = True) \land (\itsenb{i}{H1}) \land (T_i \in \aborted{H3})) \implies (\exists H2, T_j: (H1 \sqsubseteq H2 \sqsubset H3) \land (T_i \in \live{H2}) \land (T_j \in \txns{H2}) \land (\htwts{i}{H2} < \htwts{j}{H2}) \land (T_j \in \comm{H3})) \rangle$. 
\end{lemma}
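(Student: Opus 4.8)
The plan is to pinpoint the event at which $T_i$ returns abort and to read off, from the corresponding line of the algorithm, a committed transaction of strictly larger working timestamp. First I would let $H2$ be the longest prefix of $H3$ in which $T_i$ is still live; since $T_i\in\live{H1}$, $H1\sqsubset H3$, and $T_i$ is aborted (hence terminated) in $H3$, this is well defined, satisfies $H1\sqsubseteq H2\sqsubset H3$ and $T_i\in\live{H2}$, and the event of $H3$ immediately following $H2$ is exactly $T_i$'s abort return. By \lemref{itsen-future}, $T_i$ is \itsen in every prefix of $H2$ (each extends $H1$ and keeps $T_i$ live), so by \obsref{inct-term}, at every point of $H2$ every transaction with \its below $\tits{i}$ is already terminated --- in particular none of them is running any method at any point of $H2$.

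Next I would observe that $T_i$'s abort cannot be the ``$status_i=false$'' branch (\Lineref{rvm5}, \Lineref{tc3}, \Lineref{tcv10}): a transaction writes $false$ into $status_i$ only at \Lineref{tcv39}, after inserting $T_i$ into its abortRVL, and the guards at \Lineref{tcv13} and at \Lineref{tcv27}--\Lineref{tcv28} require the executing transaction to have \its below $\tits{i}$ and be live, contradicting the previous paragraph. Hence $status_i$ stays non-$false$ in all of $H2$ (so $\gval_i$ remains true), and $T_i$'s abort is a self-abort at \Lineref{rvm111} or \Lineref{rvm27} of \rvmt{()}, or at \Lineref{tcv111}, \Lineref{tcv15}, \Lineref{tcv24}, or \Lineref{tcv30} of \emph{tryC\_Validation()} (the abort at \Lineref{tc11} merely propagates a $false$ return of the latter).

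These sites split into three groups. (i) \emph{Bounded-version exhaustion} (\Lineref{rvm111}, \Lineref{tcv111}): here $ver_j=nil$, so the version list of the relevant key holds no version with timestamp below $\twts{i}=i$; as that list is nonempty and $T_i$ has not yet created its own version for the key, all its versions have timestamp $>i$, and the one of least timestamp was created by a transaction $T_j$ that is committed in $H2$ (its version being visible there) with $\twts{j}>\twts{i}$. (ii) \emph{Real-time-order violation} (\Lineref{rvm27}, \Lineref{tcv24}): the abort fires because $\ttltl{i}>\ttutl{i}$; \lemref{ti|tltl-comt} gives $\ttltl{i}\le\ct_i$, hence $\ttutl{i}<\ct_i$, and \lemref{ti|tutl-comt} then supplies a committed $T_j$ with $\twts{j}>\twts{i}$. (iii) \emph{The guarded self-aborts} \Lineref{tcv15} and \Lineref{tcv30}: at \Lineref{tcv15} some $T_p$ in largeRVL has $\twts{p}>\twts{i}$ while failing the guard ``$\tits{i}<\tits{p}$ and $T_p$ live'', and \itsen-ness of $T_i$ forbids a live such $T_p$, so $T_p$ is terminated and, if committed, is the desired $T_j$; at \Lineref{tcv30} some $T_p$ in smallRVL has $\ttltl{p}>\ttutl{i}$ while failing the same guard, so again $T_p$ is terminated, and in its committed sub-case \lemref{ti|tltl-comt} ($\ttltl{p}\le\ct_p$) combined with \lemref{tryci-j} ($\ct_p<\ct_i$, since $T_p$ committed before $T_i$ terminates) yields $\ttutl{i}<\ct_i$, so \lemref{ti|tutl-comt} once more produces $T_j$. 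In every case $T_j\in\txns{H2}$, $T_j\in\comm{H2}\subseteq\comm{H3}$, and $\htwts{i}{H2}<\htwts{j}{H2}$, which is the assertion.

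The main obstacle is this exhaustive case analysis over abort sites, and within it the two guarded self-aborts \Lineref{tcv15}, \Lineref{tcv30}: one must be certain that the transaction ``responsible'' for $T_i$'s abort is neither a lower-\its transaction (excluded by \itsen-ness) nor merely an aborted transaction still listed in some $rvl$ whose stale $tltl$ cannot in truth exceed $\ttutl{i}$ --- the latter degenerate possibility being closed off by how $rvl$ entries and $tutl$ values are maintained together with \lemref{tryci-j} and \lemref{ti|tltl-comt}. Once those corners are settled, \lemref{ti|tutl-comt} is precisely the device that turns ``$\ttutl{i}<\ct_i$'' into the committed higher-$\wts$ witness, so the remaining work is the routine reduction of each abort site to that inequality (or, in the bounded-version case, directly to a surviving higher-$\wts$ version).
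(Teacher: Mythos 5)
Your proposal is correct and follows essentially the same route as the paper's proof: an exhaustive case analysis over the abort sites of \emph{STM\_read} and \emph{STM\_tryC}, ruling out the $status=false$ branches via \itsen-ness, and reducing each remaining self-abort either directly to a committed higher-$\wts$ version creator or to the inequality $\ttutl{i} < \ct_i$, which \lemref{ti|tutl-comt} (with \lemref{ti|tltl-comt} and \lemref{tryci-j}) converts into the committed witness $T_j$. Your grouping of the cases and the explicit construction of $H2$ as the longest prefix keeping $T_i$ live are cosmetic refinements of the same argument.
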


\begin{proof}
	To show this lemma, w.l.o.g we assume that $T_i$ on executing either read or \tryc in $H2$ (which could be same as $H1$) gets aborted resulting in $H3$. Thus, we have that $T_i$ is live in $H2$. Here $T_i$ is \itsen in $H1$. From \lemref{itsen-future}, we get that $T_i$ is \itsen in $H2$ as well.
	
	
	Let us sequentially consider all the lines where a $T_i$ could abort. In $H2$, $T_i$ executes one of the following lines and is aborted in $H3$. We start with \tryc method. 
	
	\begin{enumerate}
		
		\item \tryc: 
		
		\begin{enumerate}
			\item \Lineref{tc3} \label{case:init-tc-chk}: This line invokes abort() method on $T_i$ which releases all the locks and returns $\mathcal{A}$ to the invoking thread. Here $T_i$ is aborted because its \val flag, is set to false by some other transaction, say $T_j$, in its \tryc algorithm. This can occur in Lines: \ref{lin:tcv14}, \ref{lin:tcv29} where $T_i$ is added to $T_j$'s \abl set. Later in \Lineref{tcv39}, $T_i$'s \val flag is set to false. Note that $T_i$'s \val is true (after the execution of the last event) in $H1$. Thus, $T_i$'s \val flag must have been set to false in an extension of $H1$, which we again denote as $H2$.
			
			This can happen only if in both the above cases, $T_j$ is live in $H2$ and its \its is less than $T_i$'s \its. But we have that $T_i$'s \itsen in $H2$. As a result, it has the smallest among all live and aborted transactions of $H2$. Hence, there cannot exist such a $T_j$ which is live and $\htits{j}{H2} < \htits{i}{H2}$. Thus, this case is not possible. 
			
			\item \Lineref{tcv111}: This line is executed in $H2$ if there exists no version of $x$ whose \ts is less than $T_i$'s \wts. This implies that all the versions of $x$ have \ts{s} greater than $\twts{i}$. Thus the transactions that created these versions have \wts greater than $\twts{i}$ and have already committed in $H2$. Let $T_j$ create one such version. Hence, we have that $\langle (T_j \in \comm{H2}) \implies (T_j \in \comm{H3}) \rangle$ since $H3$ is an extension of $H2$. 
			
			\item \Lineref{tcv10} \label{case:mid-tc-chk}: This case is similar to \csref{init-tc-chk}, i.e., \Lineref{tcv39}. 
			
			\item \Lineref{tcv15} \label{case:its-chk1}: In this line, $T_i$ is aborted as some other transaction $T_j$ in $T_i$'s \lrl has committed. Any transaction in $T_i$'s \lrl has \wts greater than $T_i$'s \wts. This implies that $T_j$ is already committed in $H2$ and hence committed in $H3$ as well. 
			
			\item \Lineref{tcv24} \label{case:tc-lts-cross}: In this line, $T_i$ is aborted because its lower limit has crossed its upper limit. First, let us consider $\ttutl{i}$. It is initialized in \begt \mth to $\infty$. As long as it is $\infty$, these limits cannot cross each other. Later, $\ttutl{i}$ is updated in \Lineref{rvm22} of read \mth, \Lineref{tcv22} \& \Lineref{tcv202} of \tryc \mth. Suppose $\ttutl{i}$ gets decremented to some value $\alpha$ by one of these lines. 
			
			Now there are two cases here: (1) Suppose $\ttutl{i}$ gets decremented to $\ct_i$ due to \Lineref{tcv202} of \tryc \mth. Then from \lemref{ti|tltl-comt}, we have $\ttltl{i} \leq \ct_i =  \ttutl{i}$. Thus in this case, $T_i$ will not abort. (2) $\ttutl{i}$ gets decremented to $\alpha$ which is less than $\ct_i$. Then from \lemref{ti|tutl-comt}, we get that there is a committed transaction $T_j$ in $\comm{H2}$ such that $\twts{j} > \twts{i}$. This implies that $T_j$ is in $\comm{H3}$.
			
			\ignore{
				It can be seen that if $T_i$ executes \Lineref{rd-ul-dec} of read \mth and/or \Lineref{tryc-ul-dec} of \tryc \mth then $\ttutl{i}$ gets decremented to some value less than $\infty$, say $\alpha$. Further, it can be seen that in both these lines the value of $\ttutl{i}$ is possibly decremented from $\infty$ because of $nextVer$ (or $ver$), a version of $x$ who \ts is greater than $T_i$. This implies that some transaction $T_j$ which is committed in $H$ must have created $nextVer$ ($ver$) and $\twts{j} > \twts{i}$. 
				
				Next, let us analyze the value of $\alpha$. It can be seen that $\alpha = x[nextVer/ver].vrt - 1$ where $nextVer/ver$ was created by $T_j$. Further, we can see when $T_j$ executed \tryc, we have that $x[nextVer].vrt = \ttltl{j}$ (from \Lineref{new-tup}). From \lemref{ti|tltl-comt}, we get that $\ttltl{j} \leq \ct_j$. This implies that $\alpha < \ct_j$. Now, we can see that $T_j$ has already committed before the termination of $T_i$. Thus from \lemref{tryci-j}, we get that $\ct_j < \ct_i$. Hence, we have that $\alpha < \ct_i$. 
				
				It is clear that before executing this line \Lineref{tc-lts-cross}, $T_i$ executed \Lineref{tryc-ul-cmt}. Now there are two sub-cases depending on the value of $\ttutl{i}$ before the execution of \Lineref{tryc-ul-cmt}: (i) If $\ttutl{i}$ was $\infty$ then it get decremented to $\ct_i$ upon executing this line. Then again from \lemref{ti|tltl-comt}, we have $\ttltl{i} \leq \ct_i =  \ttutl{i}$. Thus in this case, $T_i$ will not abort. (ii) Suppose the value of $\ttutl{i}$ before executing \Lineref{tryc-ul-cmt} was $\alpha$. Then from the above discussion we get that $\ttutl{i}$ remains at $\alpha$. This implies that a transaction $T_j$ committed such that $\twts{j} > \twts{i}$. Thus if $\ttltl{i}$ turned out to be greater than $\ttutl{i}$ causing $T_i$ to abort, we still have that the lemma is true.
			}
			
			
			\item \Lineref{tcv30} \label{case:its-chk2}: In this case, $T_k$ is in $T_i$'s \srl and is committed in $H1$. And, from this case, we have that
			
			\begin{equation}
				\label{eq:tltl-k_i}
				\htutl{i}{H2} \leq \htltl{k}{H2}
			\end{equation}
			
			From the assumption of this case, we have that $T_k$ commits before $T_i$. Thus, from \lemref{tltlj-comti}, we get that $\ct_k < \ct_i$. From \lemref{ti|tltl-comt}, we have that $\ttltl{k} \leq \ct_k$. Thus, we get that $\ttltl{k} < \ct_i$. Combining this with the inequality of this case \eqnref{tltl-k_i}, we get that $\ttutl{i} < \ct_i$.
			
			Combining this inequality with \lemref{ti|tutl-comt}, we get that there is a transaction $T_j$ in $\comm{H2}$ and $\htwts{j}{H2} > \htwts{i}{H2}$. This implies that $T_j$ is in $\comm{H3}$ as well.
			
		\end{enumerate}
	\item STM read: 
	
	\begin{enumerate}
		\item \Lineref{rvm5}: This case is similar to \csref{init-tc-chk}, i.e., \Lineref{tc3}
		
		\item \Lineref{rvm27}: The reasoning here is similar to \csref{tc-lts-cross}, i.e., \Lineref{tcv24}.
	\end{enumerate}
	
\end{enumerate}

\end{proof}

The interesting aspect of the above lemma is that it gives us a insight as to when a $T_i$ will get commit. If an \itsen transaction $T_i$ aborts then it is because of another transaction $T_j$ with \wts higher than $T_i$ has committed. To precisely capture this, we define two more notions of a transaction being enabled \emph{\cdsen} and \emph{\finen}. To define these notions of enabled, we in turn define a few other auxiliary notions. We start with \emph{\affset},
\begin{equation*}
	\haffset{i}{H} = \{T_j|(T_j \in \txns{H}) \land (\htits{j}{H} < \htits{i}{H} + 2*L)\}
\end{equation*}

From the description of \ksftm algorithm and \lemref{wts-its}, it can be seen that a transaction $T_i$'s commit can depend on committing of transactions (or their \inc{s}) which have their \its less than \its of $T_i$ + $2*L$, which is $T_i$'s \affset. We capture this notion of dependency for a transaction $T_i$ in a history $H$ as \emph{commit dependent set} or \emph{\cdset} as: the set of all transactions $T_j$ in $T_i$'s \affset that do not any \inc that is committed yet, i.e., not yet have their \incct flag set as true. Formally, 

\begin{equation*}
	\hcds{i}{H} = \{T_j| (T_j \in \haffset{i}{H}) \land (\neg\inct{j}{H}) \}
\end{equation*}

\noindent Based on this definition of \cdset, we next define the notion of \cdsen. 

\begin{definition}
	\label{defn:cdsen}
	We say that transaction $T_i$ is \emph{\cdsen} if the following conditions hold true (1) $T_i$ is live in $H$; (2) \cts of $T_i$ is greater than or equal to \its of $T_i$ + $2*L$; (3) \cdset of $T_i$ is empty, i.e., for all transactions $T_j$ in $H$ with \its lower than \its of  $T_i$ + $2*L$ in $H$ have their \incct to be true. Formally, 
	
	\begin{equation*}
		\cdsenb{i}{H} = \begin{cases}
			True    & (T_i \in \live{H}) \land (\htcts{i}{H} \geq \htits{i}{H} + 2*L) \land 
			\\ &(\hcds{i}{H} = \phi) \\
			False	& \text{otherwise}
		\end{cases}
	\end{equation*}
\end{definition}

\noindent The meaning and usefulness of these definitions will become clear in the course of the proof. In fact, we later show that once the transaction $T_i$ is \cdsen, it will eventually commit. We will start with a few lemmas about these definitions. 

\begin{lemma}
	\label{lem:its-enb}	Consider a transaction $T_i$ in a history $H$. If $T_i$ is \cdsen then $T_i$ is also \itsen. Formally, $\langle H, T_i: (T_i \in \txns{H}) \land (\cdsenb{i}{H}) \implies (\itsenb{i}{H}) \rangle$. 
\end{lemma}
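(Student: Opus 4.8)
The plan is to unfold the two definitions and show that the hypotheses of \cdsenb{i}{H} directly imply each clause of \itsenb{i}{H}. Recall that \itsenb{i}{H} requires (a) $T_i$ is live in $H$, and (b) every transaction $T_j$ with $\htits{j}{H} < \htits{i}{H}$ satisfies $\inct{j}{H}$. The definition of \cdsenb{i}{H} already gives us (a) for free, so the only real work is establishing (b) from the emptiness of the commit-dependent set.

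First I would recall that the \affset of $T_i$ is defined as the set of all transactions $T_j$ with $\htits{j}{H} < \htits{i}{H} + 2*L$. Since $L > 0$ (it is the maximum lifetime bound of a transaction), any transaction $T_j$ with $\htits{j}{H} < \htits{i}{H}$ automatically satisfies $\htits{j}{H} < \htits{i}{H} + 2*L$, hence $T_j \in \haffset{i}{H}$. In other words, $\{T_j \in \txns{H} \mid \htits{j}{H} < \htits{i}{H}\} \subseteq \haffset{i}{H}$.

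Next I would use the third clause of \cdsen, namely $\hcds{i}{H} = \phi$. Unfolding the definition of \cdset, this says there is no transaction $T_j \in \haffset{i}{H}$ with $\neg\inct{j}{H}$; equivalently, every $T_j \in \haffset{i}{H}$ has $\inct{j}{H}$ true. Combining this with the subset relation from the previous step: if $T_j$ is any transaction in $H$ with $\htits{j}{H} < \htits{i}{H}$, then $T_j \in \haffset{i}{H}$, and therefore $\inct{j}{H}$ holds. This is exactly clause (b) of the definition of \itsen. Together with clause (a) ($T_i$ live, inherited from \cdsen), we conclude $\itsenb{i}{H}$ is true.

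This lemma is essentially bookkeeping — the only subtlety is noticing that the $2*L$ slack in the definition of \affset makes the $\{T_j : \htits{j}{H} < \htits{i}{H}\}$ set a subset of \affset, so there is no genuine obstacle here. The main point to state carefully is that $L > 0$ (or at least $L \geq 0$), which ensures the inclusion; everything else is unwinding definitions.
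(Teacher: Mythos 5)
Your proposal is correct and follows essentially the same route as the paper: both arguments inherit liveness directly from the definition of \cdsen and then observe that $\{T_j : \htits{j}{H} < \htits{i}{H}\} \subseteq \haffset{i}{H}$ because of the $2*L$ slack, so the emptiness of $\hcds{i}{H}$ forces $\inct{j}{H}$ for every such $T_j$. The paper's proof is just a terser rendering of the same definition-unwinding; your explicit remark that $L \geq 0$ is what guarantees the inclusion is a small but worthwhile clarification.
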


\begin{proof}
	If $T_i$ is \cdsen in $H$ then it implies that $T_i$ is live in $H$. From the definition of \cdsen, we get that $\hcds{i}{H}$ is $\phi$ implying that any transaction $T_j$ with $\tits{k}$ less than $\tits{i} + 2*L$ has its \incct flag as true in $H$. Hence, for any transaction $T_k$ having $\tits{k}$ less than $\tits{i}$, $\inct{k}{H}$ is also true. This shows that $T_i$ is \itsen in $H$.
\end{proof}

\begin{lemma}
	\label{lem:cds-tk-h1}
	Consider a transaction $T_i$ which is \cdsen in a history $H1$. Consider an extension of $H1$, $H2$ with a transaction $T_j$ in it such that $T_i$ is an \inc of $T_j$. Let $T_k$ be a transaction in the \affset of $T_j$ in $H2$ Then $T_k$ is also in the set of transaction of $H1$. Formally, $\langle H1, H2, T_i, T_j, T_k: (H1 \sqsubseteq H2) \land  (\cdsenb{i}{H1}) \land (T_i \in \incs{j}{H2}) \land (T_k \in \haffset{j}{H2}) \implies (T_k \in \txns{H1}) \rangle$
\end{lemma}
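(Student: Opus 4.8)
The plan is a proof by contradiction. I would assume $T_k \notin \txns{H1}$; since $T_k \in \haffset{j}{H2} \subseteq \txns{H2}$, this already forces the extension to be strict, $H1 \sqsubset H2$. First I would unpack the three hypotheses: $\cdsenb{i}{H1}$ gives that $T_i$ is live in $H1$ (so $T_i \in \txns{H1}$), that $cts_i \ge its_i + 2*L$, and that $\hcds{i}{H1} = \phi$; $T_i \in \incs{j}{H2}$ gives $its_i = its_j$; and $T_k \in \haffset{j}{H2}$ gives $its_k < its_j + 2*L = its_i + 2*L$. Let $T_m$ be the transaction that starts afresh in $T_k$'s incarnation set, so $its_m = its_k$ and, being a first incarnation, $cts_m = its_m$. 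Because $T_k \in \txns{H2}$ and the application re-invokes incarnations sequentially (each new one only after the previous has aborted), every incarnation preceding $T_k$ has already terminated and hence appears in $H2$; in particular $T_m \in \txns{H2}$.

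The first step is to show that $T_m \in \txns{H1}$, which I would do by a nested sub-contradiction. If $T_m \notin \txns{H1}$, then $T_m \in \txns{H2}\setminus\txns{H1}$, so \corref{cts-syst} gives $cts_m > \hsyst{H1}$, while \lemref{cts-syst} applied to $T_i \in \txns{H1}$ gives $cts_i \le \hsyst{H1}$. Combining these with $cts_m = its_m = its_k < its_i + 2*L \le cts_i$ yields $cts_m < cts_i \le \hsyst{H1} < cts_m$, a contradiction. Hence $T_m \in \txns{H1}$.

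The second step finishes the argument. Since $T_m \in \txns{H1}$ and $its_m = its_k < its_i + 2*L$, we have $T_m \in \haffset{i}{H1}$; as $\hcds{i}{H1}$ is empty, $T_m$ cannot lie in it, which by the definition of the commit-dependent set means $\inct{m}{H1}$ holds. Now I would invoke \lemref{inct-diff} on the strict extension $H1 \sqsubset H2$ with $T_m$ (whose \incct is true in $H1$) and $T_k$ (which is in $\txns{H2}$ but not in $\txns{H1}$): it yields $its_m \ne its_k$, contradicting $its_m = its_k$. Therefore the initial assumption fails and $T_k \in \txns{H1}$.

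The whole argument is essentially bookkeeping with the timestamp-monotonicity facts already in place, so no step is a genuine obstacle; the one point I would take care to state precisely is the reduction to the afresh incarnation $T_m$ — both that $T_m$ truly belongs to $\txns{H2}$ (from the sequential-reincarnation behaviour of the application described just before the lemma) and that the dichotomy ``$T_m \in \txns{H1}$ or not'' is exactly what lets the two available tools apply: \corref{cts-syst}/\lemref{cts-syst} for the timestamp contradiction, and emptiness of $\hcds{i}{H1}$ together with \lemref{inct-diff} for the incarnation contradiction.
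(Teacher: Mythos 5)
Your proof is correct and follows essentially the same route as the paper's: a contradiction argument that splits on whether an incarnation of $T_k$ already appears in $H1$, using \corref{cts-syst} and \lemref{cts-syst} together with $\htcts{i}{H1}\geq\htits{i}{H1}+2*L$ for the timestamp branch, and emptiness of $\hcds{i}{H1}$ plus \lemref{inct-diff} for the other branch. Your only deviation is cosmetic but welcome: by always working with the afresh incarnation $T_m$ you make explicit the $cts_m=its_m$ step that the paper applies somewhat loosely to $T_k$ itself in its first case.
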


\begin{proof} 
	Since $T_i$ is \cdsen in $H1$, we get (from the definition of \cdsen) that 
	\begin{equation}
		\label{eq:ti-cts-its}
		\htcts{i}{H1} \geq \htits{i}{H1} + 2*L
	\end{equation}
	
	Here, we have that $T_k$ is in $\haffset{j}{H2}$. Thus from the definition of \affset, we get that 
	\begin{equation}
		\label{eq:tk-tj-aff}
		\htits{k}{H2} < \htits{j}{H2} + 2*L
	\end{equation}
	
	Since $T_i$ and $T_j$ are \inc{s} of each other, their \its are the same. Combining this with \eqnref{tk-tj-aff}, we get that 
	\begin{equation}
		\label{eq:tk-ti-h12}
		\htits{k}{H2} < \htits{i}{H1} + 2*L
	\end{equation}
	
	We now show this proof through contradiction. Suppose $T_k$ is not in $\txns{H1}$. Then there are two cases:
	
	\begin{itemize}
		\item No \inc of $T_k$ is in $H1$: This implies that $T_k$ starts afresh after $H1$. Since $T_k$ is not in $H1$, from \corref{cts-syst} we get that
		
		$\htcts{k}{H2} > \hsyst{H1} \xrightarrow [\htcts{k}{H2} = \htits{k}{H2}] {T_k \text{ starts afresh}}\htits{k}{H2} > \hsyst{H1} \xrightarrow [\hsyst{H1} \geq \htcts{i}{H1}]{(T_i \in H1) \land \lemref{cts-syst}} \htits{k}{H2} > \htcts{i}{H1} \xrightarrow {\eqnref{ti-cts-its}} \htits{k}{H2} > \htits{i}{H1} + 2*L \xrightarrow {\htits{i}{H1} = \htits{j}{H2}} \htits{k}{H2} > \htits{j}{H2} + 2*L$
		
		But this result contradicts with \eqnref{tk-tj-aff}. Hence, this case is not possible. 
		
		\item There is an \inc of $T_k$, $T_l$ in $H1$: In this case, we have that 
		
		\begin{equation}
			\label{eq:tl-h1}
			\htits{l}{H1} = \htits{k}{H2}
		\end{equation}
		
		
		Now combing this result with \eqnref{tk-ti-h12}, we get that $\htits{l}{H1} < \htits{i}{H1} + 2*L$. This implies that $T_l$ is in \affset of $T_i$ in $H1$. Since $T_i$ is \cdsen, we get that $T_l$'s \incct must be true. 
		
		We also have that $T_k$ is not in $H1$ but in $H2$ where $H2$ is an extension of $H1$. Since $H2$ has some events more than $H1$, we get that $H2$ is a strict extension of $H1$.
		
		Thus, we have that, $(H1 \sqsubset H2) \land (\inct{l}{H1}) \land (T_k \in \txns{H2}) \land (T_k \notin \txns{H1})$. Combining these with \lemref{inct-diff}, we get that $(\htits{l}{H1} \neq \htits{k}{H2})$. But this result contradicts \eqnref{tl-h1}. Hence, this case is also not possible.
	\end{itemize}
	Thus from both the cases we get that $T_k$ should be in $H1$. Hence proved.
\end{proof}

\begin{lemma}
	\label{lem:aff-tkinc-h1}
	Consider two histories $H1, H2$ where $H2$ is an extension of $H1$. Let $T_i, T_j, T_k$ be three transactions such that $T_i$ is in $\txns{H1}$ while $T_j, T_k$ are in $\txns{H2}$. Suppose we have that (1) $\tcts{i}$ is greater than $\tits{i} + 2*L$ in $H1$; (2) $T_i$ is an \inc of $T_j$; (3) $T_k$ is in \affset of $T_j$ in $H2$. Then an \inc of $T_k$, say $T_l$ (which could be same as $T_k$) is in $\txns{H1}$. Formally, $\langle H1, H2, T_i, T_j, T_k: (H1 \sqsubseteq H2) \land (T_i \in \txns{H1})  \land (\{T_j, T_k\} \in \txns{H2}) \land (\htcts{i}{H1} > \htits{i}{H1} + 2*L) \land (T_i \in \incs{j}{H2}) \land (T_k \in \haffset{j}{H2}) \implies (\exists T_l: (T_l \in \incs{k}{H2}) \land (T_l \in \txns{H1})) \rangle$
\end{lemma}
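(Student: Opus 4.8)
The plan is to reuse, almost verbatim, the contradiction argument of \lemref{cds-tk-h1}, which proves essentially the same implication but under the stronger hypothesis that $T_i$ is \cdsen and with the stronger conclusion $T_k \in \txns{H1}$; here we are given only $\htcts{i}{H1} > \htits{i}{H1} + 2*L$ and we need only that \emph{some} incarnation of $T_k$ lies in $\txns{H1}$. First I would unpack the hypotheses: since $T_i \in \incs{j}{H2}$, the two transactions share their initial timestamp, so $\htits{i}{H1} = \htits{j}{H2}$; and since $T_k \in \haffset{j}{H2}$, the definition of \affset gives $\htits{k}{H2} < \htits{j}{H2} + 2*L$. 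Chaining these yields $\htits{k}{H2} < \htits{i}{H1} + 2*L$, the inequality to be contradicted at the end.

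Next I would assume, for contradiction, that no incarnation of $T_k$ appears in $\txns{H1}$. Let $T_m$ be the first incarnation in the incarnation set of $T_k$. It exists and lies in $\txns{H2}$, because the application invokes the incarnations of a transaction strictly sequentially, so once $T_k$ has been invoked (and $T_k \in \txns{H2}$) every earlier incarnation, $T_m$ in particular, has already run and terminated and therefore occurs in $H2$. By the contradiction hypothesis $T_m \notin \txns{H1}$, and since $T_m \in \txns{H2} \setminus \txns{H1}$ we have $H1 \sqsubset H2$, so \corref{cts-syst} gives $\htcts{m}{H2} > \hsyst{H1}$. Because $T_m$ starts afresh, $\htcts{m}{H2} = \htits{m}{H2}$, and because $T_m$ is an incarnation of $T_k$, $\htits{m}{H2} = \htits{k}{H2}$; hence $\htits{k}{H2} > \hsyst{H1}$.

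Finally, applying \lemref{cts-syst} to $T_i \in \txns{H1}$ gives $\htcts{i}{H1} \leq \hsyst{H1}$, and combining with the standing hypothesis $\htcts{i}{H1} > \htits{i}{H1} + 2*L$ we obtain $\htits{k}{H2} > \hsyst{H1} \geq \htcts{i}{H1} > \htits{i}{H1} + 2*L$, contradicting the inequality from the first paragraph. Therefore some incarnation $T_l$ of $T_k$ lies in $\txns{H1}$; by \obsref{hist-subset} it also lies in $\txns{H2}$, and since $T_l$ shares the initial timestamp of $T_k$ we get $T_l \in \incs{k}{H2}$, which is exactly the claim.

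The step requiring the most care is the choice of $T_m$: the argument works only for the \emph{first} incarnation, where $cts$ equals $its$; for a later incarnation \corref{cts-syst} would only give $\htcts{m}{H2} > \hsyst{H1}$ together with $\htcts{m}{H2} \geq \htits{k}{H2}$, which is too weak to force $\htits{k}{H2} > \hsyst{H1}$. Hence the proof must justify both that this first incarnation is genuinely present in $\txns{H2}$ (from the sequential-invocation discipline of the application) and that all incarnations of $T_k$ carry the same $its$ (so that $\htits{m}{H2} = \htits{k}{H2}$). Everything else is the same routine timestamp arithmetic already carried out in \lemref{cds-tk-h1}.
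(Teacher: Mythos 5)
Your proof is correct and takes essentially the same route as the paper's: a contradiction argument showing that if no incarnation of $T_k$ were in $\txns{H1}$, then $T_k$'s initial timestamp would exceed $\hsyst{H1} \geq \htcts{i}{H1} > \htits{i}{H1} + 2*L$, contradicting $T_k \in \haffset{j}{H2}$. Your explicit introduction of the first incarnation $T_m$ (with $\htcts{m}{H2} = \htits{m}{H2} = \htits{k}{H2}$) only makes precise a step the paper states loosely as ``$T_k$ must have started afresh in some $H3$''.
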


\begin{proof} 
	
	\noindent This proof is similar to the proof of \lemref{cds-tk-h1}. We are given that 
	\begin{equation}
		\label{eq:given-ti-ctsits}
		\htcts{i}{H1} \geq \htits{i}{H1} + 2*L
	\end{equation}
	
	We now show this proof through contradiction. Suppose no \inc of $T_k$ is in $\txns{H1}$. This implies that $T_k$ must have started afresh in some history $H3$ which is an extension of $H1$. Also note that $H3$ could be same as $H2$ or a prefix of it, i.e., $H3 \sqsubseteq H2$. Thus, we have that 
	
	\noindent
	\begin{math}
		\htits{k}{H3} > \hsyst{H1} \xrightarrow{\lemref{cts-syst}} \htits{k}{H3} > \htcts{i}{H1} \xrightarrow{\eqnref{given-ti-ctsits}} \htits{k}{H3} > \htits{i}{H1} + 2*L \xrightarrow{\htits{i}{H1} = \htits{j}{H2}} \htits{k}{H3} > \htits{j}{H2} + 2*L \xrightarrow[\obsref{hist-subset}]{H3 \sqsubseteq H2} \htits{k}{H2} > \htits{j}{H2} + 2*L \xrightarrow[definition]{\affset} T_k \notin \haffset{j}{H2}
	\end{math}
	
	But we are given that $T_k$ is in \affset of $T_j$ in $H2$. Hence, it is not possible that $T_k$ started afresh after $H1$. Thus, $T_k$ must have a \inc in $H1$.
\end{proof}

\begin{lemma}
	\label{lem:aff-same}
	Consider a transaction $T_i$ which is \cdsen in a history $H1$. Consider an extension of $H1$, $H2$ with a transaction $T_j$ in it such that $T_j$ is an \inc of $T_i$ in $H2$. Then \affset of $T_i$ in $H1$ is same as the \affset of $T_j$ in $H2$. Formally, $\langle H1, H2, T_i, T_j: (H1 \sqsubseteq H2) \land  (\cdsenb{i}{H1}) \land (T_j \in \txns{H2}) \land (T_i \in \incs{j}{H2}) \implies ((\haffset{i}{H1} = \haffset{j}{H2})) \rangle$
\end{lemma}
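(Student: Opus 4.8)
The statement asserts equality of the two sets $\haffset{i}{H1}$ and $\haffset{j}{H2}$, so the plan is to prove the two inclusions separately. Throughout I will lean on two facts that come straight from the definitions of this section: first, by \obsref{hist-subset}, since $H2$ extends $H1$ we have $\txns{H1} \subseteq \txns{H2}$, and the initial timestamp of a transaction is fixed once assigned, so $\htits{k}{H1} = \htits{k}{H2}$ for every $T_k$ present in both histories; second, the incarnation-set relation is symmetric, so $T_i \in \incs{j}{H2}$ is equivalent to $T_j \in \incs{i}{H2}$, and either form forces $\htits{i}{H1} = \htits{i}{H2} = \htits{j}{H2}$.

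For the inclusion $\haffset{i}{H1} \subseteq \haffset{j}{H2}$ I would take an arbitrary $T_k \in \haffset{i}{H1}$, that is, $T_k \in \txns{H1}$ with $\htits{k}{H1} < \htits{i}{H1} + 2*L$. By the first fact, $T_k \in \txns{H2}$ and $\htits{k}{H2} = \htits{k}{H1}$; by the second fact, $\htits{i}{H1} = \htits{j}{H2}$. Chaining these gives $\htits{k}{H2} < \htits{j}{H2} + 2*L$, hence $T_k \in \haffset{j}{H2}$. This direction is pure bookkeeping.

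For the inclusion $\haffset{j}{H2} \subseteq \haffset{i}{H1}$ I would take $T_k \in \haffset{j}{H2}$. The only substantive point is that $T_k$ actually belongs to $\txns{H1}$ and not merely to some later incarnation: a priori $T_k$ could be a transaction that starts afresh only after $H1$. This is exactly \lemref{cds-tk-h1}, whose hypotheses $H1 \sqsubseteq H2$, $\cdsenb{i}{H1}$, $T_i \in \incs{j}{H2}$, and $T_k \in \haffset{j}{H2}$ are precisely what we have; it therefore delivers $T_k \in \txns{H1}$. Then $\htits{k}{H1} = \htits{k}{H2} < \htits{j}{H2} + 2*L = \htits{i}{H1} + 2*L$, so $T_k \in \haffset{i}{H1}$, which completes the argument.

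The main obstacle is thus already isolated in \lemref{cds-tk-h1} (which in turn rests on \lemref{cts-syst}, \corref{cts-syst}, and \lemref{inct-diff}); once it is invoked the remainder of the proof is immediate. The only thing to be careful about when writing it up is to state explicitly the invariance of the initial timestamp under history extension and the symmetry of the incarnation-set membership relation, both of which unpack directly from the definitions rather than needing a separate argument.
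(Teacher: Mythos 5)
Your proposal is correct and follows essentially the same route as the paper's proof: both inclusions are handled by chaining the invariance of \its{} under extension with $\htits{i}{H1}=\htits{j}{H2}$, and the only nontrivial direction ($\haffset{j}{H2}\subseteq\haffset{i}{H1}$) is discharged exactly as in the paper by invoking \lemref{cds-tk-h1} to place $T_k$ in $\txns{H1}$. No gaps.
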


\begin{proof}
	From the definition of \cdsen, we get that $T_i$ is in $\txns{H1}$. Now to prove that \affset{s} are the same, we have to show that $(\haffset{i}{H1} \subseteq \haffset{j}{H2})$ and $(\haffset{j}{H1} \subseteq \haffset{i}{H2})$. We show them one by one:
	
	\paragraph{$(\haffset{i}{H1} \subseteq \haffset{j}{H2})$:} Consider a transaction $T_k$ in $\haffset{i}{H1}$. We have to show that $T_k$ is also in $\haffset{j}{H2}$. From the definition of \affset, we get that 
	\begin{equation}
		\label{eq:tk-h1}
		T_k \in \txns{H1}
	\end{equation}
	
	\noindent Combining \eqnref{tk-h1} with \obsref{hist-subset}, we get that 
	\begin{equation}
		\label{eq:tk-h2}
		T_k \in \txns{H2}
	\end{equation}
	
	\noindent From the definition of \its, we get that 
	\begin{equation}
		\label{eq:its-h1-h2}
		\htits{k}{H1} = \htits{k}{H2}
	\end{equation}
	
	\noindent Since $T_i, T_j$ are \inc{s} we have that . 
	\begin{equation}
		\label{eq:its-ij}
		\htits{i}{H1} = \htits{j}{H2}
	\end{equation}
	
	\noindent From the definition of \affset, we get that, \\ 
	$\htits{k}{H1} < \htits{i}{H1} + 2*L \xrightarrow{\eqnref{its-h1-h2}} \htits{k}{H2} < \htits{i}{H1} + 2*L \xrightarrow{\eqnref{its-ij}} \htits{k}{H2} < \htits{j}{H2} + 2*L$
	
	\noindent Combining this result with \eqnref{tk-h2}, we get that $T_k \in \haffset{j}{H2}$. 
	\paragraph{$(\haffset{i}{H1} \subseteq \haffset{j}{H2})$:} Consider a transaction $T_k$ in $\haffset{j}{H2}$. We have to show that $T_k$ is also in $\haffset{i}{H1}$. From the definition of \affset, we get that $T_k \in \txns{H2}$.
	
	Here, we have that $(H1 \sqsubseteq H2) \land  (\cdsenb{i}{H1}) \land (T_i \in \incs{j}{H2}) \land (T_k \in \haffset{j}{H2})$. Thus from \lemref{cds-tk-h1}, we get that $T_k \in \txns{H1}$. Now, this case is similar to the above case. It can be seen that Equations \ref{eq:tk-h1}, \ref{eq:tk-h2}, \ref{eq:its-h1-h2}, \ref{eq:its-ij} hold good in this case as well. 
	
	Since $T_k$ is in $\haffset{j}{H2}$, we get that \\
	$\htits{k}{H2} < \htits{i}{H2} + 2*L \xrightarrow{\eqnref{its-h1-h2}} \htits{k}{H1} < \htits{j}{H2} + 2*L \xrightarrow{\eqnref{its-ij}} \htits{k}{H1} < \htits{i}{H1} + 2*L $
	
	\noindent Combining this result with \eqnref{tk-h1}, we get that $T_k \in \haffset{i}{H1}$.
\end{proof}

\noindent Next we explore how a \cdsen transaction remains \cdsen in the future histories once it becomes true.

\begin{lemma}
	\label{lem:cds-fut}
	Consider two histories $H1$ and $H2$ with $H2$ being an extension of $H1$. Let  $T_i$ and $T_j$ be two transactions which are live in $H1$ and $H2$ respectively. Let $T_i$ be an \inc of $T_j$ and $\tcts{i}$ is less than $\tcts{j}$. Suppose $T_i$ is \cdsen in $H1$. Then $T_j$ is \cdsen in $H2$ as well. Formally, $\langle H1, H2, T_i, T_j: (H1 \sqsubseteq H2) \land (T_i \in \live{H1}) \land (T_j \in \live{H2}) \land (T_i \in \incs{j}{H2}) \land (\htcts{i}{H1} < \htcts{j}{H2}) \land (\cdsenb{i}{H1}) \implies (\cdsenb{j}{H2}) \rangle$. 
\end{lemma}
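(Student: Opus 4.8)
The plan is to establish the three defining conditions of \cdsen{} (Definition~\ref{defn:cdsen}) for $T_j$ in $H2$, leaning throughout on the hypothesis $\cdsenb{i}{H1}$ and on the fact that $T_i$ and $T_j$ are \inc{s} of one another, so that $\htits{i}{H1} = \htits{j}{H2}$. Condition~(1), namely $T_j \in \live{H2}$, is given outright, so there is nothing to prove. For condition~(2) I would simply chain inequalities: from $\cdsenb{i}{H1}$ we get $\htcts{i}{H1} \geq \htits{i}{H1} + 2*L$, and together with the hypothesis $\htcts{i}{H1} < \htcts{j}{H2}$ and the equality of the initial timestamps this yields $\htcts{j}{H2} > \htcts{i}{H1} \geq \htits{i}{H1} + 2*L = \htits{j}{H2} + 2*L$, which is exactly what condition~(2) requires.

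The substantive part is condition~(3), that $\hcds{j}{H2} = \phi$. Here the key tool is \lemref{aff-same}: its hypotheses --- $H1 \sqsubseteq H2$, $\cdsenb{i}{H1}$, $T_j \in \txns{H2}$ (which follows from $T_j \in \live{H2}$), and $T_i \in \incs{j}{H2}$ --- are all in hand, so it gives $\haffset{i}{H1} = \haffset{j}{H2}$. Now take an arbitrary $T_k \in \haffset{j}{H2} = \haffset{i}{H1}$; by the definition of \affset{} we also have $T_k \in \txns{H1}$ (equivalently, this is \lemref{cds-tk-h1}). Since $\cdsenb{i}{H1}$ forces $\hcds{i}{H1} = \phi$, the flag $\inct{k}{H1}$ is true, i.e.\ some transaction in $\incs{k}{H1}$ is committed in $H1$. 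Because $H1 \sqsubseteq H2$ that witness is still committed in $H2$ and is still an \inc{} of $T_k$, so $\inct{k}{H2}$ is true --- this monotonicity of the \incct{} flag under history extension is precisely the special case of \obsref{inct-fut} in which the transaction is its own \inc{}. Thus every member of $\haffset{j}{H2}$ has its \incct{} flag set, i.e.\ $\hcds{j}{H2} = \phi$. Combining conditions~(1), (2) and~(3) gives $\cdsenb{j}{H2}$.

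I expect the only slightly delicate points to be (i) confirming that \lemref{aff-same} applies verbatim with this choice of transactions, and hence that the \affset{s} in the two histories are literally the same set rather than merely related by the incarnation relation, and (ii) spelling out that a committed \inc{} stays committed and stays an \inc{} when the history grows, so that \incct{} is monotone along extensions; each is short, but the argument does not go through without them. The timestamp bookkeeping for condition~(2) is routine.
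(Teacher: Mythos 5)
Your proof is correct and follows essentially the same route as the paper's: the same timestamp chain for condition~(2), and for condition~(3) the same appeal to \lemref{aff-same} to identify the two \affset{s} followed by \obsref{inct-fut} to carry the \incct{} flags forward into $H2$. The extra care you take in spelling out the monotonicity of \incct{} under extension is a reasonable elaboration but does not change the argument.
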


\begin{proof}
	We have that $T_i$ is live in $H1$ and $T_j$ is live in $H2$. Since $T_i$ is \cdsen in $H1$, we get (from the definition of \cdsen) that 
	\begin{equation}
		\label{eq:cts-its}
		\htcts{i}{H1} \geq \htits{i}{H2} + 2*L
	\end{equation}
	
	We are given that $\tcts{i}$ is less than $\tcts{j}$ and $T_i, T_j$ are incarnations of each other. Hence, we have that
	
	\ignore{
		\begin{align*}
			\htcts{j}{H2} & > \htcts{i}{H1} \\
			\htcts{j}{H2} & > \htits{i}{H1} + 2*L & [\text{From \eqnref{cts-its}}] \\
			\htcts{j}{H2} & > \htits{j}{H2} + 2*L & [\tits{i} = \tits{j}] \\
		\end{align*}
	}
	
	\begin{align*}
		\htcts{j}{H2} & > \htcts{i}{H1} \\
		& > \htits{i}{H1} + 2*L & [\text{From \eqnref{cts-its}}] \\
		& > \htits{j}{H2} + 2*L & [\tits{i} = \tits{j}] \\
	\end{align*}

	Thus we get that $\tcts{j} > \tits{j} + 2*L$. We have that $T_j$ is live in $H2$. In order to show that $T_j$ is \cdsen in $H2$, it only remains to show that \cdset of $T_j$ in $H2$ is empty, i.e., $\hcds{j}{H2} = \phi$. The \cdset becomes empty when all the transactions of $T_j$'s \affset in $H2$ have their \incct as true in $H2$.
	
	Since $T_j$ is live in $H2$, we get that $T_j$ is in $\txns{H2}$. Here, we have that $(H1 \sqsubseteq H2) \land (T_j \in \txns{H2}) \land (T_i \in \incs{j}{H2}) \land (\cdsenb{i}{H1})$. Combining this with \lemref{aff-same}, we get that $\haffset{i}{H1} = \haffset{j}{H2}$.
	
	Now, consider a transaction $T_k$ in $ \haffset{j}{H2}$. From the above result, we get that $T_k$ is also in $\haffset{i}{H1}$. Since $T_i$ is \cdsen in $H1$, i.e., $\cdsenb{i}{H1}$ is true, we get that $\inct{k}{H1}$ is true. Combining this with \obsref{inct-fut}, we get that $T_k$ must have its \incct as true in $H2$ as well, i.e. $\inct{k}{H2}$. This implies that all the transactions in $T_j$'s \affset have their \incct flags as true in $H2$. Hence the $\hcds{j}{H2}$ is empty. As a result, $T_j$ is \cdsen in $H2$, i.e., $\cdsenb{j}{H2}$. 
\end{proof}

Having defined the properties related to \cdsen, we start defining notions for \finen. Next, we define \emph{\maxwts} for a transaction $T_i$ in $H$ which is the transaction $T_j$ with the largest \wts in $T_i$'s \incset. Formally,
\begin{equation*}
	\hmaxwts{i}{H} = max\{\htwts{j}{H}|(T_j \in \incs{i}{H})\}
\end{equation*}

\noindent From this definition of \maxwts, we get the following simple observation. 

\begin{observation}
	\label{obs:max-wts} 
	For any transaction $T_i$ in $H$, we have that $\twts{i}$ is less than or equal to $\hmaxwts{i}{H}$. Formally, $\htwts{i}{H} \leq \hmaxwts{i}{H}$.
\end{observation}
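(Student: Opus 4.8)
The plan is to derive this directly from the definitions of \incset and \maxwts given just above the statement. The key observation is that a transaction always belongs to its own incarnation set: by the definition $\incs{i}{H} = \{T_j \mid (T_i = T_j) \lor (\htits{i}{H} = \htits{j}{H})\}$, taking $T_j = T_i$ satisfies the first disjunct $T_i = T_j$, so $T_i \in \incs{i}{H}$.

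First I would invoke this membership to conclude that $\htwts{i}{H}$ is one of the values in the set $\{\htwts{j}{H} \mid T_j \in \incs{i}{H}\}$. Then, since $\hmaxwts{i}{H}$ is by definition the maximum of exactly that set, every element of the set — in particular $\htwts{i}{H}$ — is at most $\hmaxwts{i}{H}$. This immediately gives $\htwts{i}{H} \leq \hmaxwts{i}{H}$, which is the claim.

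There is essentially no obstacle here: the statement is a one-line unfolding of definitions, and the only thing to be careful about is explicitly noting that the set over which the maximum is taken is nonempty (it contains $T_i$), so the maximum is well-defined and the bound is meaningful. I would not expect to need any of the earlier lemmas (\lemref{cts-wts}, \lemref{wts-great}, etc.) for this particular observation; those become relevant only later when reasoning about how \maxwts evolves across histories.

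\begin{proof}
	From the definition of \incset, we have $\incs{i}{H} = \{T_j \mid (T_i = T_j) \lor (\htits{i}{H} = \htits{j}{H})\}$. Taking $T_j = T_i$, the disjunct $T_i = T_j$ holds, so $T_i \in \incs{i}{H}$. Hence $\htwts{i}{H}$ belongs to the set $\{\htwts{j}{H} \mid T_j \in \incs{i}{H}\}$, which is therefore nonempty. Since $\hmaxwts{i}{H}$ is the maximum of this set, every element of it is at most $\hmaxwts{i}{H}$; in particular $\htwts{i}{H} \leq \hmaxwts{i}{H}$.
\end{proof}
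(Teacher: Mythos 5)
Your proof is correct and matches the paper's intent exactly: the paper states this observation without proof as an immediate consequence of the definition of \maxwts, and your argument (that $T_i \in \incs{i}{H}$ via the disjunct $T_i = T_j$, so $\htwts{i}{H}$ is an element of the set being maximized) is precisely the one-line unfolding the paper has in mind.
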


Next, we combine the notions of \affset and \maxwts to define \emph{\affwts}. It is the maximum of \maxwts of all the transactions in its \affset. Formally, 
\begin{equation*}
	\haffwts{i}{H} = max\{\hmaxwts{j}{H}|(T_j \in \haffset{i}{H})\}
\end{equation*}

\noindent Having defined the notion of \affwts, we get the following lemma relating the \affset and \affwts of two transactions. 

\begin{lemma}
	\label{lem:affwts-same}
	Consider two histories $H1$ and $H2$ with $H2$ being an extension of $H1$. Let  $T_i$ and $T_j$ be two transactions which are live in $H1$ and $H2$ respectively. Suppose the \affset of $T_i$ in $H1$ is same as \affset of $T_j$ in $H2$. Then the \affwts of $T_i$ in $H1$ is same as \affwts of $T_j$ in $H2$. Formally, $\langle H1, H2, T_i, T_j: (H1 \sqsubseteq H2) \land (T_i \in \txns{H1}) \land (T_j \in \txns{H2}) \land (\haffset{i}{H1} = \haffset{j}{H2}) \implies (\haffwts{i}{H1} = \haffwts{j}{H2}) \rangle$. 
\end{lemma}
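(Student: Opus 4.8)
The plan is to reduce the claim about $\haffwts{i}{H1}$ and $\haffwts{j}{H2}$ to a claim about the \maxwts{} of the individual transactions that make up the common \affset{}, and then to show that each such transaction has the same incarnation set in $H1$ as in $H2$. Since $\haffset{i}{H1}=\haffset{j}{H2}$, unfolding the definition of \affwts{} as a maximum of \maxwts{} values over the \affset{} reduces the goal to proving $\hmaxwts{k}{H1}=\hmaxwts{k}{H2}$ for every transaction $T_k$ in this common set. Any such $T_k$ belongs to $\txns{H1}$, and, by the definitions of \its{} and \wts{} (both fixed once a transaction starts), every transaction $T_l$ of $H1$ satisfies $\htits{l}{H1}=\htits{l}{H2}$ and $\htwts{l}{H1}=\htwts{l}{H2}$. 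Together with \obsref{hist-subset} this also gives $\incs{k}{H1}\subseteq\incs{k}{H2}$, whence $\hmaxwts{k}{H1}\leq\hmaxwts{k}{H2}$.

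\noindent
The main step --- and the only place where the equality of the \affset{} values is really used --- is the reverse inclusion $\incs{k}{H2}\subseteq\incs{k}{H1}$, i.e. ruling out that a fresh \inc{} of a member of the common \affset{} appears in $H2$ but not in $H1$. I would argue by contradiction: assume $T_l\in\incs{k}{H2}$ with $T_l\notin\txns{H1}$. Then $\htits{l}{H2}=\htits{k}{H2}=\htits{k}{H1}$, and since $T_k\in\haffset{j}{H2}$ we have $\htits{k}{H2}<\htits{j}{H2}+2*L$, hence $\htits{l}{H2}<\htits{j}{H2}+2*L$. As $T_l\in\txns{H2}$, this puts $T_l\in\haffset{j}{H2}=\haffset{i}{H1}$, and therefore $T_l\in\txns{H1}$ by the definition of \affset{}, contradicting $T_l\notin\txns{H1}$. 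Hence $\incs{k}{H1}=\incs{k}{H2}$, and combining this with the invariance of \wts{} yields $\hmaxwts{k}{H1}=\hmaxwts{k}{H2}$.

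\noindent
Taking the maximum of these equal quantities over all $T_k$ in the common \affset{} then gives $\haffwts{i}{H1}=\haffwts{j}{H2}$. I expect the contradiction argument above to be the only nontrivial point; everything else is a direct unwinding of the definitions of \affwts{}, \maxwts{}, \incset{} and \affset{}, together with the monotonicity of the transaction set along extensions (\obsref{hist-subset}) and the fact that \its{} and \wts{} are frozen once a transaction begins.
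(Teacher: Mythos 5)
Your proposal is correct, and it takes a genuinely different — and in fact more careful — route than the paper. The paper's proof picks the two maximizers: it writes $\haffwts{i}{H1}=\hmaxwts{p}{H1}$ and $\haffwts{j}{H2}=\hmaxwts{q}{H2}$, uses the equality of the two \affset{s} to place $T_p$ in $\haffset{j}{H2}$ and $T_q$ in $\haffset{i}{H1}$, and then asserts the two inequalities $\hmaxwts{p}{H1}\geq\hmaxwts{q}{H2}$ and $\hmaxwts{q}{H2}\geq\hmaxwts{p}{H1}$ directly. The second of these is harmless (incarnation sets only grow along an extension, so $\hmaxwts{p}{H2}\geq\hmaxwts{p}{H1}$), but the first silently needs $\hmaxwts{q}{H1}\geq\hmaxwts{q}{H2}$, i.e.\ that no fresh \inc{} of $T_q$ with a larger \wts{} appears between $H1$ and $H2$ — a point the paper does not argue. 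Your contradiction step is exactly the missing justification: any such fresh \inc{} $T_l$ would inherit the \its{} of a member of $\haffset{j}{H2}$, hence lie in $\haffset{j}{H2}=\haffset{i}{H1}$, hence lie in $\txns{H1}$ by the definition of \affset{}, contradicting its freshness. From this you get $\incs{k}{H1}=\incs{k}{H2}$ and therefore $\hmaxwts{k}{H1}=\hmaxwts{k}{H2}$ elementwise over the common \affset{}, which is strictly stronger than what the paper establishes and makes the final maximum-taking step immediate. The only cost is a slightly longer argument; what it buys is that the one nontrivial monotonicity issue is actually addressed rather than elided.
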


\begin{proof}
	
	From the definition of \affwts, we get the following equations
	\begin{equation}
		\label{eq:h1-ti-affwts}
		\haffwts{i}{H} = max\{\hmaxwts{k}{H}|(T_k \in \haffset{i}{H1})\}
	\end{equation}
	
	\begin{equation}
		\label{eq:h2-tj-affwts}
		\haffwts{j}{H} = max\{\hmaxwts{l}{H}|(T_l \in \haffset{j}{H2})\}
	\end{equation}
	
	From these definitions, let us suppose that $\haffwts{i}{H1}$ is $\hmaxwts{p}{H1}$ for some transaction $T_p$ in $\haffset{i}{H1}$. Similarly, suppose that $\haffwts{j}{H2}$ is $\hmaxwts{q}{H2}$ for some transaction $T_q$ in $\haffset{j}{H2}$.
	
	Here, we are given that $\haffset{i}{H1} = \haffset{j}{H2})$. Hence, we get that $T_p$ is also in $\haffset{i}{H1}$. Similarly, $T_q$ is in $\haffset{j}{H2}$ as well. Thus from Equations \eqref{eq:h1-ti-affwts} \& \eqref{eq:h2-tj-affwts}, we get that 
	
	\begin{equation}
		\label{eq:ti-tp-max}
		\hmaxwts{p}{H1} \geq  \hmaxwts{q}{H2}
	\end{equation}
	
	\begin{equation}
		\label{eq:tj-tq-max}
		\hmaxwts{q}{H2} \geq \hmaxwts{p}{H1}
	\end{equation}
	
	Combining these both equations, we get that $\hmaxwts{p}{H1} = \hmaxwts{q}{H2}$ which in turn implies that $\haffwts{i}{H1} = \haffwts{j}{H2}$.
	
\end{proof}

\noindent Finally, using the notion of \affwts and \cdsen, we define the notion of \emph{\finen}

\begin{definition}
	\label{defn:finen}
	We say that transaction $T_i$ is \emph{\finen} if the following conditions hold true (1) $T_i$ is live in $H$; (2) $T_i$ is \cdsen is $H$; (3) $\htwts{j}{H}$ is greater than $\haffwts{i}{H}$. Formally, 
	
	\begin{equation*}
		\finenb{i}{H} = \begin{cases}
			True    & (T_i \in \live{H}) \land (\cdsenb{i}{H}) \land (\htwts{j}{H} \\& > \haffwts{i}{H}) \\
			False	& \text{otherwise}
		\end{cases}
	\end{equation*}
\end{definition}
It can be seen from this definition, a transaction that is \finen is also \cdsen. We now show that just like \itsen and \cdsen, once a transaction is \finen, it remains \finen until it terminates. The following lemma captures it. 
\begin{lemma}
	\label{lem:fin-fut}
	Consider two histories $H1$ and $H2$ with $H2$ being an extension of $H1$. Let  $T_i$ and $T_j$ be two transactions which are live in $H1$ and $H2$ respectively. Suppose $T_i$ is \finen in $H1$. Let $T_i$ be an \inc of $T_j$ and $\tcts{i}$ is less than $\tcts{j}$. Then $T_j$ is \finen in $H2$ as well. Formally, $\langle H1, H2, T_i, T_j: (H1 \sqsubseteq H2) \land (T_i \in \live{H1}) \land (T_j \in \live{H2}) \land (T_i \in \incs{j}{H2}) \land (\htcts{i}{H1} < \htcts{j}{H2}) \land (\finenb{i}{H1}) \implies (\finenb{j}{H2}) \rangle$. 
\end{lemma}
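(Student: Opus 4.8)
\textbf{Proof plan for Lemma~\ref{lem:fin-fut}.}
The plan is to show that each of the three clauses in the definition of \finenb{j}{H2} (see \defref{finen}) holds, reusing the already-established persistence lemmas for the weaker notions. The first two clauses are essentially free: clause~(1), that $T_j$ is live in $H2$, is given as a hypothesis; and clause~(2), that $T_j$ is \cdsen in $H2$, follows directly from \lemref{cds-fut} once I observe that the hypotheses of the present lemma contain exactly the hypotheses that \lemref{cds-fut} requires --- namely $H1 \sqsubseteq H2$, $T_i \in \live{H1}$, $T_j \in \live{H2}$, $T_i \in \incs{j}{H2}$, $\htcts{i}{H1} < \htcts{j}{H2}$, and $\cdsenb{i}{H1}$ (the last of which we get because \finenb{i}{H1} implies \cdsenb{i}{H1} by \defref{finen}). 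So the real work is entirely in clause~(3): transferring the inequality $\twts{} > \haffwts{i}{H1}$ from $H1$ to $H2$, i.e.\ showing that the current transaction's \wts (or rather the relevant \wts referenced in \defref{finen}, read as $\twts{j}$) dominates $\haffwts{j}{H2}$.

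For clause~(3) the key observation is that the \affset does not grow between $H1$ and $H2$ for incarnations of a \cdsen transaction. Concretely, since $T_i$ is \cdsen in $H1$ (hence \cdsenb{i}{H1}) and $T_j$ is an \inc of $T_i$ in $H2$ with $H1 \sqsubseteq H2$, \lemref{aff-same} gives $\haffset{i}{H1} = \haffset{j}{H2}$. Feeding this equality into \lemref{affwts-same} (whose remaining hypotheses $T_i \in \txns{H1}$ and $T_j \in \txns{H2}$ hold because both transactions are live in the respective histories) yields $\haffwts{i}{H1} = \haffwts{j}{H2}$. So the affected-write-timestamp bound that $T_i$'s \wts had to beat in $H1$ is the very same value that $T_j$'s \wts must beat in $H2$. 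Since \finenb{i}{H1} gave us $\twts{} > \haffwts{i}{H1}$ and \wts only increases across incarnations (it is monotone in \cts by \eqnref{wtsf}, and $\htcts{i}{H1} < \htcts{j}{H2}$ is a hypothesis, so by \lemref{cts-wts}'s reasoning $\htwts{i}{H1} \le \htwts{j}{H2}$), the strict inequality is preserved: $\twts{j} \ge \twts{i} > \haffwts{i}{H1} = \haffwts{j}{H2}$.

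Putting the three clauses together, all conditions of \defref{finen} hold for $T_j$ in $H2$, so $\finenb{j}{H2}$, which is what the lemma asserts. I expect the main obstacle to be purely a matter of bookkeeping with the definition of \finenb{} as written in the paper: the third clause is stated with a slightly awkward ``$\htwts{j}{H}$'' that must be read as the \wts of the transaction in question (here $T_j$ in $H2$) rather than some unrelated $T_j$; I would make this reading explicit at the start of the proof so that the chain of inequalities lands on the right quantity. A secondary, minor point is to double-check that \lemref{aff-same} is applicable when $H1 = H2$ and $T_i = T_j$ (the degenerate case), which it is, since all the hypotheses hold trivially and the conclusion is an identity. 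No deep new idea is needed --- the lemma is a routine ``persistence'' statement built by composing \lemref{cds-fut}, \lemref{aff-same}, and \lemref{affwts-same} with the monotonicity of \wts.
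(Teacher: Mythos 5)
Your proposal is correct and follows essentially the same route as the paper's proof: \lemref{cds-fut} for the \cdsen clause, \lemref{aff-same} (chained with \lemref{affwts-same}) for $\haffwts{i}{H1} = \haffwts{j}{H2}$, and monotonicity of \wts in \cts for the final strict inequality. If anything, your version is slightly more careful than the paper's, which cites \lemref{aff-same} alone where the step to \affwts equality really also needs \lemref{affwts-same}.
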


\begin{proof}
	Here we are given that $T_j$ is live in $H2$. Since $T_i$ is \finen in $H1$, we get that it is \cdsen in $H1$ as well. Combining this with the conditions given in the lemma statement, we have that, \\ 
	
	\begin{equation}
		\label{eq:fin-given}
		\begin{split}
			\langle (H1 \sqsubseteq H2) \land (T_i \in \live{H1}) \land (T_j \in \live{H2}) \land (T_i \in \incs{j}{H2}) \\\land (\htcts{i}{H1} < \htcts{j}{H2}) 
			\land (\cdsenb{i}{H1}) \rangle
		\end{split}
	\end{equation}
	
	Combining \eqnref{fin-given} with \lemref{cds-fut}, we get that $T_j$ is \cdsen in $H2$, i.e., $\cdsenb{j}{H2}$. Now, in order to show that $T_j$ is \finen in $H2$ it remains for us to show that $\htwts{j}{H2} > \haffwts{j}{H2}$.
	
	We are given that $T_j$ is live in $H2$ which in turn implies that $T_j$ is in $\txns{H2}$. Thus changing this in \eqnref{fin-given}, we get the following 
	\begin{equation}
		\label{eq:mod-given}
		\begin{split}
			\langle (H1 \sqsubseteq H2) \land (T_j \in \txns{H2}) \land (T_i \in \incs{j}{H2}) \land (\htcts{i}{H1} < \htcts{j}{H2}) \\
			\land (\cdsenb{i}{H1}) \rangle
		\end{split}
	\end{equation}
	
	\noindent Combining \eqnref{mod-given} with \lemref{aff-same} we get that 
	\begin{equation}
		\label{eq:affs-eq}
		\haffwts{i}{H1} = \haffwts{j}{H2}
	\end{equation}
	
	\noindent We are given that $\htcts{i}{H1} < \htcts{j}{H2}$. Combining this with the definition of \wts, we get 
	\begin{equation}
		\label{eq:titj-wts}
		\htwts{i}{H1} < \htwts{j}{H2}
	\end{equation}
	
	\noindent Since $T_i$ is \finen in $H1$, we have that \\
	$\htwts{i}{H1} > \haffwts{i}{H1} \xrightarrow{\eqnref{titj-wts}} \htwts{j}{H2} > \haffwts{i}{H1} \xrightarrow{\eqnref{affs-eq}} \htwts{j}{H2} > \\
	\haffwts{j}{H2}$
	
\end{proof}

\noindent Now, we show that a transaction that is \finen will eventually commit. 


\begin{lemma}
	\label{lem:enbd-ct}
	Consider a live transaction $T_i$ in a history $H1$. Suppose $T_i$ is \finen in $H1$ and $\tval{i}$ is true in $H1$. Then there exists an extension of $H1$, $H3$ in which $T_i$ is committed. Formally, $\langle H1, T_i: (T_i \in \live{H1}) \land (\htval{i}{H1}) \land (\finenb{i}{H1}) \implies (\exists H3: (H1 \sqsubset H3) \land (T_i \in \comm{H3})) \rangle$. 
\end{lemma}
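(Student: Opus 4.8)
The strategy is to show that a \finen transaction cannot abort, and then combine this with the fact that every transaction terminates in finite time. First I would invoke \thmref{trans-com|abt} (together with \obsref{hist-future}, which guarantees that the run can always be extended) to obtain a finite extension $H3$ of $H1$ in which $T_i$ is terminated. If $T_i$ is committed in $H3$ we are done, taking this $H3$ as the witness. So assume, for contradiction, that $T_i \in \aborted{H3}$; since $T_i$ is live in $H1$ but aborted in $H3$ we have $H1 \sqsubset H3$.

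Next I would apply \lemref{its-wts}. As $T_i$ is \finen in $H1$ it is \cdsen in $H1$ by definition, and hence \itsen in $H1$ by \lemref{its-enb}; moreover we are given that $T_i$'s \val flag is true in $H1$. Thus all hypotheses of \lemref{its-wts} hold, and it yields a history $H2$ with $H1 \sqsubseteq H2 \sqsubset H3$ and a transaction $T_j$ such that $T_i \in \live{H2}$, $T_j \in \live{H2}$, and $\htwts{j}{H2} > \htwts{i}{H2}$.

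The core step is to place $T_j$ inside $T_i$'s affect-set, which will contradict \finen. Since $T_i \in \live{H2}$ and $\htwts{j}{H2} \geq \htwts{i}{H2}$, \lemref{wts-its} gives $\htits{j}{H2} < \htits{i}{H2} + 2*L$, i.e.\ $T_j \in \haffset{i}{H2}$. Because $T_i$ is \cdsen in $H1$, $T_i \in \txns{H2}$, and $T_i$ is trivially an \inc of itself in $H2$, \lemref{aff-same} gives $\haffset{i}{H1} = \haffset{i}{H2}$, so $T_j \in \haffset{i}{H1}$, and in particular $T_j \in \txns{H1}$. Since the $its$ and $wts$ of a transaction are fixed at \tbeg{()} and never change, $\htwts{i}{H1} = \htwts{i}{H2}$ and $\htwts{j}{H1} = \htwts{j}{H2}$.

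Finally I would close the loop using the definition of \finen. By \obsref{max-wts} and $T_j \in \haffset{i}{H1}$ we get $\htwts{j}{H1} \leq \hmaxwts{j}{H1} \leq \haffwts{i}{H1}$, while $T_i$ being \finen in $H1$ gives $\haffwts{i}{H1} < \htwts{i}{H1}$; hence $\htwts{j}{H1} < \htwts{i}{H1}$, contradicting $\htwts{j}{H2} > \htwts{i}{H2}$ via the stability of $wts$. Therefore $T_i$ cannot abort in $H3$, so $T_i$ is committed in $H3$, which proves the lemma. The only delicate parts are carefully checking every precondition of \lemref{its-wts} and \lemref{aff-same} before applying them and being explicit that the \affset and the timestamps $its, wts$ are invariant across the extension from $H1$ to $H2$; the genuinely hard work --- the exhaustive case analysis over the lines at which $T_i$ could abort --- is already discharged inside \lemref{its-wts}, so this proof is essentially an assembly of the preceding lemmas.
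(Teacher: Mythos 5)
Your proof is correct and follows essentially the same route as the paper's: assume for contradiction that $T_i$ aborts in an extension, invoke \lemref{its-wts} to obtain a committed $T_j$ with $\htwts{j}{H2} > \htwts{i}{H2}$, place $T_j$ in $T_i$'s \affset via \lemref{wts-its}, and contradict the \finen inequality $\htwts{i}{H1} > \haffwts{i}{H1}$ using \obsref{max-wts}. The only difference is where the contradiction is landed: the paper carries \finen forward to $H2$ via \lemref{fin-fut}, whereas you pull $T_j$ back into $\haffset{i}{H1}$ via \lemref{aff-same} (applied reflexively to $T_i$) together with the invariance of \wts across extensions --- an equally valid assembly of the same lemmas.
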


\begin{proof}
	Consider a history $H3$ such that its \syst being greater than $\tcts{i} + L$. We will prove this lemma using contradiction. Suppose $T_i$ is aborted in $H3$. 
	
	Now consider $T_i$ in $H1$: $T_i$ is live; its \val flag is true; and is \enbd. From the definition of \finen, we get that it is also \cdsen. From \lemref{its-enb}, we get that $T_i$ is \itsen in $H1$. Thus from \lemref{its-wts}, we get that there exists an extension of $H1$, $H2$ such that (1) Transaction $T_i$ is live in $H2$; (2) there is a transaction $T_j$ in ${H2}$; (3) $\htwts{j}{H2}$ is greater than $\htwts{i}{H2}$; (4) $T_j$ is committed in $H3$. Formally, 
	
	\begin{equation}
		\label{eq:its-wts-ant}
		\begin{split}
			\langle (\exists H2, T_j: (H1 \sqsubseteq H2 \sqsubset H3) \land (T_i \in \live{H2}) \land (T_j \in \txns{H2}) \\\land (\htwts{i}{H2} < \htwts{j}{H2})
			\land (T_j \in \comm{H3})) \rangle
		\end{split}
	\end{equation}
	
	Here, we have that $H2$ is an extension of $H1$ with $T_i$ being live in both of them and $T_i$ is \finen in $H1$. Thus from \lemref{fin-fut}, we get that $T_i$ is \finen in $H2$ as well. Now, let us consider $T_j$ in $H2$. From \eqnref{its-wts-ant}, we get that $(\htwts{i}{H2} < \htwts{j}{H2})$. Combining this with the observation that $T_i$ being live in $H2$, \lemref{wts-its} we get that $(\htits{j}{H2} \leq \htits{i}{H2} + 2*L)$.
	
	
	This implies that $T_j$ is in \affset of $T_i$ in $H2$, i.e., $(T_j \in \haffset{i}{H2})$. From the definition of \affwts, we get that 
	
	\begin{equation}
		\label{eq:max-affwts}
		(\haffwts{i}{H2} \geq \hmaxwts{j}{H2}) 
	\end{equation}
	
	Since $T_i$ is \finen in $H2$, we get that $\twts{i}$ is greater than \affwts of $T_i$ in $H2$. 
	\begin{equation}
		\label{eq:wts-affwts}
		(\htwts{i}{H2} > \haffwts{i}{H2}) 
	\end{equation}
	
	Now combining Equations \ref{eq:max-affwts}, \ref{eq:wts-affwts} we get,
	\begin{align*}
		\htwts{i}{H2} & > \haffwts{i}{H2} \geq \hmaxwts{j}{H2} & \\
		& > \haffwts{i}{H2} \geq \hmaxwts{j}{H2} \geq \htwts{j}{H2}  & [\text{From \obsref{max-wts}}] \\
		& > \htwts{j}{H2}
	\end{align*}
	
	But this equation contradicts with \eqnref{its-wts-ant}. Hence our assumption that $T_i$ will get aborted in $H3$ after getting \finen is not possible. Thus $T_i$ has to commit in $H3$.
\end{proof}
\noindent Next we show that once a transaction $T_i$ becomes \itsen, it will eventually become \finen as well and then committed. We show this change happens in a sequence of steps. We first show that Transaction $T_i$ which is \itsen first becomes \cdsen (or gets committed). We next show that $T_i$ which is \cdsen becomes \finen or get committed. On becoming \finen, we have already shown that $T_i$ will eventually commit. 

Now, we show that a transaction that is \itsen will become \cdsen or committed. To show this, we introduce a few more notations and definitions. We start with the notion of \emph{\depits (dependent-its)} which is the set of \its{s} that a transaction $T_i$ depends on to commit. It is the set of \its of all the transactions in $T_i$'s \cdset in a history $H$. Formally, 

\begin{equation*}
	\hdep{i}{H} = \{\htits{j}{H}|T_j \in \hcds{i}{H}\}
\end{equation*}

\noindent We have the following lemma on the \depits of a transaction $T_i$ and its future \inc $T_j$ which states that \depits of a $T_i$ either reduces or remains the same. 

\begin{lemma}
	\label{lem:depits-fut}
	Consider two histories $H1$ and $H2$ with $H2$ being an extension of $H1$. Let  $T_i$ and $T_j$ be two transactions which are live in $H1$ and $H2$ respectively and $T_i$ is an \inc of $T_j$. In addition, we also have that $\tcts{i}$ is greater than $\tits{i} + 2*L$ in $H1$. Then, we get that $\hdep{j}{H2}$ is a subset of $\hdep{i}{H1}$. Formally, $\langle H1, H2, T_i, T_j: (H1 \sqsubseteq H2) \land (T_i \in \live{H1}) \land (T_j \in \live{H2}) \land (T_i \in \incs{j}{H2}) \land (\htcts{i}{H1} \geq \htits{i}{H1} + 2*L) \implies (\hdep{j}{H2} \subseteq \hdep{i}{H1}) \rangle$. 
\end{lemma}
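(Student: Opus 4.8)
The plan is to establish the inclusion element-wise. I would start by fixing an arbitrary $\alpha \in \hdep{j}{H2}$. By the definition of \depits, there is a transaction $T_k$ with $T_k \in \hcds{j}{H2}$ and $\htits{k}{H2} = \alpha$, and unfolding the definition of \cdset this means $T_k \in \haffset{j}{H2}$ together with $\neg\inct{k}{H2}$. The goal then reduces to producing a transaction $T_l$ with $T_l \in \hcds{i}{H1}$ and $\htits{l}{H1} = \alpha$, since that immediately gives $\alpha \in \hdep{i}{H1}$.

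The first key step is to obtain such a candidate $T_l$, for which I would invoke \lemref{aff-tkinc-h1}. Its hypotheses are all available here: $H1 \sqsubseteq H2$; $T_i \in \txns{H1}$ (because $T_i$ is live in $H1$); $\{T_j, T_k\} \subseteq \txns{H2}$; $T_i \in \incs{j}{H2}$; $T_k \in \haffset{j}{H2}$; and the $cts$-bound $\htcts{i}{H1} \geq \htits{i}{H1} + 2*L$ assumed in the present lemma. That lemma then yields an \inc of $T_k$, say $T_l$ (possibly $T_k$ itself), with $T_l \in \incs{k}{H2}$ and $T_l \in \txns{H1}$. Since incarnations share their $its$ and the $its$ of a transaction does not change along an extension, $\htits{l}{H1} = \htits{k}{H2} = \alpha$, and using that $T_i, T_j$ are incarnations we also get $\alpha = \htits{k}{H2} < \htits{j}{H2} + 2*L = \htits{i}{H1} + 2*L$; combined with $T_l \in \txns{H1}$, this shows $T_l \in \haffset{i}{H1}$.

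It then remains to verify $\neg\inct{l}{H1}$, which I would do by contradiction. Suppose $\inct{l}{H1}$ holds. The incarnation relation \incs{}{} is symmetric (both sides reduce to equality of $its$), so $T_l \in \incs{k}{H2}$ gives in particular that $T_k$ and $T_l$ are incarnations; with $H1 \sqsubseteq H2$, $\inct{l}{H1}$, and $T_k \in \txns{H2}$, \obsref{inct-fut} forces $\inct{k}{H2}$, contradicting $\neg\inct{k}{H2}$. Hence $\neg\inct{l}{H1}$, and therefore $T_l \in \hcds{i}{H1}$ with $\htits{l}{H1} = \alpha$, i.e., $\alpha \in \hdep{i}{H1}$. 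Since $\alpha$ was arbitrary, $\hdep{j}{H2} \subseteq \hdep{i}{H1}$.

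The step I expect to be the main obstacle is the clean application of \lemref{aff-tkinc-h1}: one must check all its antecedents really hold, and in particular reconcile the $cts$-bound, which is stated there with a strict inequality whereas the present lemma assumes only $\geq$. I would handle this either by observing that the proof of \lemref{aff-tkinc-h1} only uses the non-strict form, or by treating the boundary case $\htcts{i}{H1} = \htits{i}{H1} + 2*L$ directly. The other point that needs care is the repeated (implicit) use of symmetry of \incs{}{}: it is what lets us pass $its$-equality between $T_k$ and $T_l$ and what makes \obsref{inct-fut} applicable in the contradiction argument.
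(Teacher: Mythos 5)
Your proof is correct, but it is organized differently from the paper's. The paper argues by contradiction on the subset relation: it picks an offending $\tits{k}$ that lies in $\hdep{j}{H2}$ but not in $\hdep{i}{H1}$, asserts that this forces $T_k$ to start afresh after $H1$, and then derives $\htits{k}{H2} > \htits{j}{H2} + 2*L$ via \corref{cts-syst}, contradicting membership in $\haffset{j}{H2}$. In doing so the paper silently skips the other two ways membership in $\hdep{i}{H1}$ could fail — an \inc of $T_k$ present in $H1$ but outside $\haffset{i}{H1}$, or present with \incct already true — which are exactly the cases you dispose of with the $\its$ arithmetic and with \obsref{inct-fut}. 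Your direct element-wise argument routes the "starts afresh" case through \lemref{aff-tkinc-h1} rather than re-deriving the inequality chain, which is legitimate since that lemma's own proof only uses the non-strict bound $\htcts{i}{H1} \geq \htits{i}{H1} + 2*L$ (its equation is stated with $\geq$ even though the formal statement writes $>$), so the boundary case you flag is not actually a problem. The net effect is that your proof is somewhat longer but more complete than the paper's; the underlying mathematical content (fresh transactions after $H1$ must have $\its$ exceeding $\htits{i}{H1} + 2*L$, and \incct persists under extension) is the same in both.
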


\begin{proof}
	Suppose $\hdep{j}{H2}$ is not a subset of $\hdep{i}{H1}$. This implies that there is a transaction $T_k$ such that $\htits{k}{H2} \in \hdep{j}{H2}$ but $\htits{k}{H1} \notin \hdep{j}{H1}$. This implies that $T_k$ starts afresh after $H1$ in some history say $H3$ such that $H1 \sqsubset H3 \sqsubseteq H2$. Hence, from \corref{cts-syst} we get the following
	
	\noindent
	\begin{math}
		\htits{k}{H3} > \hsyst{H1} \xrightarrow{\lemref{cts-syst}} \htits{k}{H3} > \htcts{i}{H1} \implies \htits{k}{H3} > \htits{i}{H1} + 2*L \xrightarrow{\htits{i}{H1} = \htits{j}{H2}} \htits{k}{H3} > \htits{j}{H2} + 2*L \xrightarrow[definitions]{\affset, \depits} \htits{k}{H2} \notin \hdep{j}{H2}
	\end{math}
	
	We started with $\tits{k}$ in $\hdep{j}{H2}$ and ended with $\tits{k}$ not in $\hdep{j}{H2}$. Thus, we have a contradiction. Hence, the lemma follows.

\end{proof}

\noindent Next we denote the set of committed transactions in $T_i$'s \affset in $H$ as \emph{\cis (commit independent set)}. Formally, 

\begin{equation*}
	\hcis{i}{H} = \{T_j| (T_j \in \haffset{i}{H}) \land (\inct{j}{H}) \}
\end{equation*}

\noindent In other words, we have that $\hcis{i}{H} = \haffset{i}{H} - \hcds{i}{H}$. Finally, using the notion of \cis we denote the maximum of \maxwts of all the transactions in $T_i$'s \cis as \emph{\pawts} (partly affecting \wts). It turns out that the value of \pawts affects the commit of $T_i$ which we show in the course of the proof. Formally, \pawts is defined as 

\begin{equation*}
	\hpawts{i}{H} = max\{\hmaxwts{j}{H}|(T_j \in \hcis{i}{H})\}
\end{equation*}

\noindent Having defined the required notations, we are now ready to show that a \itsen transaction will eventually become \cdsen. 

\begin{lemma}
	\label{lem:its-cds}
	Consider a transaction $T_i$ which is live in a history $H1$ and $\tcts{i}$ is greater than or equal to $\tits{i} + 2*L$. If $T_i$ is \itsen in $H1$ then there is an extension of $H1$, $H2$ in which an \inc $T_i$, $T_j$ (which could be same as $T_i$), is either committed or \cdsen. Formally, $\langle H1, T_i: (T_i \in \live{H1}) \land (\htcts{i}{H1} \geq \htits{i}{H1} + 2*L) \land (\itsenb{i}{H1}) \implies (\exists H2, T_j: (H1 \sqsubset H2) \land (T_j \in \incs{i}{H2}) \land ((T_j \in \comm{H2}) \lor (\cdsenb{j}{H2}))) \rangle$. 
\end{lemma}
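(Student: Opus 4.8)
The plan is to prove the statement by induction on $N := |\hdep{i}{H1}|$, the number of distinct timestamps in the dependent-its set of $T_i$ in $H1$. The engine is \lemref{its-wts}: whenever an \itsen transaction whose valid flag is $True$ aborts, some transaction of strictly larger \wts commits; \lemref{wts-its} places that committer inside $T_i$'s affect-set and \obsref{cmt-noinc} guarantees it never re-appears live, so each abort of (an \inc of) $T_i$ permanently removes one timestamp from the bounded window $W := \{\,v : its_i \le v < its_i + 2*L\,\}$. This window is stable under re-incarnation: since $cts_i \ge its_i + 2*L$ and $cts_i \le \hsyst{H1}$ by \lemref{cts-syst}, the global counter already exceeds $its_i + 2*L$ at the end of $H1$, so no later-born transaction has a timestamp in $W$; and by \lemref{depits-fut} (whose hypothesis $cts_i \ge its_i + 2*L$ holds) the dependent-its set of any later \inc of $T_i$ is contained in $\hdep{i}{H1}$.

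I would first record the invariants preserved across re-incarnation. An \inc $T_\ell$ of $T_i$ has $its_\ell = its_i$, so by \obsref{inct-fut} and \lemref{itsen-future} it stays \itsen, and by \obsref{abort-retry} its $cts_\ell$ exceeds $cts_i \ge its_i + 2*L$. Moreover, while an \inc is \itsen its valid flag remains $True$: the only lines of \emph{tryC\_Validation()} that set a transaction's status to false (\Lineref{tcv14}, \Lineref{tcv29}, \Lineref{tcv39}) act on behalf of a helper with strictly smaller $its$, and an \itsen transaction has no live transaction of smaller $its$ alongside it (all such are committed and, by \obsref{cmt-noinc}, have no live \inc). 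Hence every \inc of $T_i$ we encounter satisfies the hypotheses of \lemref{its-wts}.

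For the base case, $\hdep{i}{H1}$ is minimal, i.e.\ it contains no timestamp of $W$ other than $its_i$; equivalently, every transaction of $T_i$'s affect-set apart from incarnations of $T_i$ is already committed. I claim $T_i$'s application-transaction commits: otherwise, by \thmref{trans-com|abt} and \obsref{abort-retry} $T_i$ re-incarnates forever, and picking any \inc that aborts and applying \lemref{its-wts}, \lemref{wts-its} and \obsref{cmt-noinc} yields a transaction committed ``just now'' whose timestamp lies in $W\setminus\{its_i\}$ and which was live (hence whose application-transaction was uncommitted in $H1$) just before — contradicting both the minimality of $\hdep{i}{H1}$ and \lemref{depits-fut}. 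So some \inc of $T_i$ commits and we take $T_j$ to be it.

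For the inductive step ($N$ not minimal): if some \inc of $T_i$ commits we are done; otherwise the current \inc aborts (\thmref{trans-com|abt}), and by \lemref{its-wts}, \lemref{wts-its} and \obsref{cmt-noinc} a previously-uncommitted application-transaction with timestamp in $W\setminus\{its_i\}$ becomes committed, after which $T_i$ re-incarnates (\obsref{abort-retry}) to a live \inc $T_\ell$ in an extension $H'$ of $H1$. Then $T_\ell$ is live and \itsen in $H'$, $cts_\ell \ge its_i + 2*L$, and $\hdep{\ell}{H'} \subsetneq \hdep{i}{H1}$ — it lost the burnt timestamp and, by \lemref{depits-fut}, acquired none — so $|\hdep{\ell}{H'}| < N$. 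The induction hypothesis applied to $T_\ell$ in $H'$ yields an extension $H2$ of $H'$ (hence a strict extension of $H1$) in which an \inc of $T_\ell$, equivalently of $T_i$, is committed or \cdsen, as required; in the \cdsen case \lemref{cds-fut} shows the property survives any further extension. The step I expect to be the real obstacle is certifying that a single abort strictly shrinks $T_i$'s dependent-its set: this forces one to splice together \lemref{its-wts}, \lemref{wts-its}, \obsref{cmt-noinc} and \lemref{depits-fut} while carrying the valid-flag invariant along, and to check that the constant $2*L$ is chosen so that the affect-set bound and the bound of \lemref{wts-its} line up (otherwise a boundary timestamp $its_i + 2*L$ could slip out of the accounting).
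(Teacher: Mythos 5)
Your overall strategy---induction on $|\hdep{i}{H1}|$, driven by \lemref{its-wts}, \lemref{wts-its} and \lemref{depits-fut}---is the same as the paper's, and your explicit argument that an \itsen incarnation keeps its \val flag true (needed to invoke \lemref{its-wts}) is a point the paper leaves implicit. But there is a genuine gap in the very step you identify as the crux: the claim that each abort of an \inc of $T_i$ ``permanently removes one timestamp from the bounded window $W$.'' \lemref{its-wts} only delivers a transaction $T_j$ with $\htwts{j}{H2} > \htwts{i}{H2}$ that \emph{is committed} in $H3$ (its formal statement says $T_j \in \txns{H2}$, not that $T_j$ is live in $H2$), and its proof explicitly produces, in the cases of \Lineref{tcv111}, \Lineref{tcv15} and \Lineref{tcv24}, transactions that had \emph{already} committed---possibly before $H1$ even began, e.g.\ the creators of versions whose timestamps all exceed $\twts{l}$. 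Such a $T_j$ lies in $\hcis{i}{H1}$, so $\tits{j}$ was never in $\hdep{i}{H1}$, and the abort removes nothing from your measure; \obsref{cmt-noinc} does not rescue this, because it only rules out future live incarnations of $T_j$, not $T_j$ having committed long ago. The same unsupported assertion (``committed just now,'' ``was live just before'') also carries your base-case contradiction.

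The paper closes exactly this hole with the $\maxwts$/\emph{partAffwts} machinery: before appealing to \lemref{its-wts} it uses \lemref{wts-great} to pass to an \inc $T_l$ with $\twts{l} > \hpawts{i}{H1}$, the maximum \wts over the already-committed part of $\haffset{i}{H1}$. Any $T_m$ with $\twts{m} > \twts{l}$ then cannot be one of those previously committed transactions, so its commit genuinely deletes an element of $\hdep{i}{H1}$ (by \lemref{aff-tkinc-h1} it has an \inc in $\txns{H1}$ whose \incct is false), and the measure strictly decreases before the induction hypothesis is applied. Your proof needs this boosting step---or an equivalent argument that re-incarnation eventually drives $\twts{}$ of $T_i$ past every fixed committed \wts in the window---inserted ahead of the appeal to \lemref{its-wts}. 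With that repair, the rest of your accounting, including the boundary caveat about $\tits{i} + 2*L$ (which the paper shares), goes through essentially as in the paper's proof.
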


\begin{proof}
	We prove this by inducting on the size of $\hdep{i}{H1}$, $n$. For showing this, we define a boolean function $P(k)$ as follows:
	
	\begin{math}
		P(k) = \begin{cases}
			True & \langle H1, T_i: (T_i \in \live{H1}) \land (\htcts{i}{H1} \geq \htits{i}{H1} + 2*L) \land\\ & (\itsenb{i}{H1}) \land 
			 (k \geq |\hdep{i}{H1}|) \implies \\ & (\exists H2, T_j: (H1 \sqsubset H2) \land (T_j \in \incs{i}{H2}) \land \\
			& ((T_j \in \comm{H2}) \lor (\cdsenb{j}{H2}))) \rangle \\
			False & \text{otherwise}
		\end{cases}
	\end{math}
	
	As can be seen, here $P(k)$ means that if (1) $T_i$ is live in $H1$; (2) $\tcts{i}$ is greater than or equal to $\tits{i} + 2*L$; (3) $T_i$ is \itsen in $H1$ (4) the size of $\hdep{i}{H1}$ is less than or equal to $k$;  then there exists a history $H2$ with a transaction $T_j$ in it which is an \inc of $T_i$ such that $T_j$ is either committed or \cdsen in $H2$. We show $P(k)$ is true for all (integer) values of $k$ using induction. 
	
	\vspace{1mm}
	\noindent
	\textbf{Base Case - $P(0)$:} Here, from the definition of $P(0)$, we get that $|\hdep{i}{H1}| = 0$. This in turn implies that $\hcds{i}{H1}$ is null. Further, we are already given that $T_i$ is live in $H1$ and $\htcts{i}{H1} \geq \htits{i}{H1} + 2*L$. Hence, all these imply that $T_i$ is \cdsen in $H1$. 
	
	\vspace{1mm}
	\noindent
	\textbf{Induction case - To prove $P(k+1)$ given that $P(k)$ is true:} If $|\hdep{i}{H1}| \leq k$, from the induction hypothesis $P(k)$, we get that $T_j$ is either committed or \cdsen in $H2$. Hence, we consider the case when 
	
	\begin{equation}
		\label{eq:hdep-assn}
		|\hdep{i}{H1}| = k + 1
	\end{equation}
	
	Let $\alpha$ be $\hpawts{i}{H1}$. Suppose $\htwts{i}{H1} < \alpha$. Then from \lemref{wts-great}, we get that there is an extension of $H1$, say $H3$ in which an \inc of $T_i$, $T_l$ (which could be same as $T_i$) is committed or is live in $H3$ and has \wts greater than $\alpha$. If $T_l$ is committed then $P(k+1)$ is trivially true. So we consider the latter case in which $T_l$ is live in $H3$. In case $\htwts{i}{H1} \geq \alpha$, then in the analysis below follow where we can replace $T_l$ with $T_i$.
	
	Next, suppose $T_l$ is aborted in an extension of $H3$, $H5$. Then from \lemref{its-wts}, we get that there exists an extension of $H3$, $H4$ in which (1) $T_l$ is live; (2) there is a transaction $T_m$ in $\txns{H4}$; (3) $\htwts{m}{H4} > \htwts{l}{H4}$ (4) $T_m$ is committed in $H5$. 
	
	Combining the above derived conditions (1), (2), (3) with \lemref{ti|tltl-comt} we get that in $H4$, 
	
	\begin{equation}
		\label{eq:ml-tits}
		\htits{m}{H4} \leq \htits{l}{H4} + 2*L
	\end{equation}
	
	\eqnref{ml-tits} implies that $T_m$ is in $T_l$'s \affset. Here, we have that $T_l$ is an \inc of $T_i$ and we are given that $\htcts{i}{H1} \geq \htits{i}{H1} + 2*L$. Thus from \lemref{aff-tkinc-h1}, we get that there exists an \inc of $T_m$, $T_n$ in $H1$.
	
	Combining \eqnref{ml-tits} with the observations (a) $T_n, T_m$ are \inc{s}; (b) $T_l, T_i$ are \inc{s}; (c) $T_i, T_n$ are in $\txns{H1}$, we get that $\htits{n}{H1} \leq \htits{i}{H1} + 2*L$. This implies that $T_n$ is in $\haffset{i}{H1}$. Since $T_n$ is not committed in $H1$ (otherwise, it is not possible for $T_m$ to be an \inc of $T_n$), we get that $T_n$ is in $\hcds{i}{H1}$. Hence, we get that $\htits{m}{H4} = \htits{n}{H1}$ is in $\hdep{i}{H1}$.  
	
	From \eqnref{hdep-assn}, we have that $\hdep{i}{H1}$ is $k+1$. From \lemref{depits-fut}, we get that $\hdep{i}{H4}$ is a subset of $\hdep{i}{H1}$. Further, we have that transaction $T_m$ has committed. Thus $\htits{m}{H4}$ which was in $\hdep{i}{H1}$ is no longer in $\hdep{i}{H4}$. This implies that $\hdep{i}{H4}$ is a strict subset of $\hdep{i}{H1}$ and hence\\ $|\hdep{i}{H4}| \leq k$. 
	
	\noindent Since $T_i$ and $T_l$ are \inc{s}, we get that $\hdep{i}{H4} = \hdep{l}{H1}$. Thus, we get that 
	
	\begin{equation}
		\label{eqn:hdep-ilh4}
		|\hdep{i}{H4}| \leq k \implies |\hdep{l}{H4}| \leq k
	\end{equation}
	
	\noindent Further, we have that $T_l$ is a later \inc of $T_i$. So, we get that
	
	\begin{equation}
		\label{eqn:cts-its}
		\htcts{l}{H4} > \htcts{i}{H4} \xrightarrow{given} \htcts{l}{H4} > \htits{i}{H4} + 2*L \xrightarrow{\htits{i}{H4} = \htits{l}{H4}} \htcts{l}{H4} > \htits{l}{H4} + 2*L
	\end{equation}
	
	We also have that $T_l$ is live in $H4$. Combining this with Equations \ref{eqn:hdep-ilh4}, \ref{eqn:cts-its} and given the induction hypothesis that $P(k)$ is true, we get that there exists a history extension of $H4$, $H6$ in which an \inc of $T_l$ (also $T_i$), $T_p$ is either committed or \cdsen. This proves the lemma.
\end{proof}

\begin{lemma}
	\label{lem:cds-fin}
	Consider a transaction $T_i$ in a history $H1$. If $T_i$ is \cdsen in $H1$ then there is an extension of $H1$, $H2$ in which an \inc $T_i$, $T_j$ (which could be same as $T_i$), is either committed or \finen. Formally, $\langle H1, T_i: (T_i \in \live{H}) \land (\cdsenb{i}{H1}) \implies (\exists H2, T_j: (H1 \sqsubset H2) \land (T_j \in \incs{i}{H2}) \land ((T_j \in \comm{H2}) \lor (\finenb{j}{H2})) \rangle$. 
\end{lemma}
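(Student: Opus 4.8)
The plan is to show that, after finitely many re-incarnations, some incarnation of $T_i$ either commits or acquires a working timestamp large enough to make it \finen, while the threshold it must exceed stays put. Set $\alpha := \haffwts{i}{H1}$. The first observation is that $\alpha$ is \emph{stable} along incarnations of a \cdsen transaction: if $H2$ is any extension of $H1$ carrying an incarnation $T_j$ of $T_i$, then \lemref{aff-same} gives $\haffset{j}{H2} = \haffset{i}{H1}$, and \lemref{affwts-same} then gives $\haffwts{j}{H2} = \alpha$. Consequently it suffices to reach an extension $H2$ of $H1$ with a \emph{live} incarnation $T_j$ of $T_i$ such that $\cdsenb{j}{H2}$ holds and $\htwts{j}{H2} > \alpha$: by Definition~\ref{defn:finen} such a $T_j$ is \finen in $H2$. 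If instead some incarnation of $T_i$ commits along the way, the lemma holds trivially.

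The second step produces such an $H2$. If $\twts{i} < \alpha$, apply \lemref{wts-great} with the constant $\alpha$; since $T_i$ is live in $H1$ and $\htwts{i}{H1} < \alpha$, we obtain an extension $H2$ of $H1$ — which must be \emph{strict}, as $H2 = H1$ would force $T_i$ to be committed or $\twts{i} > \alpha$, both false — together with an incarnation $T_j$ of $T_i$ that is either committed in $H2$ (done), or live in $H2$ with $\htwts{j}{H2} > \alpha$. If on the other hand $\twts{i} \geq \alpha$, then $T_i$ terminates in bounded time by \thmref{trans-com|abt}; pick a strict extension $H2$ of $H1$ (which exists by \obsref{hist-future}) long enough that $T_i$ has terminated. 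If $T_i$ commits we are done; otherwise $T_i$ is aborted, and by \obsref{abort-retry} a further extension, which we again call $H2$, contains a live incarnation $T_j$ of $T_i$ whose \cts strictly exceeds that of $T_i$, hence $\htwts{j}{H2} > \twts{i} \geq \alpha$ by the definition of \wts. In every surviving branch we now have a strict extension $H1 \sqsubset H2$ with a live incarnation $T_j$ of $T_i$ such that $\htwts{j}{H2} > \alpha$ and $\htwts{j}{H2} > \htwts{i}{H1}$.

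The third step checks the remaining condition. From $\htwts{i}{H1} < \htwts{j}{H2}$, $H1 \sqsubset H2$, and $T_i \in \incs{j}{H2}$, \lemref{cts-wts} gives $\htcts{i}{H1} < \htcts{j}{H2}$ (in particular $T_j \neq T_i$). Now $T_i$ is live in $H1$, $T_j$ is live in $H2$, $T_i \in \incs{j}{H2}$, $\htcts{i}{H1} < \htcts{j}{H2}$ and $\cdsenb{i}{H1}$ holds, so \lemref{cds-fut} yields $\cdsenb{j}{H2}$. Together with the stability observation of step one, $\haffwts{j}{H2} = \alpha < \htwts{j}{H2}$, and since $T_j$ is live, all three clauses of Definition~\ref{defn:finen} are satisfied; thus $T_j$ is \finen in $H2$, which proves the lemma.

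The delicate point will be the bookkeeping around \emph{strict} extensions, especially the case $\twts{i} \geq \haffwts{i}{H1}$: there the arithmetic part of \finen may already hold for $T_i$ in $H1$, yet the statement still demands a strict extension, so the argument has to route through termination (\thmref{trans-com|abt}) and re-incarnation (\obsref{abort-retry}) instead of stopping at $H1$; one must also ensure that the new incarnation $T_j$ is genuinely distinct from $T_i$ before invoking \lemref{cts-wts} and \lemref{cds-fut}, which is exactly what the strict \wts inequality provides. Everything else is an assembly of the preservation lemmas for \cdsen and \affwts established earlier in this section.
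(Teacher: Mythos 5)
Your proof is correct and follows essentially the same route as the paper's: fix $\alpha=\haffwts{i}{H1}$, use \lemref{wts-great} to obtain an incarnation whose \wts exceeds $\alpha$, then chain \lemref{cts-wts}, \lemref{cds-fut}, \lemref{aff-same} and \lemref{affwts-same} to conclude \finen{} from the definition. The only difference is that you explicitly dispatch the case $\twts{i}\geq\alpha$ (via \thmref{trans-com|abt} and \obsref{abort-retry}), a precondition of \lemref{wts-great} that the paper's proof silently glosses over; this is a welcome tightening rather than a genuine divergence.
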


\begin{proof}
	In $H1$, suppose $\haffwts{i}{H1}$ is $\alpha$. From \lemref{wts-great}, we get that there is a extension of $H1$, $H2$ with a transaction $T_j$ which is an \inc of $T_i$. Here there are two cases: (1) Either $T_j$ is committed in $H2$. This trivially proves the lemma; (2) Otherwise, $\twts{j}$ is greater than $\alpha$. 
	
	\noindent In the second case, we get that 
	
	\begin{equation}
		\label{eq:ext}
		\begin{split}
			(T_i \in \live{H1}) \land (T_j \in \live{H2}) \land (\cdsenb{i}{H}) \land (T_j \in \incs{i}{H2}) \land \\
			(\htwts{i}{H1} < \htwts{j}{H2})
		\end{split}
	\end{equation}
	
	\noindent Combining the above result with \lemref{cts-wts}, we get that $\htcts{i}{H1} < \htcts{j}{H2}$. Thus the modified equation is 
	
	\begin{equation}
		\label{eq:new-ext}
		\begin{split}
			(T_i \in \live{H1}) \land (T_j \in \live{H2}) \land (\cdsenb{i}{H1}) \land (T_j \in \incs{i}{H2}) \land \\
			(\htcts{i}{H1} < \htcts{j}{H2})
		\end{split}
	\end{equation}
	
	\noindent Next combining \eqnref{new-ext} with \lemref{aff-same}, we get that 
	\begin{equation}
		\label{eq:affs-ij}
		\haffset{i}{H1} = \haffset{j}{H2}
	\end{equation}
	
	\noindent Similarly, combining \eqnref{new-ext} with \lemref{cds-fut} we get that $T_j$ is \cdsen in $H2$ as well. Formally, 
	\begin{equation}
		\label{eq:th-cdsen}
		\cdsenb{j}{H2}
	\end{equation}
	
	Now combining \eqnref{affs-ij} with \lemref{affwts-same}, we get that 
	\begin{equation}
		\label{eq:affwts-same}
		\haffwts{i}{H1} = \haffwts{j}{H2}
	\end{equation}
	
	From our initial assumption we have that $\haffwts{i}{H1}$ is $\alpha$. From \eqnref{affwts-same}, we get that $\haffwts{j}{H2} = \alpha$. Further, we had earlier also seen that $\htwts{j}{H2}$ is greater than $\alpha$. Hence, we have that $\htwts{j}{H2} > \haffwts{j}{H2}$. 
	
	\noindent Combining the above result with \eqnref{th-cdsen}, $\cdsenb{j}{H2}$, we get that $T_j$ is \finen, i.e., $\finenb{j}{H2}$. 
\end{proof}

\noindent Next, we show that every live transaction eventually become \itsen. 

\begin{lemma}
	\label{lem:live-its}
	Consider a history $H1$ with $T_i$ be a transaction in $\live{H1}$. Then there is an extension of $H1$, $H2$ in which an \inc of $T_i$, $T_j$ (which could be same as $T_i$) is either committed or is \itsen. Formally, $\langle H1, T_i: (T_i\in \live{H}) \implies (\exists T_j, H2: (H1 \sqsubset H2) \land (T_j \in \incs{i}{H2}) \land (T_j \in \comm{H2}) \lor (\itsenb{i}{H})) \rangle$. 
\end{lemma}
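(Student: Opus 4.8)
The plan is to prove this by strong induction on a well-founded measure: for a history $H$ and a live transaction $T_i$ in it, let $n(H,T_i)$ be the number of application-transactions whose \its is strictly smaller than $\htits{i}{H}$ and whose \incct flag is \emph{not} yet true in $H$ (i.e.\ no \inc of them has committed). The key preliminary observation — the one already used inside the proof of \lemref{itsen-future} — is that \its values are handed out by the monotone atomic counter \gtcnt{} and an \inc inherits the \its of its predecessor; hence once $T_i$ exists, no transaction appearing in any extension can acquire an \its below $\htits{i}{H}$. Thus the set of application-transactions with \its below $\htits{i}{H}$ is fixed, and by \obsref{inct-fut} the \incct flags of its members, once set, stay set, so $n(\cdot,\cdot)$ can only decrease along extensions. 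This finiteness and monotonicity is what legitimizes the induction. \emph{Base case} $n(H1,T_i)=0$: every application-transaction with smaller \its has already committed some \inc, so $T_i$ is \itsen in $H1$ by definition; using \obsref{hist-future} take a strict extension $H2$, and if $T_i$ is live there \lemref{itsen-future} keeps it \itsen (done, $T_j=T_i$), while if $T_i$ has terminated it either committed (done) or aborted, in which case \obsref{abort-retry} supplies a still-live \inc $T_j$ of $T_i$ in a further strict extension which, having the same \its and the same committed smaller-\its transactions, is \itsen.

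\emph{Inductive step} $n(H1,T_i)=m+1$: among the not-yet-committed application-transactions with \its below $\htits{i}{H1}$ pick the one with the \emph{smallest} \its and let $T_k$ be its currently live \inc (obtained via \obsref{abort-retry} in a small extension if the \inc present in $H1$ is aborted). By minimality every transaction with \its below $\htits{k}{}$ has already committed, so $T_k$ is \itsen. Reincarnating $T_k$ finitely many times if necessary — each \inc draws a strictly larger \cts from \gtcnt{}, while \its and \itsen-ness are preserved — we obtain a live \itsen \inc of $T_k$ with $\tcts{} \geq \tits{} + 2L$; moreover its validity flag is true, because (exactly as in \csref{init-tc-chk} of the proof of \lemref{its-wts}) only a live transaction with a \emph{strictly smaller} \its could set it false, and by \obsref{cmt-noinc} no such transaction has a live \inc.

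We now cascade the enablement lemmas on this \inc of $T_k$: \lemref{its-cds} gives an extension in which an \inc of $T_k$ is committed or \cdsen; if \cdsen, \lemref{cds-fin} gives a further extension in which an \inc of $T_k$ is committed or \finen; if \finen, \lemref{enbd-ct} gives a further extension in which it commits. In every branch we reach an extension $H1'$ of $H1$ in which the chosen application-transaction (the smallest uncommitted \its) has \incct true. In $H1'$ the transaction $T_i$ may itself have aborted in the meantime, but \obsref{abort-retry} supplies a still-live \inc $T_i'$ of $T_i$ in a further extension $H2'$ of $H1$; since $T_i'$ has the same \its, we have $n(H2',T_i') \leq m$ (one more smaller-\its transaction now committed, and by monotonicity of \gtcnt{} no new smaller-\its transaction can have appeared). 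Applying the induction hypothesis to $T_i'$ in $H2'$ yields an extension in which an \inc of $T_i'$ — hence an \inc of $T_i$ — is committed or \itsen, which is exactly the claim.

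The main obstacle I anticipate is the reincarnation bookkeeping rather than any new idea: carefully justifying that \itsen-ness and the non-falsity of the validity flag survive reincarnation, that successive \inc{s} acquire unboundedly growing \cts so the hypothesis $\tcts{} \geq \tits{} + 2L$ of \lemref{its-cds} can be met, and that the measure $n(\cdot,\cdot)$ is genuinely finite and strictly decreasing — all of which ultimately rest on the monotonicity of \gtcnt{} together with \obsref{inct-fut} and \obsref{cmt-noinc}. Threading a live \inc of $T_i$ through the several extensions used to commit $T_k$ is routine given \obsref{abort-retry}, but it must be made explicit so that the induction hypothesis is applied to a transaction that is actually live.
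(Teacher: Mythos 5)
Your proof is correct, and it actually goes further than the paper does. The paper's own proof of \lemref{live-its} is a stub: it announces an induction on \its, handles the base case $\tits{i}=1$ (the first transaction is vacuously \itsen), and then merely restates the induction hypothesis without carrying out the inductive step. Your argument supplies that missing content, and with a better-behaved measure: rather than inducting on the raw \its value (whose successive values are not consecutive integers, since \gtcnt also serves \cts requests), you induct on the number of application-transactions with smaller \its whose \incct flag is still false --- a quantity that is finite, frozen from the moment $T_i$ begins (by monotonicity of \gtcnt), and non-increasing along extensions (by \obsref{inct-fut}). The inductive step --- pick the smallest-\its uncommitted application-transaction, note it is \itsen by minimality and that no live smaller-\its transaction survives to falsify its status flag (\obsref{cmt-noinc} together with the guard $its_i < its_p$ in \emph{tryC\_Validation()}), reincarnate until its \cts exceeds its \its by $2*L$, and drive it through \lemref{its-cds}, \lemref{cds-fin} and \lemref{enbd-ct} to force a commit and strictly decrease the measure --- is precisely the argument the paper needed and omitted. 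One point to make explicit: \lemref{itsen-future} as stated only preserves \itsen for the \emph{same} live transaction across extensions, whereas you also need it to survive reincarnation; your justification (the set of smaller-\its transactions is fixed and their \incct flags persist) is the right one, but it is a mild strengthening of that lemma rather than a direct application of it.
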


\begin{proof}
	\noindent We prove this lemma by inducting on \its. 
	
	\vspace{1mm}
	\noindent
	\textbf{Base Case - $\tits{i} = 1$:} In this case, $T_i$ is the first transaction to be created. There are no transactions with smaller \its. Thus $T_i$ is trivially \itsen. 
	
	\vspace{1mm}
	\noindent
	\textbf{Induction Case:} Here we assume that for any transaction $\tits{i} \leq k$ the lemma is true. 
\end{proof}

Combining these lemmas gives us the result that for every live transaction $T_i$ there is an incarnation $T_j$ (which could be the same as $T_i$) that will commit. This implies that every \aptr eventually commits. The follow lemma captures this notion.

\begin{theorem}
	\label{thm:hwtm-com}
	Consider a history $H1$ with $T_i$ be a transaction in $\live{H1}$. Then there is an extension of $H1$, $H2$ in which an \inc of $T_i$, $T_j$ is committed. Formally, $\langle H1, T_i: (T_i\in \live{H}) \implies (\exists T_j, H2: (H1 \sqsubset H2) \land (T_j \in \incs{i}{H2}) \land (T_j \in \comm{H2})) \rangle$. 
\end{theorem}

\begin{proof}
	Here we show the states that a transaction $T_i$ (or one of it its \inc{s}) undergoes before it commits. In all these transitions, it is possible that an \inc of $T_i$ can commit. But to show the worst case, we assume that no \inc of $T_i$ commits. Continuing with this argument, we show that finally an \inc of $T_i$ commits. 
	
	Consider a live transaction $T_i$ in $H1$. Then from \lemref{live-its}, we get that there is a history $H2$, which is an extension of $H1$, in which $T_j$ an \inc of $T_i$ is either committed or \itsen. If $T_j$ is \itsen in $H2$, then from \lemref{its-cds}, we get that $T_k$, an \inc of $T_j$, will be \cdsen in a extension of $H2$, $H3$ (assuming that $T_k$ is not committed in $H3$). 
	
	From \lemref{cds-fin}, we get that there is an extension of $H3$, $H4$ in which an \inc of $T_k$, $T_l$ will be \finen assuming that it is not committed in $H4$. Finally, from \lemref{enbd-ct}, we get that there is an extension of $H4$ in which $T_m$, an \inc of $T_l$, will be committed. This proves our theorem.
\end{proof}

\noindent From this theorem, we get the following corollary which states that any history generated by \ksftm is $\stfdm$.

\begin{corollary}
	\ksftm algorithm ensures \stfdm. 
\end{corollary}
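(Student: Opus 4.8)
The plan is to read this corollary off directly from \thmref{hwtm-com}, combined with the definition of \stfdm (Definition~\ref{defn:stf}) and the standing \asmref{bdtm} and \asmref{self}. Fix a thread $Th$ that keeps invoking a single non-parasitic \aptr. By the reincarnation discipline of \ksftm, this \aptr is executed as a sequence of transactions that all share the same \its{} --- precisely the members of a common \incset{} --- in which the $(k{+}1)$-th \inc{} is invoked only after the $k$-th \inc{} has returned $\abort$. First I would observe that, by \asmref{bdtm} (bounded-termination) and \thmref{trans-com|abt}, every \inc{} terminates (commits or aborts) in finite time, while \asmref{self} (no self-abort, non-parasitic) guarantees that the thread does retry after an abort and does not refuse to commit once it is given the chance. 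Hence either some \inc{} commits in finite time, or the retry loop generates an infinite sequence of aborted \inc{s}; it remains to exclude the latter.

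The second possibility is ruled out by \thmref{hwtm-com}. Let $H1$ be a history generated by \ksftm at the moment the first \inc{} $T_i$ of the \aptr becomes live, so $T_i \in \live{H1}$; by \obsref{hist-future} the execution can always be extended, so \thmref{hwtm-com} applies and produces an extension $H2$ of $H1$ together with a transaction $T_j \in \incs{i}{H2}$ with $T_j \in \comm{H2}$. Since $\incs{i}{H2}$ is exactly the set of \inc{s} of the \aptr, $T_j$ is one of the transactions $Th$ actually runs, and by \obsref{cmt-noinc} no further \inc{} is spawned after $T_j$ commits. Therefore the \aptr commits after finitely many retries; as $Th$ and the \aptr were arbitrary, \ksftm satisfies Definition~\ref{defn:stf}, i.e.\ it is \stf.

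All of the mathematical weight has already been carried by the preceding development: \thmref{hwtm-com} itself rests on the chain \lemref{live-its} $\to$ \lemref{its-cds} $\to$ \lemref{cds-fin} $\to$ \lemref{enbd-ct}, which drives a live transaction through the successive states \itsen, \cdsen, \finen{} and finally commit (possibly with an earlier \inc{} committing first). Consequently the only genuine task in proving the corollary is the bookkeeping that links the abstract notion of an ``extension $H2$ of $H1$'' used throughout the liveness section to the thread's concrete retry loop --- that is, confirming that the committed \inc{} $T_j$ whose existence the theorem asserts is one the fair scheduler will in fact let the thread reach. I expect this correspondence to be the main (and essentially the only) obstacle; once it is made explicit, the corollary is immediate.
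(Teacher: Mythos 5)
Your proposal is correct and matches the paper's treatment: the paper derives this corollary directly from \thmref{hwtm-com} (itself built on the chain \lemref{live-its}, \lemref{its-cds}, \lemref{cds-fin}, \lemref{enbd-ct}) and offers no further argument, exactly as you do. The extra bookkeeping you flag --- identifying the committed incarnation $T_j$ of the theorem with a transaction the thread's retry loop actually reaches, via \asmref{bdtm}, \asmref{self}, \obsref{hist-future} and \obsref{cmt-noinc} --- is a reasonable and harmless elaboration of what the paper leaves implicit.
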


\vspace{-.3cm}
\section{Experimental Evaluation}
\label{sec:exp}
\vspace{-.2cm}
This section represents the experimental analysis of variants of the proposed Starvation-Free Object-based STMs (SF-SVOSTM, SF-MVOSTM, SF-MVOSTM-GC, and SF-KOSTM)\footnote{Code is available here: https://github.com/PDCRL/SF-MVOSTM.} for two data structure \emph{hash table} (HT-SF-SVOSTM, HT-SF-MVOSTM, HT-SF-MVOSTM-GC and HT-SF-KOSTM)  and \emph{linked-list} (list-SF-SVOSTM, list-SF-MVOSTM, list-SF-MVOSTM-GC and list-SF-KOSTM) implemented in C++. We analyzed that HT-SF-KOSTM and  list-SF-KOSTM perform best among all the proposed algorithms. So, we compared our HT-SF-KOSTM with hash table based state-of-the-art STMs HT-KOSTM \cite{Juyal+:SSS:2018}, HT-SVOSTM \cite{Peri+:OSTM:Netys:2018}, 
ESTM \cite{Felber+:ElasticTrans:2017:jpdc}, RWSTM \cite [Chap. 4]{ WeiVoss:TIS:2002:Morg}, HT-MVTO \cite{Kumar+:MVTO:ICDCN:2014} 
and our list-SF-KOSTM with list based state-of-the-art STMs list-KOSTM \cite{Juyal+:SSS:2018}, list-SVOSTM \cite{Peri+:OSTM:Netys:2018}, Trans-list \cite{ZhangDech:LockFreeTW:SPAA:2016}, Boosting-list \cite{HerlihyKosk:Boosting:PPoPP:2008}, NOrec-list \cite{Dalessandro+:NoRec:PPoPP:2010}, list-MVTO \cite{Kumar+:MVTO:ICDCN:2014}, list-KSFTM \cite{Chaudhary+:KSFTM:Corr:2017}.  

\noindent
\textbf{Experimental Setup:} The system configuration for experiments is 2 socket Intel(R) Xeon(R) CPU E5-2690 v4 @ 2.60GHz with 14 cores per socket and 2 hyper-threads per core, a total of 56 threads. A private 32KB L1 cache and 256 KB L2 cache is with each core. It has 32 GB RAM with Ubuntu 16.04.2 LTS running Operating System. Default scheduling algorithm of Linux with all threads have the same base priority is used in our experiments. This satisfies \asmref{bdtm}(bounded-termination) of the scheduler and we ensure the absence of parasitic transactions for our setup to satisfy \asmref{self}. 

\begin{figure}
	\includegraphics[width=12cm, height=5cm]{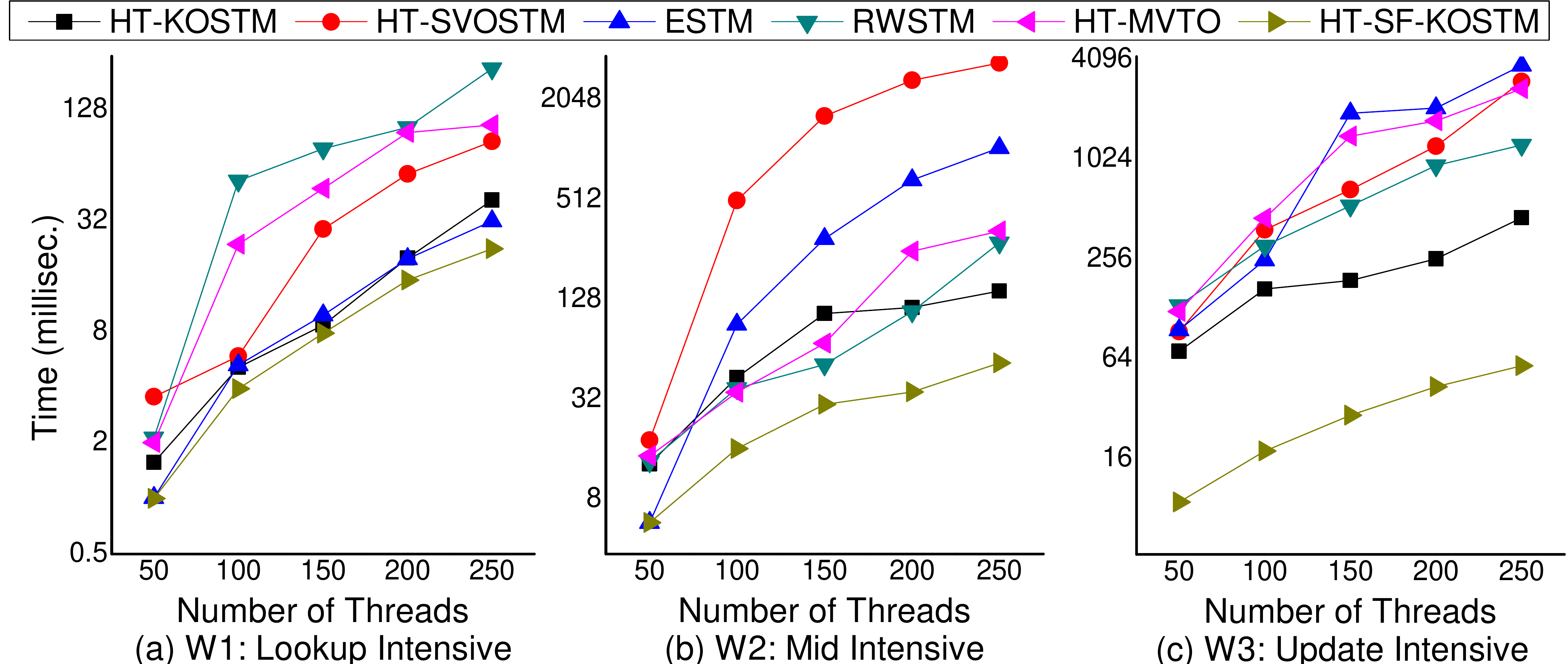}
\vspace{-.3cm}	\caption{ Performance analysis of SF-KOSTM and State-of-the-art STMs on hash table}
	\label{fig:htcomp}
\end{figure}

\noindent	
\textbf{Methodology:} 
 We have considered three different types of workloads namely, W1 (Lookup Intensive - 5\% insert, 5\% delete, and 90\% lookup), W2 (Mid Intensive - 25\% insert, 25\% delete, and 50\% lookup), and W3 (Update Intensive - 45\% insert, 45\% delete, and 10\% lookup). To analyze the absolute benefit of starvation-freedom, we used a customized application called as the \emph{Counter Application} (refer the pseudo-code in \apnref{countercode}) which provides us the flexibility to create a high contention environment where the probability of transactions undergoing starvation on an average is very high. Our \emph{high contention} environment includes only 30 shared data-items (or keys), number of threads ranging from 50 to 250, each thread spawns upon a transaction, where each transaction performs 10 operations depending upon the workload chosen. To study starvation-freedom of various algorithms, we have used \emph{max-time} which is the maximum time required by a transaction to finally commit from its first incarnation, which also involves time taken by all its aborted incarnations. We perform each of our experiments 10 times and consider the average of it to avoid the effect of outliers. 



\vspace{-.05cm}
\noindent
\textbf{Results Analysis:} All our results reflect the same ideology as proposed showcasing the benefits of Starvation-Freedom in Multi-Version OSTMs. We started our experiments with \emph{hash table} data structure of bucket size 5 and 
compared \emph{max-time} for a transaction to commit by proposed HT-SF-KOSTM (best among all the proposed algorithms shown in \figref{htus} of \apnref{ap-result}) with hash table based state-of-the-art STMs. HT-SF-KOSTM achieved an average speedup of 3.9x, 32.18x, 22.67x, 10.8x and 17.1x over HT-KOSTM, HT-SVOSTM, ESTM, RWSTM and HT-MVTO respectively as shown in \figref{htcomp}.


We further considered another data structure \emph{linked-list} and 
compared \emph{max-time} for a transaction to commit by proposed list-SF-KOSTM (best among all the proposed algorithms shown in \figref{listus} of \apnref{ap-result}) with list based state-of-the-arts STMs. 
list-SF-KOSTM achieved an average  speedup of 2.4x, 10.6x, 7.37x, 36.7x, 9.05x, 14.47x, and 1.43x over list-KOSTM, list-SVOSTM, Trans-list, Boosting-list, NOrec-list, list-MVTO and list-KSFTM respectively as shown in \figref{listcomp}. We consider number of versions in the version list $K$ as 5 and value of C as 0.1. 

 
 \begin{figure}
 	\includegraphics[width=12cm, height=5cm]{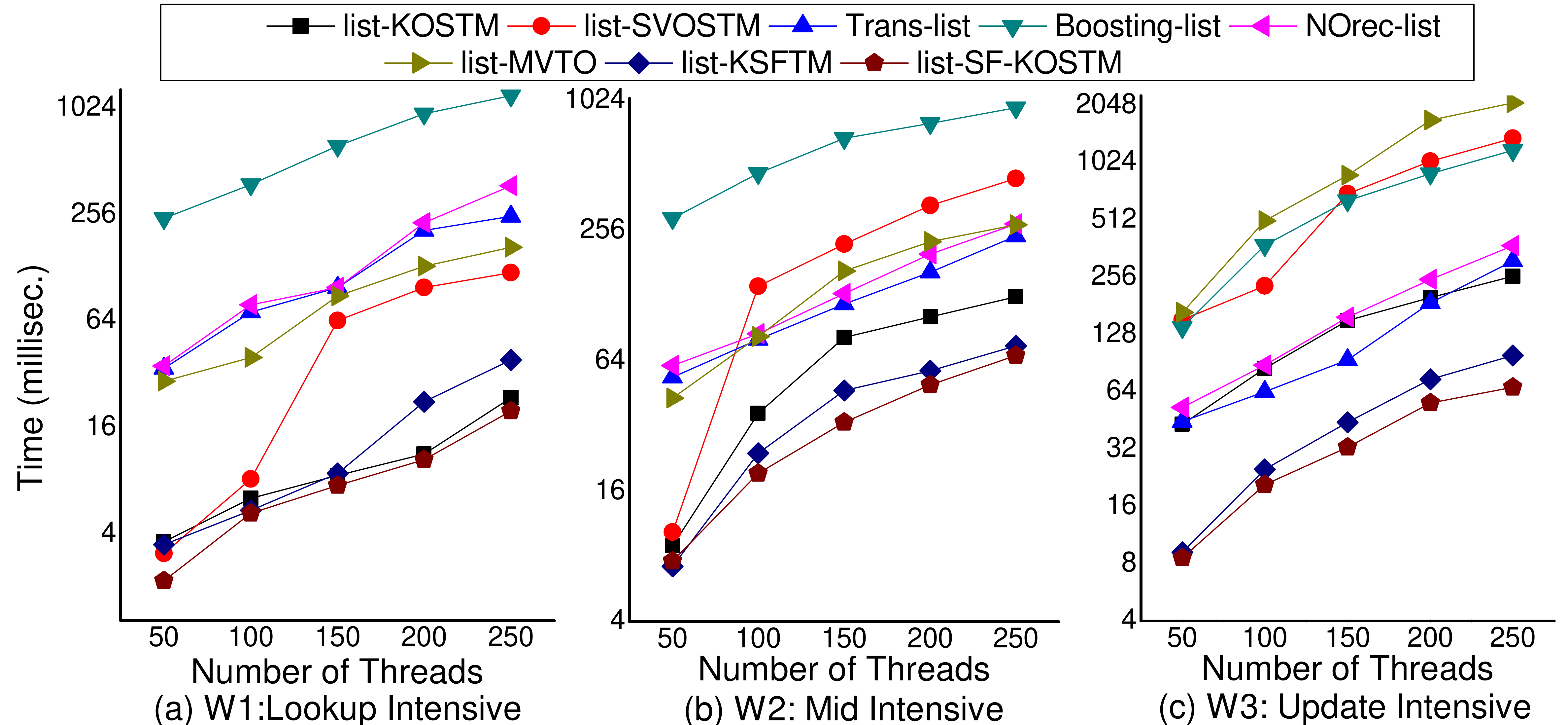}
\vspace{-.3cm} 	\caption{ Performance analysis of SF-KOSTM and State-of-the-art STMs on list}
 	\label{fig:listcomp}
 \end{figure}

\begin{figure}[H]
	\includegraphics[width=12cm, height=5cm]{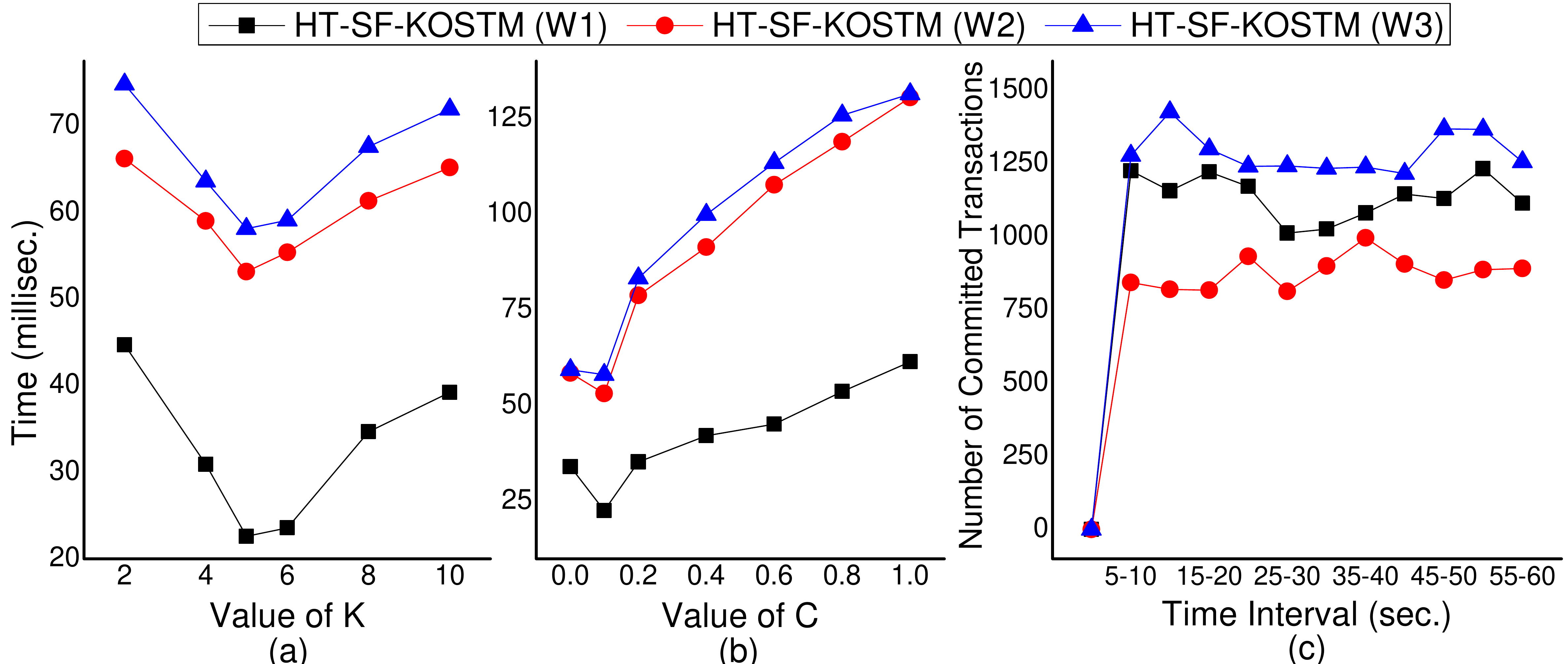}
	\caption{ Optimal value of K and C along with Stability for hash table}
	\label{fig:htck}
\end{figure}

\noindent
\textbf{Best value of $K$, $C$, and Stability in SF-KOSTM:} We identified the  best value of $K$ for both HT-SF-KOSTM and list-SF-KOSTM algorithms. \figref{htck}.(a). demonstrates the best value of $K$ as 5 for HT-SF-KOSTM on counter application. We achieve this while varying value of $K$ on \emph{high contention environment} with 64 threads on all the workloads $W1, W2, W3$. \figref{htck}.(b). illustrates the best value of $C$ as 0.1 for HT-SF-KOSTM on all the workloads $W1, W2, W3$. \figref{htck}.(c). represents the $stability$ of HT-SF-KOSTM algorithm overtime for the counter application. For this experiment, we fixed 32 threads, 1000 shared data-items (or keys), the value of $K$ as 5, and $C$ as 0.1. Along with this, we consider, 5 seconds warm-up period on all the workloads $W1, W2, W3$. Each thread invokes transactions until its time-bound of 60 seconds expires.
We calculate the number of transactions committed in the incremental interval of 5 seconds. \figref{htck}.(c). shows that over time HT-SF-KOSTM is stable which helps to hold the claim that the performance of HT-SF-KOSTM will continue in the same manner if time is increased to higher orders. Similarly, we perform the same experiments for the linked-list data structure as well. \figref{listck}.(a). and \figref{listck}.(b). demonstrate the best value of $K$ as 5 and $C$ as 0.1 for list-SF-KOSTM on all the workloads $W1, W2, W3$. Similarly, \figref{listck}.(c). illustrates the $stability$ of list-SF-KOSTM and shows that it is stable over time on all the workloads $W1, W2, W3$.

\begin{figure}[H]
	\includegraphics[width=12cm, height=5cm]{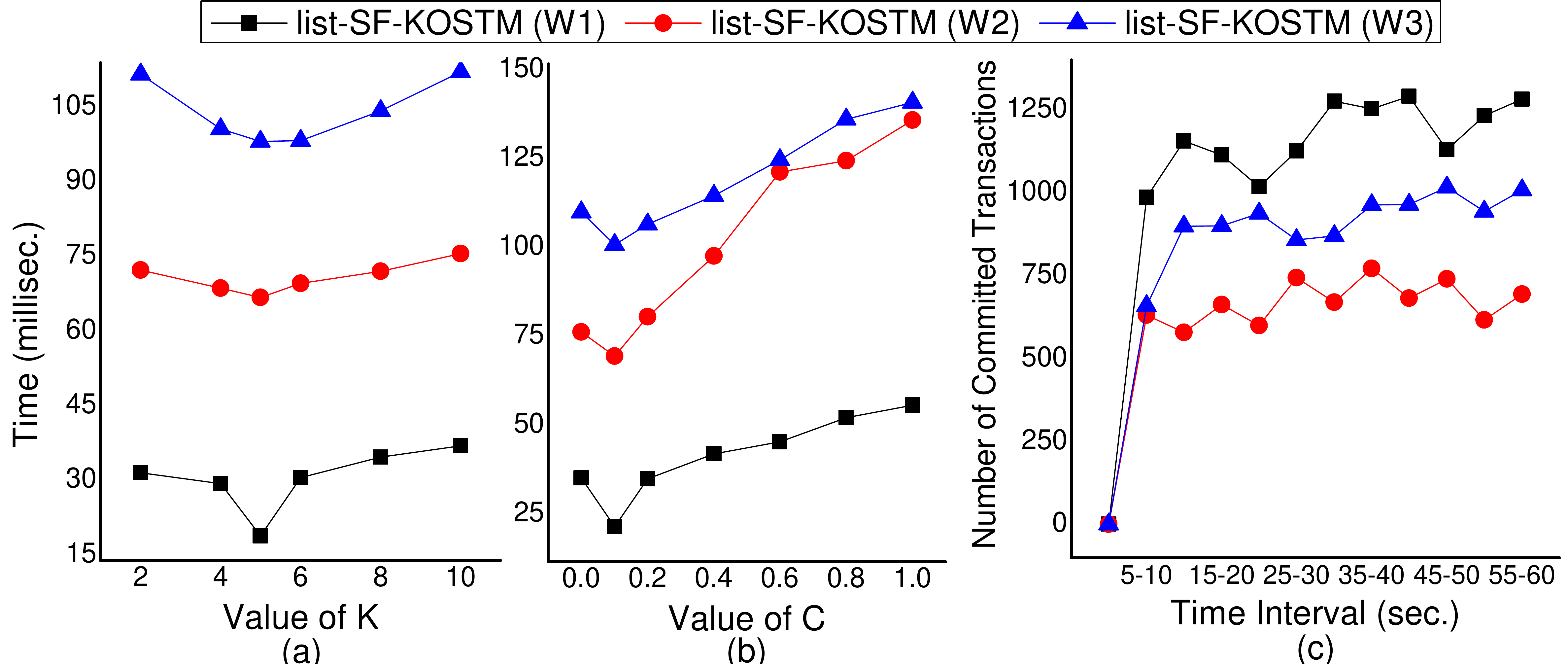}
	\caption{ Optimal value of K and C along with Stability for list}
	\label{fig:listck}
\end{figure}
\begin{figure}[H]
	\includegraphics[width=12cm, height=5cm]{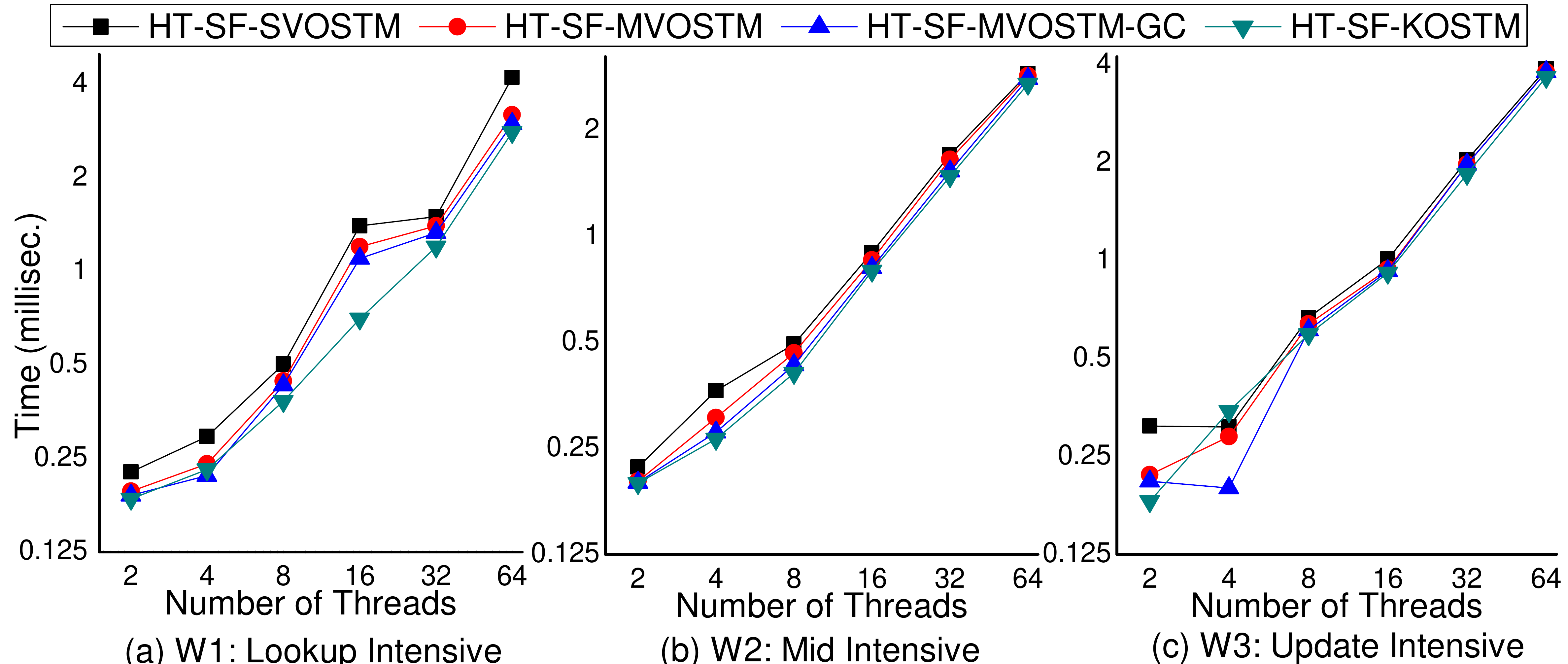}
	\caption{Time comparison among SF-SVOSTM and variants of SF-KOSTM on hash table}
	\label{fig:httimeus}
\end{figure}

We have done some experiments on \textit{low contention environment} which involves 1000 keys, threads varying from 2 to 64 in power of 2, each thread spawns one transaction and each transaction executes 10 operations (insert, delete and lookup) depending upon the workload chosen. We observed that HT-SF-KOSTM performs best out of all the proposed algorithms on $W1$ and $W2$ workload as shown in \figref{httimeus}. For a lesser number of threads on $W3$, HT-SF-KOSTM was taking a bit more time than other proposed algorithms as shown in \figref{httimeus}.(c). This may be because of the finite version, finding and replacing the oldest version is taking time. After that, we consider  HT-SF-KOSTM and compared against state-of-the-art STMs.  \figref{httimecomp} shows that our proposed algorithm performs better than all other state-of-art-STMs algorithms but slightly lesser than the non starvation-free HT-KOSTM. But to provide the guarantee of starvation-freedom this slight slag of time is worth paying. For the better clarification of speedups, please refer to \tabref{htspeed}.  Similarly, \figref{listtimeus} represents the analysis of \emph{low contention environment} for list data structure where list-SF-KOSTM performs best out of all the proposed algorithms. \figref{listtimecomp} demonstrates the comparison of proposed list-SF-KOSTM with list based state-of-the-art STMs and shows the significant performance gain in terms of a speedup as presented in \tabref{listspeed}. For low contention environment, starvation-freedom is appearing as an overhead so, both HT-SF-KOSTM and list-SF-KOSTM  achieve a bit less speedup than HT-KOSTM and list-KOSTM. But for high contention environment, starvation-free algorithms are always better so, both HT-SF-KOSTM and list-SF-KOSTM achieve better speedup than HT-KOSTM and list-KOSTM as explained in \secref{exp}.

\begin{figure}[H]
	\includegraphics[width=12cm, height=5cm]{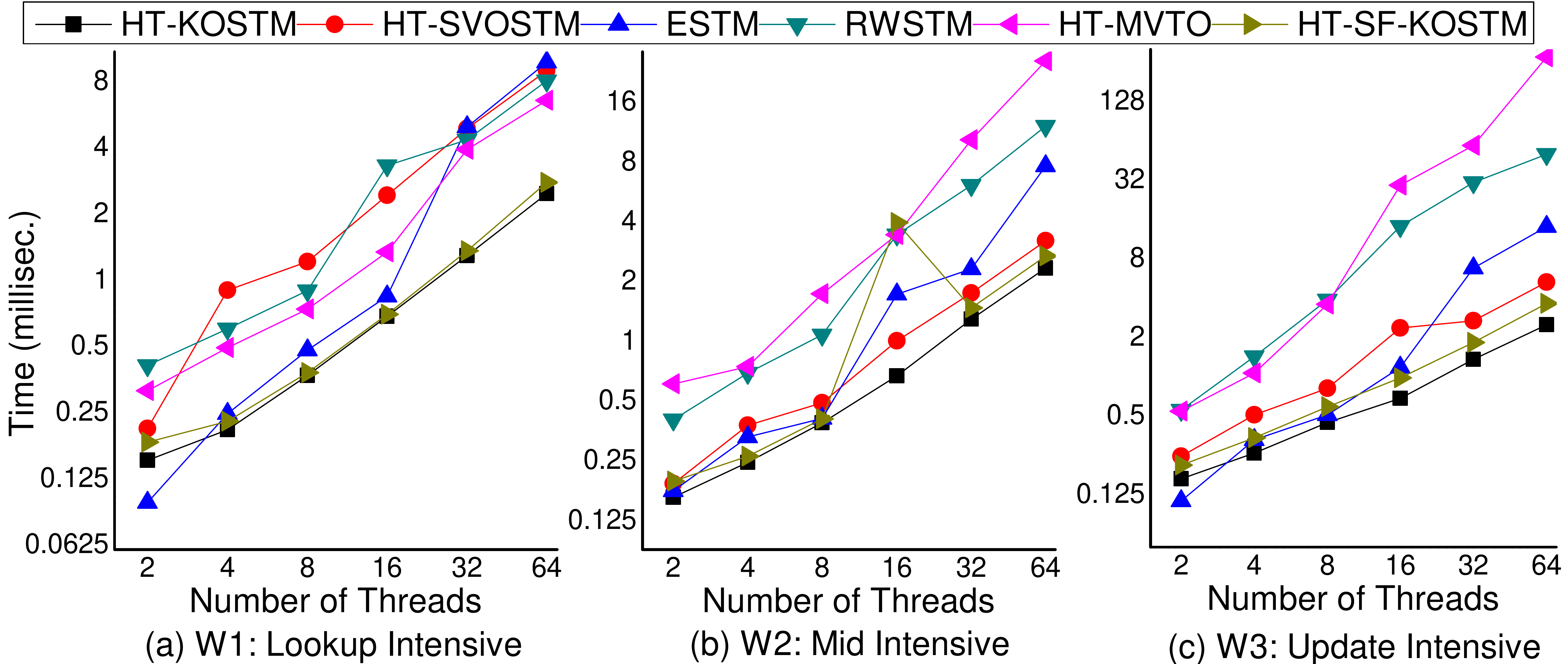}
	\caption{Time comparison of SF-KOSTM and State-of-the-art STMs on hash table}
	\label{fig:httimecomp}
\end{figure}

\begin{figure}[H]
	\includegraphics[width=12cm, height=5cm]{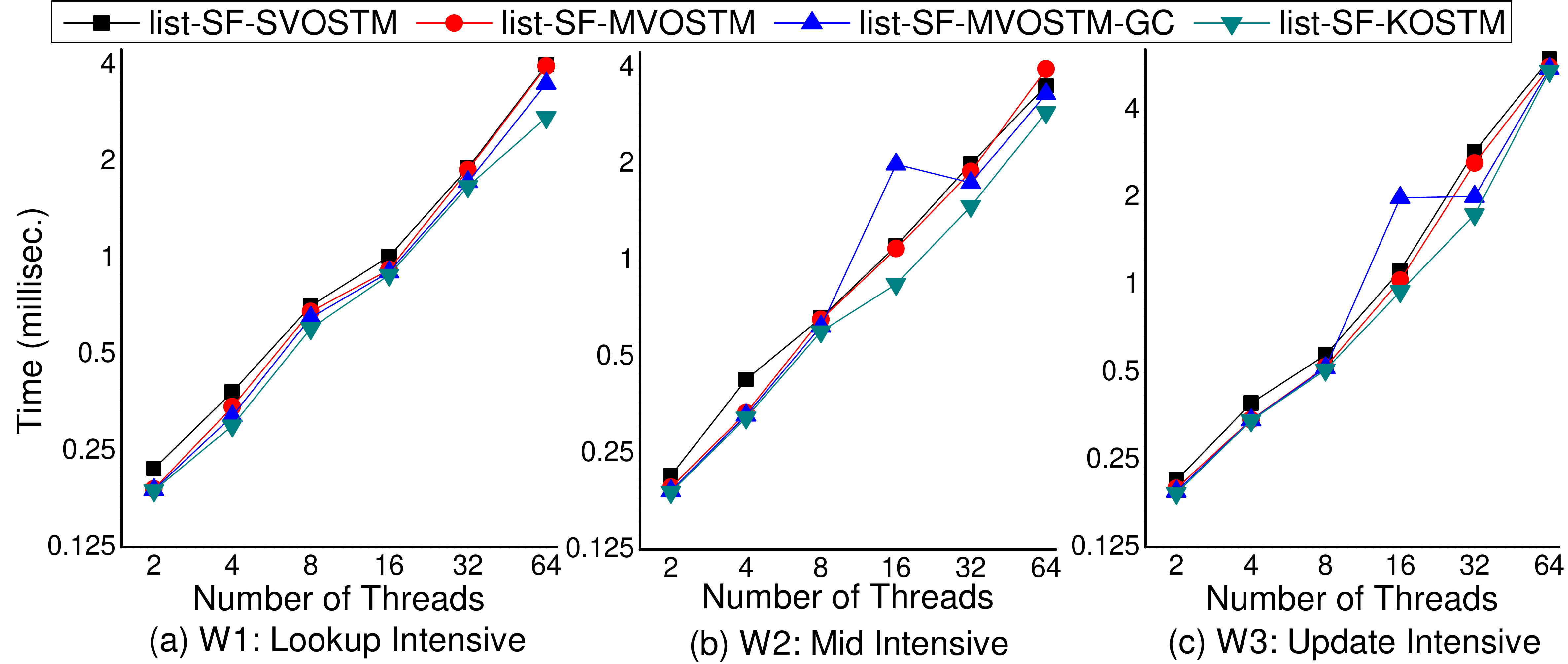}
	\caption{Time comparison among SF-SVOSTM and variants of SF-KOSTM on list}
	\label{fig:listtimeus}
\end{figure}
\begin{figure}[H]
	\includegraphics[width=12cm, height=5cm]{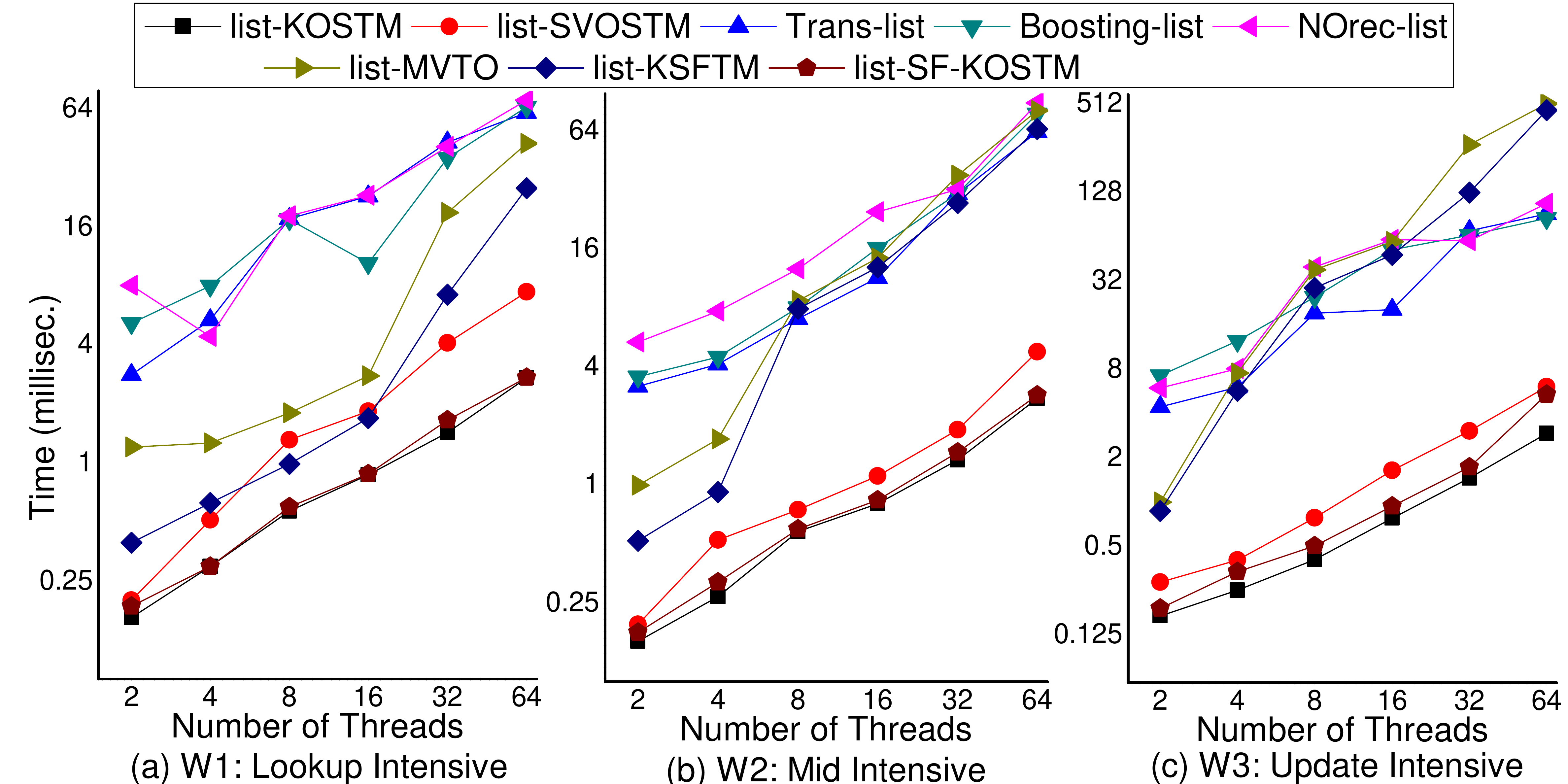}
	\caption{Time comparison of SF-KOSTM and State-of-the-art STMs on list}
	\label{fig:listtimecomp}
\end{figure}

\begin{figure}[H]
	\includegraphics[width=12cm, height=5cm]{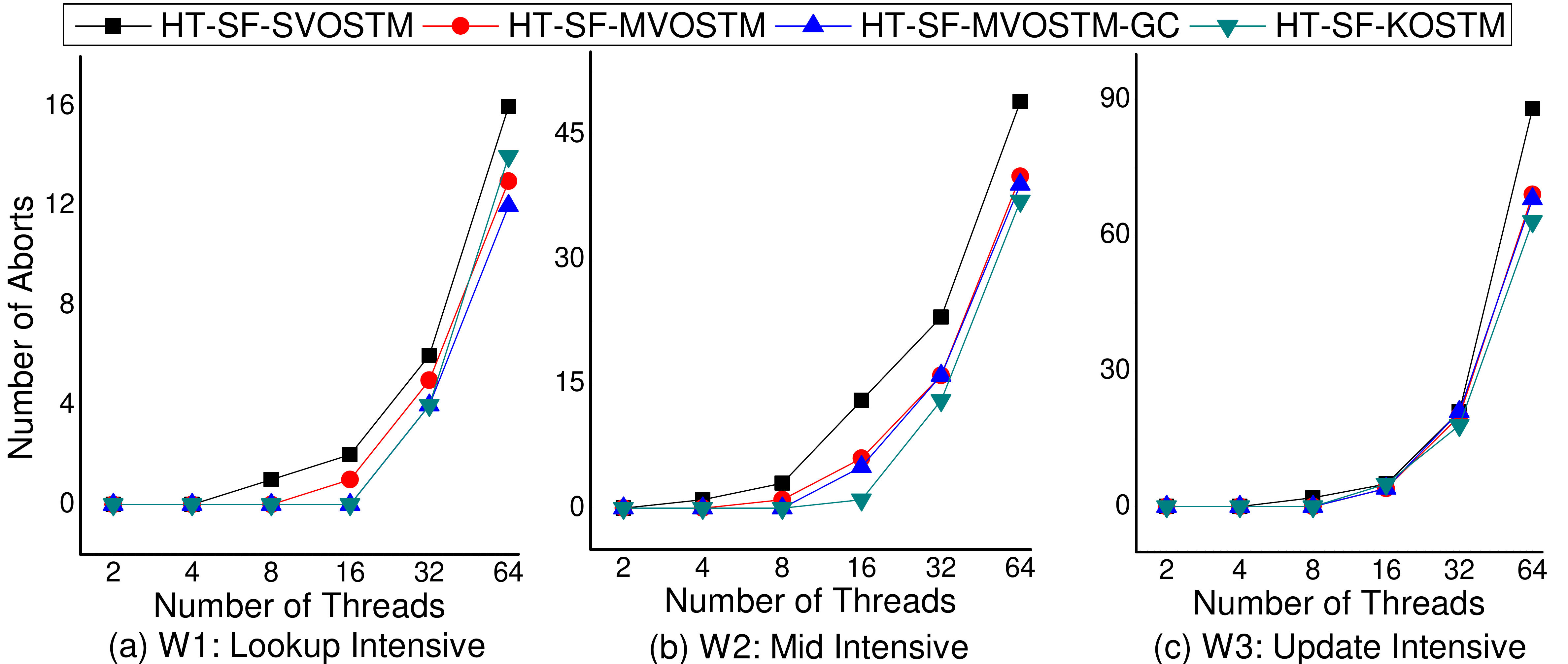}
	\caption{Abort Count of SF-SVOSTM and variants of SF-KOSTM on hash table}
	\label{fig:htabortus}
\end{figure}
\begin{figure}[H]
	\includegraphics[width=12cm, height=5cm]{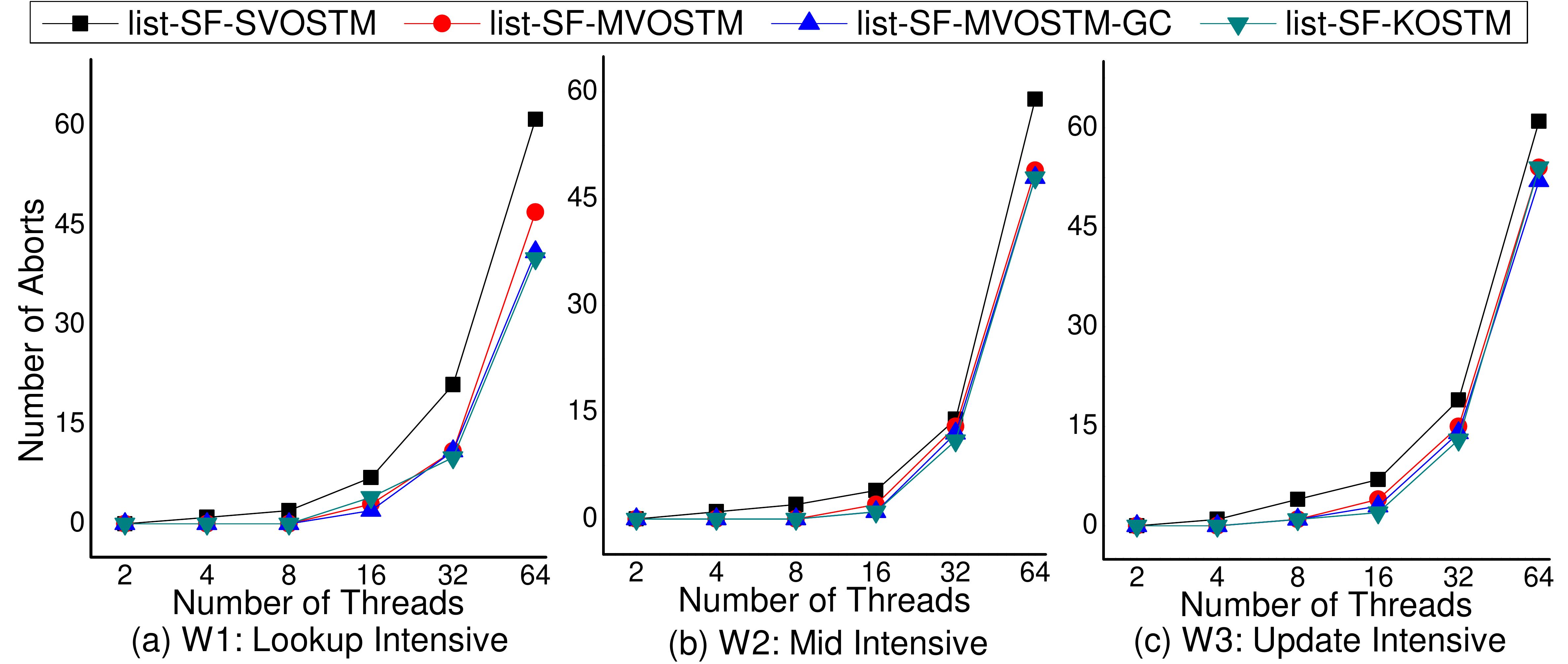}
	\caption{Abort Count of SF-SVOSTM and variants of SF-KOSTM on list}
	\label{fig:listabortus}
\end{figure}

\noindent
\textbf{Abort Count:} We analyzed the number of aborts on low contention environment as defined above. \figref{htabortus} and \figref{listabortus} show the number of aborts comparison among all the proposed variants in all three workloads W1, W2 and W3 for both data structures hash table and linked-list. The results show that HT-SF-KOSTM and list-SF-KOSTM have relatively less  number of aborts than other proposed algorithms. Similarly, \figref{htabortcomp} and \figref{listabortcomp} shows the number of aborts comparison among proposed HT-SF-KOSTM with hash table based state-of-the-art STMs and proposed list-SF-KOSTM with list based state-of-the-art STMs in all three workloads W1, W2, and W3. The result shows that the least number of aborts are happening with HT-SF-KOSTM and list-SF-KOSTM.
\noindent


\begin{figure}[H]
	\includegraphics[width=12cm, height=5cm]{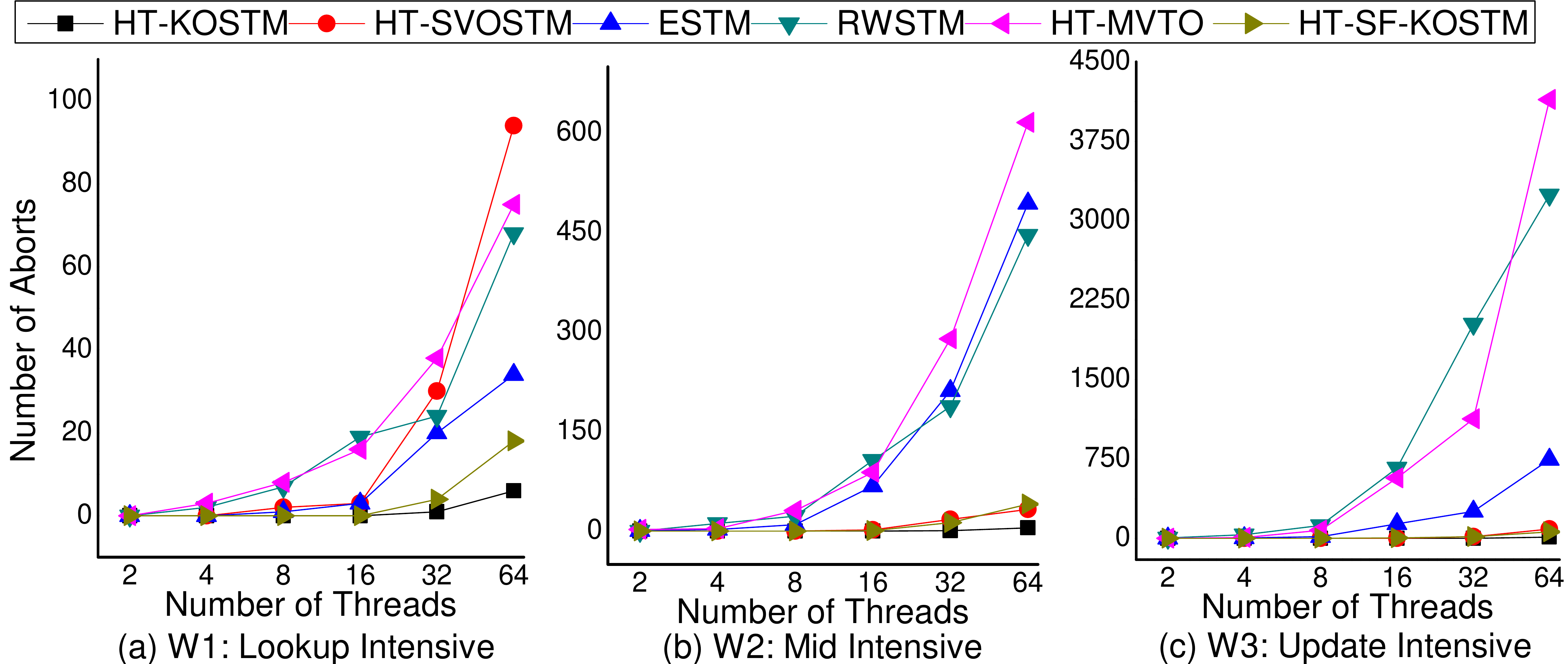}
	\caption{Abort Count of SF-KOSTM and State-of-the-art STMs on hash table}
	\label{fig:htabortcomp}
\end{figure}

\begin{figure}[H]
	\includegraphics[width=12cm, height=5cm]{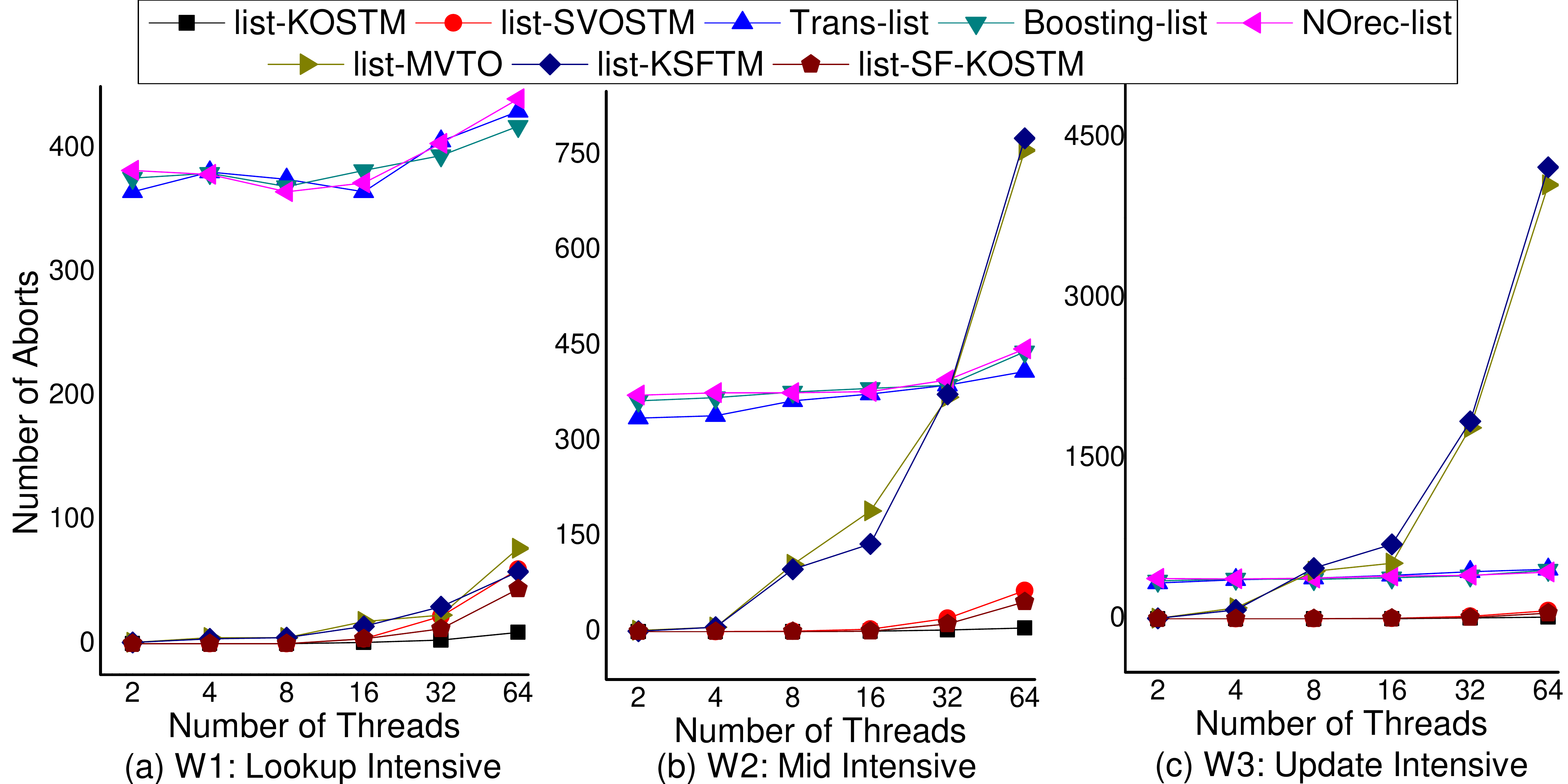}
	\caption{Abort Count of SF-KOSTM and State-of-the-art STMs on list}
	\label{fig:listabortcomp}
\end{figure}

\noindent
\textbf{Garbage Collection:} To delete the unwanted versions, we use garbage collection mechanism in SF-MVOSTM and proposed SF-MVOSTM-GC. The results show that SF-MVOSTM-GC performs better than SF-MVOSTM. Garbage collection method deletes the unwanted version corresponding to the key. In garbage collection we use a \textit{livelist}, this livelist contains all the transaction that are live, which means every transaction on the start of its first incarnation logs its time in livelist and when commit/abort remove its entry from livelist in sorted order of transactions on the basis of $wts$. Garbage collection is achieved by deleting the version which is not latest, whose timestamp is smaller than the $wts$ of smallest live transaction. \figref{mc} represents the \textbf{\emph{Memory Consumption}} by SF-MVOSTM-GC and SF-KOSTM algorithms for \emph{high contention (or HC)} and \emph{low contention (or LC)} environment on workload $W_3$ for the linked-list data structure. Here, each algorithm creates a version corresponding to the key after successful \emph{STM\_tryC()}. We calculate the memory consumption based on the \emph{Version Count (or VC)}. If version is created then VC is incremented by 1 and after garbage collection, VC is decremented by 1. \figref{mc}. (a). demonstrates that memory consumption is kept on increasing in SF-MVOSTM-GC but memory consumption by SF-KOSTM is constant because of maintaining finite versions corresponding to the less number of keys (high contention). \figref{mc}. (b). shows for \emph{low contention} environment where memory consumption are keeps on increasing in SF-MVOSTM-GC as well as SF-KOSTM. But once limit of $K$-version reach corresponding to all the keys in SF-KOSTM, memory consumption will be stable. Similar observation can be found for other workloads $W_1, W_2$ and other data structure hash table as well.

\begin{figure}[H]
	\centering
	\includegraphics[width=10cm, height=5cm]{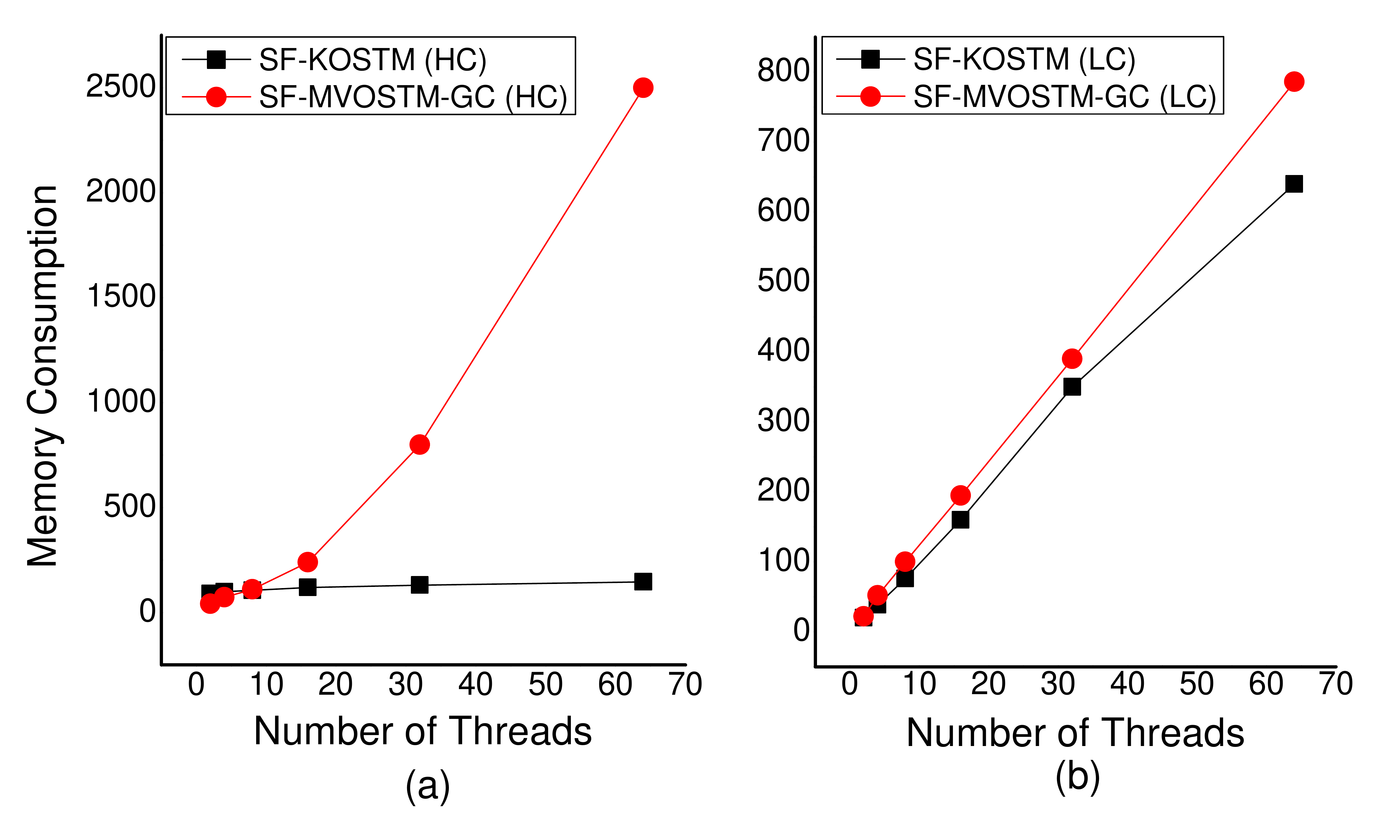}
	\vspace{-.4cm}	\caption{Comparison of Memory Consumption in SF-KOSTM and SF-MVOSTM-GC}
	\label{fig:mc}
\end{figure}

\begin{table}[H]
	\small
	\begin{tabular}{ | m{4.5cm} | m{2.5cm}| m{2.5cm} | m{2.5cm} | } 
		\hline
		Algorithm & W1 & W2 & W3 \\ 
		\hline
		HT-SF-SVOSTM & 1.49 & 1.13 & 1.09 \\
		\hline
		HT-SF-MVOSTM & 1.22 & 1.08 & 1.04 \\
		\hline
		HT-SF-MVOSTM-GC	& 1.13 & 1.03 & 1.02 \\
		\hline
		HT-SVOSTM & 3.3 & 0.77 & 1.57 \\
		\hline
		ESTM & 2.92 & 1.4 & 3.03 \\
		\hline
		RWSTM & 3.13 & 2.65 & 13.36  \\
		\hline
		HT-MVTO & 2.37 & 4.74 & 49.39  \\
		\hline
		HT-KOSTM & 0.91 & 0.7 & 0.8 \\
		\hline
		
		\hline
		
	\end{tabular}
	\captionsetup{justification=centering}
	\caption{Speedup by HT-SF-KOSTM  }
	\label{tab:htspeed}
\end{table}

\begin{table}[H]
	\small
	\begin{tabular}{ | m{4.5cm} | m{2.5cm}| m{2.5cm} | m{2.5cm} | } 
		\hline
		Algorithm & W1 & W2 & W3 \\ 
		\hline
		list-SF-SVOSTM & 1.29 & 1.26 & 1.22 \\
		\hline
		list-SF-MVOSTM & 1.25 & 1.29 & 1.12 \\
		\hline
		list-SF-MVOSTM-GC	& 1.14 & 1.5 & 1.13 \\
		\hline
		list-SVOSTM & 2.4 & 1.5 & 1.4 \\
		\hline
		Trans-list & 24.15 & 19.06 & 23.26 \\
		\hline
		Boosting-list & 22.43 & 22.52 & 27.2  \\
		\hline
		NOrec-list & 26.12 & 27.33 & 31.05  \\
		\hline
		list-MVTO & 10.8 & 23.1 & 19.57  \\
		\hline
		list-KSFTM & 5.7 & 18.4 &  74.20 \\
		\hline
		list-KOSTM	 & 0.96 & 0.98 & 0.8 \\
		\hline
		
		\hline
		
	\end{tabular}
	\captionsetup{justification=centering}
	\caption{Speedup by list-SF-KOSTM  }
	\label{tab:listspeed}
\end{table}
\vspace{-.5cm}
\section{Conclusion}
\label{sec:conc}
\vspace{-0.3cm}
\noindent
We proposed a novel \emph{Starvation-Free K-Version Object-based STM (SF-KOSTM)} which ensure the \emph{\stfdm} while maintaining the latest $K$-versions corresponding to each key and satisfies the correctness criteria as \emph{local-opacity}. The value of $K$ can vary from 1 to $\infty$. When $K$ is equal to 1 then SF-KOSTM boils down to \emph{Single-Version Starvation-Free OSTM (SF-SVOSTM)}. When $K$ is $\infty$ then SF-KOSTM algorithm maintains unbounded versions corresponding to each key known as \emph{Multi-Version Starvation-Free OSTM (SF-MVOSTM)}. To delete the unused version from the version list,  SF-MVOSTM calls a separate Garbage Collection (GC) method and proposed SF-MVOSTM-GC. SF-KOSTM provides greater concurrency and higher throughput using higher-level methods. We implemented all the proposed algorithms for \emph{hash table} and \emph{linked-list} data structure but it is generic for other data structures as well. 
Results of SF-KOSTM shows significant performance gain over state-of-the-art STMs. 


	\bibliographystyle{splncs}
		\vspace{-.4cm}
	\bibliography{citations}
	\vspace{-1cm}
\clearpage
\clearpage
\appendix
\section*{Appendix}
\label{apn:appendix}

\noindent The appendix section is organized as follows:

\begin{table}[H]
	\caption{Table of Contents}
	\begin{tabular}{ | m{.9cm} | m{5.8cm}| m{1.8cm} | m{1.3cm} | m{1.6cm} | } 
		\hline
		S. No.& Section Name  & Section No. & Page No. & Page Count \\ 
		\hline
		1. & Remaining Experimental Evaluation & \apnref{ap-result} & 16 & 8 \\
		\hline
		2. & Working of SF-SVOSTM Algorithm and Importance of Timestamp ranges & \apnref{ap-rcode} & 26 & 10\\
		\hline
		
	\end{tabular}
	\captionsetup{justification=centering}
	
	\label{tbl:appen}
\end{table}

\section{Remaining Experimental Evaluation}
\label{apn:ap-result}
This section explains the additional results and analysis that boost the potential of our work experimentally. 

\begin{figure}[H]
	\includegraphics[width=12cm, height=5cm]{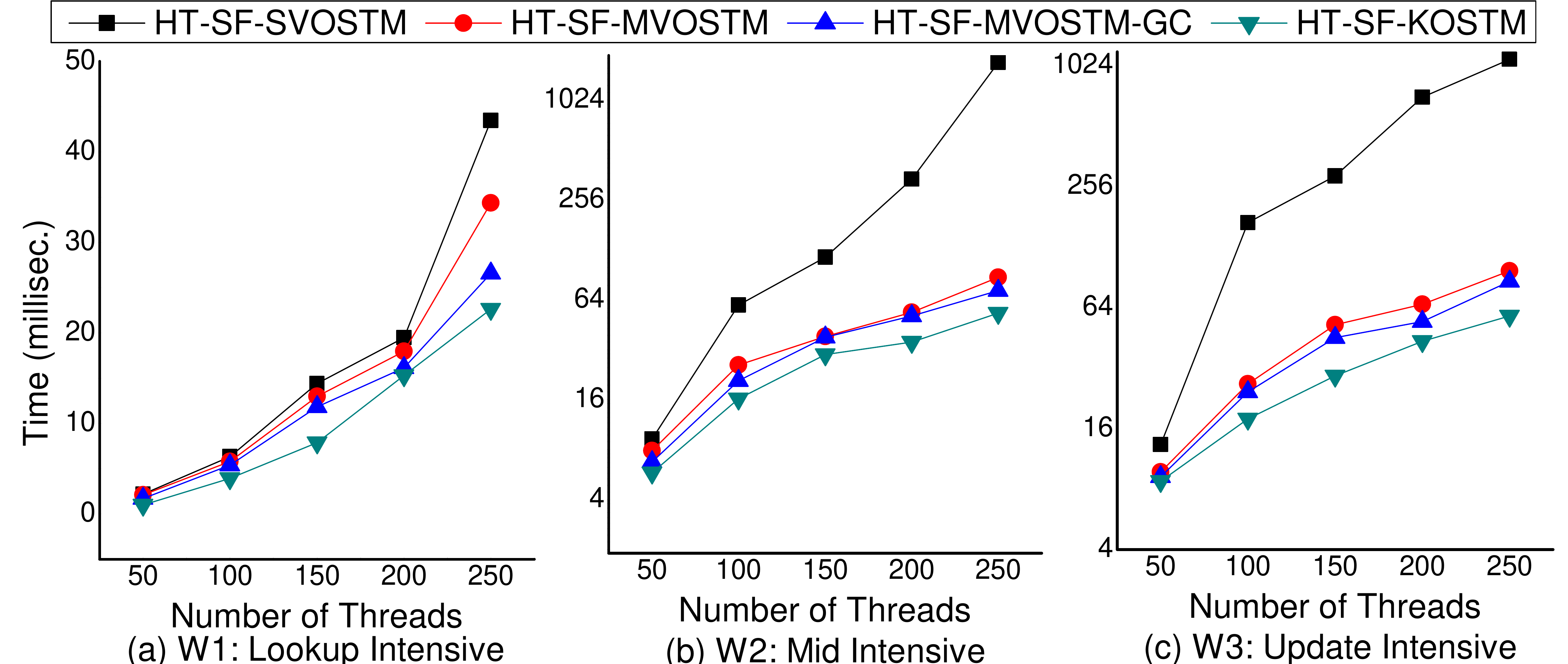}
	\caption{ Performance analysis among SF-SVOSTM and variants of SF-KOSTM on hash table}
	\label{fig:htus}
\end{figure}

\noindent
We started our experiments with hash table data structure of bucket size 5 as shown in \figref{htus}. First, we compared \emph{max-time} taken by all our proposed HT-SF-SVOSTM and variations of HT-SF-KOSTM (HT-SF-MVOSTM, HT-SF-MVOSTM-GC, and HT-SF-KOSTM) while varying the number of threads from 50 to 250. We did these experiments on \emph{high contention environment} which includes only 30 keys, number of threads ranging from 50 to 250, each thread spawns upon a transaction, where each transaction performs 10 operations (insert, delete and lookup) depending upon the workload chosen. We analyze through \figref{htus} that the finite version HT-SF-KOSTM performs best among all the proposed algorithms on all the three types of workloads (W1, W2, and W3 defined in \secref{exp}) with value of $K$ and C as 5 and 0.1 respectively. Here, $K$ is the number of versions in the version list and C is the variable used to derive the $wts$. Similarly, we consider another data structure as linked-list on \emph{high contention environment} and compared \emph{max-time} taken by all our proposed list-SF-SVOSTM, list-SF-MVOSTM, list-SF-MVOSTM-GC, and list-SF-KOSTM algorithms. \figref{listus} represents that list-SF-KOSTM performs best for all the workloads (W1, W2, and W3) with value of $K$ and C as 5 and 0.1 respectively. 

\begin{figure}[H]
	\includegraphics[width=12cm, height=5cm]{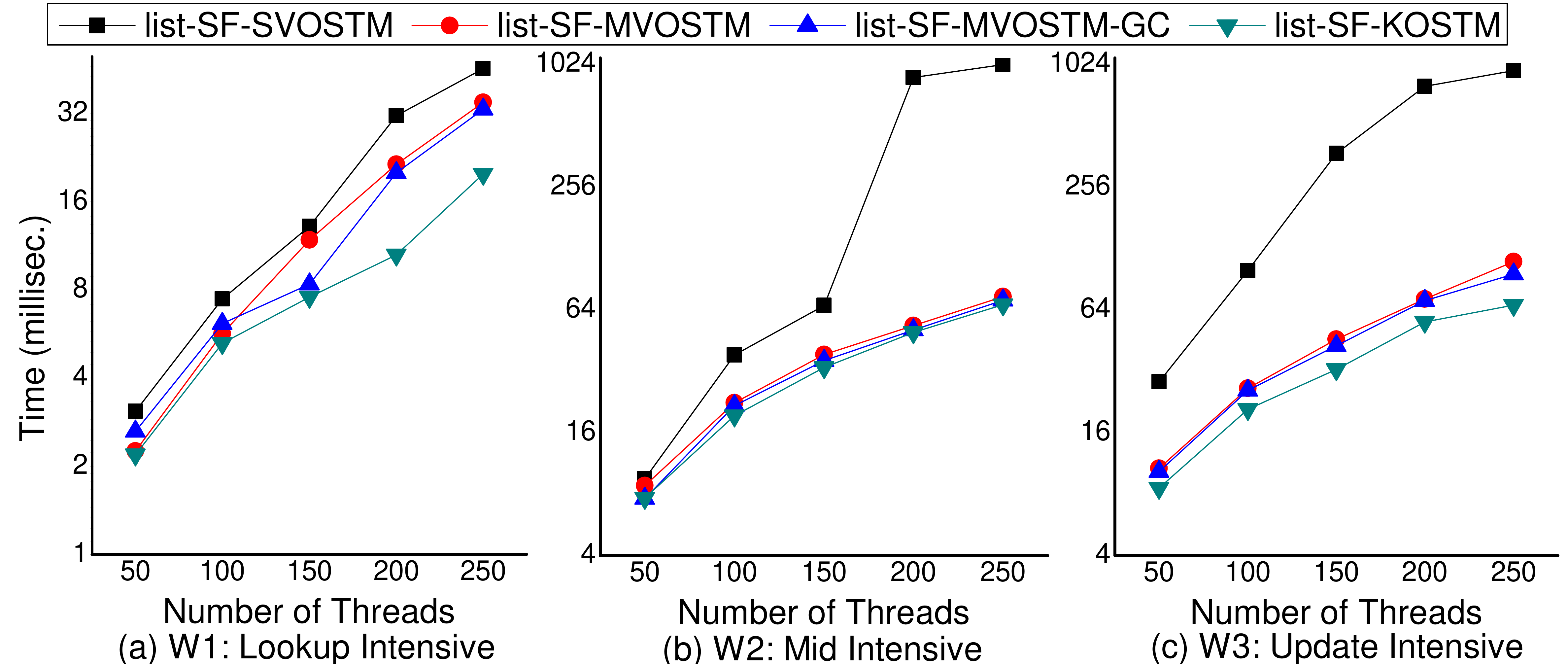}
	\caption{ Performance analysis among SF-SVOSTM and variants of SF-KOSTM on list}
	\label{fig:listus}
\end{figure}
\cmnt{

\begin{figure}[H]
	\includegraphics[width=12cm, height=5cm]{figs/htCK.pdf}
	\caption{ Optimal value of K and C along with Stability for hash table}
	\label{fig:htck}
\end{figure}

\noindent
\textbf{Best value of $K$, $C$, and Stability in SF-KOSTM:} We identified the  best value of $K$ for both HT-SF-KOSTM and list-SF-KOSTM algorithms. \figref{htck}.(a). demonstrates the best value of $K$ as 5 for HT-SF-KOSTM on counter application. We achieve this while varying value of $K$ on \emph{high contention environment} with 64 threads on all the workloads $W1, W2, W3$. \figref{htck}.(b). illustrates the best value of $C$ as 0.1 for HT-SF-KOSTM on all the workloads $W1, W2, W3$. \figref{htck}.(c). represents the $stability$ of HT-SF-KOSTM algorithm overtime for the counter application. For this experiment, we fixed 32 threads, 1000 shared data-items (or keys), the value of $K$ as 5, and $C$ as 0.1. Along with this, we consider, 5 seconds warm-up period on all the workloads $W1, W2, W3$. Each thread invokes transactions until its time-bound of 60 seconds expires.
We calculate the number of transactions committed in the incremental interval of 5 seconds. \figref{htck}.(c). shows that over time HT-SF-KOSTM is stable which helps to hold the claim that the performance of HT-SF-KOSTM will continue in the same manner if time is increased to higher orders. Similarly, we perform the same experiments for the linked-list data structure as well. \figref{listck}.(a). and \figref{listck}.(b). demonstrate the best value of $K$ as 5 and $C$ as 0.1 for list-SF-KOSTM on all the workloads $W1, W2, W3$. Similarly, \figref{listck}.(c). illustrates the $stability$ of list-SF-KOSTM and shows that it is stable over time on all the workloads $W1, W2, W3$.

\begin{figure}[H]
	\includegraphics[width=12cm, height=5cm]{figs/listCK.pdf}
	\caption{ Optimal value of K and C along with Stability for list}
	\label{fig:listck}
\end{figure}
\begin{figure}[H]
	\includegraphics[width=12cm, height=5cm]{figs/ht-time-us.pdf}
	\caption{Time comparison among SF-SVOSTM and variants of SF-KOSTM on hash table}
	\label{fig:httimeus}
\end{figure}

We have done some experiments on \textit{low contention environment} which involves 1000 keys, threads varying from 2 to 64 in power of 2, each thread spawns one transaction and each transaction executes 10 operations (insert, delete and lookup) depending upon the workload chosen. We observed that HT-SF-KOSTM performs best out of all the proposed algorithms on $W1$ and $W2$ workload as shown in \figref{httimeus}. For a lesser number of threads on $W3$, HT-SF-KOSTM was taking a bit more time than other proposed algorithms as shown in \figref{httimeus}.(c). This may be because of the finite version, finding and replacing the oldest version is taking time. After that, we consider  HT-SF-KOSTM and compared against state-of-the-art STMs.  \figref{httimecomp} shows that our proposed algorithm performs better than all other state-of-art-STMs algorithms but slightly lesser than the non starvation-free HT-KOSTM. But to provide the guarantee of starvation-freedom this slight slag of time is worth paying. For the better clarification of speedups, please refer to \tabref{htspeed}.  Similarly, \figref{listtimeus} represents the analysis of \emph{low contention environment} for list data structure where list-SF-KOSTM performs best out of all the proposed algorithms. \figref{listtimecomp} demonstrates the comparison of proposed list-SF-KOSTM with list based state-of-the-art STMs and shows the significant performance gain in terms of a speedup as presented in \tabref{listspeed}. For low contention environment, starvation-freedom is appearing as an overhead so, both HT-SF-KOSTM and list-SF-KOSTM  achieve a bit less speedup than HT-KOSTM and list-KOSTM. But for high contention environment, starvation-free algorithms are always better so, both HT-SF-KOSTM and list-SF-KOSTM achieve better speedup than HT-KOSTM and list-KOSTM as explained in \secref{exp}.

\begin{figure}[H]
	\includegraphics[width=12cm, height=5cm]{figs/ht-time-comp.pdf}
	\caption{Time comparison of SF-KOSTM and State-of-the-art STMs on hash table}
	\label{fig:httimecomp}
\end{figure}

\begin{figure}[H]
	\includegraphics[width=12cm, height=5cm]{figs/list-time-us.pdf}
	\caption{Time comparison among SF-SVOSTM and variants of SF-KOSTM on list}
	\label{fig:listtimeus}
\end{figure}
\begin{figure}[H]
	\includegraphics[width=12cm, height=5cm]{figs/list-time-comp.pdf}
	\caption{Time comparison of SF-KOSTM and State-of-the-art STMs on list}
	\label{fig:listtimecomp}
\end{figure}

\begin{figure}[H]
	\includegraphics[width=12cm, height=5cm]{figs/ht-abort-us.pdf}
	\caption{Abort Count of SF-SVOSTM and variants of SF-KOSTM on hash table}
	\label{fig:htabortus}
\end{figure}
\begin{figure}[H]
	\includegraphics[width=12cm, height=5cm]{figs/list-abort-us.pdf}
	\caption{Abort Count of SF-SVOSTM and variants of SF-KOSTM on list}
	\label{fig:listabortus}
\end{figure}

\noindent
\textbf{Abort Count:} We analyzed the number of aborts on low contention environment as defined above. \figref{htabortus} and \figref{listabortus} show the number of aborts comparison among all the proposed variants in all three workloads W1, W2 and W3 for both data structures hash table and linked-list. The results show that HT-SF-KOSTM and list-SF-KOSTM have relatively less  number of aborts than other proposed algorithms. Similarly, \figref{htabortcomp} and \figref{listabortcomp} shows the number of aborts comparison among proposed HT-SF-KOSTM with hash table based state-of-the-art STMs and proposed list-SF-KOSTM with list based state-of-the-art STMs in all three workloads W1, W2, and W3. The result shows that the least number of aborts are happening with HT-SF-KOSTM and list-SF-KOSTM.
\noindent


\begin{figure}[H]
	\includegraphics[width=12cm, height=5cm]{figs/ht-abort-comp.pdf}
	\caption{Abort Count of SF-KOSTM and State-of-the-art STMs on hash table}
	\label{fig:htabortcomp}
\end{figure}

\begin{figure}[H]
	\includegraphics[width=12cm, height=5cm]{figs/list-abort-comp.pdf}
	 \caption{Abort Count of SF-KOSTM and State-of-the-art STMs on list}
	\label{fig:listabortcomp}
\end{figure}

\noindent
\textbf{Garbage Collection:} To delete the unwanted versions, we use garbage collection mechanism in SF-MVOSTM and proposed SF-MVOSTM-GC. The results show that SF-MVOSTM-GC performs better than SF-MVOSTM. Garbage collection method deletes the unwanted version corresponding to the key. In garbage collection we use a \textit{livelist}, this livelist contains all the transaction that are live, which means every transaction on the start of its first incarnation logs its time in livelist and when commit/abort remove its entry from livelist in sorted order of transactions on the basis of $wts$. Garbage collection is achieved by deleting the version which is not latest, whose timestamp is smaller than the $wts$ of smallest live transaction. \figref{mc} represents the \textbf{\emph{Memory Consumption}} by SF-MVOSTM-GC and SF-KOSTM algorithms for \emph{high contention (or HC)} and \emph{low contention (or LC)} environment on workload $W_3$ for the linked-list data structure. Here, each algorithm creates a version corresponding to the key after successful \emph{STM\_tryC()}. We calculate the memory consumption based on the \emph{Version Count (or VC)}. If version is created then VC is incremented by 1 and after garbage collection, VC is decremented by 1. \figref{mc}. (a). demonstrates that memory consumption is kept on increasing in SF-MVOSTM-GC but memory consumption by SF-KOSTM is constant because of maintaining finite versions corresponding to the less number of keys (high contention). \figref{mc}. (b). shows for \emph{low contention} environment where memory consumption are keeps on increasing in SF-MVOSTM-GC as well as SF-KOSTM. But once limit of $K$-version reach corresponding to all the keys in SF-KOSTM, memory consumption will be stable. Similar observation can be found for other workloads $W_1, W_2$ and other data structure hash table as well.

\begin{figure}[H]
	\centering
	\includegraphics[width=10cm, height=5cm]{figs/mc.pdf}
\vspace{-.4cm}	\caption{Comparison of Memory Consumption in SF-KOSTM and SF-MVOSTM-GC}
	\label{fig:mc}
\end{figure}

\begin{table}[H]
	\small
	\begin{tabular}{ | m{4.5cm} | m{2.5cm}| m{2.5cm} | m{2.5cm} | } 
		\hline
		Algorithm & W1 & W2 & W3 \\ 
		\hline
		HT-SF-SVOSTM & 1.49 & 1.13 & 1.09 \\
		\hline
		HT-SF-MVOSTM & 1.22 & 1.08 & 1.04 \\
		\hline
		HT-SF-MVOSTM-GC	& 1.13 & 1.03 & 1.02 \\
		\hline
		HT-SVOSTM & 3.3 & 0.77 & 1.57 \\
		\hline
		ESTM & 2.92 & 1.4 & 3.03 \\
		\hline
		RWSTM & 3.13 & 2.65 & 13.36  \\
		\hline
		HT-MVTO & 2.37 & 4.74 & 49.39  \\
		\hline
		HT-KOSTM & 0.91 & 0.7 & 0.8 \\
		\hline
		
		\hline
	
	\end{tabular}
	\captionsetup{justification=centering}
	\caption{Speedup by HT-SF-KOSTM  }
	\label{tab:htspeed}
\end{table}

\begin{table}[H]
	\small
	\begin{tabular}{ | m{4.5cm} | m{2.5cm}| m{2.5cm} | m{2.5cm} | } 
		\hline
		Algorithm & W1 & W2 & W3 \\ 
		\hline
		list-SF-SVOSTM & 1.29 & 1.26 & 1.22 \\
		\hline
		list-SF-MVOSTM & 1.25 & 1.29 & 1.12 \\
		\hline
		list-SF-MVOSTM-GC	& 1.14 & 1.5 & 1.13 \\
		\hline
		list-SVOSTM & 2.4 & 1.5 & 1.4 \\
		\hline
		Trans-list & 24.15 & 19.06 & 23.26 \\
		\hline
		Boosting-list & 22.43 & 22.52 & 27.2  \\
		\hline
		NOrec-list & 26.12 & 27.33 & 31.05  \\
		\hline
		list-MVTO & 10.8 & 23.1 & 19.57  \\
		\hline
		list-KSFTM & 5.7 & 18.4 &  74.20 \\
		\hline
		list-KOSTM	 & 0.96 & 0.98 & 0.8 \\
		\hline
		
		\hline
	
	\end{tabular}
	\captionsetup{justification=centering}
	\caption{Speedup by list-SF-KOSTM  }
	\label{tab:listspeed}
\end{table}

}
\subsection{Pseudo code of Counter Application}
\label{apn:countercode}
To analyze the absolute benefit of starvation-freedom, we use a \emph{Counter Application} which provides us the flexibility to create a high contention environment where the probability of transactions undergoing starvation on an average is very high. In this subsection we describe the detailed functionality of \emph{Counter Application} though pseudo code as follows:    
\label{apn:conters}
\begin{algorithm}  
	\caption{\emph{main()}: The main function invoked by \emph{Counter Application}.} \label{algo:main} 
	\begin{algorithmic}[1]
	\makeatletter\setcounter{ALG@line}{143}\makeatother		
	\State /*Each thread $th_i$ log \emph{abort counts, average time taken by each transaction to commit and worst case time} (maximum time to commit the transaction) in $abortCount_{th_i}$, $timeTaken_{th_i}$ and $worstTime_{th_i}$ respectively;*/
		\ForAll{(numOfThreads)} /*Multiple threads call the helper function*/
		\State \emph{helperFun()};
		\EndFor 
		\ForAll{(numOfThreads)} 
		\State /*Join all the threads*/ 
		\EndFor 
		\ForAll{(numOfThreads)}
		\If{($maxWorstTime$ $<$ $worstTime_{th_i}$)}
		\State /*Calculate the \emph{Maximum Worst Case Time}*/
		\State $maxWorstTime$ = $worstTime_{th_i}$; 
		\EndIf
		\State /*Calculate the \emph{Total Abort Count}*/
		\State $totalAbortCount$ += $abortCount_{th_i}$;
		\State /*Calculate the \emph{Average Time Taken}*/
		\State $AvgTimeTaken$ /= $TimeTaken_{th_i}$;
		\EndFor
	\end{algorithmic}
\end{algorithm}

\vspace{1mm}
\begin{algorithm} [H] 
	\caption{\emph{helperFun()}:Multiple threads invoke this function.} \label{algo:testFunc} 
	\begin{algorithmic}[1]
		\makeatletter\setcounter{ALG@line}{160}\makeatother		
	    \State Initialize the Transaction Count $txCount_i$ of $T_i$ as 0;
		\While{(\emph{numOfTransac\texttt{}tions})} /*Execute until number of  transactions are non zero*/
		    \State $startTime_{th_i}$ = timeRequest(); /*get the start time of thread $th_i$*/
		    \State\Comment{Execute the transactions $T_i$ by invoking \emph{testSTM} functions;}
		    \State $abortCount_{th_i}$ = $testSTM_i()$;
		    \State Increment the $txCount_i$ of $T_i$ by one.
		    \State $endTime_{th_i}$ = timeRequest(); /*get the end time of thread $th_i$*/
		    \State /*Calculate the \emph{Total Time Taken} by each thread $th_i$*/
		    \State $timeTaken_{th_i}$ += ($endTime_{th_i}$ - $startTime_{th_i}$);
		    \State /*Calculate the \emph{Worst Case Time} taken by each thread $th_i$*/
		    \If{($worstTime_{th_i}$ $<$ ($endTime_{th_i}$ - $startTime_{th_i}$))}
		    \State $worstTime_{th_i}$ = ($endTime_{th_i}$ - $startTime_{th_i}$); 
		    \EndIf
		    \State Atomically, decrement the \emph{numOfTransactions};
\cmnt{
		    			\algstore{myalg}
		\end{algorithmic}
	\end{algorithm}
	\begin{algorithm}
		\begin{algorithmic}
			\algrestore{myalg}
}			
		\EndWhile
		\State /*Calculate the \emph{Average Time} taken by each thread $th_i$*/
		\State $TimeTaken_{th_i}$ /= $txCount_i$;
	\end{algorithmic}
\end{algorithm}

\vspace{1mm}
\begin{algorithm} [H] 
	\caption{$testSTM_i()$: Main function which executes the methods of the transaction $T_i$ (or $i$) by thread $th_i$.} \label{algo:testFunx} 
	\begin{algorithmic}[1]
	\makeatletter\setcounter{ALG@line}{177}\makeatother			
    \While{(\emph{true})}
    \If{(\emph{i.its} != \emph{nil})}\label{lin:begin}
    \State \emph{STM\_begin(i.its)};  /*If $T_i$ is an incarnation*/
	\Else
	\State \emph{STM\_begin(nil)}; /*If $T_i$ is first invocation*/
    \EndIf
    \ForAll{(\emph{numOfMethods})}
    \State $k_i$ = rand()\%totalKeys;/*Select the key randomly*/
    \State $m_i$ = rand()\%100;/*Select the method randomly*/
    		\Switch{($m_i$)}\label{lin:cminer8}
    \Case{($m_i$ $\leq$ \emph{STM\_lookup}()):}\label{lin:cminer9}
    \State $v$ $\gets$ \emph{STM\_lookup{($k_i$)}}; /*Lookup key $k$ from a shared memory*/\label{lin:cminer10}
    \If{($v$ == $abort$)}\label{lin:cminer11}
    \State $txAbortCount_i++$; /*Increment the transaction abort count*/
    \State goto \Lineref{begin};\label{lin:cminer12}
    \EndIf\label{lin:cminer13}
    \EndCase
    \Case{(\emph{STM\_lookup()} $<$ $m_i$ $\leq$ \emph{STM\_insert}()):} \label{lin:cminer14}
    \State /*Insert key $k_i$ into $T_i$ local memory with value $v$*/\label{lin:cminer15}
    \State \emph{STM\_insert($k_i, v$)}; \label{lin:cminer16}
    \EndCase
    \Case{(\emph{STM\_insert()} $<$ $m_i$ $\leq$ \emph{STM\_delete}()}):
    \State /*Actual deletion happens after successful \emph{STM\_tryC()}*/
    \State \emph{STM\_delete($k_i$)};
    \EndCase
    \Case{default:}\label{lin:cminer17}
    \State /*Neither lookup nor insert/delete on shared memory*/\label{lin:cminer18}
    \EndCase
    \State $v$ = \emph{STM\_tryC()}; /*Validate all the methods of $T_i$ in tryC*/
    \If{($v$ == \emph{abort})}
     \State $txAbortCount_i++$;
        \State goto \Lineref{begin};
    \EndIf
        \EndSwitch
    \EndFor
    \State return $\langle txAbortCount_i \rangle$;
    \EndWhile
	\end{algorithmic}
\end{algorithm}
\section{Working of SF-SVOSTM and Remaining Pseudo code of SF-KOSTM Algorithms}
\label{apn:ap-rcode}
In this section, we cover the working of \emph{Starvation-Free Single-Version OSTM (SF-SVOSTM)} algorithm and remaining pseudo code of \emph{Starvation-Free Multi-Version OSTM (SF-MVOSTM)}.
\subsection{Working of SF-SVOSTM Algorithm and Importance of Timestamp ranges}
\label{apn:ap-sfostm}

This subsection describes the working of \emph{Starvation-Free Single-Version OSTM (SF-SVOSTM)} algorithm which includes the detail description of SF-SVOSTM methods and challenges to make it starvation-free.  

SF-SVOSTM invokes \tbeg{()}, \tlook{()}, \tdel{()}, \tins{()}, and \tryc{()} methods. \tlook{()} and \tdel{()} works as \rvmt{()} which lookup the value of key $k$ from shared memory and returns it. Whereas \tins{()} and \tdel{()} work as \upmt{()} that modifies the value of $k$ in shared memory. We propose optimistic SF-SVOSTM, so, \upmt{()} first update the value of $k$ in its local log $txLog$ and the actual effect of \upmt{()} will be visible after successful \tryc{()}. This subsection explains the functionality of each method as follows:

\noindent
\textbf{\tbeg{():}} 
We show the high-level view of \tbeg{()} in \algoref{obegin1}. When a thread $Th_i$ invokes transaction $T_i$ for the first time (or first incarnation) then \tbeg{()} assigns a unique timestamp known as \emph{current timestamp} $(cts)$ as shown in \Lineref{obeg6}. It is incremented atomically with the help of atomic global counter ($gcounter$). If $T_i$ gets aborted then thread $Th_i$ executes it again with new incarnation of $T_i$, say $T_j$ with the new $cts$ until $T_i$ commits but retains its initial $cts$ as \emph{initial timestamp} $(its)$ at \Lineref{obeg8}. $Th_i$ uses $its$ to inform the STM system that whether $T_i$ is a new invocation or an incarnation. 
If $T_i$ is the first incarnation then $its$ and $cts$ are same as $cts_i$ so, $Th_i$ maintains $\langle \emph{$its_i$, $cts_i$}\rangle$. If $T_i$ gets aborted and retries with $T_j$ then $Th_i$ maintains $\langle \emph{$its_i$, $cts_j$}\rangle$. We use timestamp (\emph{ts}) $i$ of $T_i$ as $cts_i$, i.e., $\langle cts_i$ = $i\rangle$ for SF-SVOSTM. 



By assigning priority to the lowest $its$ (i.e. transaction have been in the system for longer time) \emph{Starvation-Freedom} can achieve in \emph{Single-Version OSTM}. 

\begin{algorithm}[H] 
	\caption{\emph{STM\_begin($its$)}: This method is invoke by a thread $Th_i$ to start a new transaction $T_i$. It pass a parameter $its$ which is the initial timestamp of the first incarnation of $T_i$. If $T_i$ is the first incarnation then $its$ is $nil$.}
	\label{algo:obegin1} 
		\begin{algorithmic}[1]
			\makeatletter\setcounter{ALG@line}{209}\makeatother
			\Procedure{\emph{STM\_begin($its$)}}{}  \label{lin:obeg1}
			\State Create a local log $txLog_i$ for each transaction $T_i$. \label{lin:obeg2}
			\If {($its == nil$)} \label{lin:obeg3}
			\State /* Atomically get the value from the global counter and set it to its, and, \blank{1cm} cts.*/ \label{lin:obeg4}
			\State $its_i$ = $cts_i$ = \emph{gcounter.get\&Inc()}; \label{lin:obeg5}
			\Else  \label{lin:obeg6}
			\State /*Set the $its_i$ to first incarnation of $T_i$ $its$*/ \label{lin:obeg7}
			\State $its_i$ = $its$; \label{lin:obeg8}
			\State /*Atomically get the value from the global counter for $cts_i$*/ \label{lin:obeg9}
			\State $cts_i$ = \emph{gcounter.get\&Inc()}. \label{lin:obeg10}
			\EndIf \label{lin:obeg11}
			\State /*Initially, set the $status_i$ of $T_i$ as $live$*/ \label{lin:obeg12}
			\State $status_i$ = $live$;  \label{lin:obeg13}
			\State return $\langle cts_i, its_i\rangle$ \label{lin:obeg14}
			
			\EndProcedure
		\end{algorithmic}
\end{algorithm}

\cmnt{

\begin{algorithm} 
	\scriptsize
	\caption{\emph{STM\_begin($its$)}: This method is invoke by a thread to start a new transaction $T_i$. It pass a parameter $its$ which is the initial timestamp of the first incarnation of $T_i$. If this is the first incarnation then $its$ is $nil$.}
		\label{algo:obegin1} 
	\setlength{\multicolsep}{0pt}
	\begin{multicols}{2}
	\begin{algorithmic}[1]
		\makeatletter\setcounter{ALG@line}{68}\makeatother
		\Procedure{\emph{STM\_begin($its$)}}{}  \label{lin:obeg1}
		\State Creating a local log $txLog_i$ for each transaction. \label{lin:obeg2}
		\If {($its == nil$)} \label{lin:obeg3}
		\State /* Atomically get the value from the global \blank{.7cm} counter and set it to its, and, cts.*/ \label{lin:obeg4}
		\State $its_i$ = $cts_i$ = \emph{gcounter.get\&Inc()}; \label{lin:obeg5}
		\Else  \label{lin:obeg6}
		\State /*Set the $its_i$ to first incarnation of $T_i$ $its$*/ \label{lin:obeg7}
		\State $its_i$ = $its$; \label{lin:obeg8}
		\State /*Atomically get the value from the global \blank{.7cm} counter for $cts_i$*/ \label{lin:obeg9}
		\State $cts_i$ = \emph{gcounter.get\&Inc()}. \label{lin:obeg10}
		\EndIf \label{lin:obeg11}
		\State /*Initially, set the $status_i$ of $T_i$ as $live$*/ \label{lin:obeg12}
		\State $state_i$ = $live$;  \label{lin:obeg13}
		\State return $\langle cts_i, its_i\rangle$ \label{lin:obeg14}
		
		\EndProcedure
	\end{algorithmic}
	\end{multicols}
\end{algorithm}

\noindent
\tbeg{()} maintains the $wts$ for transaction $T_i$ as $wts_i$, which is potentially higher timestamp as compare to $cts_i$. So, we derived,
\setlength\abovedisplayskip{0pt}
\begin{equation}
	\label{eq:wtsf}
	\twts{i} = \tcts{i} + C * (\tcts{i} - \tits{i});
	\vspace{-.2cm}
\end{equation}
Where C is any constant value greater than 0. When $T_i$ is issued for the first time then $wts_i$, $cts_i$, and $its_i$ are same. If $T_i$ gets aborted again and again then drift between the $cts_i$ and $wts_i$ will increases. The advantage for maintaining $wts_i$ is if any transaction keeps getting aborted then its $wts_i$ will be high and $its_i$ will be low. So, eventually $T_i$ will get chance to commit in finite number of steps to achieve starvation-freedom. 
\vspace{-.2cm}
\begin{observation}
	Any transaction $T_i$ with lowest $its_i$ and highest $wts_i$ will never abort.
\end{observation}

\vspace{-.1cm}
Sometimes, the value of $wts$ is significantly larger than $cts$. So, $wts$ is unable to maintain \emph{real-time order} between the transactions as shown in \figref{} in \apnref{ap-rcode}. To address this issue SF-KOSTM uses the idea of timestamp ranges \cite{Riegel+:LSA:DISC:2006,Guer+:disc1:2008, Crain+:RI_VWC:ICA3PP:2011} along with the $\langle its_i, cts_i, wts_i \rangle$ for transaction $T_i$ in \tbeg{()}. It maintains the \emph{transaction lower timestamp limit ($tltl_i$)} and \emph{transaction upper timestamp limit ($tutl_i$)} for transaction $T_i$. It helps to maintain the \emph{real-time order} among the transactions. Initially, $\langle its_i, cts_i, wts_i, tltl_i \rangle$ are the same for $T_i$. $tutl_i$ would be set as a largest possible value denoted as $+\infty$ for $T_i$. After successful execution of \tryc{()} of transaction $T_i$, $tltl_i$ gets incremented and $tutl_i$ gets decremented\footnote{Practically $\infty$ can't be decremented for $tutl_i$ so we assign the highest possible value to $tutl_i$ which gets decremented.}.
}
 \tbeg{()} initializes the \emph{transaction local log $(txLog_i)$} for each transaction $T_i$ to store the information in it. Whenever a transaction starts it atomically sets its \emph{status} to be \emph{live} as a global variable at \Lineref{obeg13}. Transaction \emph{status} can be $\langle live, commit, false\rangle$. After successful execution of \tryc{()}, $T_i$ sets its \emph{status} to be \emph{commit}. If the \emph{status} of the transaction $T_i$ is \emph{false} then it returns \emph{abort} which says that $T_i$ is not having the lowest $its$ among other concurrent conflicting transactions. So to propose starvation-free SVOSTM other conflicting transactions sets its \emph{status} field as \emph{false} and force transaction $T_i$ to \emph{abort}.



\noindent
\textbf{\tlook{()} and \tdel{()} as \rvmt{s()}:} \emph{\rvmt{(ht, k, val)}} returns the value (\emph{val}) corresponding to the key \emph{k} from the shared memory as hash table (\emph{ht}). We show the high-level overview of the \rvmt{()} in \algoref{orvmt}. First, it identifies the key $k$ in the transaction local log as \emph{$txLog_i$} for transaction $T_i$. If $k$ exists then it updates the $txLog_i$ and returns the $val$ at \Lineref{orvm2}.

If key $k$ does not exist in the $txLog_i$ then before identifying the location in share memory \rvmt{()} checks the \emph{status} of $T_i$ at \Lineref{orvm5}. If the \emph{status} of $T_i$ (or $i$) is \emph{false} then $T_i$ has to \emph{abort} which says that $T_i$ is not having the lowest $its$ among other concurrent conflicting transactions. So to propose starvation-free SVOSTM other conflicting transactions sets its \emph{status} field as \emph{false} and force transaction $T_i$ to \emph{abort}.

\begin{algorithm}
	\caption{\emph{\rvmt{(ht, k, val)}:} It can either be $\tdel_i(ht, k, val)$ or $\tlook_i(ht, k, val)$ on key $k$ of transaction $T_i$.} \label{algo:orvmt} 	
	\setlength{\multicolsep}{0pt}
	\begin{algorithmic}[1]
		\makeatletter\setcounter{ALG@line}{224}\makeatother
		\Procedure{$\rvmt_i$}{$ht, k, val$}	\label{lin:orvm1}	
		\If{($k \in \llog_i$)}
		\State Update the local log and return $val$.  \label{lin:orvm2}
		\Else \label{lin:orvm3}
		\State /*Atomically check the \emph{status} of its own transaction $T_i$ (or $i$)*/ \label{lin:orvm4} 
		\If{(\emph{i.status == false})} return $\langle abort_i \rangle$. \label{lin:orvm5}
		\EndIf\label{lin:orvm6}
		\State Identify the \emph{preds[]} and \emph{currs[]} for $k$ in bucket $M_k$ of \lsl using \bn \blank{1cm} and \rn . \label{lin:orvm7}
		\State Acquire locks on \emph{preds[]} \& \emph{currs[]} in increasing order. \label{lin:orvm8}
		\If{(\emph{!rv\_Validation(preds[], currs[])})} \label{lin:orvm9}
		\State Release the locks and goto \linref{orvm7}. \label{lin:orvm10}
		\EndIf \label{lin:orvm11}
		\If{($k  ~ \notin ~ M_k.\lsl$)} \label{lin:orvm12} 
		\State Create a new node $n$ with key $k$ as:  $\langle$\emph{key=k, lock=false, mark=true, \blank{1.5cm} rvl=i, nNext=}$\phi\rangle$./*$n$ is marked*/  \label{lin:orvm13}
		\State Insert $n$ into $M_k.$\emph{\lsl} s.t. it is accessible only via \rn{s}. /*\emph{lock} sets \blank{1.6cm}\emph{true}*/  \label{lin:orvm15}  	
		\State Release locks; update the $\llog_i$ with $k$.\label{lin:orvm16}
		\State return $\langle$\emph{val}$\rangle$. /*\emph{val} as $null$*/\label{lin:orvm17}
		\Else
		\State Add $i$ into the $rvl$ of \emph{currs[]}. \label{lin:orvm29} 
		\State Release the locks; update the $\llog_i$ with $k$ and value. \label{lin:orvm30}
		\State return $\langle$$val$$\rangle$. \label{lin:orvm32}
		\EndIf \label{lin:orvm18}
		\EndIf \label{lin:orvm31}
		\EndProcedure \label{lin:orvm33}
	\end{algorithmic}
\end{algorithm}
\vspace{.2mm}

If the \emph{status} of $T_i$ is not \emph{false} and $k$ is not exist in the $txLog_i$ then it identifies the location optimistically (without acquiring the locks similar to the \emph{lazy-list}\cite{Heller+:LazyList:PPL:2007}) in the shared memory at \Lineref{orvm7}. SF-SVOSTM maintains the shared memory in the form of a hash table with $M$ buckets as shown in \subsecref{dsde}, where each bucket stores the keys in the form of \emph{\lsl}. Each node contains two pointer $\langle \rn, \bn \rangle$. So, it identifies the two \emph{predecessors (pred)} and two \emph{current (curr)} with respect to each node. First, it identifies the pred and curr for key $k$ in \bn{} as $\langle \bp, \bc \rangle$. After that it identifies the pred and curr for key $k$ in \rn{} as $\langle \rp, \rc \rangle$. If $\langle \rp, \rc \rangle$ are not marked then $\langle \bp=\rp, \bc=\rc\rangle$. SF-SVOSTM maintains the keys are in increasing order. So the order among the nodes are $\langle \bp.key \leq \rp.key < k \leq \rc.key \leq \bc.key\rangle$.

\rvmt{()} acquires the lock in predefined order on all the identified preds and currs for key $k$ to avoid the deadlock at \Lineref{orvm8} and do the \emph{rv\_Validation()} at \Lineref{orvm9}. If $\langle \bp \vee \bc\rangle$ is marked or preds are not pointing to identified currs as $\langle (\bp.\bn \neq \bc) \vee (\rp.\rn \neq \rc)\rangle$ shown in \algoref{llsearch} then it releases the locks on all the preds and currs and identify the new preds and currs for key $k$ in shared memory.

\cmnt{

\begin{algorithm}
	\scriptsize
	\caption{\emph{\rvmt{()}:} Could be either $\tdel_i(ht, k, val)$ or $\tlook_i(ht, k, val)$ on key $k$.} \label{algo:orvmt} 	
	\setlength{\multicolsep}{0pt}
	\begin{multicols}{2}
		\begin{algorithmic}[1]
			\makeatletter\setcounter{ALG@line}{83}\makeatother
			\Procedure{$\rvmt_i$}{$ht, k, val$}	\label{lin:orvm1}	
			\If{($k \in \llog_i$)} \label{lin:orvm2}
			\State Update the local log and return $val$. 
			\Else \label{lin:orvm3}
			\State /*Atomically check the \emph{status} of its own transac- \blank{.7cm}tion $T_i$ (or $i$)*/ \label{lin:orvm4} 
			\If{(\emph{i.status == false})} return $\langle abort_i \rangle$. \label{lin:orvm5}
			\EndIf\label{lin:orvm6}
			\State Identify the \emph{preds[]} and \emph{currs[]} for $k$ in bucket \blank{.7cm} $M_k$ of \lsl using \bn and \rn . \label{lin:orvm7}
			\State Acquire locks on \emph{preds[]} \& \emph{currs[]} in increasing \blank{.7cm} order. \label{lin:orvm8}
			\If{(\emph{!rv\_Validation(preds[], currs[])})} \label{lin:orvm9}
			\State Release the locks and goto \linref{orvm7}. \label{lin:orvm10}
			\EndIf \label{lin:orvm11}
			\If{($k  ~ \notin ~ M_k.\lsl$)} \label{lin:orvm12} 
			\State Create a new node $n$ with key $k$ as: \blank{.8cm} $\langle$\emph{key=k, lock=false, mark=true, rvl=i, \blank{1.1cm} nNext=}$\phi\rangle$./*$n$ is marked*/  \label{lin:orvm13}
			\State Insert $n$ into $M_k.$\emph{\lsl} s.t. it is accessi- \blank{1.1cm}ble only via \rn{s}.  \label{lin:orvm15}  	
			\State Release locks; update the $\llog_i$ with $k$.\label{lin:orvm16}
			\State return $\langle$\emph{val}$\rangle$. /*\emph{val} as $null$*/\label{lin:orvm17}
			\Else
			\State Add $i$ into the $rvl$ of \emph{currs[]}. \label{lin:orvm29} 
			\State Release the locks; update the $\llog_i$ with \blank{1.1cm} $k$ and value. \label{lin:orvm30}
			\State return $\langle$$val$$\rangle$. \label{lin:orvm32}
			\EndIf \label{lin:orvm18}
			\EndIf \label{lin:orvm31}
			\EndProcedure \label{lin:orvm33}
		\end{algorithmic}
	\end{multicols}
\end{algorithm}
\vspace{.2mm}
}

If key $k$ does not exist in the \emph{\lsl{}} of the corresponding bucket $M_k$ at \Lineref{orvm12} then it creates a new node $n$ with key $k$ as $\langle \emph{key=k, lock=false, mark=true, rvl=i, nNext=}\phi \rangle$ at \Lineref{orvm13}. $T_i$ adds its \emph{$cts_i$} in the \emph{rvl}. Finally, it inserts the node $n$ into $M_k.\lsl$ such that it is accessible via \rn{} only at \Lineref{orvm15}. \rvmt{()} releases the locks and update the $txLog_i$ with key $k$ and value as $null$ (\Lineref{orvm16}). Eventually, it returns the \emph{val} as $null$ at \Lineref{orvm17}.

\begin{algorithm}[H]
	\caption{\emph{rv\_Validation(preds[], currs[])}: It is mainly used for \emph{rv\_method()} validation.}
	\label{algo:llsearch} 
	\begin{algorithmic}[1]
		\makeatletter\setcounter{ALG@line}{248}\makeatother
		\Procedure{$rv\_Validation{(\emph{preds[], currs[]})}$}{} \label{lin:orvv1}
		\If{$((\bp.mark) || (\bc.mark) || ((\bp.\bn)\neq \blank{1.8cm} \bc) || ((\rp.\rn) \neq {\rc})$)}\label{lin:orvv2}
		return $\langle false \rangle$. 
		\Else{} return $\langle true \rangle$. \label{lin:orvv3}
		\EndIf \label{lin:orvv4}
		\EndProcedure \label{lin:orvv5}
	\end{algorithmic}
\end{algorithm}


\vspace{-.5mm}

If key $k$ exists in the $M_k.\lsl$ then 
it adds the $cts_i$ of $T_i$ as $i$ in the $rvl$ of \emph{currs[]} at \Lineref{orvm29}. Finally, it releases the lock and updates the $txLog_i$ with key $k$ and value as val at \Lineref{orvm30}. Eventually, it returns the \emph{val} at \Lineref{orvm32}.

\noindent
\textbf{\tins{()} and \tdel{()} as \upmt{s()}:} Actual effect of \tins{()} and \tdel{()} comes after successful \tryc{()}. 
We shows the high level view of \tryc{()} in \algoref{otryc}. First, \tryc{()} checks the \emph{status} of the transaction $T_i$ at \Lineref{otc3}. If \emph{status} of $T_i$ is \emph{false} then $T_i$ has to \emph{abort} same as explained above in \rvmt{()}.

If the \emph{status} is not false then \tryc{()} sort the keys (exist in $txLog_i$ of $T_i$) of \upmt{s()} in increasing order. 
It takes one by one method ($m_{ij}$) from the $txLog_i$ and identifies the location of the key $k$ in \emph{$M_k$.\lsl} as explained above in \rvmt{()}. After identifying the preds and currs for $k$ it acquires the locks in predefined order to avoid the deadlock at \Lineref{otc10} and calls \emph{tryC\_Validation()} to validate the methods of $T_i$.

\emph{tryC\_Validation()} identifies whether the methods of invoking transaction $T_i$ are inserting/updating a node corresponding to the keys while ensuring the \emph{starvation-freedom}. 
First, it do the \emph{rv\_Validation()} at \Lineref{otcv2} as explained in \rvmt{()}. If \emph{rv\_Validation()} is successful and key $k$ exists in the $M_k.\lsl$ then 
it maintains the All Return Value List (allRVL) from \emph{currs[]} of key $k$ at \Lineref{otcv6}. 
Acquire the locks on \emph{status} of all the transactions present in allRVL list including $T_i$ it self in predefined order to avoid the deadlock at \Lineref{otcv8}. First, it checks the \emph{status} of its own transaction $T_i$ at \Lineref{otcv10}. If the \emph{status} of $T_i$ is \emph{false} then $T_i$ has to \emph{abort} the same reason as explained in \rvmt{()}.

If \emph{status} of $T_i$ is not \emph{false} then it compares the $its_i$ of its own transaction $T_i$ with the $its_p$ of other transactions $T_p$ ($p$) present in the allRVL at \Lineref{otcv13}. Along with this it checks the \emph{status} of $p$. If above conditions $\langle (its_i < its_p) \&\& (p.status==live))\rangle$ succeed then it includes $T_p$ in the Abort Return Value List (abortRVL) at \Lineref{otcv14} to abort $T_p$ later otherwise abort $T_i$ itself at \Lineref{otcv15}.  

\begin{algorithm}
	
	\caption{\tryc{($T_i$)}: Validate the \upmt{s()} of the transaction $T_i$ and returns \emph{commit}.}
	\label{algo:otryc}
	\begin{algorithmic}[1]
		\makeatletter\setcounter{ALG@line}{253}\makeatother
		\Procedure{$\tryc{(T_i)}$}{} \label{lin:otc1}
		\State /*Atomically check the \emph{status} of its own transaction $T_i$ (or $i$)*/ \label{lin:otc2}
		\If{(\emph{i.status == false})} return $\langle abort_i \rangle$. \label{lin:otc3}
		\EndIf \label{lin:otc4}
		\State /*Sort the $keys$ of $\llog_i$ in increasing order.*/ \label{lin:otc5}
		\State /*Method ($m$) will be either \tins or \emph{STM\_delete}*/\label{lin:otc6}
		\ForAll{($m_{ij}$ $\in$ $\llog_i$)} \label{lin:otc7}
		\If{($m_{ij}$==\tins$||$$m_{ij}$==\tdel)}\label{lin:otc8}
		\State Identify the \emph{preds[]} \& \emph{currs[]} for \emph{k} in bucket $M_k$ of \emph{\lsl} using \blank{1.5cm} \bn \& \rn. \label{lin:otc9}
		\State Acquire the locks on \emph{preds[]} \& \emph{currs[]} in increasing order. \label{lin:otc10}
		\If{($!tryC\_Validation()$)}
		\State return $\langle abort_i \rangle$.\label{lin:otc11}
		\EndIf\label{lin:otc12}
		\EndIf\label{lin:otc13}
		\EndFor\label{lin:otc14}
		\ForAll{($m_{ij}$ $\in$ $\llog_i$)}\label{lin:otc15}
		\State \emph{poValidation()}  modifies the \emph{preds[]} \& \emph{currs[]} of current method which \blank{1cm} would have been updated by previous method of the same transaction.\label{lin:otc16}
		\If{(($m_{ij}$==\tins)\&\&(k$\notin$$M_k$.\lsl))}\label{lin:otc17}
		\State Create new node $n$ with $k$ as: $\langle$\emph{key=k, lock=false, mark=false, rvl=$\phi$, \blank{1.5cm} nNext=$\phi$}$\rangle$. \label{lin:otc18}
		\State Insert node $n$ into $M_k$.\emph{\lsl} such that it is accessible via \rn{} as \blank{1.5cm} well as \bn{}. /*\emph{lock} sets \emph{true}*/ \label{lin:otc20}
		\ElsIf{($m_{ij}$ == \tins{})}\label{lin:otc21}
		\State /*Sets \emph{rvl} as $\phi$ and update the value*/.
		\State Node (\emph{currs[]}) is accessible via \rn and \bn. /*\emph{mark} sets \emph{false}*/
		\EndIf\label{lin:otc23}
		\If{($m_{ij}$ == \tdel{})}\label{lin:otc24}
		\State /*Sets \emph{rvl} as $\phi$ and 			\emph{mark} as \emph{true}*/.
		\State Node (\emph{currs[]}) is accessible via \rn only.
		\label{lin:otc25}
		\EndIf\label{lin:otc26}
		\State Update the \emph{preds[]} \& \emph{currs[]} of $m_{ij}$ in $\llog_i$.\label{lin:otc27}
		
		\EndFor \label{lin:otc28}
		\State Release the locks.
		\State return $\langle commit_i \rangle$.\label{lin:otc29}
		\EndProcedure \label{lin:otc30}
	\end{algorithmic}
\end{algorithm}


At \Lineref{otcv39}, \tryc{()} aborts all other conflicting transactions which are present in the abortRVL while modifying the \emph{status} field to be \emph{false} to achieve \emph{starvation-freedom}.


\cmnt{
	\emph{tryC\_Validation()} identifies whether the methods of invoking transaction $T_i$ are able to insert or delete a version corresponding to the keys while ensuring the \emph{starvation-freedom} and maintaining the \emph{real-time order} among the transactions. It follow the following steps for validation. Step 1: First, it do the \emph{rv\_Validation()} as explained in \rvmt{()} above. Step 2: If \emph{rv\_Validation()} is successful and key $k$ is exist in the $M_k.\lsl$ then it identifies the current version $ver_j$ with $ts=j$ such that $j$ is the \emph{largest timestamp smaller (lts)} than $i$. If $ver_j$ is \emph{not exist} then SF-KOSTM returns $abort$ for transaction $T_i$ because of bounded K-versions otherwise maintains the information of $ver_j$ and its next version ($ver_j.vNext$) which helps transaction $T_i$ to sets its $tltl_i$ and $tutl_i$. Step 3: If $wts_i$ of $T_i$ is less then other $live$ transactions $wts$ exist in $ver_j.rvl$ then $T_i$ sets the $status$ to be $false$ to all conflicting \emph{live} transactions otherwise $T_i$ returns $abort$. The detailed descriptions are in \apnref{ap-rcode}.
}

All the steps of the \emph{tryC\_Validation()} are successful than the actual effect of the \tins{()} and \tdel{()} will be visible to the shared memory. At \Lineref{otc16}, \tryc{()} checks for \emph{poValidation()}. When two subsequent methods $\langle m_{ij}, m_{ik}\rangle$ of the same transaction $T_i$ identify the overlapping location of preds and currs in \emph{\lsl}. Then \emph{poValidation()} updates the current method $m_{ik}$ preds and currs with the help of previous method $m_{ij}$ preds and currs.

If $m_{ij}$ is \tins{()} and key $k$ does not exist in the $M_k.\lsl{}$ then it creates the new node $n$ with key $k$ as $\langle \emph{key=k, lock=false, mark=false, rvl=$\phi$, nNext=}\phi \rangle$ at \Lineref{otc18}. 
Finally, it inserts the node $n$ into $M_k.\lsl$ such that it is accessible via \rn{} as well as \bn{} at \Lineref{otc20}. If $m_{ij}$ is \tins{()} and key $k$ exists in the $M_k.\lsl{}$ then it updates the value and \emph{rvl} to $\phi$ for node corresponding to the key $k$. 

\cmnt{
\begin{algorithm}
	
	\scriptsize
	\caption{\emph{\tryc($T_i$)}: Validate the \upmt{s} of the transaction and return \emph{commit}.}
	\setlength{\multicolsep}{0pt}
	\label{algo:otryc}
	\begin{multicols}{2}
		\begin{algorithmic}[1]
			\makeatletter\setcounter{ALG@line}{112}\makeatother
			\Procedure{$\tryc{(T_i)}$}{} \label{lin:otc1}
			\State /*Atomically check the \emph{status} of its own transaction \blank{.3cm} $T_i$ (or $i$)*/ \label{lin:otc2}
			\If{(\emph{i.status == false})} return $\langle abort_i \rangle$. \label{lin:otc3}
			\EndIf \label{lin:otc4}
			\State /*Sort the $keys$ of $\llog_i$ in increasing order.*/ \label{lin:otc5}
			\State /*Method ($m$) will be either \tins or \emph{STM\_ \blank{.3cm} delete}*/\label{lin:otc6}
			\ForAll{($m_{ij}$ $\in$ $\llog_i$)} \label{lin:otc7}
			\If{($m_{ij}$==\tins$||$$m_{ij}$==\tdel)}\label{lin:otc8}
			\State Identify the \emph{preds[]} \& \emph{currs[]} for \emph{k} in bucket \blank{1cm} $M_k$ of \emph{\lsl} using \bn\& \rn. \label{lin:otc9}
			\State Acquire the locks on \emph{preds[]} \& \emph{currs[]} in \blank{1cm} increasing order. \label{lin:otc10}
			\If{($!tryC\_Validation()$)}
			\State return $\langle abort_i \rangle$.\label{lin:otc11}
			\EndIf\label{lin:otc12}
			\EndIf\label{lin:otc13}
			\EndFor\label{lin:otc14}
			\ForAll{($m_{ij}$ $\in$ $\llog_i$)}\label{lin:otc15}
			\State \emph{poValidation()}  modifies the \emph{preds[]} \& \emph{currs[]} of \blank{.6cm} current method which would have been updated \blank{.6cm} by previous method of the same transaction.\label{lin:otc16}
			\If{(($m_{ij}$==\tins)\&\&(k$\notin$$M_k$.\lsl))}\label{lin:otc17}
			\State Create new node $n$ with $k$ as: $\langle$\emph{key=k, \blank{1.1cm} lock=false, mark=false, rvl=$\phi$, nNext=$\phi$}$\rangle$. \label{lin:otc18}
			\State Insert node $n$ into $M_k$.\emph{\lsl} such that \blank{1.1cm} it is accessible via \rn{} as well as \bn{}.
			\State /*\emph{lock} sets \emph{true}*/  \label{lin:otc20}
			\ElsIf{($m_{ij}$ == \tins{})}\label{lin:otc21}
			\State /*Sets \emph{rvl} as $\phi$ and update the value*/.
			\State Node (\emph{currs[]}) is accessible via \rn and \bn.
			\EndIf\label{lin:otc23}
			\If{($m_{ij}$ == \tdel{})}\label{lin:otc24}
			\State /*Sets \emph{rvl} as $\phi$ and 			\emph{mark} as \emph{true}*/.
			\State Node (\emph{currs[]}) is accessible via \rn only.
			\label{lin:otc25}
			\EndIf\label{lin:otc26}
			\State Update the \emph{preds[]} \& \emph{currs[]} of $m_{ij}$ in $\llog_i$.\label{lin:otc27}
			
			\EndFor \label{lin:otc28}
			\State Release the locks; return $\langle commit_i \rangle$.\label{lin:otc29}
			\EndProcedure \label{lin:otc30}
		\end{algorithmic}
	\end{multicols}
\end{algorithm}

\vspace{.2mm}
\begin{algorithm}
	\scriptsize
	\caption{\emph{tryC\_Validation():} It is only use from \tryc{()} validation.}
	\setlength{\multicolsep}{0pt}
	\begin{multicols}{2}
		\begin{algorithmic}[1]
			\makeatletter\setcounter{ALG@line}{145}\makeatother
			\Procedure{\emph{tryC\_Validation{()}}}{} \label{lin:otcv1}
			\If{(\emph{!rv\_Validation()})}
			Release the locks and \emph{retry}.\label{lin:otcv2}
			\EndIf\label{lin:otcv3}
			\If{(k $\in$ $M_k.\lsl$)}\label{lin:otcv4}
			\State Maintain the list of \emph{currs[].rvl} as allRVL for \blank{.7cm} all \emph{k} of $T_i$.\label{lin:otcv6}
			\State /*p is the \emph{tsimestamp} of transaction $T_p$*/
			\If{($p$ $\in$ allRVL)} /*Includes $i$ in allRVL*/\label{lin:otcv7}
			\State Lock \emph{status} of each $p$ in pre-defined order. \label{lin:otcv8}
			\EndIf\label{lin:otcv9}
			\If{(\emph{i.status == false})} return $\langle false \rangle$. \label{lin:otcv10}
			\EndIf\label{lin:otcv11}
			\ForAll{($T_p$ $\in$ allRVL)}\label{lin:otcv12}
			\If{(($its_i$$<$$its_p$)$\&\&$(\emph{p.status==live}))} \label{lin:otcv13}
			\State Maintain \emph{abort list} as abortRVL \& in- \blank{1.5cm}cludes \emph{p} in it.\label{lin:otcv14}
			\Else{} return $\langle false \rangle$. /*abort $i$ itself*/\label{lin:otcv15}
			\EndIf\label{lin:otcv16}
			
			\EndFor\label{lin:otcv17}
\cmnt{			\ForAll{($ver$ $\in$ currVL)}\label{lin:otcv18}
			\State Calculate $tltl_i$ = min($tltl_i$, $ver.vrt+1$).\label{lin:otcv19}
			\EndFor\label{lin:otcv20}
			\ForAll{($ver$ $\in$ nextVL)}\label{lin:otcv21}
			\State Calculate $tutl_i$ = min($tutl_i$, $ver.vNext$ \blank{1.1cm}$.vrt-1$).\label{lin:otcv22}
			\EndFor\label{lin:otcv23}
			\If{($tltl_i$ $>$ $tutl_i$)} /*abort $i$ itself*/
			\State return $\langle false \rangle$. \label{lin:otcv24}
			\EndIf\label{lin:otcv25}
			\ForAll{($p$ $\in$ smallRVL)}\label{lin:otcv26}
			\If{($tltl_p$ $>$ $tutl_i$)}\label{lin:otcv27}
			\If{(($its_i$$<$$its_p$)$\&\&$(\emph{p.status==live}))} \label{lin:otcv28}
			\State Includes $p$ in abortRVL list.\label{lin:otcv29}
			\Else{} return $\langle false \rangle$. /*abort $i$ itself*/\label{lin:otcv30}
			\EndIf\label{lin:otcv31}
			\EndIf	\label{lin:otcv32}	
			\EndFor \label{lin:otcv33}
			\State $tltl_i$ = $tutl_i$. /*After this point $i$ can't abort*/\label{lin:otcv34}
			\ForAll{($p$ $\in$ smallRVL)}
			\State /*Only for \emph{live} transactions*/ \label{lin:otcv35}
			\State Calculate the $tutl_p$ = min($tutl_p$, $tltl_i-1$).\label{lin:otcv36}
			\EndFor \label{lin:otcv37}
}			
			\ForAll{($p$ $\in$ abortRVL)} \label{lin:otcv38}
			\State Set the \emph{status} of $p$ to be $false$. \label{lin:otcv39}
			\EndFor \label{lin:otcv40}
			\EndIf\label{lin:otcv41}
			\State return $\langle true \rangle$.\label{lin:otcv42}
			\EndProcedure \label{lin:otcv43}
		\end{algorithmic}
	\end{multicols}
\end{algorithm}
\vspace{.2mm}

}

\vspace{.2mm}
\begin{algorithm}
	\caption{\emph{tryC\_Validation():} It is only use for \tryc{()} validation.}
		\begin{algorithmic}[1]
			\makeatletter\setcounter{ALG@line}{286}\makeatother
			\Procedure{\emph{tryC\_Validation{()}}}{} \label{lin:otcv1}
			\If{(\emph{!rv\_Validation()})}
			Release the locks and \emph{retry}.\label{lin:otcv2}
			\EndIf\label{lin:otcv3}
			\If{(k $\in$ $M_k.\lsl$)}\label{lin:otcv4}
			\State Maintain the list of \emph{currs[].rvl} as allRVL for all key \emph{k} of $T_i$.\label{lin:otcv6}
			\State /*p is the \emph{timestamp} of transaction $T_p$*/
			\If{($p$ $\in$ allRVL)} /*Includes $i$ as well in allRVL*/\label{lin:otcv7}
			\State Lock \emph{status} of each $p$ in pre-defined order. \label{lin:otcv8}
			\EndIf\label{lin:otcv9}
			\If{(\emph{i.status == false})} return $\langle false \rangle$. \label{lin:otcv10}
			\EndIf\label{lin:otcv11}
			\ForAll{($p$ $\in$ allRVL)}\label{lin:otcv12}
			\If{(($its_i$$<$$its_p$)$\&\&$(\emph{p.status==live}))} \label{lin:otcv13}
			\State Maintain \emph{abort list} as abortRVL \& includes \emph{p} in it.\label{lin:otcv14}
			\Else{} return $\langle false \rangle$. /*abort $i$ itself*/\label{lin:otcv15}
			\EndIf\label{lin:otcv16}
			
			\EndFor\label{lin:otcv17}
			\cmnt{			\ForAll{($ver$ $\in$ currVL)}\label{lin:otcv18}
				\State Calculate $tltl_i$ = min($tltl_i$, $ver.vrt+1$).\label{lin:otcv19}
				\EndFor\label{lin:otcv20}
				\ForAll{($ver$ $\in$ nextVL)}\label{lin:otcv21}
				\State Calculate $tutl_i$ = min($tutl_i$, $ver.vNext$ \blank{1.1cm}$.vrt-1$).\label{lin:otcv22}
				\EndFor\label{lin:otcv23}
				\If{($tltl_i$ $>$ $tutl_i$)} /*abort $i$ itself*/
				\State return $\langle false \rangle$. \label{lin:otcv24}
				\EndIf\label{lin:otcv25}
				\ForAll{($p$ $\in$ smallRVL)}\label{lin:otcv26}
				\If{($tltl_p$ $>$ $tutl_i$)}\label{lin:otcv27}
				\If{(($its_i$$<$$its_p$)$\&\&$(\emph{p.status==live}))} \label{lin:otcv28}
				\State Includes $p$ in abortRVL list.\label{lin:otcv29}
				\Else{} return $\langle false \rangle$. /*abort $i$ itself*/\label{lin:otcv30}
				\EndIf\label{lin:otcv31}
				\EndIf	\label{lin:otcv32}	
				\EndFor \label{lin:otcv33}
				\State $tltl_i$ = $tutl_i$. /*After this point $i$ can't abort*/\label{lin:otcv34}
				\ForAll{($p$ $\in$ smallRVL)}
				\State /*Only for \emph{live} transactions*/ \label{lin:otcv35}
				\State Calculate the $tutl_p$ = min($tutl_p$, $tltl_i-1$).\label{lin:otcv36}
				\EndFor \label{lin:otcv37}
			}			
			\ForAll{($p$ $\in$ abortRVL)} \label{lin:otcv38}
			\State Set the \emph{status} of $p$ to be $false$. \label{lin:otcv39}
			\EndFor \label{lin:otcv40}
			\EndIf\label{lin:otcv41}
			\State return $\langle true \rangle$.\label{lin:otcv42}
			\EndProcedure \label{lin:otcv43}
		\end{algorithmic}
\end{algorithm}
\vspace{.2mm}

If $m_{ij}$ is \tdel{()} and key $k$ exists in the $M_k.\lsl{}$ then it sets the \emph{rvl} as $\phi$ and \emph{mark} field as $true$ for node corresponding to the key $k$ at \Lineref{otc25}. At last, it updates the preds and currs of each $m_{ij}$ into its $txLog_i$ to help the upcoming methods of the same transactions in \emph{poValidation()} at \Lineref{otc27}. Finally, it releases the locks on all the keys in a predefined order and returns \emph{commit} at \Lineref{otc29}.
\cmnt{
\vspace{-.2cm}

\begin{theorem}
	Any history $H$ generated by SF-SVOSTM satisfies co-opacity.
\end{theorem}
\vspace{-.3cm}
\begin{theorem}
	Any history $H$ generated by SF-KOSTM satisfies  local-opacity.
\end{theorem}
\vspace{-.3cm}
\begin{theorem}
	SF-KOSTM ensures starvation-freedom in presence of a fair scheduler that satisfies \asmref{bdtm}(bounded-termination) and in the absence of parasitic transactions that satisfies \asmref{self}.
\end{theorem}
\vspace{-.2cm}
Due to lack of space, please refer the proof of above stated theorems in \apnref{ap-cc}.
}

\cmnt{
\subsection{Remaining Pseudo code of SF-KOSTM Algorithm}
\label{apn:ap-kostm}
This subsection describes the remaining pseudo code of SF-KOSTM algorithm which includes \emph{STM\_begin()}, \emph{rv\_Validation()} and \emph{tryC\_Validation()} as follows: 

\noindent
\textbf{\emph{tryC\_Validation()}:} It identifies whether the methods of invoking transaction $T_i$ are able to create or delete a version corresponding to the keys while ensuring the \emph{starvation-freedom} and maintaining the \emph{real-time order} among the transactions.


First, it do the \emph{rv\_Validation()} at \Lineref{tcv2} as explained in \rvmt{()}. If \emph{rv\_Validation()} is successful and key $k$ exists in the $M_k.\lsl$ then it identifies the current version $ver_j$ with $ts=j$ such that $j$ is the \emph{largest timestamp smaller (lts)} than $i$ at \Lineref{tcv5}. If $ver_j$ is $null$ at \Lineref{tcv111} then SF-KOSTM returns $abort$ for transaction $T_i$ because it does not find the version to replace otherwise after identifying the current version $ver_j$ it maintains the Current Version List (currVL), Next Version List (nextVL), All Return Value List (allRVL), Large Return Value List (largeRVL), Small Return Value List (smallRVL) from $ver_j$ of key $k$ at \Lineref{tcv6}. currVL and nextVL maintain the previous closest version and next immediate version of all the keys accessed in \tryc{()}. allRVL keeps the currVL.rvl whereas largeRVL and smallRVL stores all the $wts$ of currVL.rvl such that ($wts_{currVL.rvl}$ $>$ $wts_i$) and  ($wts_{currVL.rvl}$ $<$ $wts_i$) respectively. Acquire the locks on \emph{status} of all the transactions present in allRVL list including $T_i$ it self in predefined order to avoid the deadlock at \Lineref{tcv8}. First, it checks the \emph{status} of its own transaction $T_i$ at \Lineref{tcv10}. If \emph{status} of $T_i$ is \emph{false} then $T_i$ has to \emph{abort} the same reason as explained in \rvmt{()}.

\begin{algorithm}[H] 
	\caption{\emph{STM\_begin($its$)}: This method is invoke by a thread $Th_i$ to start a new transaction $T_i$. It pass a parameter $its$ which is the initial timestamp of the first incarnation of $T_i$. If $T_i$ is the first incarnation then $its$ is $nil$.}
		\label{algo:begin1} 
	\setlength{\multicolsep}{0pt}
		\begin{algorithmic}[1]
			\makeatletter\setcounter{ALG@line}{168}\makeatother
			\Procedure{\emph{STM\_begin($its$)}}{} 
			\State Create a local log $txLog_i$ for each transaction.
			\If {($its == nil$)}
			\State /* Atomically get the value from the global counter and set it to its, cts, and \blank{1cm} wts.*/
			\State $its_i$ = $cts_i$ = $wts_i$ = \emph{gcounter.get\&Inc()};
			\Else 
			\State /*Set the $its_i$ to first incarnation of $T_i$ $its$*/
			\State $its_i$ = $its$;
			\State /*Atomically get the value from the global counter for $cts_i$*/
			\State $cts_i$ = \emph{gcounter.get\&Inc()}.
			\State /*Set the $wts$ value with the help of $cts_i$ and $its_i$*/
			\State $wts_i$ = $cts_i$+C*($cts_i$-$its_i$).
			\EndIf
			\State /*Set the $tltl_i$ as $cts_i$*/
			\State $tltl$ = $cts_i$.  
			\State /*Set the $tutl_i$ as possible large value*/
			\State $tutl_i$ = $\infty$.
			\State /*Initially, set the $status_i$ of $T_i$ as $live$*/
			\State $status_i$ = $live$; 
			\State return $\langle cts_i, wts_i\rangle$
			
			\EndProcedure
		\end{algorithmic}
\end{algorithm}

\cmnt{

\begin{algorithm} 
	\label{alg:begin1} 
	\scriptsize
	\caption{STM\_begin($its$): This method is invoke by a thread to start a new transaction $T_i$. It pass a parameter $its$ which is the initial timestamp of the first incarnation of $T_i$. If this is the first incarnation then $its$ is $nil$.}
	\setlength{\multicolsep}{0pt}
	\begin{multicols}{2}
	\begin{algorithmic}[1]
		\makeatletter\setcounter{ALG@line}{168}\makeatother
		\Procedure{STM begin($its$)}{} 
		\State Creating a local log $txLog_i$ for each transaction.
		\If {($its == nil$)}
		\State /* Atomically get the value from the global \blank{.7cm} counter and set it to its, cts, and wts.*/
		\State $its_i$ = $cts_i$ = $wts_i$ = \emph{gcounter.get\&Inc()};
		\Else 
		\State /*Set the $its_i$ to first incarnation of $T_i$ $its$*/
		\State $its_i$ = its;
		\State /*Atomically get the value from the global \blank{.7cm} counter for $cts_i$*/
		\State $cts_i$ = \emph{gcounter.get\&Inc()}.
		\State /*Set the $wts$ value with the help of $cts_i$ and \blank{.7cm} $its_i$*/
		\State $wts_i$ = $cts_i$+C*($cts_i$-$its_i$).
		\EndIf
		\State /*Set the $tltl_i$ as $cts_i$*/
		\State $tltl$ = $cts_i$.  
		\State /*Set the $tutl_i$ as possible large value*/
		\State $tutl_i$ = $\infty$.
		\State /*Initially, set the $status_i$ of $T_i$ as $live$*/
		\State $state_i$ = $live$; 
		\State return $\langle cts_i, wts_i\rangle$
		
		\EndProcedure
	\end{algorithmic}
	\end{multicols}
\end{algorithm}

\begin{algorithm}
	\scriptsize
	\caption{\emph{rv\_Validation(preds[], currs[])}: It is mainly used for \emph{rv\_method()} validation.}
	\begin{algorithmic}[1]
		\makeatletter\setcounter{ALG@line}{189}\makeatother
		\Procedure{$rv\_Validation{(preds[], currs[])}$}{} \label{lin:rvv1}
		\If{$((\bp.mark) || (\bc.mark) || (\bp.\bn)\neq \blank{.15cm} \bc || (\rp.\rn) \neq {\rc})$}\label{lin:rvv2}
		return $\langle false \rangle$. 
		\Else{} return $\langle true \rangle$. \label{lin:rvv3}
		\EndIf \label{lin:rvv4}
		\EndProcedure \label{lin:rvv5}
	\end{algorithmic}
\end{algorithm}
}

\begin{algorithm}[H]
	\caption{\emph{rv\_Validation(preds[], currs[])}: It is mainly used for \emph{rv\_method()} validation.}
	\begin{algorithmic}[1]
		\makeatletter\setcounter{ALG@line}{189}\makeatother
		\Procedure{$rv\_Validation{(preds[], currs[])}$}{} \label{lin:rvv1}
		\If{$((\bp.mark) || (\bc.mark) || ((\bp.\bn)\neq \blank{1.7cm} \bc) || ((\rp.\rn) \neq {\rc})$)}\label{lin:rvv2}
		return $\langle false \rangle$. 
		\Else{} return $\langle true \rangle$. \label{lin:rvv3}
		\EndIf \label{lin:rvv4}
		\EndProcedure \label{lin:rvv5}
	\end{algorithmic}
\end{algorithm}

\vspace{.2mm}

If the \emph{status} of $T_i$ is not \emph{false} then it compares the $its_i$ of its own transaction $T_i$ with the $its_p$ of other transactions $T_p$ present in the largeRVL at \Lineref{tcv13}. Along with this it checks the \emph{status} of $p$. If above conditions $\langle (its_i < its_p) \&\& (p.status==live))\rangle$ succeed then it includes $T_p$ in the Abort Return Value List (abortRVL) at \Lineref{tcv14} to abort it later otherwise abort $T_i$ itself at \Lineref{tcv15}.  

\vspace{.2mm}
\begin{algorithm}
	\caption{\emph{tryC\_Validation():} It is use for \tryc{()} validation.}
		\label{algo:trycVal} 
	\begin{algorithmic}[1]
		\makeatletter\setcounter{ALG@line}{194}\makeatother
		\Procedure{\emph{tryC\_Validation{()}}}{} \label{lin:tcv1}
		\If{(\emph{!rv\_Validation()})}
		Release the locks and \emph{retry}.\label{lin:tcv2}
		\EndIf\label{lin:tcv3}
		\If{(k $\in$ $M_k.\lsl$)}\label{lin:tcv4}
		\State Identify the version $ver_j$ with $ts=j$ such that $j$ is the \emph{largest timestamp \blank{1cm} smaller (lts)} than $i$ and there exists no other version with timestamp $p$ by $T_p$ \blank{1cm} on key $k$ such that $\langle \emph{j $<$ p $<$ i}\rangle$.\label{lin:tcv5}
		\If{($ver_j$ == $null$)} /*Finite Versions*/
		\State return $\langle abort_i\rangle$ \label{lin:tcv111}
		\EndIf
		\State Maintain the list of $ver_j$, $ver_j.vNext$, $ver_j.rvl$, $(ver_j.rvl>i)$, and \blank{1cm} $(ver_j.rvl<i)$ as prevVL, nextVL, allRVL, largeRVL and smallRVL \blank{1cm} respectively for all key \emph{k} of $T_i$.\label{lin:tcv6}
		\State /*$p$ is the timestamp of transaction $T_p$*/
		\If{($p$ $\in$ allRVL)} /*Includes $i$ as well in allRVL*/\label{lin:tcv7}
		\State Lock \emph{status} of each $p$ in pre-defined order. \label{lin:tcv8}
		\EndIf\label{lin:tcv9}
		\If{(\emph{i.status == false})} return $\langle false \rangle$. \label{lin:tcv10}
		\EndIf\label{lin:tcv11}
		\ForAll{($p$ $\in$ largeRVL)}\label{lin:tcv12}
		\If{(($its_i$$<$$its_p$)$\&\&$(\emph{p.status==live}))} \label{lin:tcv13}
		\State Maintain \emph{abort list} as abortRVL \& includes \emph{p} in it.\label{lin:tcv14}
		\Else{} return $\langle false \rangle$. /*abort $i$ itself*/\label{lin:tcv15}
		\EndIf\label{lin:tcv16}
		
		\EndFor\label{lin:tcv17}
		\ForAll{($ver$ $\in$ nextVL)}\label{lin:tcv21}
		\State Calculate $tutl_i$ = min($tutl_i$, $ver.vNext$$.vrt-1$).\label{lin:tcv22}
		\EndFor\label{lin:tcv23}
		\ForAll{($ver$ $\in$ currVL)}\label{lin:tcv18}
		\State Calculate $tltl_i$ = max($tltl_i$, $ver.vrt+1$).\label{lin:tcv19}
		\EndFor\label{lin:tcv20}
		\If{($tltl_i$ $>$ $tutl_i$)} /*abort $i$ itself*/
		\State return $\langle false \rangle$. \label{lin:tcv24}
		\EndIf\label{lin:tcv25}
		\ForAll{($p$ $\in$ smallRVL)}\label{lin:tcv26}
		\If{($tltl_p$ $>$ $tutl_i$)}\label{lin:tcv27}
		\If{(($its_i$$<$$its_p$)$\&\&$(\emph{p.status==live}))} \label{lin:tcv28}
		\State Includes $p$ in abortRVL list.\label{lin:tcv29}
		\Else{} return $\langle false \rangle$. /*abort $i$ itself*/\label{lin:tcv30}
		\EndIf\label{lin:tcv31}
		\EndIf	\label{lin:tcv32}	
		\EndFor \label{lin:tcv33}
		\State $tltl_i$ = $tutl_i$. /*After this point $i$ can't abort*/\label{lin:tcv34}
		\ForAll{($p$ $\in$ smallRVL)}
		\State /*Only for \emph{live} transactions*/ \label{lin:tcv35}
		\State Calculate the $tutl_p$ = min($tutl_p$, $tltl_i-1$).\label{lin:tcv36}
		\EndFor \label{lin:tcv37}
		\ForAll{($p$ $\in$ abortRVL)} \label{lin:tcv38}
		\State Set the \emph{status} of $p$ to be $false$. \label{lin:tcv39}
		\EndFor \label{lin:tcv40}
		\EndIf\label{lin:tcv41}
		\State return $\langle true \rangle$.\label{lin:tcv42}
		\EndProcedure \label{lin:tcv43}
	\end{algorithmic}
\end{algorithm}
\vspace{.2mm}

After that \tryc{()} maintains the $tltl_i$ and $tutl_i$ of transaction $T_i$ at \Lineref{tcv19} and \Lineref{tcv22}. The requirement of $tltl_i$ and $tutl_i$ is explained above in the \rvmt{()}. If limit of $tltl_i$ crossed with $tutl_i$ then $T_i$ have to abort at \Lineref{tcv24}. If $tltl_p$ greater than $tutl_i$ at \Lineref{tcv27} then it checks the $its_i$ and $its_p$. If $\langle (its_i < its_p) \&\& (p.status==live))\rangle$ then add the transaction $T_p$ in the abortRVL for all the smallRVL transactions at \Lineref{tcv29} otherwise, \emph{abort} $T_i$ itself at \Lineref{tcv30}.

At \Lineref{tcv34}, $tltl_i$ would be equal to $tutl_i$ and after this step transaction $T_i$ will never \emph{abort}. $T_i$ helps the other transaction $T_p$ to update the $tutl_p$ which exists in the smallRVL and still $live$ then it sets the $tutl_p$ to minimum of $\langle tutl_p \vee tltl_i-1\rangle$ to maintain the real-time order among the transaction at \Lineref{tcv36}. At \Lineref{tcv39}, \tryc{()} aborts all other conflicting transactions which are present in the abortRVL while modifying the \emph{status} field to be \emph{false} to achieve \emph{starvation-freedom}.

}
\subsection{Importance of Timestamp Ranges in SF-KOSTM}
\label{apn:wtsdis}

\textbf{Violation of Real-Time Order by wts:} As described in \subsecref{working}, $cts$ respects the real-time order among the transactions but SF-KOSTM uses $wts$ which may not respect real-time order. Sometimes, the value of $wts$ is significantly larger than $cts$ which leads to violate the real-time order among the transactions. \figref{sfmv-correct} illustrates it with history $H$: $l_1(ht,k_1,v_0) l_2(ht,k_2,v_0) i_1(ht,k_1,v_{10}) C_1 i_2(ht,k_1, v_{20}) C_2 l_3(ht,k_1,v_{10}) i_3(ht,k_3,v_{25})\\ C_3$ consists of three transactions $T_1, T_2, T_3$ with $cts$ as 100, 110, 130 and $wts$ as 100, 150, 130 respectively. $T_1$ and $T_2$ has been committed before the beginning of $T_3$, so $T_1$ and $T_2$ are in real-time order with $T_3$. Formally, $T_1 \prec_{H}^{RT} T_3$ and $T_2 \prec_{H}^{RT} T_3$. But, $T_2$ has higher $wts$ than $T_3$. Now, $T_3$ lookups key $k_1$ from $T_1$ and returns the value as $v_{10}$  because $T_1$ is the available largest $wts$ (100) smaller than $T_3$ $wts$ (130). The only possible equivalent serial order $S$ to history $H$ is $T_1 T_3 T_2$ which is \legal as well. But $S$ violates real-time order because $T_3$ is serialized before $T_2$ in $S$ but $T_2$ has been committed before the beginning of $T_3$ in $H$. It can  easily be seen that, such history $H$ can be accepted by the algorithm when it uses only $wts$ instead of $cts$. But this should not happen because its violating the real-time order which says it does not satisfy the correctness criteria as \emph{local opacity}. 

A simple solution to this issue is by delaying the committing transaction say $T_i$ with $wts_i$ until the real-time catches up to the $wts_i$. Delaying such $T_i$ will ensure the correctness criteria as \emph{local opacity} while making the $wts$ of the transaction same as real-time. But, this is highly unacceptable. It seems like transaction $T_i$ acquires the locks on all the keys it wants to update and wait. It will show the adverse effect and reduces the performance of SF-KOSTM system.

\noindent
\textbf{Regaining the Real-Time Order using Timestamp Ranges along with wts:} We require that all the transactions of history $H$ generated by SF-KOSTM are serialized based on their $wts$ while respecting the real-time order among them. Another efficient solution is to allow the transaction $T_i$ with $wts_i$ to catch up with the actual time if $T_i$ does not violates the real-time order. So, to respect the real-time order among the transactions SF-KOSTM uses the time constraints. SF-KOSTM uses the idea of timestamp ranges \cite{Riegel+:LSA:DISC:2006,Guer+:disc1:2008, Crain+:RI_VWC:ICA3PP:2011} along with $\langle its_i, cts_i, wts_i \rangle$ for transaction $T_i$ in \tbeg{()}. It maintains the \emph{transaction lower timestamp limit ($tltl_i$)} and \emph{transaction upper timestamp limit ($tutl_i$)} for $T_i$. Initially, $\langle its_i, cts_i, wts_i, tltl_i \rangle$ are the same for $T_i$. $tutl_i$ would be set as a largest possible value denoted as $+\infty$ for $T_i$. After successful execution of \emph{\rvmt{s()}} or \tryc{()} of $T_i$, $tltl_i$ gets incremented and $tutl_i$ gets decremented to respect the real-time order among the transactions as explained in \subsecref{working}. 

\begin{figure}
	\centerline{
		\scalebox{0.48}{\input{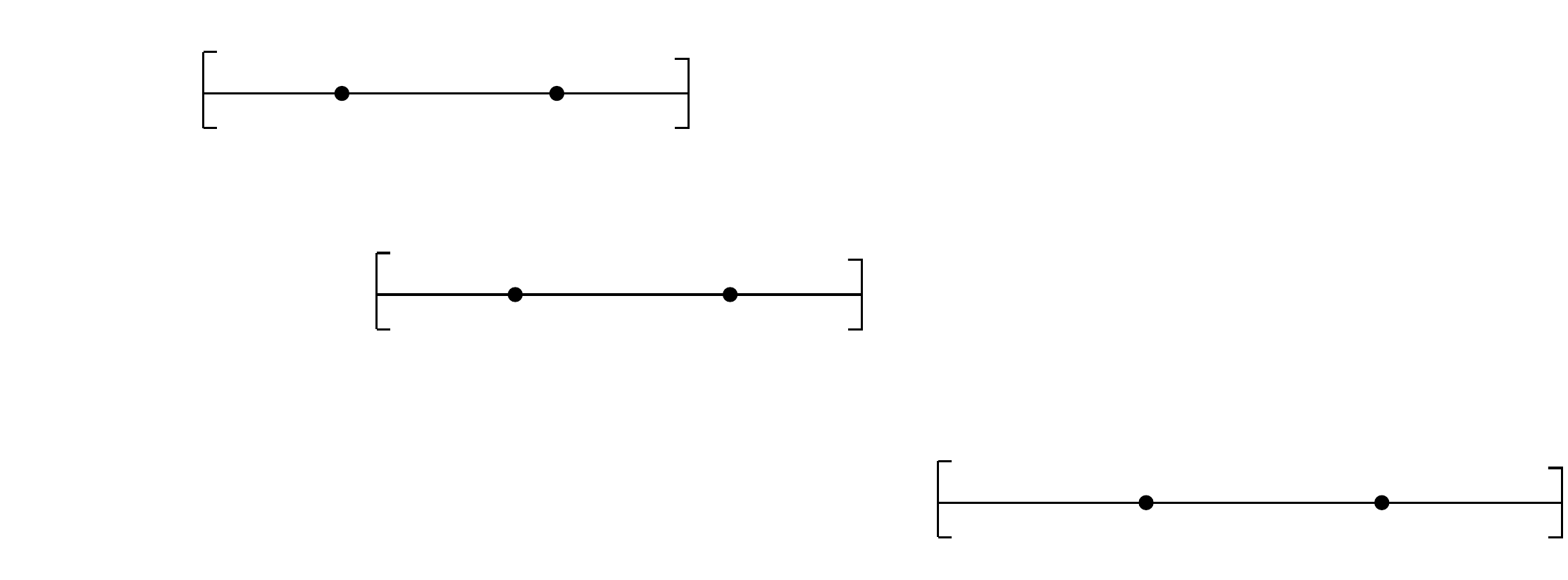_t}}}
	\captionsetup{justification=centering}
	\caption{Violating the \emph{real-time order} by \emph{wts}}
	\label{fig:sfmv-correct}
\end{figure}

For better understanding consider \figref{tltltutl}, which shows the regaining the \emph{real-time order} using timestamp ranges ($tltl$ and $tutl$) along with \emph{wts} on history $H$. Initially, $T_1$ begins with $cts_1$ = $wts_1$ = $tltl_1$ = 100, $tutl_1$ = $\infty$ and $T_1$ returns commit. We assume at the time of commit of $T_1$, $gcounter$ is 120. So, $tutl_1$ reduces to 120. After that $T_2$ commits and with suppose $tutl_2$ reduces to 121 (so, the current value $gcounter$ is 121). $T_1$ and $T_2$ both access the key $k_1$ and $T_2$ is updating $k_1$. So, $T_1$ and $T_2$ are conflicting. Hence, $tltl_2$ is incremented to a value greater than $tutl_1$, say 121. Now, when $T_3$ begins , it assigns 
$cts_3$ = $wts_3$ = $tltl_3$ = 130, and $tutl_3$ = $\infty$. At the time of $l_3(ht,k_1,v_{10})$, as $T_3$ lookups the version of $k_1$ from $T_1$, so, $T_3$ reduces its $tutl_3$ less than $tltl_2$ (currently, $tltl_2$ is 121). Hence, $tutl_3$ becomes say 120. But, $tltl_3$ is already 130. So, $tltl_3$ has crossed the limit of $tutl_3$ which is causing $T_3$ to abort. Intuitively, this implies that $wts_3$ and real-time order are out of synchrony and can not be reconciled. Hence, by using the timestamp ranges $H$ executes correctly by SF-KOSTM algorithm with equivalent serial schedule $T_1 T_2$.


\cmnt{
\begin{figure}
	\centerline{
		\scalebox{0.5}{\input{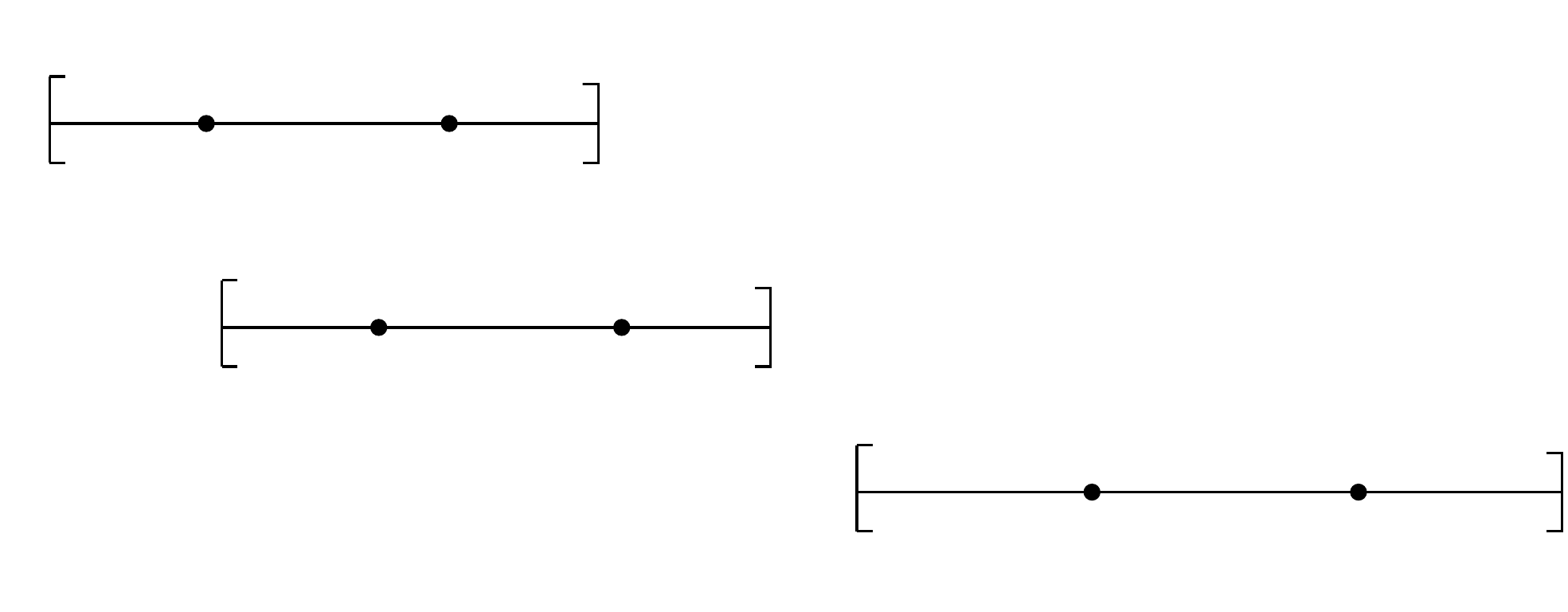_t}}}
	\captionsetup{justification=centering}
	\caption{\emph{wts} violating \emph{real-time order}}
	\label{fig:sfmv-correct}
\end{figure}
}

\begin{figure}
	\centerline{
		\scalebox{0.43}{\input{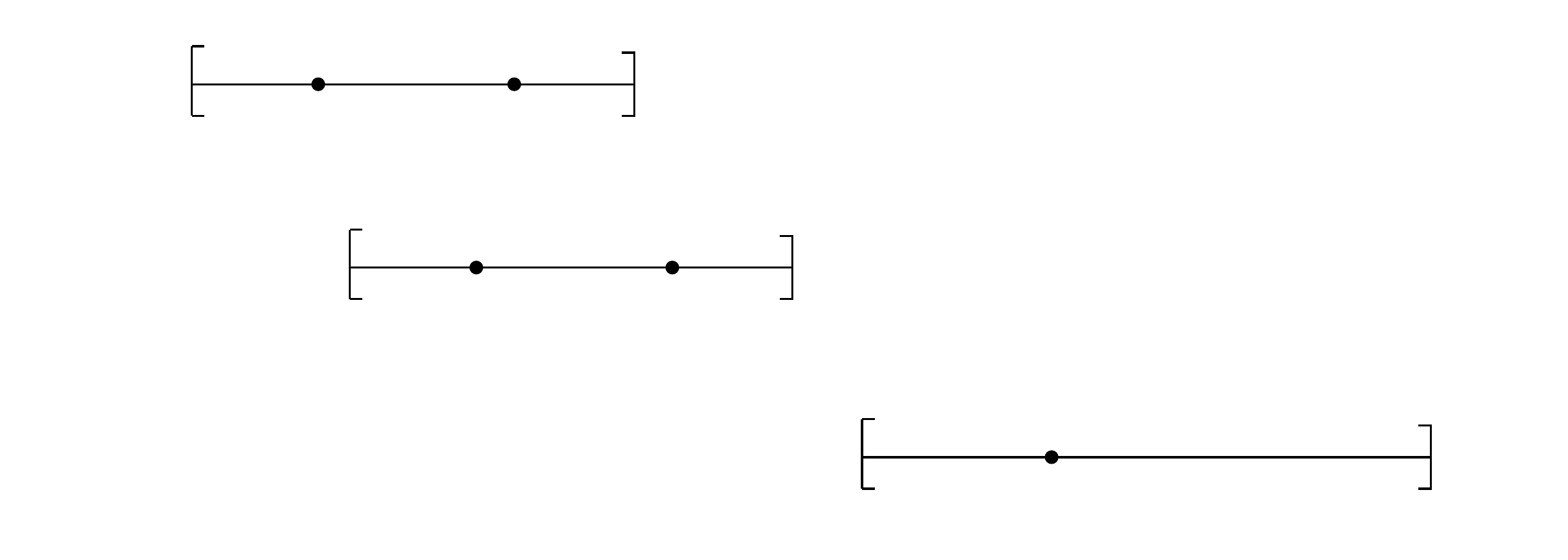_t}}}
	\captionsetup{justification=centering}
	\caption{Regaining the \emph{real-time order} using Timestamp Ranges along with \emph{wts}}
	\label{fig:tltltutl}
\end{figure}

\cmnt{

\vspace{.2mm}
\begin{algorithm}
	\scriptsize
	\caption{\emph{tryC\_Validation():} It is only use from \tryc{()} validation.}
	\setlength{\multicolsep}{0pt}
	\begin{multicols}{2}
		\begin{algorithmic}[1]
			\makeatletter\setcounter{ALG@line}{194}\makeatother
			\Procedure{\emph{tryC\_Validation{()}}}{} \label{lin:tcv1}
			\If{(\emph{!rv\_Validation()})}
			Release the locks and \emph{retry}.\label{lin:tcv2}
			\EndIf\label{lin:tcv3}
			\If{(k $\in$ $M_k.\lsl$)}\label{lin:tcv4}
			\State Identify the version $ver_j$ with $ts=j$ such that \blank{.7cm} $j$ is the \emph{largest timestamp smaller (lts)} than $i$.\label{lin:tcv5}
			\If{($ver_j$ == $null$)} /*Finite Versions*/
			\State return $\langle abort_i\rangle$ \label{lin:tcv111}
			\EndIf
			\State Maintain the list of $ver_j$, $ver_j.vNext$, \blank{.7cm} $ver_j.rvl$, $(ver_j.rvl>i)$, and $(ver_j.rvl<\blank{.7cm}i)$ as prevVL, nextVL, allRVL, largeRVL, small- \blank{.7cm}RVL respectively for all key \emph{k} of $T_i$.\label{lin:tcv6}
			\If{($p$ $\in$ allRVL)} /*Includes $i$ in allRVL*/\label{lin:tcv7}
			\State Lock \emph{status} of each $p$ in pre-defined order. \label{lin:tcv8}
			\EndIf\label{lin:tcv9}
			\If{(\emph{i.status == false})} return $\langle false \rangle$. \label{lin:tcv10}
			\EndIf\label{lin:tcv11}
			\ForAll{($p$ $\in$ largeRVL)}\label{lin:tcv12}
			\If{(($its_i$$<$$its_p$)$\&\&$(\emph{p.status==live}))} \label{lin:tcv13}
			\State Maintain \emph{abort list} as abortRVL \& in- \blank{1.5cm}cludes \emph{p} in it.\label{lin:tcv14}
			\Else{} return $\langle false \rangle$. /*abort $i$ itself*/\label{lin:tcv15}
			\EndIf\label{lin:tcv16}
			
			\EndFor\label{lin:tcv17}
			\ForAll{($ver$ $\in$ currVL)}\label{lin:tcv18}
			\State Calculate $tltl_i$ = max($tltl_i$, $ver.vrt+1$).\label{lin:tcv19}
			\EndFor\label{lin:tcv20}
			\ForAll{($ver$ $\in$ nextVL)}\label{lin:tcv21}
			\State Calculate $tutl_i$ = min($tutl_i$, $ver.vNext$ \blank{1.1cm}$.vrt-1$).\label{lin:tcv22}
			\EndFor\label{lin:tcv23}
			\If{($tltl_i$ $>$ $tutl_i$)} /*abort $i$ itself*/
			\State return $\langle false \rangle$. \label{lin:tcv24}
			\EndIf\label{lin:tcv25}
			\ForAll{($p$ $\in$ smallRVL)}\label{lin:tcv26}
			\If{($tltl_p$ $>$ $tutl_i$)}\label{lin:tcv27}
			\If{(($its_i$$<$$its_p$)$\&\&$(\emph{p.status==live}))} \label{lin:tcv28}
			\State Includes $p$ in abortRVL list.\label{lin:tcv29}
			\Else{} return $\langle false \rangle$. /*abort $i$ itself*/\label{lin:tcv30}
			\EndIf\label{lin:tcv31}
			\EndIf	\label{lin:tcv32}	
			\EndFor \label{lin:tcv33}
			\State $tltl_i$ = $tutl_i$. /*After this point $i$ can't abort*/\label{lin:tcv34}
			\ForAll{($p$ $\in$ smallRVL)}
			\State /*Only for \emph{live} transactions*/ \label{lin:tcv35}
			\State Calculate the $tutl_p$ = min($tutl_p$, $tltl_i-1$).\label{lin:tcv36}
			\EndFor \label{lin:tcv37}
			\ForAll{($p$ $\in$ abortRVL)} \label{lin:tcv38}
			\State Set the \emph{status} of $p$ to be $false$. \label{lin:tcv39}
			\EndFor \label{lin:tcv40}
			\EndIf\label{lin:tcv41}
			\State return $\langle true \rangle$.\label{lin:tcv42}
			\EndProcedure \label{lin:tcv43}
		\end{algorithmic}
	\end{multicols}
\end{algorithm}
\vspace{.2mm}

}

\end{document}